\documentclass[aps,pra,onecolumn,superscriptaddress,nofootinbib,longbibliography]{revtex4-2}

\usepackage[a4paper,left=0.9in,right=0.9in,top=0.85in,bottom=0.9in]{geometry}
\usepackage{cmap}
\usepackage[T1]{fontenc}
\usepackage[utf8]{inputenc}
\usepackage{microtype}
\usepackage{amsmath,amssymb,amsthm,mathtools}
\usepackage{booktabs}
\usepackage{xcolor}
\usepackage{soul}
\usepackage{graphicx}
\usepackage{tikz}
\usetikzlibrary{arrows.meta,positioning,fit,calc,backgrounds,decorations.pathmorphing,shapes.geometric}
\usepackage[ruled,vlined,linesnumbered]{algorithm2e}
\DontPrintSemicolon
\SetAlgoCaptionSeparator{:}
\SetAlCapNameFnt{\small\bfseries}
\SetAlCapFnt{\small}
\SetAlgoNlRelativeSize{-2}
\SetAlFnt{\normalsize}
\SetAlgoInsideSkip{smallskip}

\usepackage[colorlinks=true,linkcolor=blue!45!black,citecolor=green!35!black,urlcolor=red!45!black]{hyperref}
\usepackage[nameinlink,noabbrev]{cleveref}

\crefname{algocf}{Algorithm}{Algorithms}
\Crefname{algocf}{Algorithm}{Algorithms}

\usepackage{mathrsfs} 
\usepackage{enumitem}

\newtheorem{theorem}{Theorem}
\newtheorem{proposition}{Proposition}
\newtheorem{lemma}{Lemma}
\newtheorem{corollary}{Corollary}

\newtheorem{problem}{Problem}

\newcommand{\ket}[1]{\lvert #1\rangle}
\newcommand{\bra}[1]{\langle #1\rvert}
\newcommand{\braket}[2]{\langle #1\mid #2\rangle}
\newcommand{\Tr}{\operatorname{Tr}}

\newcommand{\vac}{\mathrm{vac}}

\newcommand{\eps}{\varepsilon}

\newcommand{\set}[1]{\left\{#1\right\}}
\newcommand{\anticomm}[2]{\left\{#1,#2\right\}}
\newcommand{\abs}[1]{\left|#1\right|}
\newcommand{\norm}[1]{\left\lVert#1\right\rVert}

\definecolor{inputblue}{HTML}{DCEBFA}
\definecolor{blockblue}{HTML}{4C78A8}
\definecolor{blockorange}{HTML}{F28E2B}
\definecolor{blockgreen}{HTML}{59A14F}
\definecolor{softgray}{HTML}{F4F5F7}
\definecolor{linegray}{HTML}{69707A}
\definecolor{nearcol}{HTML}{EAF2FB}
\definecolor{strongcol}{HTML}{FCECDD}
\definecolor{classicalcol}{HTML}{E9F5E8}

\newcommand{\MJ}[1]{\textcolor{purple}{[MJ: #1]}}

\allowdisplaybreaks

\begin{document}

\makeatletter
\addtocontents{toc}{\protect\let\protect\savedcontentsline\protect\contentsline}
\addtocontents{toc}{\protect\let\protect\contentsline\protect\@gobblefour}
\makeatother

\title{Efficient Learning of Fermionic Magic States under Free-Fermion Evolution}

\author{Jiwon Heo}
\thanks{These authors contributed equally to this work.}
\affiliation{Graduate School of Quantum Science and Technology, Korea Advanced Institute of Science and Technology, Daejeon 34141, Korea}

\author{Myeongjin Shin}
\thanks{These authors contributed equally to this work.}
\affiliation{School of Computational Sciences, Korea Advanced Institute of Science and Technology, Daejeon 34141, Korea}

\author{Changhun Oh}
\email{changhun0218@gmail.com}
\affiliation{Graduate School of Quantum Science and Technology, Korea Advanced Institute of Science and Technology, Daejeon 34141, Korea}
\affiliation{Department of Physics, Korea Advanced Institute of Science and Technology, Daejeon 34141, Korea}

\date{\today}

\begin{abstract}
We establish efficient learning for a family of fermionic magic states under unknown number-conserving free-fermion evolution. Each input block has a definite particle number and is a superposition of Fock states, with occupied mode sets disjoint both within and across blocks. The key idea is to exploit the spectral structure of particle reduced density matrices (RDMs) to separate contributions from individual blocks from those involving several blocks, allowing us to reconstruct the hidden block structure. For a fixed upper bound on the particle number per block, our algorithm uses single-copy measurements and polynomial sample and classical computational complexity to recover a compact description of the state with prescribed fidelity and high probability, without prior knowledge of the block decomposition or the evolution. RDMs up to this upper bound suffice for reconstruction. We further show that this RDM order is necessary in general: two orthogonal states in the family can have identical RDMs at every lower order. These results show that an extensive number of non-Gaussian blocks can be compatible with efficient state learning.
\end{abstract}

\maketitle

\section{Introduction}
\label{sec:introduction}

Reliable quantum state characterization is essential for verifying state preparation, benchmarking quantum devices, and probing correlations and entanglement in quantum simulators~\cite{eisert2020quantum,lanyon2017efficient,lange2023adaptive}. Quantum state tomography provides a standard approach by reconstructing a classical description of an unknown state from measurements on repeated preparations~\cite{anshu2024survey,qin2026statistical,d2003quantum,paris2004quantum}. However, without structural assumptions, both the number of parameters needed to describe the state and the number of copies required to reconstruct it with a fixed accuracy grow exponentially with system size~\cite{haah2017sample}. This growth makes tomography increasingly demanding in both experimental and classical computational resources.

Despite this general difficulty, a suitable structure can make efficient learning possible. For example, pure stabilizer states in qubit systems and Gaussian states in bosonic and fermionic systems admit classical descriptions of polynomial size: stabilizer generators for stabilizer states, covariance matrices for fermionic Gaussian states, and first moments and covariance matrices for bosonic Gaussian states. These compact descriptions form the basis of efficient learning algorithms for stabilizer and Gaussian states~\cite{montanaro2017learning,aaronson2023fermion,bittel2025optimal,mele2025learning}. The optimal sample complexity of Gaussian-state tomography has also been established~\cite{chen2026optimal}.
Beyond these families, non-Clifford operations in qubit systems and non-Gaussian operations in bosonic and fermionic systems generally produce states that are no longer fully characterized by the compact descriptions above. Although this removes the basis for the preceding learning methods, efficient learning remains possible for certain families of states prepared using a small number of such operations~\cite{grewal2025efficient,mele2025learning,mele2025few}.

Efficient learning has also been established for other structured families beyond the basic stabilizer and Gaussian settings. In qubit systems, efficient algorithms have been developed for certain instantaneous quantum polynomial-time circuit states~\cite{arunachalam2022optimal} and for broader families of states prepared by shallow circuits~\cite{huang2024learning,landau2025learning}. In bosonic systems, efficient learning has likewise been shown for certain states obtained by applying Gaussian unitaries to Fock inputs~\cite{iosue2025higher}. Together, these results suggest that states outside the basic stabilizer and Gaussian families can remain learnable when their preparation retains suitable structure.

In fermionic computation, pure non-Gaussian states of definite parity serve as magic resources beyond Gaussian operations~\cite{hebenstreit2019all}, raising the question of which families of fermionic magic states admit efficient learning. We consider states obtained by preparing non-Gaussian blocks on mutually disjoint groups of modes and mixing them through an unknown passive Gaussian circuit. Here, free-fermion evolution refers specifically to number-conserving Gaussian evolution. A representative example is a product of the four-mode two-particle state $(\ket{0011}+\ket{1100})/\sqrt{2}$~\cite{ivanov2016computational,oszmaniec2022fermion}. Related block-product structures also arise in electronic-structure calculations and simulations of fermionic dynamics~\cite{kottmann2022optimized,huggins2022unbiasing,alam2025dynamics,oh2026classical}, and in fermionic quantum machine learning~\cite{bako2025fermionic,kerenidis2026scalable}. Although the input has a simple block structure, the unknown Gaussian evolution can mix modes across all blocks and hide their original mode decomposition.

This motivates the question of whether the full state remains efficiently learnable when the number of non-Gaussian blocks grows linearly with the system size and their mode decomposition is unknown. Previous guarantees for fermionic states prepared with at most logarithmically many non-Gaussian gates~\cite{mele2025few} do not directly cover this setting. Here, we establish efficient learning in this regime for a structured family of fermionic magic states. Each input block has a definite particle number bounded by a constant and is a coherent superposition of Fock states with mutually disjoint occupied mode sets; distinct blocks also occupy disjoint mode sets. The learner receives only copies of the output state, without knowing the input block decomposition, its coefficients, or the Gaussian unitary.

Our approach exploits the structure of particle reduced density matrices (RDMs), with the $k$-RDM describing correlations involving $k$ particles~\cite{zhao2021fermionic,gigena2021many,cianciulli2024bipartite}. Passive Gaussian evolution transforms each $k$-RDM within the $k$-particle sector~\cite{low2022classical}, preserving spectral information about the input blocks even as their modes become delocalized. For fixed $k$, these RDMs have polynomial size and can be estimated efficiently from independent copies~\cite{zhao2021fermionic,low2022classical,heyraud2025unified,koizumi2026provably}. Efficient estimation alone, however, does not solve the learning problem: the RDMs contain both coherent contributions from individual blocks and contributions involving particles from different blocks.

When every block contains $p$ particles, the eigenvalue-one space of the exact $p$-RDM is precisely the span of the block states. Since contributions involving several blocks can have eigenvalues arbitrarily close to one, selecting this space directly can be unstable under estimation errors. We use the $1$-RDM to identify high-occupation modes and construct compressions of the $p$-RDM from which the relevant block contributions can be selected stably. A classical Gram-splitting procedure then separates the individual factors by exploiting the orthogonality of their one-particle supports.

For blocks with different, unknown particle numbers, we extend this procedure by processing the RDMs in increasing order and removing product contributions formed from previously recovered factors. By controlling how RDM estimation errors propagate through these steps, we prove that the reconstruction returns a compact classical approximation to the full state with prescribed fidelity and high probability. For a fixed upper bound on the particle number per block, both the sample and classical computational costs are polynomial in the number of modes and the inverse target accuracy, using RDMs only up to this bound. We also show that this RDM order is necessary in the worst case: two orthogonal states in our family can have identical RDMs at every lower order.

Sec.~\ref{sec:problem-main-results} defines the learning problem and states our main results, and Sec.~\ref{sec:algorithm-overview} presents the reconstruction strategy. Secs.~\ref{sec:equal-reconstruction} and~\ref{sec:bounded-reconstruction} establish the learning guarantees, while Sec.~\ref{sec:rdm-order-necessity} proves the RDM-order lower bound. We conclude with a discussion of implications and open questions.

\section{Problem setup and main results}\label{sec:problem-main-results}

\subsection{Fermionic Fock space and state family}\label{Sec:State}
Consider $m$ fermionic modes with creation and annihilation operators $\hat c_j^\dagger$ and $\hat c_j$, where $j\in[m]:=\set{1,\ldots,m}$. They obey the canonical anticommutation relations,
\begin{align}
    \anticomm{\hat c_i}{\hat c_j}  = 0,
    \qquad  \{\hat c_i^\dagger,\hat c_j^\dagger\}  =0,
    \qquad  \{\hat c_i, \hat c_j^\dagger\}  =\delta_{ij}\hat{I},
    \label{eq:car}
\end{align}
where $\{A,B\}:=AB+BA$. For an ordered set $I=\set{i_1<\cdots<i_k}$, we define
$\hat c_I^\dagger:=\hat c_{i_1}^\dagger\cdots \hat c_{i_k}^\dagger, \hat c_I:=(\hat c_I^\dagger)^\dagger=\hat c_{i_k}\cdots \hat c_{i_1}$,
where $\hat{c}_{\varnothing} = \hat{c}_{\varnothing}^\dagger = \hat{I}$. Then, a normalized $k$-particle Fock state corresponding to $I$ is defined as $|I\rangle:=\hat{c}_I^\dagger|\text{vac}\rangle$.
Here, $|\text{vac}\rangle$ is the vacuum state defined as $\hat{c}_j|\text{vac}\rangle = 0$ for all $j$. Using these Fock states as a basis, we write a normalized $k$-particle pure state as
$|\psi\rangle=\sum_{|I| = k} \psi_I |I\rangle$, with $\sum_{|I|=k} |\psi_I|^2=1$,
where $\psi_I\in\mathbb{C}$.
Since this basis is indexed by the $k$-element subsets of $[m]$, the $k$-particle subspace has dimension
$D_k:=\binom{m}{k}$.
We define the creation polynomial associated with $|\psi\rangle$ by
$\hat c^\dagger[\psi]:=\sum_{\abs{I}=k} \psi_I\hat c_I^\dagger$.

We consider inputs built from non-Gaussian states on disjoint groups of modes~\cite{oszmaniec2022fermion, huggins2022unbiasing, kerenidis2026scalable, bako2025fermionic, alam2025dynamics, oh2026classical, ivanov2016computational}.
A representative example is the four-mode, two-particle state
\begin{align}
    |\psi_4\rangle &:=\frac{1}{\sqrt{2}} \left(|0011\rangle+|1100\rangle\right).
\end{align}
Product inputs of the form $|\psi_4\rangle^{\otimes n}$, followed by fermionic linear optics and occupation-number measurements, have been studied as a quantum computational advantage scheme~\cite{ivanov2016computational, oszmaniec2022fermion} and for simulating many-body systems~\cite{alam2025dynamics}.

We consider a broader family of block-product inputs, with each block formed by a superposition of Fock states with the same particle number and mutually disjoint occupied mode sets.
To formally define this family, for each block $b\in[n]$, fix integers $p_b\geq2$ and $s_b\geq2$, and choose subsets $I_{b,1},\ldots,I_{b,s_b}\subseteq[m]$ of size $p_b$. For each $b\in[n]$, let $\omega_{b,1},\ldots,\omega_{b,s_b}$ be nonzero complex numbers satisfying $\sum_{l=1}^{s_b}|\omega_{b,l}|^2=1$, and define the state of block $b$ by $\ket{\omega_b^{\mathrm{in}}}:=\sum_{l=1}^{s_b}\omega_{b,l}\ket{I_{b,l}}$. We also allow a set $I_0\subseteq[m]$ of always-occupied modes, requiring $I_0$ and all sets $I_{b,l}$ to be mutually disjoint. All remaining input modes are in the vacuum state. The input state is then
\begin{align}
    \ket{\Psi_{\mathrm{in}}} := \hat c_{I_0}^\dagger \prod_{b=1}^{n} \left( \sum_{l=1}^{s_b} \omega_{b,l} \hat c_{I_{b,l}}^\dagger \right) \ket{\vac} = \ket{\mathbf{1}}\otimes\left(\bigotimes_{b=1}^{n}\ket{\omega_b^{\mathrm{in}}}\right)\otimes\ket{\mathbf{0}},
    \label{eq:block-source}
\end{align}
where the product is taken in increasing order of $b$. The tensor-product expression uses the fermionic identification with the always-occupied modes first, followed by the blocks in increasing order of $b$ and then the vacuum modes. We refer to each basis state $|I_{b,l}\rangle$ as a \textit{branch} of block $b$.
The input $\ket{\psi_4}^{\otimes n}$ is recovered by taking $m=4n$, $I_0=\varnothing$, $p_b=s_b=2$, and $\omega_{b,1}=\omega_{b,2}=1/\sqrt{2}$ for every block.
The same definition also includes blocks with different coefficients, as in perfect-pairing states~\cite{huggins2022unbiasing}, and analogous fixed-particle-number states on larger sets of modes~\cite{kerenidis2026scalable}.

Our target state is obtained by applying an unknown passive Gaussian unitary $\hat U$ to the input state. It is specified by a single-particle unitary $U\in\operatorname{U}(m)$ satisfying
\begin{align}
    \hat U\hat c_j^\dagger\hat U^\dagger &= \sum_{i=1}^{m} U_{ij}\hat c_i^\dagger.
    \label{eq:passive-unitary}
\end{align}
Thus, $\hat U$ preserves the total particle number and therefore satisfies $\hat{U} |\vac\rangle = |\vac\rangle$ up to a global phase. Applying it to the input gives the target state
\begin{align}
    \ket{\Psi} &:= \hat U\ket{\Psi_{\mathrm{in}}}.
    \label{eq:output-state}
\end{align}
Fig.~\ref{fig:Problem}(a) illustrates the target state. For each block $b$, define the transformed block state by
$\ket{\omega_b}:=\hat U\ket{\omega_b^{\mathrm{in}}}$.
Although particle-number conservation ensures that $\ket{\omega_b}$ remains a $p_b$-particle state, its expansion in the output Fock basis is unknown.

\begin{figure}
    \centering
    \includegraphics[width=0.9\linewidth]{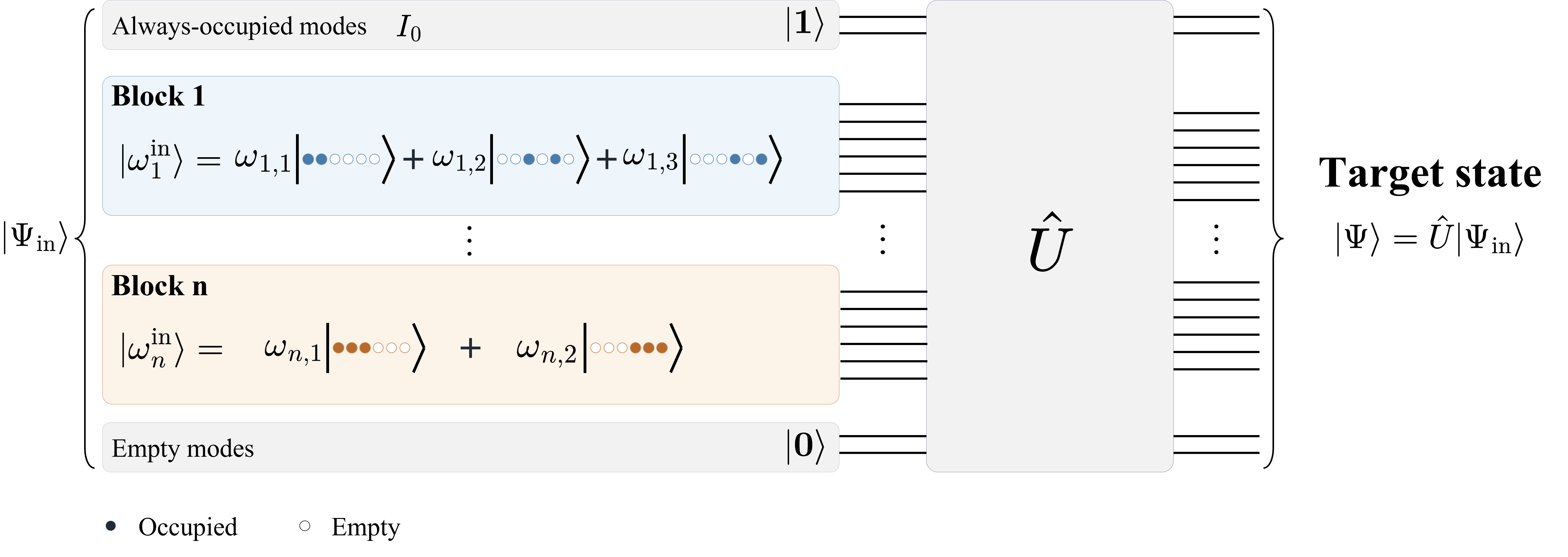}
    \caption{Example of the input state family $\ket{\Psi_{\rm in}}$ and target state $\ket{\Psi}$. Mode numbers within each illustrated block are local to that block; different blocks occupy disjoint sets of physical modes. Block $1$ has $s_1=3$ branches and $p_1=2$ particles, with occupied local mode sets $\{1,2\}$, $\{3,5\}$, and $\{4,6\}$. Block $n$ has $s_n=2$ branches and $p_n=3$ particles, with occupied local mode sets $\{1,2,3\}$ and $\{4,5,6\}$. These block states, together with always-occupied modes $\ket{1}$ and empty modes $\ket{0}$, form the input state $\ket{\Psi_{\rm in}}$ in Eq.~\eqref{eq:block-source}. Applying the unknown passive Gaussian unitary $\hat U$ gives the target state $\ket{\Psi}$ in Eq.~\eqref{eq:output-state}.}
    \label{fig:Problem}
\end{figure}

\subsection{Learning problem and main results}\label{Sec:Main results}

We now describe the learning task considered in this work. Although the learner is promised that the unknown target state $\ket{\Psi}$ has the form in Eq.~\eqref{eq:output-state}, neither the passive Gaussian unitary $\hat U$ nor the input block data are known. Using measurements on independent copies of $\ket{\Psi}$ and classical post-processing of their outcomes, the learner must construct a classical description of a pure state close to $\ket{\Psi}$ in fidelity. More precisely, we consider the following learning problem.

\begin{problem}\label{prob:learning}
Fix $\varepsilon_{\mathrm{fid}},\delta\in(0,1)$. Given independent copies of $\ket{\Psi}$, output a classical description of a normalized pure state $\ket{\widetilde\Psi}$ such that, with probability at least $1-\delta$,
\begin{align}
    \abs{ \braket{\widetilde\Psi}{\Psi} }^2&\geq 1-\varepsilon_{\mathrm{fid}}.
    \label{eq:learning-goal}
\end{align}
\end{problem}
This problem is also illustrated in Fig.~\ref{fig:Problem}. Since the learning procedure combines measurements on independent copies with classical post-processing, we measure its efficiency by both the number of copies of $\ket{\Psi}$ and the classical post-processing time. In this work, we study Problem~\ref{prob:learning} in two settings, \textit{homogeneous} and \textit{heterogeneous} cases.

We first consider the \textit{homogeneous} case, in which every input mode belongs to a block, with no always-occupied modes ($I_0=\varnothing$) or additional vacuum modes. Every block contains the same known number $p\geq2$ of particles, i.e., $p_b=p$ for all $b\in[n]$.

\begin{theorem}[Homogeneous case]
\label{thm:equal-blocks}
Fix $p\geq2$. There exists a learning algorithm for the homogeneous setting that uses only single-copy measurements and solves Problem~\ref{prob:learning}. The required sample complexity and classical time complexity are
\begin{align}
    O\left(m^{O(p)}\varepsilon_{\mathrm{fid}}^{-2} \log\left(\frac{2m}{\delta}\right)\right).
    \label{eq:equal-copy-simple}
\end{align}
\end{theorem}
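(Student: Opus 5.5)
The plan is to reconstruct the state from estimated $1$-RDMs and $p$-RDMs, using the introduction's observation that the exact $p$-RDM has its eigenvalue-one space equal to $\mathrm{span}\{\ket{\omega_b}\}$. Write $\Gamma^{(k)}_{IJ}=\langle\Psi|\hat c_J^\dagger\hat c_I|\Psi\rangle$ for $|I|=|J|=k$. In the homogeneous case $m=\sum_b s_b p$, so every mode belongs to some branch. The exact $1$-RDM is then $U\,\mathrm{diag}(|\omega_{b,l}|^2 \text{ on } I_{b,l})\,U^\dagger$. Its eigenvectors with eigenvalue $|\omega_{b,l}|^2$ span the transformed one-particle supports $U\cdot\mathrm{span}\{e_j:j\in I_{b,l}\}$.

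\textbf{Step 1 (estimation).} Using single-copy measurements, I will estimate all entries of $\Gamma^{(1)}$ and $\Gamma^{(p)}$ (for example via fermionic classical shadows or Pauli-grouped measurements of Majorana monomials of degree $\le 2p$). There are $m^{O(p)}$ such entries, each of bounded norm. Hoeffding plus a union bound gives operator-norm error $\eta$ with $m^{O(p)}\eta^{-2}\log(2m/\delta)$ copies. Later I will choose $\eta=\mathrm{poly}(\varepsilon_{\mathrm{fid}})/m^{O(p)}$ so that the total matches Eq.~\eqref{eq:equal-copy-simple}.

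\textbf{Step 2 (spectral structure of the exact $p$-RDM).} I will expand $\hat c_I$ acting on $\hat U\ket{\Psi_{\rm in}}$ in the rotated basis. The $p$-RDM decomposes as a sum over ways of choosing $p$ particles from the branches. The choice "all $p$ from block $b$" contributes $\ket{\omega_b}\bra{\omega_b}$ with weight one, since block $b$ always has exactly $p$ particles. Any choice mixing blocks gives a contribution of the form $\bigotimes_b(\text{partial reductions})$. Each such term is a mixture over branches with weight a product of $|\omega_{b,l}|^2<1$, so its eigenvalues are at most $\max_{b,l}|\omega_{b,l}|^2$. This is strictly below one but possibly arbitrarily close, which is the instability the introduction mentions.

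To stabilize, I will compress using the $1$-RDM. Let $P_\tau$ be the spectral projector of the estimated $1$-RDM onto eigenvalues $\ge\tau$, for thresholds $\tau$ on a grid of polynomially many values. I then form $\Lambda^{\wedge p}P_\tau\,\Gamma^{(p)}\,\Lambda^{\wedge p}P_\tau$. Branches with weight below $\tau$ contribute at most $O(\tau\cdot m^{O(p)})$ to fidelity loss if discarded. Choosing $\tau$ in a spectral gap of the $1$-RDM, which exists by pigeonhole among the $\le m$ eigenvalues, makes the projector stable by Davis--Kahan with gap $\gtrsim \varepsilon/m$. Within the compressed space, I select the top-eigenvalue space of the compressed $p$-RDM, again at a threshold placed in a gap found by pigeonhole. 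This yields an approximation of $\mathrm{span}\{\ket{\omega_b}\}$, restricted to the significant branches.

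\textbf{Step 3 (Gram splitting).} The $\ket{\omega_b}$ have mutually orthogonal one-particle supports $V_b$, and those supports are orthogonal across blocks. I will separate them classically. For a vector $\phi$ in the recovered span, I compute its one-particle "support operator" $\sum_{I}(\hat c_j\phi)$-type Gram matrices, i.e.\ the partial $1$-RDM $\rho_1(\phi)$. The ranges of these operators for the different $\ket{\omega_b}$ are orthogonal, so the connected components of the relation "supports overlap", computed on a basis of the span, recover each $V_b$. Projecting the span onto $\Lambda^p V_b$ then gives $\ket{\omega_b}$ up to phase.

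The output is $\ket{\widetilde\Psi}=\prod_b \hat c^\dagger[\widetilde\omega_b]\ket{\vac}$. Since the factors act on orthogonal supports, the fidelity is multiplicative, $\prod_b|\langle\widetilde\omega_b|\omega_b\rangle|^2$. Requiring each factor to be at least $1-\varepsilon_{\mathrm{fid}}/n$ is met with per-block error $\varepsilon_{\mathrm{fid}}/n$. That in turn is achieved with $\eta$ polynomially small in $1/m$, and all linear algebra is on $D_p=m^{O(p)}$-dimensional matrices.

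\textbf{Main obstacle.} The hard part is the perturbation analysis of Steps 2--3. The near-one eigenvalues from mixed-block terms and the small-weight branches both require gap-selection by pigeonhole. I also have to show that the approximate Gram splitting, performed on a perturbed span with slightly overlapping supports, still clusters correctly; for that I would threshold the overlaps at a level between $\eta\,m^{O(p)}$ and the minimal retained weight.
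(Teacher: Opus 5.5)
There is a genuine gap, and it sits at the central difficulty of the theorem: your Step~2 does not remove the instability you correctly identify. Compressing $\Gamma^{(p)}$ onto $\wedge^p\operatorname{ran}P_\tau$ keeps exactly the high-occupation modes, and the near-one mixed eigenvalues come from those modes.

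Take two blocks with dominant weights $w_1=w_2=1-\epsilon$, with $\epsilon$ far below your RDM accuracy. Their dominant supports $F_1,F_2$ have occupation $1-\epsilon\geq\tau$, so they survive the compression. On the mixed subspaces $(\wedge^qF_1)\wedge(\wedge^{p-q}F_2)$, $1\leq q\leq p-1$, the compressed RDM is $(1-\epsilon)^2$ times the identity. The block directions have eigenvalue $1$, or $1-\epsilon$ if the weak branches fall below $\tau$.

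No threshold chosen at a stable distance from the estimated spectrum can resolve this $O(\epsilon)$ spread. Any stably selected top space therefore contains vectors $x\wedge y$ with $x\in\wedge^qF_1$ and $y\in\wedge^{p-q}F_2$. Such a space is not spanned by states with mutually orthogonal one-particle supports. Your clustering then merges the two blocks, and projecting onto $\wedge^p(V_1\oplus V_2)$ returns a multi-dimensional space rather than a block state.

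The missing idea is the paper's fixed threshold $\theta\in[2/3,3/4]$ combined with two different compressions.
\begin{itemize}
\item The cross compression $Q_{\mathrm{lo}}\Gamma^{(p)}Q_{\mathrm{hi}}$ annihilates the mixed part. That part is diagonal in the transformed Fock basis, and every branch lies wholly in $S_\theta$ or in $S_\theta^\perp$. What remains is $\sum_{b:w_b>\theta}\sqrt{w_b}\ket{r_b}\bra{f_b}$.
\item Dominant blocks whose coherence $\sqrt{w_b(1-w_b)}$ is below $t\sim\sqrt{\varepsilon_{\mathrm{fid}}/m}$ are replaced by their dominant branches. These are collected into an occupied core, at infidelity at most $t^2/\theta$ each.
\item Nondominant blocks come from the all-low compression $Q_{\mathrm{lo}}\Gamma^{(p)}Q_{\mathrm{lo}}$. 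All its other eigenvalues are at most $\theta$, giving a fixed gap $1-\theta\geq1/4$.
\end{itemize}
A repair inside your framework would need two things. First, the observation that every mixed eigenvalue above $1-\gamma$ involves only blocks with $w_b>1-\gamma$. Second, an explicit mechanism to replace those blocks by a filled core. Neither appears in the proposal.

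Step~3 also fails as written. Because all $\omega_b$ share eigenvalue one, the basis you extract is an arbitrary rotation within their span. A generic basis vector $u=\sum_bc_b\omega_b$ has $\Gamma_u^{(1)}=\sum_b|c_b|^2\Gamma_{\omega_b}^{(1)}$, whose range is all of $\bigoplus_bV_b$. The ``supports overlap'' graph is then complete and yields a single component.

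You need a step that finds the hidden block basis itself. The paper diagonalizes the quadratic form $x\mapsto\Tr[\Gamma_x^{(1)}\Gamma_z^{(1)}]=\sum_b\kappa_b|z_b|^2|x_b|^2$, with $\kappa_b:=\Tr[(\Gamma_{\omega_b}^{(1)})^2]>0$, for a Haar-random $z$ in the span. It makes this robust with an anticoncentration bound on the eigenvalue gaps and a certified acceptance test (Lem.~\ref{lem:main-robust-gram-splitting}).

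A lesser point: estimated blocks have only approximately orthogonal supports. Their exterior product is therefore unnormalized and the fidelity is not exactly multiplicative. This needs an assembly estimate such as Prop.~\ref{prop:equal-p-assembly}.
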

For fixed $p$, both the sample and classical computational costs remain polynomial in the number of modes, even when the number of non-Gaussian blocks grows linearly with the system size. Although the unknown Gaussian evolution hides the original separation of the input blocks, the $1$- and $p$-RDMs retain enough information to reconstruct the state. Efficient learning therefore extends to this family without restricting the number of non-Gaussian blocks to be small. We describe the reconstruction algorithm in Sec.~\ref{sec:equal-reconstruction}.

We next consider the \textit{heterogeneous} case, in which the non-Gaussian blocks can have different particle numbers $p_b\geq2$, and the input may also include always-occupied modes and additional vacuum modes. The block particle numbers $p_b$ are unknown to the learner, who is given only a common upper bound $2\leq r\leq m$ such that $p_b\leq r$ for all $b\in[n]$.

\begin{theorem}[Heterogeneous case]
\label{thm:heterogeneous-blocks}
Fix $r\geq2$.  There exists a learning algorithm for the heterogeneous setting that uses only single-copy measurements and solves Problem~\ref{prob:learning}. The required sample complexity and classical time complexity are
\begin{align}
    O\left(m^{O(r)} \varepsilon_{\mathrm{fid}}^{-2} \log\left( \frac{mr}{\delta}\right)\right).
    \label{eq:mixed-copy-simple}
\end{align}
\end{theorem}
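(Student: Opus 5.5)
The plan is to reduce the heterogeneous case to repeated applications of the homogeneous machinery of Theorem~\ref{thm:equal-blocks}, processing particle numbers $k=1,2,\ldots,r$ in increasing order. The structure to exploit is the following. After $\hat U$, the one-particle space $\mathbb{C}^m$ splits orthogonally into the image $V_0$ of the always-occupied modes, the images $V_b$ of the modes $\bigcup_l I_{b,l}$ of each block, and the vacuum modes. The exact $1$-RDM $\gamma=U\,\mathrm{diag}(\cdot)\,U^\dagger$ therefore has eigenvalue $1$ exactly on $V_0$, eigenvalue $0$ on the vacuum image, and eigenvalues $|\omega_{b,l}|^2\in(0,1)$ on each block support, one per occupied mode of branch $(b,l)$.

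\textbf{Step 1: estimating the RDMs.} I estimate all $k$-RDMs for $k\le r$ from single-copy measurements, for instance by fermionic classical shadows or by measuring Majorana monomials of degree $\le 2r$. Each has $O(m^{2k})$ entries, so a union bound gives operator-norm error $\eta$ with $O(m^{O(r)}\eta^{-2}\log(mr/\delta))$ copies. The $\log(mr/\delta)$ comes from union-bounding over the $r$ orders.

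\textbf{Step 2: the base layer.} From $\hat\gamma$ I recover the Gaussian part. The projector onto eigenvalues near $1$ gives $V_0$, and the projector onto eigenvalues near $0$ gives the vacuum image. Eigenvalues strictly inside $(0,1)$ belong to blocks. This is stable only if the coefficients $|\omega_{b,l}|^2$ are bounded away from $0$ and $1$ relative to $\eta$. Otherwise I argue that such a branch has negligible fidelity impact, so truncating it costs at most $O(\eta)$ in fidelity, and $\eta$ can be chosen polynomial in $\varepsilon_{\mathrm{fid}}/m$.

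\textbf{Step 3: the inductive step at order $k$.} After conditioning on (projecting out) $V_0$, I compress the $k$-RDM onto the $k$-particle space of the remaining non-vacuum modes. For the exact state, this compressed $k$-RDM decomposes into two kinds of terms:
\begin{itemize}
\item coherent single-block terms $\ket{\omega_b}\bra{\omega_b}$ for blocks with $p_b=k$;
\item terms that are antisymmetrized products of reduced states of blocks with $p_b\ge k$, or of several blocks, distributing $k$ particles.
\end{itemize}

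\emph{Known blocks.} Contributions involving only previously recovered blocks (those with $p_b<k$) can be computed exactly from their recovered descriptions. I subtract them.

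\emph{Splitting the rest.} For the remaining contributions, I use the homogeneous argument's spectral filter. I restrict to high-occupation directions selected by $\hat\gamma$ and take the eigenspace of the compressed, subtracted $k$-RDM with eigenvalue near $1$. A state $\ket{\omega_b}$ with $p_b=k$ is an eigenvector with eigenvalue $1$. The cross-block terms are handled as in Sec.~\ref{sec:equal-reconstruction}: compressing onto the high-occupation mode subspace forces any multi-block eigenvector to have eigenvalue bounded away from $1$ by a gap controlled by the occupations. I then run the Gram-splitting procedure on that eigenspace, using orthogonality of the one-particle supports, to separate the individual $\ket{\omega_b}$.

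\emph{Detecting the order.} Blocks with $p_b>k$ do not yield eigenvalue-$1$ vectors at order $k$. Indeed, the $k$-body marginal of a $p_b$-particle state with $p_b>k$ whose branches have disjoint supports has all eigenvalues $\le \max_l|\omega_{b,l}|^2<1$. So they are deferred to later orders, and this is how the unknown $p_b$ are detected.

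\textbf{Step 4: output.} After order $r$, every block has been found. I output the state $\hat c^\dagger_{\widehat V_0}\prod_b \hat c^\dagger[\widehat\omega_b]\ket{\vac}$. Its fidelity with $\ket{\Psi}$ is at least $1-\sum_b O(\|\widehat\omega_b-\omega_b\|)$, since the blocks live on (nearly) orthogonal supports. With $n\le m/2$ blocks, each needs accuracy $\varepsilon_{\mathrm{fid}}/m$.

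\textbf{Main obstacle: error propagation.} The hardest part is Step 3, because the subtraction uses estimated factors and errors compound across the $r$ orders. I would prove by induction that after order $k$ the recovered blocks have error at most $C_k\eta\, m^{O(k)}$, with $C_k$ a constant. The subtracted product terms are multilinear in the factors, so their error is linear in the previous errors times $m^{O(k)}$. Combined with the eigengap from the compression, Davis--Kahan then bounds the error at order $k$. Since $r$ is fixed, the total blow-up is $m^{O(r)}$, and choosing $\eta=\varepsilon_{\mathrm{fid}}\,m^{-O(r)}$ yields the claimed sample and time complexity.
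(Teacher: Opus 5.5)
There is a genuine gap in Step~3, where you select the eigenspace of the compressed, subtracted $k$-RDM with eigenvalue near $1$. No gap that is uniform in the branch weights separates that space from multi-block contributions, and compressing onto the high-occupation subspace does not create one. Take $p=2$ with two blocks whose dominant weights are $w_1=w_2=1-\tau$. On $\wedge^2S_\theta$ the compressed $2$-RDM is $w_1\ket{f_1}\bra{f_1}+w_2\ket{f_2}\bra{f_2}$ plus a four-dimensional mixed sector with eigenvalue $w_1w_2\approx1-2\tau$. The block states $\ket{\omega_b}$ are not even eigenvectors there, and the separation is only $\tau$. Without compression the gap is $1-w_1w_2\approx2\tau$, which is exactly the instability noted in Sec.~\ref{sec:two-particle-overview}.

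Your Step~2 truncation at scale $O(\eta)$ leaves gaps of order $\eta$, so Davis--Kahan gives $O(1)$ errors. Truncating instead at a scale $\tau\gg\eta$ costs infidelity up to $n\tau$, which forces $\tau\lesssim\varepsilon_{\mathrm{fid}}/m$. The perturbation bounds then carry inverse powers of $\tau$, both from the occupation threshold near $1-\tau$ and from the gap itself. This forces $\eta$ to scale superlinearly in $\varepsilon_{\mathrm{fid}}$ and loses the $\varepsilon_{\mathrm{fid}}^{-2}$ claimed in the theorem.

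The same non-uniformity breaks your order detection. Partial selections of an unrecovered block $c$ have eigenvalues up to $w_c$. So do sectors $\ket{\omega_a}\bra{\omega_a}\otimes\Gamma^{(k-p_a)}_{\omega_c}$ that mix a recovered block with an unrecovered one, and subtracting recovered-only contributions does not remove these. Since $w_c$ is below $1$ but not bounded away from it, these sectors cannot be distinguished from genuine eigenvalue-one directions under estimation error.

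The missing idea is the paper's pair of compressions relative to $S_\theta$ at a fixed threshold $\theta\in[2/3,3/4]$. Every branch lies entirely in $S_\theta$ or $S_\theta^\perp$, and partial selections carry no inter-branch coherence. Hence the cross compression $C_k=Q_{\mathrm{lo},k}\Gamma^{(k)}_\Psi Q_{\mathrm{hi},k}$ annihilates all mixed contributions exactly, leaving $\sum_J(\prod_{b\in J}\sqrt{w_b})\ket{r_J}\bra{f_J}$ over sets $J$ of complete dominant blocks with total particle number $k$.

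Dominant blocks are then read off from right singular vectors at the scale $\sigma_b=\sqrt{w_b(1-w_b)}$, thresholded at $t_k\in[s_0,2s_0]$ with $s_0=\sqrt{\varepsilon_{\mathrm{fid}}/(48m)}$. Blocks below threshold are replaced by their dominant branch, with infidelity $\sigma_b^2/w_b\le6s_0^2$, and folded into the occupied core together with the image of $I_0$. The common threshold interval guarantees that every retained product direction is built from already recovered branches.

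Blocks without a dominant branch come from the all-low compression $A_k$, whose eigenvalue-one space has gap at least $1-\theta\ge1/4$. Earlier products are removed by projection, thresholding $P_{\widetilde R_k}(I-P_{\hat O_k})P_{\widetilde R_k}$ at $1/2$, rather than by subtracting whole sector contributions. Errors then scale like $\nu/s_0$, and the assembled amplitude error need only be $O(\sqrt{\varepsilon_{\mathrm{fid}}})$. An RDM accuracy linear in $\varepsilon_{\mathrm{fid}}$ therefore suffices.

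A smaller point concerns Step~4. The estimated factors do not have orthogonal supports, so the wedge product must be normalized, and its error is amplified by up to $m^{p_b/2}$ per block (cf.\ Lem.~\ref{lem:structured-assembly}). Your $m^{O(r)}$ slack absorbs this, but your linear fidelity bound as written is not justified.
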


Thm.~\ref{thm:heterogeneous-blocks} extends efficient learning to inputs containing blocks with different, unknown particle numbers. Since the $k$-RDM can contain contributions from both individual $k$-particle blocks and products of smaller blocks, the reconstruction processes RDMs of orders $1$ through $r$ in increasing order, removing products of previously recovered blocks before identifying new ones. For fixed $r$, the cost of this recursive reconstruction remains polynomial in the number of modes. Thus, a common particle number is not required: the learner can reconstruct the state using only an upper bound on the particle number per block. The algorithm is described in Sec.~\ref{sec:bounded-reconstruction}.

To specify the RDM inputs used in both settings, for $k$-element subsets $I,J\subseteq[m]$ we define the $k$-RDM of a pure state $\ket{\psi}$ by
\begin{align}
    \left( \Gamma_{\psi}^{(k)} \right)_{I,J} &:= \bra{\psi} \hat c_J^\dagger \hat c_I \ket{\psi}.
    \label{eq:rdm-definition}
\end{align}
We regard $\Gamma_\psi^{(k)}$ as an operator on the $k$-particle sector, with the entries above giving its $D_k\times D_k$ matrix in the Fock basis. We use the same notation for the operator and its matrix. The homogeneous algorithm uses only $\Gamma_{\Psi}^{(1)}$ and $\Gamma_{\Psi}^{(p)}$, whereas the heterogeneous algorithm uses $\Gamma_{\Psi}^{(1)},\ldots,\Gamma_{\Psi}^{(r)}$. We use the single-copy fermionic partial-tomography protocol of Ref.~\cite{zhao2021fermionic} to estimate these RDMs from independent copies of $\ket{\Psi}$. The classical time includes both processing the measurement outcomes and reconstructing the state, and is counted in arithmetic operations. In the remainder of the paper, we explain how these RDM estimates are used to reconstruct the target state, starting with an overview in Sec.~\ref{sec:algorithm-overview}. The copy and classical time bounds are derived in App.~\ref{app:rdm-estimation} and Secs.~\ref{sec:homogeneous-time-complexity} and~\ref{sec:heterogeneous-time-complexity}.

Finally, reconstruction in the heterogeneous setting requires RDMs of order at least $r$ in the worst case: RDMs of orders below $r$ do not determine every state in the learning class, even in the single-block case.

\begin{theorem}[Necessity of the highest RDM order]
\label{thm:rdm-order-necessary}
For every $r\geq2$ and $m\geq2r$, there exist two orthogonal states $\ket{\Omega_+}$ and $\ket{\Omega_-}$ in the single-block subclass of Problem~\ref{prob:learning}, each containing $r$ particles, such that
\begin{align}
    \Gamma_{\Omega_+}^{(k)}&=\Gamma_{\Omega_-}^{(k)},
    \qquad 1\leq k<r.
    \label{eq:lower-rdms-identical}
\end{align}
Consequently, RDMs of orders below $r$ cannot determine every state in the heterogeneous family.
\end{theorem}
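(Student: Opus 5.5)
The plan is to take the simplest single-block instance with two branches supported on disjoint mode sets, differing only by a relative sign. Because the branches differ in all $r$ occupied modes, no operator that moves fewer than $r$ particles can connect them. Concretely, let $I_1=\set{1,\ldots,r}$ and $I_2=\set{r+1,\ldots,2r}$; these fit because $m\geq 2r$. Set $I_0=\varnothing$, leave the remaining $m-2r$ modes in the vacuum, take $\hat U=\hat I$, and define
\begin{align}
\ket{\Omega_\pm}:=\frac{1}{\sqrt2}\left(\ket{I_1}\pm\ket{I_2}\right).
\end{align}
Each is a valid single-block state of the family: $p_1=r\geq2$, $s_1=2$, the two branches are disjoint, and the coefficients are nonzero and normalized. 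The states are orthogonal because $\braket{I_1}{I_2}=0$, so $\braket{\Omega_+}{\Omega_-}=\tfrac12(1-1)=0$.

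Next, fix $1\leq k<r$ and $k$-subsets $I,J$, and expand the RDM entry in Eq.~\eqref{eq:rdm-definition}:
\begin{align}
\bra{\Omega_\pm}\hat c_J^\dagger\hat c_I\ket{\Omega_\pm}
&=\tfrac12\Big(\bra{I_1}\hat c_J^\dagger\hat c_I\ket{I_1}+\bra{I_2}\hat c_J^\dagger\hat c_I\ket{I_2}\Big) \nonumber\\
&\quad\pm\tfrac12\Big(\bra{I_1}\hat c_J^\dagger\hat c_I\ket{I_2}+\bra{I_2}\hat c_J^\dagger\hat c_I\ket{I_1}\Big).
\end{align}
The diagonal terms do not depend on the sign, so it suffices to show that the cross terms vanish.

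To see this, note that $\hat c_J^\dagger\hat c_I$ maps a Fock state $\ket{K}$ either to zero or to $\pm\ket{(K\setminus I)\cup J}$, which requires $I\subseteq K$. Apply this with $K=I_2$. The resulting set contains $I_2\setminus I$, which is nonempty because $|I|=k<r=|I_2|$. Since $I_2\setminus I$ is disjoint from $I_1$, the image is never proportional to $\ket{I_1}$, and Fock-basis orthogonality makes the cross term zero. The other cross term is its complex conjugate and also vanishes. Hence $\Gamma^{(k)}_{\Omega_+}=\Gamma^{(k)}_{\Omega_-}$ for all $1\leq k<r$.

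For the consequence, suppose a learner accesses the state only through RDMs of order below $r$. It then receives identical data for $\ket{\Omega_+}$ and $\ket{\Omega_-}$, so it must output the same $\ket{\widetilde\Psi}$ in both cases. Since the two states are orthogonal, $|\braket{\widetilde\Psi}{\Omega_+}|^2+|\braket{\widetilde\Psi}{\Omega_-}|^2\leq1$. Therefore Eq.~\eqref{eq:learning-goal} cannot hold for both states once $\varepsilon_{\mathrm{fid}}<1/2$. I expect no real obstacle; the only care needed is the particle-counting argument that the cross terms vanish, which is exactly where the strict inequality $k<r$ is used.
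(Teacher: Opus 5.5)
Your proposal is correct and follows the paper's proof essentially line for line. It uses the same witnesses $(\ket{I_1}\pm\ket{I_2})/\sqrt2$ on $\{1,\ldots,r\}$ and $\{r+1,\ldots,2r\}$, the same observation that the surviving occupations in $I_2\setminus I\neq\varnothing$ make the cross terms vanish, and the same orthogonality bound $|\langle\widetilde\Psi|\Omega_+\rangle|^2+|\langle\widetilde\Psi|\Omega_-\rangle|^2\leq1$ for the consequence. Two small refinements would tighten it. First, your second cross term is not literally the conjugate of the first: it is the conjugate of a first-type term with $I$ and $J$ exchanged, which is harmless because your argument covers all pairs $(I,J)$. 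Second, for a randomized learner you should argue, as the paper does, that both targets induce the same output distribution, so one of the two success probabilities is at most $1/2$.
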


\section{Algorithm overview}
\label{sec:algorithm-overview}

%

In this section, we outline the reconstruction algorithm for the homogeneous setting. We illustrate the reconstruction through a two-particle example and then extend it to general $p$. To isolate the main ideas, we use exact RDMs throughout this section and defer the analysis of estimation errors to Sec.~\ref{sec:equal-reconstruction}.

\subsection{A guiding two-particle example}
\label{sec:two-particle-overview}

Consider a target state $\ket{\Psi}=\hat U\ket{\Psi_{\mathrm{in}}}$, where $\hat U$ is an unknown passive Gaussian unitary. The input consists of two two-particle blocks on eight modes, $\ket{\Psi_{\mathrm{in}}}:=\ket{\omega_1^{\mathrm{in}}}\otimes\ket{\omega_2^{\mathrm{in}}}$, with
\begin{align}
\ket{\omega_1^{\mathrm{in}}} &=\sqrt{w_1}\ket{12}+\sqrt{1-w_1}\ket{34},
\qquad \ket{\omega_2^{\mathrm{in}}} =\sqrt{w_2}\ket{56}+\sqrt{1-w_2}\ket{78}.
\label{eq:overview-two-block-input}
\end{align}
We label the branches so that $1/2\leq w_b<1$.

\medskip
\noindent\textbf{Identifying the block-state span from the $2$-RDM.}
We first show how the exact $2$-RDM reveals the span of the two block states. To do so, we compute the input $2$-RDM and then examine how it transforms under the unknown passive Gaussian unitary.

By Eq.~\eqref{eq:rdm-definition}, each $2$-RDM entry is the inner product of the unnormalized states obtained by removing the corresponding pairs of particles. Since each block in the product input $\ket{\Psi_{\mathrm{in}}}$ has a definite particle number, removing two particles from the first block, two from the second, or one from each leaves distinct particle-number configurations across the two blocks. Because residual states from different sectors are mutually orthogonal, the input $2$-RDM has the direct-sum decomposition
\begin{align}
\Gamma_{\Psi_{\mathrm{in}}}^{(2)} &=\Gamma_{\mathrm{complete,in}}^{(2)} \oplus\Gamma_{\mathrm{mix,in}}^{(2)},
\end{align}
where the \emph{complete-block} part combines the first two sectors and the \emph{mix-block} part corresponds to the third.

For the first block, we compute $\hat c_2\hat c_1\ket{\Psi_{\mathrm{in}}} =\sqrt{w_1}\ket{\omega_2^{\mathrm{in}}}$ and $\hat c_4\hat c_3\ket{\Psi_{\mathrm{in}}} =\sqrt{1-w_1}\ket{\omega_2^{\mathrm{in}}}$. On the span of $\ket{12}$ and $\ket{34}$, the $2$-RDM is therefore represented by
\begin{align}
    \begin{pmatrix}
        w_1 & \sqrt{w_1(1-w_1)}\\
        \sqrt{w_1(1-w_1)} & 1-w_1
    \end{pmatrix}
\end{align}
in this ordered basis, while all other rows and columns within the first-block sector vanish because the corresponding pairs yield zero residual states. The displayed matrix is the outer product of the coefficient vector of $\ket{\omega_1^{\mathrm{in}}}$ with itself, identifying the entire first-block contribution as $\ket{\omega_1^{\mathrm{in}}}\bra{\omega_1^{\mathrm{in}}}$. Applying the same calculation to the second block gives $\Gamma_{\mathrm{complete,in}}^{(2)}=\sum_{b=1}^2\ket{\omega_b^{\mathrm{in}}}\bra{\omega_b^{\mathrm{in}}}$. 
The complete-block part is the orthogonal projector onto the span of the two block states: its eigenvalue-one space is precisely this span, and it vanishes on the orthogonal complement within the complete-block sectors.

In the mix-block part, removing one particle from each block, as for the pair $15$, leaves one particle in each block. Since distinct mixed pairs leave orthogonal residual states, this restriction is diagonal in the input pair basis. The four groups of mixed pairs yield
\begin{align}
\Gamma_{\mathrm{mix,in}}^{(2)}
&=w_1w_2 I_4\oplus w_1(1-w_2)I_4
\oplus(1-w_1)w_2 I_4
\oplus(1-w_1)(1-w_2)I_4,
\end{align}
where $I_4$ denotes the identity on the corresponding four-dimensional pair space. For instance, the first group consists of $15,16,25,26$. 
With all mix-block eigenvalues strictly below one, the eigenvalue-one space of the full input $2$-RDM is exactly the span of the two block states.

Although this decomposition is explicit in the input mode basis, the unknown passive Gaussian evolution mixes modes across blocks and hides their separation in the output basis. 
The key to reconstruction is that passive Gaussian evolution preserves this spectral distinction: the eigenvalue-one space of the output $2$-RDM is exactly the span of the transformed block states. To establish this, apply the creation-operator transformation in Eq.~\eqref{eq:passive-unitary} to a two-particle basis state. Choosing the irrelevant global phase of $\hat U$ so that $\hat U\ket{\vac}=\ket{\vac}$, we obtain, for $i<j$,
\begin{align}
    \hat U\ket{ij}
    &=\left(\sum_a U_{ai}\hat c_a^\dagger\right)
      \left(\sum_b U_{bj}\hat c_b^\dagger\right)\ket{\vac}
     =\sum_{a<b}\bigl(U_{ai}U_{bj}-U_{aj}U_{bi}\bigr)\ket{ab}.
\end{align}
The minus sign from fermionic anticommutation expresses the antisymmetry of the two-particle state. Accordingly, the transformation above is the restriction of $U\otimes U$ to the antisymmetric two-particle space, denoted by $\wedge^2 U$, with pair-basis entries $(\wedge^2 U)_{ab,ij}=U_{ai}U_{bj}-U_{aj}U_{bi}$. Since $U\otimes U$ is unitary and preserves this space, $\wedge^2 U$ is also unitary.

The same transformation governs the $2$-RDM of the full state. Substituting the creation- and annihilation-operator transformations into Eq.~\eqref{eq:rdm-definition} yields
\begin{align}
    \Gamma_\Psi^{(2)}&=(\wedge^2 U)\Gamma_{\Psi_{\mathrm{in}}}^{(2)}(\wedge^2 U)^\dagger,
    \qquad \ket{\omega_b}=(\wedge^2 U)\ket{\omega_b^{\mathrm{in}}}.
\end{align}
Hence, the input $2$-RDM and the output $2$-RDM are unitarily equivalent: their eigenvalues are identical, and their eigenspaces are mapped by $\wedge^2 U$. Applying this transformation to the complete-block and mix-block contributions yields
\begin{align}
\Gamma_\Psi^{(2)} =\sum_{b=1}^2\ket{\omega_b}\bra{\omega_b} +\Gamma_{\mathrm{mix}}^{(2)},
\label{eq:overview-two-rdm-warmup}
\end{align}
where $\Gamma_{\mathrm{mix}}^{(2)} :=(\wedge^2 U)\Gamma_{\mathrm{mix,in}}^{(2)}(\wedge^2 U)^\dagger$. Since unitary conjugation preserves orthogonality and eigenvalues, the two contributions retain their orthogonal supports, with eigenvalue one on the block-state span and strictly smaller eigenvalues in the mix-block part. Diagonalizing the output $2$-RDM therefore identifies $T=\operatorname{span}\{\ket{\omega_1},\ket{\omega_2}\}$ without knowing $U$ or the input block supports.

\medskip
\noindent\textbf{Recovering the individual blocks by Gram splitting.}
Since the two block states share the same eigenvalue $1$, diagonalizing the $2$-RDM identifies only their span $T$. To recover the individual blocks, we use the procedure we call \emph{Gram splitting}. The key observation is that the $1$-RDM of a superposition of the blocks has no cross-block terms: for $\ket{x}=x_1\ket{\omega_1}+x_2\ket{\omega_2}$,
\begin{align}
\Gamma_x^{(1)}=\sum_{b=1}^2|x_b|^2\Gamma_{\omega_b}^{(1)}.
\end{align}
The block RDMs on the right have mutually orthogonal one-particle supports.
Since the same decomposition holds for a fixed reference vector $\ket{z}=z_1\ket{\omega_1}+z_2\ket{\omega_2}\in T$, orthogonality eliminates all contributions from different blocks in the trace overlap:
\begin{align}
\operatorname{Tr}\left[\Gamma_x^{(1)}\Gamma_z^{(1)}\right] &= \sum_{b=1}^2 |x_b|^2|z_b|^2\operatorname{Tr}\left[(\Gamma_{\omega_b}^{(1)})^2\right].
\label{eq:overview-pair-splitting}
\end{align}
Thus, the overlap is a quadratic form in $x$ that is diagonal in the unknown block basis.
Accordingly, we sample a uniformly random normalized vector $\ket{z}\in T$, making the diagonal coefficients distinct with probability one. With this degeneracy removed, diagonalizing the quadratic form in a known orthonormal basis $\ket{u_1},\ket{u_2}$ of $T$ recovers the block states up to phases and ordering.

More explicitly, write $\ket{x}=v_1\ket{u_1}+v_2\ket{u_2}$ and $v=(v_1,v_2)^T$. Define the transition $1$-RDMs by $(\Gamma_{u_\beta,u_\alpha}^{(1)})_{i,j}:=\bra{u_\alpha}\hat c_j^\dagger\hat c_i\ket{u_\beta}$, for $\alpha,\beta\in\{1,2\}$. The Hermitian matrix representing the overlap in this basis is
\begin{align}
\Tr\left[\Gamma_x^{(1)}\Gamma_z^{(1)}\right]&=v^\dagger S_zv,
\qquad
(S_z)_{\alpha,\beta}:=\Tr\left[\Gamma_{u_\beta,u_\alpha}^{(1)}\Gamma_z^{(1)}\right].
\end{align}
Diagonalizing $S_z$ gives the coefficients of the block states in the basis $\ket{u_1},\ket{u_2}$, up to phases and ordering. Using their creation operators, we reconstruct the full state as
\begin{align}
    \ket{\Psi}
    =\hat c^\dagger[\omega_1]\hat c^\dagger[\omega_2]\ket{\vac},
\end{align}
up to a global phase. Since the transition RDMs in this splitting step are calculated from known vectors, the full reconstruction requires only the exact $2$-RDM and no additional measurements.

\medskip
The same construction works for any number of blocks with mutually orthogonal one-particle supports, as formalized below.
\begin{lemma}[Exact Gram splitting]\label{lem:main-exact-gram-splitting}
Let $k\geq2$, and let $T$ be a subspace of the $k$-particle space $\wedge^k\mathbb C^m$, spanned by normalized states $\ket{\psi_1},\ldots,\ket{\psi_d}$ with mutually orthogonal one-particle supports. Given an exact orthonormal basis of $T$, Gram splitting recovers these states up to individual phases and permutation with probability one. For fixed $k$, the algorithm uses polynomially many arithmetic operations in $m$ and requires no additional copies of the target state.
\end{lemma}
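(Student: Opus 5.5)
We follow the two-block computation above, now with $d$ states. First we show that the $1$-RDM of any $\ket{x}=\sum_{b}x_b\ket{\psi_b}\in T$ has no cross terms, i.e.\ $\Gamma_x^{(1)}=\sum_b|x_b|^2\Gamma_{\psi_b}^{(1)}$. For $b\neq b'$, the transition entry $\bra{\psi_{b'}}\hat c_j^\dagger\hat c_i\ket{\psi_b}$ can be nonzero only if $i$ lies in the one-particle support of $\psi_b$ and $j$ lies in that of $\psi_{b'}$. We work in a single-particle basis adapted to the orthogonal supports $V_b:=\operatorname{supp}\Gamma^{(1)}_{\psi_b}$. Then $\ket{\psi_b}\in\wedge^k V_b$, and $\hat c_j^\dagger\hat c_i\ket{\psi_b}$ has all $k$ particles in $V_b$ except possibly one in $V_{b'}$. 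Since $V_b\perp V_{b'}$ and $k\geq2$, this state has at least one particle in $V_b$ and hence a nonzero $V_b$-occupation count. Because $\ket{\psi_{b'}}$ has all particles in $V_{b'}$, the two states lie in different occupation sectors and are orthogonal. This is the only place where $k\ge2$ is used; for $k=1$ the cross term $\ket{\psi_{b'}}\bra{\psi_b}$ survives. The identity is basis-independent, so it holds in the original mode basis as well.

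Next we fix $\ket z=\sum_b z_b\ket{\psi_b}$ drawn from the unitarily invariant (Haar) measure on unit vectors of $T$, computed from the given orthonormal basis $\ket{u_1},\dots,\ket{u_d}$. By the identity above and orthogonality of supports, $\Tr[\Gamma_{\psi_b}^{(1)}\Gamma_{\psi_{b'}}^{(1)}]=0$ for $b\ne b'$. Hence
\begin{align}
\Tr\bigl[\Gamma_x^{(1)}\Gamma_z^{(1)}\bigr]=\sum_b |x_b|^2\lambda_b,\qquad \lambda_b:=|z_b|^2\,\Tr\bigl[(\Gamma^{(1)}_{\psi_b})^2\bigr].
\end{align}
Polarizing this quadratic form shows that the Hermitian matrix $(S_z)_{\alpha\beta}=\Tr[\Gamma^{(1)}_{u_\beta,u_\alpha}\Gamma^{(1)}_z]$ equals $W^\dagger\operatorname{diag}(\lambda_b)W$, where $W$ is the unitary with $W_{b\alpha}=\braket{\psi_b}{u_\alpha}$. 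Here we use that the $\psi_b$ are orthonormal, which follows from the orthogonality of their supports. The polarization step needs the sesquilinear extension of the no-cross-term identity, namely $\Gamma^{(1)}_{x,y}=\sum_b \bar y_b x_b\Gamma^{(1)}_{\psi_b}$ for transition RDMs, which holds by the same sector argument. So if the $\lambda_b$ are pairwise distinct, the eigenvectors of $S_z$ are the coefficient vectors of the $\psi_b$ in the $u$-basis, each unique up to phase. The ordering of the eigenvalues fixes the permutation.

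The main obstacle is to show that the $\lambda_b$ are distinct with probability one. Each $c_b:=\Tr[(\Gamma^{(1)}_{\psi_b})^2]$ is strictly positive, since $\Tr\Gamma^{(1)}_{\psi_b}=k>0$. Under the Haar measure, the vector $(|z_1|^2,\dots,|z_d|^2)$ is uniformly distributed on the probability simplex (Dirichlet$(1,\dots,1)$), which has a density. For each pair $b\ne b'$, the event $c_b|z_b|^2=c_{b'}|z_{b'}|^2$ is a proper hyperplane section of the simplex and therefore has measure zero. A union bound over the $\binom d2$ pairs gives distinctness with probability one.

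For complexity, note that $d\le m/k\le m$. Computing each $\Gamma^{(1)}_{u_\beta,u_\alpha}$ and $\Gamma^{(1)}_z$ from vectors of dimension $D_k=\binom mk$ costs $O(m^2 D_k)$ operations. Building $S_z$ therefore takes $O(d^2 m^2 D_k)=m^{O(k)}$ operations, and diagonalizing the $d\times d$ matrix costs $O(d^3)$. Everything is computed from the known basis, so no copies of the target state are used.
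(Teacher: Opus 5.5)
Your proposal is correct and follows essentially the same route as the paper's proof of Prop.~\ref{prop:exact-gram-splitting}: cross-block transition $1$-RDMs vanish for $k\geq2$, so the splitting operator becomes $\mathcal S_z=\sum_b\tau_b|z_b|^2\Pi_{\omega_b}$, which is diagonal in the hidden block basis, and its eigenvalues are almost surely distinct because $(|z_b|^2)_b$ is uniform on the simplex. The only cosmetic difference is how you show the cross terms vanish: you use occupation-number sectors in an adapted basis, whereas the paper observes directly that the residuals $\hat c_i\omega_b\in\wedge^{k-1}B_b$ and $\hat c_j\omega_c\in\wedge^{k-1}B_c$ are orthogonal.
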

The general construction and its proof are given in App.~\ref{app:exact-gram-splitting}. Lem.~\ref{lem:main-robust-gram-splitting} below provides the corresponding error guarantee when only an estimated subspace is available.

\noindent\textbf{An alternative reconstruction for robustness.}
The stability of the reconstruction above depends on the spectral gap between the complete-block and mix-block parts of the $2$-RDM. Since the block-state span has eigenvalue one and the largest mix-block eigenvalue is $w_1w_2$, the separating gap is
\begin{align}
    1-w_1w_2.
\end{align}
As both $w_b$ approach one, the gap vanishes, allowing estimation errors to mix the complete-block and mix-block subspaces. We therefore replace direct extraction of the block-state span with a reconstruction that first uses the $1$-RDM to separate the modes with large occupation. We then use different compressions of the $2$-RDM to recover blocks with and without a dominant branch, as we illustrate below using exact RDMs. Estimation errors are analyzed in Sec.~\ref{sec:equal-reconstruction}.

To see how the $1$-RDM identifies these modes, note that a one-body operator cannot connect the two branches of a block because their occupied mode pairs are disjoint. The input $1$-RDM and its transformation under passive Gaussian evolution are therefore
\begin{align}
    \Gamma_{\Psi_{\mathrm{in}}}^{(1)}&=\operatorname{diag}\bigl(w_1,w_1,1-w_1,1-w_1,w_2,w_2,1-w_2,1-w_2\bigr),\qquad \Gamma_\Psi^{(1)}=U\Gamma_{\Psi_{\mathrm{in}}}^{(1)}U^\dagger.
\end{align}
Thus, the branch weights remain eigenvalues, while the corresponding one-particle directions become $U\ket{j}$. Choose $\theta\in[2/3,3/4]$ and call a branch \emph{dominant} when its weight exceeds $\theta$. To illustrate both components of this reconstruction in a single example, suppose $w_1>\theta$ and $1/2\leq w_2<\theta$, so that only the first block has a dominant branch. Although the original spectral gap is bounded below in this case, this choice lets us demonstrate separately how the procedure handles blocks with and without a dominant branch. The spectral subspace of $\Gamma_\Psi^{(1)}$ above $\theta$ is then
\begin{align}
    S_\theta=\operatorname{span}\{U\ket{1},U\ket{2}\}.
    \label{eq:overview-high-occupation-space}
\end{align}
Diagonalizing the output $1$-RDM determines this space without knowledge of $U$. The remaining branch of the first block and both branches of the second block have all their occupied modes in $S_\theta^\perp$. Thus, the first block is split between the two mode spaces, while the second remains entirely in the latter.

To recover the first block, we extract the coherence between its two branches. Let $Q_{\mathrm{hi}}$ project onto two-particle states with both particles in $S_\theta$, and let $Q_{\mathrm{lo}}$ project onto those with both particles in $S_\theta^\perp$. Both projectors are constructed from $S_\theta$ alone. While $Q_{\mathrm{lo}}\neq I-Q_{\mathrm{hi}}$ because pairs with one particle in each space lie in neither range, no complete branch is lost: both of its modes have the same occupation and therefore lie in the same space. Write the transformed branches of the first block as $\ket{f_1}:=(\wedge^2 U)\ket{12}$ and $\ket{g_1}:=(\wedge^2 U)\ket{34}$. Since every mixed pair contains a particle from the second block, $Q_{\mathrm{hi}}$ removes all such pairs and has range spanned by $\ket{f_1}$. Applying $Q_{\mathrm{lo}}$ on the other side then isolates the coherence between the two branches:
\begin{align}
    C_2:=Q_{\mathrm{lo}}\Gamma_\Psi^{(2)}Q_{\mathrm{hi}}
    =\sqrt{w_1(1-w_1)}\ket{g_1}\bra{f_1}.
    \label{eq:overview-pair-cross}
\end{align}
While the right singular vector associated with the nonzero singular value $\sigma_1=\sqrt{w_1(1-w_1)}$ identifies $\ket{f_1}$ up to a phase, applying $C_2$ to that vector gives the remaining branch component with its relative phase. Together with the weight obtained from the $2$-RDM, this determines the full block state:
\begin{align}
    w_1&=\bra{f_1}\Gamma_\Psi^{(2)}\ket{f_1},
    \qquad
    \ket{\omega_1}=\sqrt{w_1}\ket{f_1}
    +\frac{C_2\ket{f_1}}{\sqrt{w_1}}.
\end{align}

The same cross compression removes the mix-block contribution even when both blocks have dominant branches, including the regime $w_1,w_2\to1$ that closes the original spectral gap. Indeed, the mix-block part is diagonal in the transformed pair basis and therefore has no matrix elements between the all-high and all-low sectors. In that case, $C_2$ retains the branch coherences of both blocks.

Returning to our example, both branches of the second block lie entirely in the modes selected by $S_\theta^\perp$, so we recover it by restricting the $2$-RDM to pairs in that space:
\begin{align}
    A_2:=Q_{\mathrm{lo}}\Gamma_\Psi^{(2)}Q_{\mathrm{lo}}
    =(1-w_1)\ket{g_1}\bra{g_1}
      +\ket{\omega_2}\bra{\omega_2}
      +Q_{\mathrm{lo}}\Gamma_{\mathrm{mix}}^{(2)}Q_{\mathrm{lo}}.
    \label{eq:overview-pair-low}
\end{align}
This preserves the full second block, including the coherence between its branches. Each surviving mixed pair contains one mode occupied in $\ket{g_1}$, contributing a factor $1-w_1$; its eigenvalue is therefore either $(1-w_1)w_2$ or $(1-w_1)(1-w_2)$. Thus, $\ket{\omega_2}$ spans the eigenvalue-one space of $A_2$, while every other eigenvalue is at most $1-w_1<1-\theta$. This fixed spectral gap makes recovery of the second block stable, completing the reconstruction together with the first block obtained from $C_2$.

Although the $A_2$ step has a fixed spectral gap, the singular value $\sigma_1$ of $C_2$ can be arbitrarily small. To avoid resolving singular values that estimation errors may obscure, we introduce a threshold $t>0$ and reconstruct the first block from $C_2$ only when $\sigma_1>t$. If $\sigma_1\leq t$, we approximate the first block by its dominant branch, with an error controlled by $t$. Its occupied space $S_\theta$ is still known: occupying both modes of any orthonormal basis of this space gives $\ket{f_1}$ up to a phase, without resolving the small singular value. The approximation has infidelity
\begin{align}
    1-|\langle f_1|\omega_1\rangle|^2
    =1-w_1
    =\frac{\sigma_1^2}{w_1}
    \leq\frac{t^2}{\theta}.
\end{align}
Combining this branch with the exactly recovered second block implies the same full-state infidelity. Thus, the threshold prepares the reconstruction for estimated RDMs while introducing a controlled approximation already in the exact setting.

\medskip
\noindent\textbf{Constructing the input and unitary.}
Let $\ket{\Phi}$ denote the state reconstructed from exact RDMs with the chosen truncation threshold. To express it as a block-product input followed by a passive Gaussian unitary, we first recover the branch coefficients and occupied modes within each block, then combine these modes into one unitary. For the second block, recovered in full by the $A_2$ step, we seek
\begin{align}
    \ket{\omega_2}&=\sum_{\ell=1}^2 a_\ell\hat c^\dagger[u_\ell]\hat c^\dagger[v_\ell]\ket{\vac},\qquad a_\ell>0,\quad a_1^2+a_2^2=1,
\end{align}
where $u_1,v_1,u_2,v_2$ are orthonormal modes to be determined, with coefficient phases absorbed into the modes.

The key observation is that annihilating a particle in a mode of one branch eliminates the other branches because their occupied spaces are orthogonal:
\begin{align}
    \hat c[u_\ell]\ket{\omega_2}&=a_\ell\hat c^\dagger[v_\ell]\ket{\vac},\qquad \hat c[u_\ell]:=\bigl(\hat c^\dagger[u_\ell]\bigr)^\dagger.
\end{align}
Thus, once one mode is identified, contraction recovers its partner together with the branch amplitude and phase. To find that first mode, represent $\ket{\omega_2}$ by its known antisymmetric coefficient matrix $M$, with $M_{ij}=\langle ij|\omega_2\rangle$ for $i<j$. The matrix $MM^\dagger$ identifies the squared branch amplitudes and their occupied eigenspaces. For a normalized eigenvector $u_1$ with positive eigenvalue $a_1^2$, contraction with the original $M$ determines the partner:
\begin{align}
    MM^\dagger&=\sum_{\ell=1}^2a_\ell^2(u_\ell u_\ell^\dagger+v_\ell v_\ell^\dagger),\qquad v_1:=-M\overline{u_1}/a_1.
\end{align}
Using $M$ fixes the pairing and phase information needed to reproduce the block. The resulting pair is orthonormal, and subtracting its branch leaves a residual on modes orthogonal to both. Repeating this procedure recovers every branch, for any number of two-particle branches. It also works for coincident weights, when the recovered pairs may differ from the original ones while representing the same block; Prop.~\ref{prop:iu-two-particle-decomposition} proves these claims.

If the first block was fully recovered, apply the same procedure to $\ket{\omega_1}$. If it was replaced by its dominant branch $\ket{f_1}$, any orthonormal basis of its occupied space $S_\theta$ gives two occupied-core modes representing $\ket{f_1}$ up to phase~\cite{terhal2002classical}. Each block's recovered modes stay within its one-particle support. Since these supports and the core are mutually orthogonal, all recovered modes form one orthonormal set. Assign them to columns of $U_{\mathrm{out}}$ and complete them to an orthonormal basis. On the corresponding input modes, use the recovered coefficients to form each block's branch superposition, occupy the core modes, and leave the additional modes empty. This gives a block-product input $\ket{\Xi_{\mathrm{in}}}$ whose output satisfies, up to a global phase,
\begin{align}
    \ket{\widetilde\Psi}&:=\hat U_{\mathrm{out}}\ket{\Xi_{\mathrm{in}}}=\ket{\Phi}.
\end{align}
Preparation therefore preserves the reconstruction error: the output is exact when both blocks were recovered and has infidelity at most $t^2/\theta$ when the first block was truncated. For general particle numbers, finding the branch modes requires an additional step, while assembling them into a single unitary uses the same orthogonality principle.

\subsection{Reconstruction in the homogeneous setting}
\label{sec:homogeneous-overview}

We extend the example to the general homogeneous setting, where every block contains the same number $p\geq2$ of particles, $I_0=\varnothing$, and no additional vacuum modes are present. We again use exact $1$- and $p$-RDMs to describe the reconstruction. The implementation with estimation errors is given in Sec.~\ref{sec:equal-reconstruction}.

\begin{enumerate}
    \setcounter{enumi}{-1}
    \item \textbf{Block structure of the $p$-RDM.}
As in the two-particle example, selecting all $p$ particles from one block gives its rank-one projector, while selecting particles from several blocks gives the mix-block contribution. Thus,
\begin{align}
    \Gamma_\Psi^{(p)}
    =\sum_{b=1}^{n}\ket{\omega_b}\bra{\omega_b}
      +\Gamma_{\mathrm{mix}}^{(p)},
    \label{Eq:Decomposition of p RDM}
\end{align}
where the two contributions have orthogonal supports. The nonzero mixed eigenvalues are products of branch weights from at least two blocks and can approach one, so the eigenvalue-one space again need not have a uniform spectral gap.

    \item \textbf{Separate the modes using the $1$-RDM.}
    \label{step:overview-high}
    As in the example, the input $1$-RDM is diagonal: each mode in $I_{b,\ell}$ has occupation $|\omega_{b,\ell}|^2$. Each branch weight therefore appears on $p$ input modes. The passive Gaussian unitary $\hat U$ maps these modes to the occupied space $F_{b,\ell}:=\operatorname{span}\{U\ket{j}:j\in I_{b,\ell}\}$. Writing $P_F$ for the orthogonal projector onto $F$, the output $1$-RDM is
    \begin{align}
        \Gamma_\Psi^{(1)}
        =U\Gamma_{\Psi_{\mathrm{in}}}^{(1)}U^\dagger
        =\sum_{b,\ell}|\omega_{b,\ell}|^2P_{F_{b,\ell}}.
        \label{eq:overview-one-rdm-form}
    \end{align}
    The occupied spaces are mutually orthogonal, so the branch weights remain eigenvalues of the output $1$-RDM.

    Choose $\theta\in[2/3,3/4]$ and let $S_\theta$ be the spectral subspace of $\Gamma_\Psi^{(1)}$ above $\theta$. As in the example, we call a branch \emph{dominant} when its weight exceeds $\theta$. Thus,
    \begin{align}
        S_\theta=\bigoplus_{b,\ell:\,|\omega_{b,\ell}|^2>\theta}F_{b,\ell}.
    \end{align}
    Since $\theta>1/2$, each block has at most one dominant branch. In the example, $S_\theta$ contained only the occupied modes of the first block's dominant branch; here it collects the occupied spaces of all dominant branches. Diagonalizing the output $1$-RDM determines this combined space without knowing $U$ or resolving the individual branches.

    To form the corresponding compressed RDMs, let $Q_{\mathrm{hi}}$ project onto $p$-particle states with all particles in $S_\theta$, and let $Q_{\mathrm{lo}}$ project onto those with all particles in $S_\theta^\perp$. As in the example, states with particles in both mode spaces are omitted, but every complete branch lies in one of the two selected sectors. We form the analogues of $C_2$ and $A_2$:
    \begin{align}
        C_p:=Q_{\mathrm{lo}}\Gamma_\Psi^{(p)}Q_{\mathrm{hi}},
        \qquad
        A_p:=Q_{\mathrm{lo}}\Gamma_\Psi^{(p)}Q_{\mathrm{lo}}.
        \label{eq:overview-two-compressions}
    \end{align}

    \item \textbf{Recover the retained dominant blocks using $C_p$.}
    \label{step:overview-spaces}
    \label{step:overview-split}
    The cross compression again retains coherence between a dominant branch and the remaining branches of the same block. For the analysis, set $w_b:=\max_\ell|\omega_{b,\ell}|^2$ and, for $w_b>\theta$, write $\ket{\omega_b}=\sqrt{w_b}\ket{f_b}+\ket{r_b}$. Here, $\ket{f_b}$ is the normalized dominant branch, with its coefficient's phase absorbed into it, and $\ket{r_b}$ is the sum of the remaining branches, with $\|r_b\|^2=1-w_b$. In the two-branch example, $\ket{r_1}=\sqrt{1-w_1}\ket{g_1}$.

    Each proper lower-order RDM contributing to the mix-block part has no coherence between distinct branches. Since each branch lies entirely in either $S_\theta$ or $S_\theta^\perp$, these contributions cannot connect the two projected sectors. Therefore, $Q_{\mathrm{lo}}\Gamma_{\mathrm{mix}}^{(p)}Q_{\mathrm{hi}}=0$, and
    \begin{align}
        C_p
        =\sum_{b:w_b>\theta}\sqrt{w_b}\ket{r_b}\bra{f_b}.
        \label{eq:overview-cross-operator-form}
    \end{align}
    Since different blocks have mutually orthogonal one-particle supports, the vectors $\ket{f_b}$ are orthonormal and the residual vectors $\ket{r_b}$ are mutually orthogonal. Thus, $\ket{f_b}$ is a right singular vector with singular value $\sigma_b=\sqrt{w_b(1-w_b)}$.

    As in the example, small singular values can be obscured by RDM-estimation errors. Therefore, we retain the right singular space $ R_p$ of $ C_p$ above a threshold $t>0$,
    \begin{align}
        R_p :=\operatorname{span} \{\ket{f_b}:w_b>\theta,\ \sigma_b>t\}.
    \end{align}
    The selected space had dimension at most one in the example; here it can contain several dominant branches. Since different branches can have the same singular value, the singular-value decomposition (SVD) alone need not identify the individual branches. Thus, we apply the Gram-splitting procedure, guaranteed by Lem.~\ref{lem:main-exact-gram-splitting}, to $ R_p$ to separate them.
Once an individual branch is recovered, its weight and full block state are obtained from the same identities as in the example:
    \begin{align}
        w_b&=\bra{f_b}\Gamma_\Psi^{(p)}\ket{f_b},
        \qquad
        \ket{\omega_b}
        =\sqrt{w_b}\ket{f_b}
        +\frac{C_p\ket{f_b}}{\sqrt{w_b}}.
        \label{eq:overview-full-block-reconstruction}
    \end{align}
Equivalently, the complete-block restriction yields $\Gamma_\Psi^{(p)}\ket{f_b}=\sqrt{w_b}\ket{\omega_b}$. Thus, applying the full $p$-RDM to the recovered branch and normalizing its image also recovers the block. We use this form with estimated RDMs in Sec.~\ref{sec:equal-reconstruction}.

    \item \textbf{Recover the blocks without a dominant branch using $A_p$.}
    \label{step:overview-low}
    As $A_2$ preserved the second block in the example, $A_p$ preserves every block whose branches all lie outside $S_\theta$. If $w_b\leq\theta$, its complete-block contribution remains $\ket{\omega_b}\bra{\omega_b}$. If $w_b>\theta$, only $\ket{r_b}\bra{r_b}$ remains, with eigenvalue $1-w_b<1-\theta$. Since the surviving mix-block eigenvalues are products of branch weights at most $\theta$, they are also at most $\theta$. Consequently,
    \begin{align}
        W_p^{\mathrm H} :=\ker(A_p-I) =\operatorname{span}\{\ket{\omega_b}:w_b\leq\theta\}.
        \label{eq:overview-low-weight-block-space}
    \end{align}
    This eigenvalue-one space is separated from the remaining spectrum by a gap of at least $1-\theta\geq1/4$. Thus, we apply Gram splitting to $W_p^{\mathrm H}$ to recover the individual block states.

    \item \textbf{Approximate the excluded dominant blocks.}
    \label{step:overview-core}
    For an excluded dominant block, $\sigma_b\leq t$ implies $1-w_b=\sigma_b^2/w_b\leq t^2/\theta$. Thus, the block can be approximated by its dominant branch, with infidelity $1-w_b$.
The occupied directions of the excluded branches are still included in $S_\theta$. In the example, $S_\theta$ was precisely the occupied space of the excluded dominant branch. In general, we must remove the occupied spaces of the dominant branches already recovered in Step~\ref{step:overview-split}. The remaining space represents all excluded branches together, without grouping their modes into individual blocks.
Using the dominant branches recovered in Step~\ref{step:overview-split}, define
\begin{align}
    K&:=\sum_{b:w_b>\theta,\ \sigma_b>t}\Gamma_{f_b}^{(1)},
    \qquad F:=\operatorname{ran}K.
\end{align}
Each $\Gamma_{f_b}^{(1)}$ projects onto the occupied one-particle space of a recovered dominant branch. Since these spaces are mutually orthogonal, $K$ is the projector onto their combined occupied space $F$.
The space $S_\theta$ contains the occupied directions of both the recovered and excluded dominant branches. Removing $F$ therefore leaves exactly the occupied space of the excluded branches, which we define as
\begin{align}
    S_{\mathrm{core}}:=S_\theta\cap F^\perp.
\end{align}
Occupying every mode in an orthonormal basis of $S_{\mathrm{core}}$ yields the state $\ket{\sigma_{\mathrm{core}}}$, which represents the product of the excluded dominant branches up to a global phase. Use the vacuum when the core space is empty.

\item \textbf{Construct a block-product input and a passive Gaussian unitary.}
\label{step:overview-assemble}
\label{step:overview-preparation}
Let $\ket{\omega_1},\ldots,\ket{\omega_d}$ denote all full block states recovered in Steps~\ref{step:overview-split} and~\ref{step:overview-low}. We now find their branch modes and combine them with the core.

For $p=2$, use the pair decomposition from the guiding example. For $p\geq3$, an arbitrary eigenvector in a degenerate $1$-RDM eigenspace can mix branches without yielding a valid branch decomposition, so we first use a random contraction to separate their supports.

For each recovered block with $p\geq3$, sample $g\in\mathbb C^m$ with independent standard complex Gaussian components and form $\hat c[g]\ket{\omega_b}$. Each contracted branch occupies $p-1$ orthonormal modes within its original support. Since $p-1\geq2$, cross-branch terms vanish in the $1$-RDM of this unnormalized vector. Each contracted branch contributes its squared norm as an eigenvalue with multiplicity $p-1$. These eigenvalues are distinct with probability one, so diagonalizing this classically computed RDM identifies the occupied space $E_\ell$ of each contracted branch.

Choose an orthonormal basis $e_{\ell,1},\ldots,e_{\ell,p-1}$ of $E_\ell$ and compute
\begin{align}
    \ket{w_{b,\ell}}&:=\hat c[e_{\ell,p-1}]\cdots\hat c[e_{\ell,1}]\ket{\omega_b}.
    \label{eq:overview-branch-completion}
\end{align}
By branch orthogonality, this leaves a nonzero one-particle vector orthogonal to $E_\ell$. Normalize it to obtain the missing mode, and take the overlap of the resulting ordered branch state with $\ket{\omega_b}$ to recover its coefficient $a_{b,\ell}$, including its phase.

For any $p\geq2$, collect the modes of the $s_b$ recovered branches as columns of $V_b$, ordered by branch. In the corresponding local input order, set
\begin{align}
    \ket{\Xi_{b,\mathrm{in}}}&:=\sum_{\ell=1}^{s_b}a_{b,\ell}\ket{p(\ell-1)+1,\ldots,p\ell}_b.
\end{align}
As in the example, combine these columns with an orthonormal basis $V_0$ of the core and an orthonormal completion $V_{\mathrm{vac}}$. With $q:=\dim S_{\mathrm{core}}$ and $v:=m-q-p\sum_b s_b$, define
\begin{align}
    U_{\mathrm{out}}&:=\bigl(V_0,V_1,\ldots,V_d,V_{\mathrm{vac}}\bigr),\qquad \ket{\Xi_{\mathrm{in}}}:=\ket{1}^{\otimes q}\otimes\bigotimes_{b=1}^{d}\ket{\Xi_{b,\mathrm{in}}}\otimes\ket{0}^{\otimes v}.
    \label{eq:overview-preparation-input}
\end{align}
With probability one, the output $\ket{\widetilde\Psi}:=\hat U_{\mathrm{out}}\ket{\Xi_{\mathrm{in}}}$ equals
\begin{align}
    \ket{\Phi}:=\left(\prod_{j=1}^{d}\hat c^\dagger[\omega_j]\right)\ket{\sigma_{\mathrm{core}}},
    \label{eq:overview-numbered-output}
\end{align}
up to a global phase, where the factors are taken in a fixed order. Its target infidelity is at most the sum of $1-w_b$ over the excluded blocks, and hence at most $nt^2/\theta$. This construction introduces no additional approximation error or target copies.

\end{enumerate}

\section{Learning homogeneous blocks}
\label{sec:equal-reconstruction}

\begin{algorithm}[tbp]
\caption{Finite-copy learning of homogeneous blocks}
\label{alg:homogeneous-estimated}

\KwIn{Copies of an $m$-mode state $\ket{\Psi}$ in the homogeneous setting;
$p\geq2$ and target infidelity $\varepsilon_{\mathrm{fid}}$.}
\KwOut{A classical description of $\ket{\widetilde\Psi}$ satisfying
$\abs{\braket{\widetilde\Psi}{\Psi}}^2\geq1-\varepsilon_{\mathrm{fid}}$.}

\BlankLine
\tcp{Estimate the RDMs and separate the modes.}
$s_0\gets\sqrt{\varepsilon_{\mathrm{fid}}/(48m)}$;
choose $\mu$ by Eq.~\eqref{eq:equal-rdm-accuracy}\;

Obtain the RDM estimates $\widetilde\Gamma^{(1)}$ and
$\widetilde\Gamma^{(p)}$ using fermionic partial tomography
of Ref.~\cite{zhao2021fermionic}, each with operator-norm error
at most $\mu$\;

Choose $\theta\in[2/3,3/4]$ by Eq.~\eqref{eq:noisy-one-rdm-threshold};
let $\widetilde S_\theta$ be the spectral subspace of
$\widetilde\Gamma^{(1)}$ above $\theta$\;

Let $\widetilde Q_{\mathrm{hi}}$ and $\widetilde Q_{\mathrm{lo}}$
project onto $p$-particle states with all particles in
$\widetilde S_\theta$ and $\widetilde S_\theta^\perp$, respectively\;

$\widetilde C_p\gets
\widetilde Q_{\mathrm{lo}}\widetilde\Gamma^{(p)}\widetilde Q_{\mathrm{hi}},
\quad
\widetilde A_p\gets
\widetilde Q_{\mathrm{lo}}\widetilde\Gamma^{(p)}\widetilde Q_{\mathrm{lo}}$\;

\BlankLine
\tcp{Recover the branches and full blocks.}
Choose $t\in[s_0,2s_0]$ and the retained right singular space
$\widetilde R_p$ by Eq.~\eqref{eq:noisy-retained-right-singular-space}\;

Let $\widetilde W_p^{\mathrm H}$ be the spectral subspace of
$\widetilde A_p$ above $(1+\theta)/2$\;

$\widetilde d_p^{\mathrm N}\gets\dim\widetilde R_p$,
\quad
$\widetilde d_p^{\mathrm H}\gets\dim\widetilde W_p^{\mathrm H}$,
\quad
$D\gets \widetilde d_p^{\mathrm N}+\widetilde d_p^{\mathrm H}$\;

Apply robust Gram splitting (Lem.~\ref{lem:main-robust-gram-splitting})
to $\widetilde R_p$ and $\widetilde W_p^{\mathrm H}$;
store the respective outputs as
$\{\ket{\widetilde f_j}\}_{j=1}^{\widetilde d_p^{\mathrm N}}$ and
$\{\ket{\widetilde\omega_j}\}_{j=\widetilde d_p^{\mathrm N}+1}^D$\;

\For{$j=1,2,\dots,\widetilde d_p^{\mathrm N}$}{
    $\ket{\widetilde y_j}\gets
    \widetilde\Gamma^{(p)}\ket{\widetilde f_j}$,
    \quad
    $\ket{\widetilde\omega_j}\gets
    \frac{\ket{\widetilde y_j}}{\norm{\widetilde y_j}}$\;
}

\BlankLine
\tcp{Construct the core and prepare the output.}
$\widetilde K\gets\sum_{j=1}^{\widetilde d_p^{\mathrm N}}\Gamma_{\widetilde f_j}^{(1)}$;
let $\widetilde F$ be its spectral subspace above $1/2$\;

$\widetilde S_{\mathrm{core}}\gets
\widetilde S_\theta\cap\widetilde F^\perp$;
choose an orthonormal basis $V_0$ of $\widetilde S_{\mathrm{core}}$\;

$\eta\gets\mathsf A_p m^5\mu/s_0$, where $\mathsf A_p$ is the constant in
Prop.~\ref{prop:equal-p-input-accuracy}\;

$(\ket{\Xi_{\mathrm{in}}},\hat U_{\mathrm{out}}) \gets \textnormal{\textsc{Prepare}}\bigl(
V_0,\{(\ket{\widetilde\omega_b},p,\eta)\}_{b=1}^{D}
\bigr)$\;

\KwRet{$\ket{\widetilde\Psi}=\hat U_{\mathrm{out}}\ket{\Xi_{\mathrm{in}}}$}\;
\end{algorithm}

%
We now implement the reconstruction of Sec.~\ref{sec:homogeneous-overview} using RDM estimates obtained from finitely many copies. The steps remain the same, but the occupation and singular-value thresholds must be chosen to make subspace selection stable, and Gram splitting and core construction must account for errors in the recovered spaces. The \textsc{Prepare} procedure in Alg.~\ref{alg:prepare} combines branch recovery from App.~\ref{app:iu-estimated-branches} with the joint mode orthogonalization and input-and-unitary construction of App.~\ref{app:iu-input-and-unitary}.


\subsection{Reconstruction from estimated RDMs}
\label{subsec:one-rdm-equal}
\label{subsec:p-rdm-equal}

We first specify how accurately the RDMs must be estimated to achieve the target infidelity $\varepsilon_{\mathrm{fid}}$. Our stability analysis shows that it is sufficient to obtain Hermitian estimates $\widetilde{\Gamma}^{(1)}$ and $\widetilde{\Gamma}^{(p)}$ satisfying
\begin{align}
    \max\!\left\{ \norm{ \widetilde\Gamma^{(1)} - \Gamma_{\Psi}^{(1)} }, \norm{ \widetilde\Gamma^{(p)} - \Gamma_{\Psi}^{(p)} } \right\} \leq \mu,
    \qquad \mu \leq c_p \frac{ \varepsilon_{\mathrm{fid}} }{ m^{\lceil(p+13)/2\rceil} }, 
    \label{eq:equal-rdm-accuracy}
\end{align}
where $c_p>0$ is a sufficiently small constant depending only on the fixed particle number $p$. Prop.~\ref{prop:equal-p-input-accuracy} establishes this sufficient accuracy. For fixed $p$, the required RDM accuracy is inverse-polynomial in $m$.
The operator-norm guarantee in Eq.~\eqref{eq:equal-rdm-accuracy} can be achieved simultaneously for both RDMs with probability at least $1-\delta/3$ using
\begin{align}
    O_p
    \left(
        m^{3p+2\lceil(p+13)/2\rceil}
        \varepsilon_{\mathrm{fid}}^{-2}
        \log\frac{2m}{\delta}
    \right)
    \label{eq:eps-two-copy-scaling}
\end{align}
copies of the target state. The conversion from the required RDM accuracy to this sample bound is derived in App.~\ref{app:rdm-estimation}.

The algorithm proceeds as follows.

\begin{enumerate}
\setcounter{enumi}{-1}
\item \textbf{Estimate the RDMs.}
\label{step:equal-estimate}
Using independent copies of $\ket{\Psi}$, estimate $\Gamma_{\Psi}^{(1)}$ and $\Gamma_{\Psi}^{(p)}$ to the accuracy in Eq.~\eqref{eq:equal-rdm-accuracy} with the fermionic partial-tomography protocol of Ref.~\cite{zhao2021fermionic}, based on classical shadows~\cite{huang2020classical}. Allocate failure probability $\delta/6$ to each estimate, so that both satisfy the required accuracy with probability at least $1-\delta/3$.

\item \textbf{Separate the modes using the $1$-RDM estimate.}
\label{step:equal-high}
In Step~\ref{step:overview-high}, we used an occupation threshold $\theta$ to define the high-occupation space. With estimated RDMs, an eigenvalue close to $\theta$ can cross the threshold and change the dimension of the selected space. We therefore choose $\theta$ away from the estimated spectrum. Writing $\widetilde\Gamma^{(1)}=\sum_{j=1}^{m}\widetilde\lambda_j\ket{\widetilde v_j}\bra{\widetilde v_j}$, choose $\theta\in[2/3,3/4]$ such that
\begin{align}
    \min_{j\in[m]}\abs{\theta-\widetilde\lambda_j}
    &\geq \frac{1}{192m}.
    \label{eq:noisy-one-rdm-threshold}
\end{align}
Such a threshold always exists because the intervals of radius $1/(192m)$ around the $m$ estimated eigenvalues have total length at most $1/96$, which is smaller than the length $1/12$ of $[2/3,3/4]$.

We construct the high-occupation subspace from the estimate of $1$-RDM, denoted as $\widetilde S_\theta:=\operatorname{span}\{\ket{\widetilde v_j}:\widetilde\lambda_j>\theta\}$. Let $S_\theta$ denote the subspace obtained from the exact $1$-RDM using the same threshold. Under the accuracy condition in Eq.~\eqref{eq:equal-rdm-accuracy}, these subspaces have the same dimension, and spectral perturbation bounds imply
\begin{align}
    \norm{P_{\widetilde S_\theta}-P_{S_\theta}} &=O(m\mu).
    \label{eq:noisy-high-space-error-main}
\end{align}
Following Step~\ref{step:overview-high}, we define $\widetilde Q_{\mathrm{hi}}$ as the projector onto the $p$-particle states whose particles all lie in $\widetilde S_\theta$, and $\widetilde Q_{\mathrm{lo}}$ as the projector onto those whose particles all lie in $\widetilde S_\theta^\perp$. Then, the error in $\widetilde S_\theta$ bounds the errors in both projectors,
\begin{align}
    \max\left\{
        \norm{\widetilde Q_{\mathrm{hi}}-Q_{\mathrm{hi}}},
        \norm{\widetilde Q_{\mathrm{lo}}-Q_{\mathrm{lo}}}
    \right\} &=O(pm\mu).
    \label{eq:noisy-p-particle-projector-error}
\end{align}

Next, we use these projectors to compress the estimated $p$-RDM. Specifically, we define
\begin{align}
\widetilde C_p
&:=\widetilde Q_{\mathrm{lo}}\widetilde\Gamma^{(p)}\widetilde Q_{\mathrm{hi}},
\qquad
\widetilde A_p
:=\widetilde Q_{\mathrm{lo}}\widetilde\Gamma^{(p)}\widetilde Q_{\mathrm{lo}}.
\label{eq:noisy-p-rdm-compressions}
\end{align}
As before, $\widetilde C_p$ is used for blocks with a dominant branch, while $\widetilde A_p$ is used for blocks without one. Both the RDM-estimation error and the projector errors affect these operators, but their combined effect satisfies
\begin{align}
    \max\left\{
        \norm{\widetilde C_p-C_p},
        \norm{\widetilde A_p-A_p}
    \right\} &=O(pm\mu),
    \label{eq:noisy-p-rdm-compression-error}
\end{align}
where $C_p$ and $A_p$ are the exact compressions defined in Eq.~\eqref{eq:overview-two-compressions}, constructed using the same threshold $\theta$.

\item \textbf{Recover the retained dominant blocks using $\widetilde C_p$.}
\label{step:equal-split}
In Sec.~\ref{sec:homogeneous-overview}, the retained dominant branches are recovered from the right singular space of $C_p$ above a threshold $t$. With the estimate $\widetilde C_p$, we choose $t$ small enough that replacing the excluded blocks by their dominant branches incurs only a small total error. At the same time, we keep $t$ sufficiently far from the estimated singular values to prevent estimation errors from changing which singular values are retained.
Set $s_0:=\sqrt{\varepsilon_{\mathrm{fid}}/(48m)}$, and let $\widetilde\sigma_j$ and $\ket{\widetilde g_j}$ be the singular values and corresponding right singular vectors of $\widetilde C_p$. Choose $t\in[s_0,2s_0]$ such that
\begin{align}
    \min_j\abs{t-\widetilde\sigma_j}\geq\frac{s_0}{16m},
    \qquad \widetilde R_p:=\operatorname{span}\{\ket{\widetilde g_j}:\widetilde\sigma_j>t\}.
\label{eq:noisy-retained-right-singular-space}
\end{align}
The exact $C_p$ has rank at most the number of dominant blocks, which is at most $m$. Under Eq.~\eqref{eq:equal-rdm-accuracy}, all remaining estimated singular values lie below $s_0/2$, so only at most $m$ values can constrain the choice of $t$ in $[s_0,2s_0]$; their excluded intervals have total length at most $s_0/8$. Thus, such a threshold exists. The separation condition and the compression-error bound in Eq.~\eqref{eq:noisy-p-rdm-compression-error} ensure that $\widetilde R_p$ has the same dimension as the exact space $R_p$ constructed with the same threshold $t$ and remains close to it. 

Next, we recover the retained dominant branches by applying Gram splitting to $\widetilde R_p$. However, small eigenvalue gaps in the Gram-splitting matrix, defined in Eq.~\eqref{eq:main-gram-splitting-matrix}, can make its eigenvectors sensitive to errors in $\widetilde R_p$. Therefore, we add an additional step to check whether the observed gaps are sufficiently large relative to the error in this matrix. Otherwise, we repeat the random trial. Every accepted trial yields normalized vectors $\ket{\widetilde f_j}$ approximating the retained dominant branches, up to phases and relabeling. The gap check and repetition procedure are described in Sec.~\ref{subsec:block-separation-assembly}.

We now reconstruct the full block states from the recovered dominant branches. In the exact setting, $\Gamma_\Psi^{(p)}\ket{f_b}=\sqrt{w_b}\ket{\omega_b}$, as noted after Eq.~\eqref{eq:overview-full-block-reconstruction}. We therefore apply the estimated $p$-RDM to each recovered branch and normalize its image:
\begin{align}
    \ket{\widetilde y_j}
    &:=\widetilde\Gamma^{(p)}\ket{\widetilde f_j},
    \qquad
    \ket{\widetilde\omega_j}
    :=\frac{\ket{\widetilde y_j}}{\norm{\widetilde y_j}}.
    \label{eq:noisy-block-reconstruction}
\end{align}
If $\ket{\widetilde y_j}=0$, the algorithm returns \textsc{Fail}. Under Eq.~\eqref{eq:equal-rdm-accuracy} and a successful Gram-splitting call, Lem.~\ref{lem:df-near-block-stability} gives $\norm{\widetilde y_j}>\sqrt{\theta}/2$. Since $\theta\geq2/3$, normalization amplifies the error by at most a constant factor. The same lemma bounds the reconstructed block error by a constant multiple of the sum of the RDM-estimation and recovered-branch errors. The dominant branches $\ket{\widetilde f_j}$ are kept separately for the core construction.

\item \textbf{Recover the blocks without a dominant branch using $\widetilde A_p$.}
\label{step:equal-low}
For blocks with $w_b\leq\theta$, we instead use $\widetilde A_p$. In the exact setting, their joint span is the eigenvalue-one space of $A_p$, while every remaining eigenvalue is at most $\theta$. Since estimation errors can shift the desired eigenvalues away from one, we select them using the midpoint of this gap. Let $\widetilde W_p^{\mathrm H}$ be the spectral subspace of $\widetilde A_p$ corresponding to eigenvalues above $(1+\theta)/2$.
Because $1-\theta\geq1/4$, the error bound in Eq.~\eqref{eq:noisy-p-rdm-compression-error} ensures that this subspace remains close to $\operatorname{span}\{\ket{\omega_b}:w_b\leq\theta\}$. Then, we apply the modified Gram-splitting procedure to $\widetilde W_p^{\mathrm H}$, which approximately recovers these block states.

\item \textbf{Approximate the excluded dominant blocks.}
\label{step:equal-core}
Having recovered the retained dominant blocks and the complementary blocks, we now construct the remaining occupied core. As in Sec.~\ref{sec:homogeneous-overview}, this core represents the excluded dominant branches together through their occupied one-particle subspace. Because small errors in the recovered branches can enlarge their one-particle supports, we estimate the combined occupied support of the retained branches by a spectral threshold before removing it from $\widetilde S_\theta$.

Since the $1$-RDMs of the recovered dominant branches $\ket{\widetilde f_j}$ approximate the individual occupied-space projectors used in Step~\ref{step:overview-core}, their sum approximates the projector onto the combined support of the retained dominant branches. Let $\widetilde F$ be the span of the eigenvectors of this sum with eigenvalues above $1/2$. Every recovered branch lies in $\wedge^p\widetilde S_\theta$, so $\widetilde F\subseteq\widetilde S_\theta$. The remaining occupied subspace is therefore
\begin{align}
    \widetilde S_{\mathrm{core}}
    &:=\widetilde S_\theta\cap\widetilde F^\perp,
    \qquad
    P_{\widetilde S_{\mathrm{core}}}
    =P_{\widetilde S_\theta}-P_{\widetilde F}.
    \label{eq:main-noisy-core-projector}
\end{align}
The sum of the branch $1$-RDMs is independent of the orthonormal basis chosen for $\widetilde R_p$. Its accuracy, and hence the core error, is therefore controlled directly by the error in this space before Gram splitting, as formalized in Prop.~\ref{prop:df-core-stability}. Let $\ket{\widetilde\sigma_{\mathrm{core}}}$ be the state obtained by occupying every mode of an orthonormal basis of $\widetilde S_{\mathrm{core}}$, using the vacuum when this space is empty.

The error associated with this construction comes from replacing the excluded blocks by their dominant branches and estimating their combined occupied subspace. Our choice $t\leq2s_0$ bounds the infidelity from the first replacement by $\varepsilon_{\mathrm{fid}}/16$. 

\item \textbf{Construct a block-product input and a passive Gaussian unitary.}
\label{step:equal-assemble}
\label{step:equal-preparation}
Let $\ket{\widetilde\omega_1},\ldots,\ket{\widetilde\omega_D}$ denote all block states reconstructed in Steps~\ref{step:equal-split} and~\ref{step:equal-low}. With estimated RDMs, these blocks need not admit the required branch decomposition, and different blocks need not have mutually orthogonal one-particle supports. We first approximate each block by a superposition of branches on orthogonal modes, then make the modes from all blocks and the core jointly orthogonal. Controlling the branch counts ensures that these modes fit within the available $m$ modes.

Under Eq.~\eqref{eq:equal-rdm-accuracy}, successful Gram splitting gives, by Prop.~\ref{prop:equal-p-input-accuracy},
\begin{align}
    \min_{\phi\in\mathbb R}\left\|\ket{\widetilde\omega_b}-e^{i\phi}\ket{\omega_b}\right\|&\leq\eta,\qquad \eta:=\mathsf A_pm^5\frac{\mu}{s_0},
\end{align}
where $\mathsf A_p$ is the constant in Prop.~\ref{prop:equal-p-input-accuracy}. This bound sets the local approximation tolerances. For $p=2$, use the pair decomposition from the guiding example and retain its largest coefficients until the normalized partial sum is within $2\eta$ of the estimate. This uses no more pairs than the exact block and gives error at most $3\eta$, up to phase. For $p\geq3$, stabilize the random-contraction construction of Step~\ref{step:overview-preparation} by grouping nearby spectral values, discarding weak groups, and completing the retained branches. A search over increasing branch-count bounds accepts a candidate when its distance from the estimated block meets the corresponding tolerance. App.~\ref{app:iu-estimated-branches} specifies these tests and proves that the searches succeed without exceeding the exact branch counts, with combined failure probability at most $\delta/3$ when each is assigned budget $\delta/(3\max\{1,D\})$.

Finally, combine the recovered branch modes and estimated core modes, rejecting if their total number exceeds $m$. Make them jointly orthonormal while minimizing the weighted sum of squared changes: use weight $1$ for each core mode and $|a_{b,\ell}|^2$ for each mode of branch $\ell$ in block $b$, with the recovered coefficients normalized within each block. These occupation weights limit the effect of poorly determined modes in weak branches. Keeping the coefficients fixed and occupying every core mode defines $\ket{\Xi_{\mathrm{in}}}$; the adjusted columns and an orthonormal completion define $U_{\mathrm{out}}$, with the additional input modes empty. The final estimate is
\begin{align}
    \ket{\widetilde\Psi}&:=\hat U_{\mathrm{out}}\ket{\Xi_{\mathrm{in}}}.
\end{align}
Let $\ket{\Phi}$ be the exact reconstructed state in Eq.~\eqref{eq:overview-numbered-output}, using the same thresholds $\theta,t$. Prop.~\ref{prop:iu-postprocessing}, together with the block and core error bounds, gives on the joint success event
\begin{align}
    \sqrt{1-|\langle\Phi|\widetilde\Psi\rangle|^2}&\leq\mathsf C_pm^{15/2}\frac{\mu}{s_0}\leq\frac{\sqrt{\varepsilon_{\mathrm{fid}}}}{4},
\end{align}
where $\mathsf C_p$ depends only on $p$, and Eq.~\eqref{eq:equal-rdm-accuracy} ensures the second inequality for sufficiently small $c_p$. Combining this with the truncation bound gives target infidelity at most $\varepsilon_{\mathrm{fid}}$. All processing uses the recovered classical data, with no additional target copies.

\end{enumerate}

\subsection{Gram splitting with an estimated subspace}
\label{subsec:block-separation-assembly}

The reconstruction above applies Gram splitting to $\widetilde R_p$ and $\widetilde W_p^{\mathrm H}$. Unlike the exact setting of Lem.~\ref{lem:main-exact-gram-splitting}, an estimated subspace perturbs the splitting matrix, and small eigenvalue gaps can make its eigenvectors unstable. We therefore use the same matrix construction on the estimated space, but accept a trial only when its observed minimum eigenvalue gap exceeds a threshold determined by the supplied subspace-error bound. Otherwise, we sample a new random vector and repeat. The resulting guarantee is as follows.

\begin{lemma}[Robust Gram splitting]\label{lem:main-robust-gram-splitting}
Fix $k\geq2$. There are constants $c_k,C_k>0$, depending only on $k$, with the following property. Let $T\subseteq\wedge^k\mathbb C^m$ be spanned by normalized states $\ket{\psi_1},\ldots,\ket{\psi_d}$ with mutually orthogonal one-particle supports. Given an orthonormal basis of $\widetilde T$, a bound $\eta$ satisfying
\begin{align}
    \dim\widetilde T=\dim T=d,\qquad
    \norm{P_{\widetilde T}-P_T}\leq\eta\leq c_km^{-3},
    \label{eq:main-robust-gram-input}
\end{align}
and a failure budget $\beta_{\mathrm{loc}}\in(0,1)$, the certified Gram-splitting algorithm returns either \textnormal{\textsc{Fail}} or an orthonormal basis $\ket{\widetilde\psi_1},\ldots,\ket{\widetilde\psi_d}$ of $\widetilde T$. Every nonempty returned list satisfies, for some permutation $\pi$ and phases $\varphi_j$,
\begin{align}
    \max_{j\in[d]}\norm{\ket{\widetilde\psi_j}-e^{i\varphi_j}\ket{\psi_{\pi(j)}}}
    \leq C_km^3\eta.
    \label{eq:main-robust-gram-output}
\end{align}
The probability of returning \textnormal{\textsc{Fail}} is at most $\beta_{\mathrm{loc}}$. For fixed $k$, the algorithm uses polynomially many arithmetic operations in $m$ and $\log(1/\beta_{\mathrm{loc}})$ and requires no additional copies of the target state.
\end{lemma}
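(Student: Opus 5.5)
We would prove Lem.~\ref{lem:main-robust-gram-splitting} by carrying the exact construction of Lem.~\ref{lem:main-exact-gram-splitting} over to the estimated space and controlling three things: the error of the splitting matrix, the probability that a random reference vector gives well-separated eigenvalues, and a certification test that lets us discard bad trials without knowing $T$.

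\textbf{Setup and matrix perturbation.} Let $\ket{\widetilde u_1},\ldots,\ket{\widetilde u_d}$ be the supplied orthonormal basis of $\widetilde T$. Since $\norm{P_{\widetilde T}-P_T}\le\eta<1$, the vectors $\ket{u_\alpha}:=P_T\ket{\widetilde u_\alpha}$, orthonormalized by the polar map $(P_T P_{\widetilde T}P_T)^{-1/2}$, form an orthonormal basis of $T$ with $\norm{\ket{u_\alpha}-\ket{\widetilde u_\alpha}}=O(\eta)$. Draw $\widetilde z=\sum_\alpha\zeta_\alpha\ket{\widetilde u_\alpha}$ with $\zeta$ uniform on the unit sphere of $\mathbb C^d$, and set $z=\sum_\alpha\zeta_\alpha\ket{u_\alpha}$. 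Form $(\widetilde S)_{\alpha\beta}=\Tr[\Gamma^{(1)}_{\widetilde u_\beta,\widetilde u_\alpha}\Gamma^{(1)}_{\widetilde z}]$ and its exact counterpart $S$. Transition $1$-RDMs are sesquilinear in their arguments and satisfy $\norm{\Gamma^{(1)}_{x,y}}_{\mathrm{tr}}\le k\norm{x}\norm{y}$. Hence each entry changes by $O(k^2\eta)$, and $\norm{\widetilde S-S}\le\epsilon_S:=O(k^2 d\,\eta)=O(m\eta)$, using $d\le m/k$.

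\textbf{Spectrum of the exact matrix.} By Eq.~\eqref{eq:overview-pair-splitting}, $S$ is Hermitian and, written in the basis of $T$ given by the $\psi_b$ (a unitary change of basis from $u_\alpha$), it is diagonal with entries $\lambda_b=|z_b|^2\,\Tr[(\Gamma^{(1)}_{\psi_b})^2]$. Here $z_b=\braket{\psi_b}{z}$ has the law of the coordinates of a uniform unit vector in $\mathbb C^d$, and $\tau_b:=\Tr[(\Gamma^{(1)}_{\psi_b})^2]$ satisfies $1\le\tau_b\le k$, because $\Gamma^{(1)}_{\psi_b}$ has eigenvalues in $[0,1]$ and trace $k$. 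The eigenvectors of $S$ are exactly the $\psi_b$ in $u$-coordinates.

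\textbf{Anti-concentration (main obstacle).} We need $\Pr[\min_{b\ne b'}|\lambda_b-\lambda_{b'}|\ge\gamma]\ge1/2$ for some $\gamma=\Omega_k(m^{-3})$. Write $|z_b|^2=g_b/\sum_c g_c$ with $g_c$ i.i.d.\ Exp$(1)$. For a fixed pair $b\ne b'$, condition on all other $g_c$ and on $g_b$. Then $\lambda_{b}-\lambda_{b'}$ is a monotone function of $g_{b'}$ with derivative of order $\tau_{b'}/\sum_c g_c$, and the Exp density is at most one. This gives $\Pr[|\lambda_b-\lambda_{b'}|<\gamma]=O(\gamma\,\mathbb E[\textstyle\sum_c g_c])=O(\gamma d)$. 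A union bound over at most $d^2$ pairs gives failure probability $O(\gamma d^3)$, so $\gamma=c\,m^{-3}$ works. This is where the $m^{-3}$ scale in the hypothesis comes from. We would need to check that the conditioning argument handles the normalization $\sum_c g_c$ correctly, perhaps by first restricting to the event $\sum_c g_c\le 2d$, which has high probability.

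\textbf{Certification and output.} We accept a trial iff the minimum gap of $\widetilde S$ satisfies $\widetilde\gamma\ge \gamma/2$, and we take $\eta\le c_k m^{-3}$ small enough that $\epsilon_S\le\gamma/8$. By Weyl's inequality, acceptance implies that the true gap of $S$ is at least $\widetilde\gamma-2\epsilon_S\ge\gamma/4$, so the certification is sound. Conversely, whenever the true gap is at least $\gamma$, the trial is accepted, so each trial is accepted with probability at least $1/2$. We run $\lceil\log_2(1/\beta_{\mathrm{loc}})\rceil$ independent trials and return \textsc{Fail} only if all are rejected. On acceptance, the Davis--Kahan $\sin\theta$ theorem applied to each one-dimensional eigenspace gives eigenvectors $\widetilde v_j$ of $\widetilde S$ within $O(\epsilon_S/\gamma)=O(m^4\eta)$ of the corresponding eigenvectors of $S$, up to phase. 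This is worse than the target $m^3\eta$ by a factor of $m$. To recover $C_km^3\eta$, we would sharpen the bound on $\epsilon_S$ to $O(k^2\eta)$ by bounding $\widetilde S-S$ as a sesquilinear form directly, using $\norm{z}=1$, instead of entrywise; this makes the operator-norm error dimension-free. We then output $\ket{\widetilde\psi_j}=\sum_\alpha(\widetilde v_j)_\alpha\ket{\widetilde u_\alpha}$, which is orthonormal in $\widetilde T$. Its distance to $e^{i\varphi_j}\ket{\psi_{\pi(j)}}$ is at most the eigenvector error plus $O(\eta)$ from the basis change.

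\textbf{Cost.} Each trial forms $d^2$ transition RDMs from vectors of dimension $\binom{m}{k}$ and diagonalizes a $d\times d$ matrix. This takes $\mathrm{poly}(m)$ operations for fixed $k$, with $O(\log(1/\beta_{\mathrm{loc}}))$ trials, and uses no additional copies of the target state.
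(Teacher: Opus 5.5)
Your overall architecture matches the paper's proof: canonical alignment, a dimension-free $O(k\eta)$ bound on the splitting operator obtained from the quadratic form rather than entrywise, an observed-gap acceptance test at half the target gap, $\lceil\log_2(1/\beta_{\mathrm{loc}})\rceil$ trials, and Davis--Kahan. The anticoncentration step, however, rests on a false claim. It is not true that $\tau_b=\Tr[(\Gamma^{(1)}_{\psi_b})^2]\geq1$. Eigenvalues in $[0,1]$ with trace $k$ give only $k^2/m_b\leq\tau_b\leq k$, where $m_b=\dim B_b$, and the lower bound is attained. For example, a factor with $s$ equal-weight branches has $\Gamma^{(1)}_{\psi_b}=(k/m_b)P_{B_b}$ and $\tau_b=k/s$, which can be as small as $k^2/m$. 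Such factors arise in the application, since complementary blocks may have many comparable branches. With the correct bound, your per-pair estimate is at best $O(\gamma d/\max\{\tau_b,\tau_{b'}\})$. Using the worst case $\tau\gtrsim k^2/m$ for every pair and a plain union bound over $d^2$ pairs gives failure probability $O(\gamma d^3m/k^2)$. This forces $\gamma\sim m^{-4}$, so you would only get an $O(m^4\eta)$ output error under $\eta\lesssim m^{-4}$, which is not the stated lemma.

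The missing idea is the tradeoff between the number of factors and their support sizes. A factor can have small $\tau_b$ only if its support is large, and $\sum_bm_b\leq m$ because the supports are orthogonal. The paper bounds each pair by
$2(d-1)\gamma/(\tau_b+\tau_c)\leq(d-1)\gamma(m_b+m_c)/(2k^2)$. It then sums over pairs using $\sum_{b<c}(m_b+m_c)=(d-1)\sum_bm_b\leq(d-1)m$, so the total failure probability is at most $(d-1)^2m\gamma/(2k^2)$. This equals $1/2$ at $\gamma=k^2/((d-1)^2m)\geq4/m^3$, which is where the $m^{-3}$ scale actually comes from.

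Your Exponential-variable argument can be repaired the same way: differentiate in the variable with the larger $\tau$, then sum over pairs as above. The normalization issue you flag is handled more cleanly by conditioning on $R=|z_b|^2+|z_c|^2$. Given $R$, $|z_b|^2$ is uniform on $[0,R]$, and $\mathbb E[R^{-1}]=d-1$. With this replacement, the rest of your argument goes through and gives the $C_km^3\eta$ bound.
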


Prop.~\ref{prop:robust-gram-splitting} proves this guarantee with $c_k=1/(64k)$ and $C_k=8k+2$. Its constructive proof in App.~\ref{app:certified-gram-splitting-subroutine} specifies the acceptance rule and computes all transition RDMs from the supplied classical vectors.

Under Eq.~\eqref{eq:equal-rdm-accuracy}, the subspace-error bounds for both calls satisfy the hypotheses of Lem.~\ref{lem:main-robust-gram-splitting}, as shown in App.~\ref{app:homogeneous-noisy}. In particular, Prop.~\ref{prop:equal-p-input-accuracy} bounds the error in each recovered dominant branch or complementary block by $O_p(m^5\mu/s_0)$, up to phases and relabeling. Assigning failure probability $\delta/6$ to each Gram-splitting call bounds their total failure probability by $\delta/3$. RDM estimation and the construction in Step~\ref{step:equal-preparation} of Sec.~\ref{subsec:one-rdm-equal} each contribute at most $\delta/3$, giving overall success probability at least $1-\delta$. Each call requires $O(\log(4/\delta))$ trials.

\subsection{Classical post-processing time}
\label{sec:homogeneous-time-complexity}

We account separately for the classical cost of constructing the RDM estimates and reconstructing the state from them. 
Let $D_p=\binom{m}{p}$ be the dimension of the $p$-particle space. As detailed in App.~\ref{app:rdm-estimation}, the protocol of Ref.~\cite{zhao2021fermionic} can estimate all entries of the $p$-RDM to additive error $\varepsilon_{\mathrm{entry}}$, with failure probability at most $\delta/6$, in $O_p(D_p^2\varepsilon_{\mathrm{entry}}^{-2}\log(2m/\delta))$ classical time. Since the $p$-RDM is a $D_p\times D_p$ matrix, the operator-norm error of its estimate is at most $D_p$ times the maximum entrywise error. Therefore, $\varepsilon_{\mathrm{entry}}=\mu/D_p$ ensures an
operator-norm error of at most $\mu$ required in
Eq.~\eqref{eq:equal-rdm-accuracy}. Moreover, estimation of $1$-RDM with the same operator-norm accuracy and failure budget $\delta/6$ does not change the asymptotic cost. Thus, the total classical cost of constructing the required RDM estimates is
\begin{align}
     O_p\left( D_p^4\frac{\log(2m/\delta)}{\mu^2}\right) = O_p\left( m^{4p}\frac{\log(2m/\delta)}{\mu^2} \right).
    \label{eq:equal-rdm-estimation-runtime}
\end{align}

Once these estimates are available, the dense matrix operations on the RDMs, including the construction of the projectors, compressions, and their spectral decompositions, cost $O(D_p^3)=O_p(m^{3p})$. Furthermore, the construction of the remaining occupied subspace and the output factor list is also included in this bound.
For each of the two Gram-splitting calls, we precompute the transition $1$-RDMs once and reuse them across all trials. This costs $O_p(m^{p+3})$, which is included in $O_p(m^{3p})$ for $p\geq2$. Using these stored matrices, each trial costs $O_p(m^4)$, and we allow $O(\log(4/\delta))$ trials per call.

The construction in Step~\ref{step:equal-preparation} of Sec.~\ref{subsec:one-rdm-equal} costs $O_p(m^{p+1})$ per trial for $p\geq3$, including branch completion, coefficient evaluation, and orthogonalization. There are $O(m)$ reconstructed blocks, each testing $O(m)$ branch-count bounds with $O(\log(2m/\delta))$ trials per bound. Thus, the total cost is $O_p(m^{p+3}\log(2m/\delta))$. This bound also covers the deterministic $p=2$ construction and the final joint orthogonalization and vacuum completion. At the prescribed accuracy, it is dominated by the RDM-estimation cost in Eq.~\eqref{eq:equal-rdm-estimation-runtime}.

Combining the RDM-estimation cost with the reconstruction cost yields
\begin{align}
    O_p\left( m^{4p}\frac{\log(2m/\delta)}{\mu^2} +m^{3p} +m^4\log\frac{4}{\delta}  \right).
    \label{eq:equal-p-total-cost-before-substitution}
\end{align}
Finally, choose $\mu$ as the upper bound in Eq.~\eqref{eq:equal-rdm-accuracy}. Then, the RDM-estimation term dominates, and we write the total classical post-processing time
\begin{align}
    O_p\left( m^{4p+2\lceil(p+13)/2\rceil} \varepsilon_{\mathrm{fid}}^{-2} \log\frac{2m}{\delta}\right).
    \label{eq:equal-p-classical-runtime}
\end{align}
This implies the classical time bound in Thm.~\ref{thm:equal-blocks}, including both RDM estimation and state reconstruction.

\section{Learning heterogeneous blocks}
\label{sec:bounded-reconstruction}

\begin{algorithm}[tbp]
\caption{Finite-copy learning of heterogeneous blocks}
\label{alg:heterogeneous-estimated}

\KwIn{Copies of an $m$-mode state $\ket{\Psi}$ in the heterogeneous setting;
$2\leq r\leq m$ and target infidelity $\varepsilon_{\mathrm{fid}}$.
Individual block particle numbers are unknown.}
\KwOut{A classical description of $\ket{\widetilde\Psi}$ satisfying
$\abs{\braket{\widetilde\Psi}{\Psi}}^2\geq1-\varepsilon_{\mathrm{fid}}$.}

\BlankLine
\tcp{Estimate the RDMs and separate the modes.}
$s_0\gets\sqrt{\varepsilon_{\mathrm{fid}}/(48m)}$;
choose $\nu$ by Eq.~\eqref{eq:bounded-rdm-accuracy-main}\;

Obtain the RDM estimates
$\widetilde\Gamma^{(1)},\ldots,\widetilde\Gamma^{(r)}$
using fermionic partial tomography of Ref.~\cite{zhao2021fermionic}
from the same measurement records,
each with operator-norm error at most $\nu$\;

Choose $\theta\in[2/3,3/4]$ by Eq.~\eqref{eq:noisy-one-rdm-threshold};
let $\widetilde S_\theta$ be the spectral subspace of
$\widetilde\Gamma^{(1)}$ above $\theta$\;

\BlankLine
\tcp{Recover the branches and full blocks.}
Initialize empty lists $\mathcal F$, $\mathcal H$, and $\mathcal W$
for dominant branches, complementary blocks, and all full blocks,
respectively\;

\For{$k=2,\ldots,r$}{
    Let $\widetilde Q_{\mathrm{hi},k}$ and $\widetilde Q_{\mathrm{lo},k}$
    project onto $k$-particle states with all particles in
    $\widetilde S_\theta$ and $\widetilde S_\theta^\perp$, respectively\;

    $\widetilde C_k\gets
    \widetilde Q_{\mathrm{lo},k}\widetilde\Gamma^{(k)}
    \widetilde Q_{\mathrm{hi},k}$,
    \quad
    $\widetilde A_k\gets
    \widetilde Q_{\mathrm{lo},k}\widetilde\Gamma^{(k)}
    \widetilde Q_{\mathrm{lo},k}$\;

    Choose $t_k\in[s_0,2s_0]$
    by Eq.~\eqref{eq:bounded-singular-threshold-main};
    let $\widetilde R_k$ be the right singular space of
    $\widetilde C_k$ above $t_k$\;

    Let $\widetilde W_k^{\mathrm H}$ be the spectral subspace of
    $\widetilde A_k$ above $(1+\theta)/2$\;

    Form $\hat O_k^{\mathrm N}$ and $\hat O_k^{\mathrm H}$
    from factors in $\mathcal F$ and $\mathcal H$, respectively:
    span products of at least two distinct factors of total
    particle number $k$, using creation polynomials in stored order;
    use the zero space if no such product exists\;

    Form $\widetilde B_k^{\mathrm N}$ and $\widetilde B_k^{\mathrm H}$
    by Eq.~\eqref{eq:bounded-overlap-operators};
    let $\widetilde F_k^{\mathrm N}$ and $\widetilde T_k^{\mathrm H}$
    be their respective spectral subspaces above $1/2$\;

    $\widetilde d_k^{\mathrm N}\gets\dim\widetilde F_k^{\mathrm N}$,
    \quad
    $\widetilde d_k^{\mathrm H}\gets\dim\widetilde T_k^{\mathrm H}$\;

    Apply robust Gram splitting
    (Lem.~\ref{lem:main-robust-gram-splitting})
    to $\widetilde F_k^{\mathrm N}$ and $\widetilde T_k^{\mathrm H}$
    with the respective error bounds in
    Eqs.~\eqref{eq:heterogeneous-near-recursive-certificates}
    and~\eqref{eq:heterogeneous-strong-recursive-certificates};
    store the respective outputs as
    $\{\ket{\widetilde f_{k,j}}\}_{j=1}^{\widetilde d_k^{\mathrm N}}$
    and
    $\{\ket{\widetilde\omega_{k,j}^{\mathrm H}}\}_{j=1}^{\widetilde d_k^{\mathrm H}}$\;

    \For{$j=1,2,\dots,\widetilde d_k^{\mathrm N}$}{
        $\ket{\widetilde y_{k,j}}\gets
        \widetilde\Gamma^{(k)}\ket{\widetilde f_{k,j}}$,
        \quad
        $\ket{\widetilde\omega_{k,j}^{\mathrm N}}\gets
        \frac{\ket{\widetilde y_{k,j}}}{\norm{\widetilde y_{k,j}}}$\;
    }

    Compute $\xi_{f,k}$, $\xi_{\mathrm N,k}$, and $\xi_{\mathrm H,k}$
    by Eq.~\eqref{eq:df-factor-certificates}\;

    Append
    $\{(\ket{\widetilde f_{k,j}},k,\xi_{f,k})\}_{j=1}^{\widetilde d_k^{\mathrm N}}$
    to $\mathcal F$\;

    Append
    $\{(\ket{\widetilde\omega_{k,j}^{\mathrm H}},k,\xi_{\mathrm H,k})
    \}_{j=1}^{\widetilde d_k^{\mathrm H}}$
    to $\mathcal H$\;

    Append
    $\{(\ket{\widetilde\omega_{k,j}^{\mathrm N}},k,\xi_{\mathrm N,k})
    \}_{j=1}^{\widetilde d_k^{\mathrm N}}$
    and
    $\{(\ket{\widetilde\omega_{k,j}^{\mathrm H}},k,\xi_{\mathrm H,k})
    \}_{j=1}^{\widetilde d_k^{\mathrm H}}$
    to $\mathcal W$\;
}

\BlankLine
\tcp{Construct the core and prepare the output.}
$\widetilde K\gets
\sum_{k=2}^{r}\sum_{j=1}^{\widetilde d_k^{\mathrm N}}
\Gamma_{\widetilde f_{k,j}}^{(1)}$;
let $\widetilde F$ be its spectral subspace above $1/2$\;

$\widetilde S_{\mathrm{core}}\gets
\widetilde S_\theta\cap\widetilde F^\perp$;
choose an orthonormal basis $V_0$ of $\widetilde S_{\mathrm{core}}$\;

$D\gets|\mathcal W|$;
relabel its entries as
$\{(\ket{\widetilde\omega_b},p_b,\eta_b)\}_{b=1}^{D}$\;

$(\ket{\Xi_{\mathrm{in}}},\hat U_{\mathrm{out}})
\gets\textnormal{\textsc{Prepare}}\bigl(
V_0,\{(\ket{\widetilde\omega_b},p_b,\eta_b)\}_{b=1}^{D}
\bigr)$\;

\KwRet{$\ket{\widetilde\Psi}
=\hat U_{\mathrm{out}}\ket{\Xi_{\mathrm{in}}}$}\;
\end{algorithm}

We now consider the heterogeneous setting of Thm.~\ref{thm:heterogeneous-blocks}, with unknown block particle numbers $2\leq p_b\leq r$ for a known constant $r$. Unlike the homogeneous case, the candidate spaces can contain products of smaller blocks. We therefore process the RDMs in increasing order and remove product directions formed from previously recovered factors before applying Gram splitting, accounting for propagated errors when using estimated RDMs.

Always-occupied and vacuum modes are also allowed; the former enter the final core, while the latter contribute only zero sectors to the exact RDMs. We first describe the exact recursion and then its implementation with RDM estimates.

\subsection{Reconstruction with exact RDMs}
\label{subsec:bounded-order-k-recovery}


\begin{enumerate}
\setcounter{enumi}{-1}
\item \textbf{Block structure of the $k$-RDM.}
\label{step:bounded-exact-structure}
As in the homogeneous setting, the $k$-RDM separates into sectors according to how many particles are selected from each block and the always-occupied component. Each sector is given by the tensor product of the corresponding component RDMs.

In the homogeneous $p$-RDM, selecting particles from several blocks necessarily selects only part of each block, so the mix-block contribution has no coherence between distinct branches. With different block particle numbers, however, several complete blocks can together supply exactly $k$ particles. For example, let $a$ and $b$ be two two-particle blocks. Identifying the sector that selects two particles from each with the tensor product of their two-particle spaces gives
\begin{align}
    \left.\Gamma_\Psi^{(4)}\right|_{\text{two particles from each of }a,b}
    &\cong \Gamma_{\omega_a}^{(2)}\otimes\Gamma_{\omega_b}^{(2)}
    =\ket{\omega_a}\bra{\omega_a}\otimes\ket{\omega_b}\bra{\omega_b}.
\end{align}
This sector preserves the full coherence of both blocks and has eigenvalue one on their product state, just as a complete four-particle block contributes eigenvalue one. Thus, the eigenvalue alone does not distinguish a new block from a product of smaller blocks. We account for these product contributions in the recursive reconstruction below.

\item \textbf{Separate the modes using the $1$-RDM.}
\label{step:bounded-exact-high}
As in the homogeneous setting, construct $S_\theta$ from the $1$-RDM with $\theta\in[2/3,3/4]$, and use it at every order. Here, $S_\theta$ also includes the transformed always-occupied modes, with occupation one, while the transformed vacuum modes have occupation zero and lie in $S_\theta^\perp$.

\item \textbf{Recover the blocks recursively.}
\label{step:bounded-exact-recursion}
We process the RDMs in increasing order, $k=2,\ldots,r$, using the common subspace $S_\theta$. As in the homogeneous setting, a singular-value threshold determines which dominant branches are recovered individually. To coordinate these choices across orders, we set $s_0:=\sqrt{\varepsilon_{\mathrm{fid}}/(48m)}$ and choose each threshold $t_k\in[s_0,2s_0]$. This common interval ensures that previously recovered factors suffice for product removal, as explained in step~(b) below. At each order, we perform the following three steps.

\begin{enumerate}
\renewcommand{\theenumii}{(\alph{enumii})}
\renewcommand{\labelenumii}{\theenumii}
\item \textbf{Extract the candidate spaces.}
\label{step:bounded-exact-extract}
Let $Q_{\mathrm{hi},k}$ and $Q_{\mathrm{lo},k}$ project onto states whose $k$ particles lie entirely in $S_\theta$ and $S_\theta^\perp$, respectively. Form $C_k$ and $A_k$ using Eq.~\eqref{eq:overview-two-compressions} with $p$ replaced by $k$. Since the transformed modes corresponding to $I_0$ lie in $S_\theta$, the projector $Q_{\mathrm{lo},k}$ removes contributions involving these modes from both compressions. Sectors containing transformed vacuum modes contribute only zeros.

As in the homogeneous case, $C_k$ retains coherence between dominant branches and the remaining branches of complete blocks, while contributions selecting only part of a block vanish. Here, several smaller blocks can also contribute all their particles with total particle number $k$. Thus, the right singular space of $C_k$ is spanned by dominant branches of individual $k$-particle blocks and products of dominant branches of smaller blocks. To retain only directions with sufficiently large coherence, let $R_k$ be the span of the right singular vectors of $C_k$ with singular values larger than $t_k$.

For $A_k$, each complete $k$-particle block without a dominant branch contributes its full state with eigenvalue one. Products of smaller complete blocks without a dominant branch also contribute eigenvalue one when their particle numbers sum to $k$. Let $W_k^{\mathrm H}$ denote this eigenvalue-one space. 

Each remaining sector contains either a low-space contribution from a dominant block, with norm at most $1-w_b<1-\theta\leq\theta$, or a proper lower-order RDM of a block without a dominant branch, with norm at most $\theta$. Since each sector is a tensor product and all other factors have norm at most one, its eigenvalues are at most $\theta$, leaving a spectral gap of at least $1-\theta$.

\item \textbf{Remove products of previously recovered factors.}
\label{step:bounded-exact-products}
We now use factors recovered at lower orders to construct the product directions that must be removed from the candidate spaces. For $R_k$, we form products of recovered dominant branches; for $W_k^{\mathrm H}$, we form products of recovered full block states without a dominant branch. In each case, we combine at least two distinct factors whose particle numbers sum to $k$, using the creation polynomials as in Eq.~\eqref{eq:overview-numbered-output}. Let $O_k^{\mathrm N}$ and $O_k^{\mathrm H}$ denote the spans of these two families of products, respectively. For $k=2,3$, no such products exist because every factor contains at least two particles, so both spaces are zero.

For this removal to work, every product direction retained in $R_k$ must be generated by branches already recovered at lower orders. This is ensured by choosing all thresholds from $[s_0,2s_0]$. Indeed, a product of dominant branches has a singular value equal to the product of their individual singular values $\sigma_b=\sqrt{w_b(1-w_b)}$. If one branch was excluded at a lower order, its singular value is at most $2s_0$. Since each additional branch has singular value at most $1/2$, the product has singular value at most $s_0\leq t_k$ and is also excluded from $R_k$.

A new $k$-particle block has a one-particle support orthogonal to those of all smaller blocks. Its dominant branch or full block state is therefore orthogonal to every product formed from the previously recovered factors. Thus, removing the product directions preserves the new factors. Define $F_k^{\mathrm N}:=R_k\cap(O_k^{\mathrm N})^\perp$ and $T_k^{\mathrm H}:=W_k^{\mathrm H}\cap(O_k^{\mathrm H})^\perp$.

Some products in $O_k^{\mathrm N}$ may have singular values at or below $t_k$ and therefore lie outside $R_k$. The part removed from $R_k$ is consequently $R_k\cap O_k^{\mathrm N}$. In contrast, every product in $O_k^{\mathrm H}$ lies in $W_k^{\mathrm H}$. Hence, the block structure yields
\begin{align}
    R_k &=(R_k\cap O_k^{\mathrm N})\oplus F_k^{\mathrm N},
    \qquad W_k^{\mathrm H}=O_k^{\mathrm H}\oplus T_k^{\mathrm H},
    \label{eq:bounded-exact-old-new-decomposition}
\end{align}
where the remaining spaces contain precisely the new branches and block states,
\begin{align}
    F_k^{\mathrm N} =\operatorname{span}\left\{\ket{f_b}:p_b=k,\ w_b>\theta,\ \sigma_b>t_k\right\},
    \qquad T_k^{\mathrm H} =\operatorname{span}\left\{\ket{\omega_b}:p_b=k,\ w_b\leq\theta\right\}.
\end{align}
Different products can share constituent blocks, so they need not satisfy the orthogonality condition required by Gram splitting. Removing these product directions leaves each remaining space spanned by individual factors with mutually orthogonal one-particle supports, making the subroutine applicable.

\item \textbf{Separate the new factors and reconstruct their blocks.}
\label{step:bounded-exact-split}
We now apply the Gram-splitting subroutine, described in Lem.~\ref{lem:main-exact-gram-splitting}. Applied to $F_k^{\mathrm N}$, it separates the dominant branches, from which the full block states are reconstructed using Eq.~\eqref{eq:overview-full-block-reconstruction} with $p$ replaced by $k$. Applied to $T_k^{\mathrm H}$, it returns the complementary block states directly. After product removal, the combined dimension of these spaces is at most $m/k$, because their factors correspond to distinct $k$-particle blocks with mutually orthogonal one-particle supports.

The recovered dominant branches are retained for constructing $O_\ell^{\mathrm N}$ at later orders, while the block states obtained from $T_k^{\mathrm H}$ are retained for constructing $O_\ell^{\mathrm H}$. We also retain all reconstructed full block states for final assembly. We then proceed to order $k+1$, unless $k=r$.
\end{enumerate}

\item \textbf{Construct the core, block-product input, and passive Gaussian unitary.}
\label{step:bounded-exact-assemble}
\label{step:bounded-exact-preparation}
After completing all orders, construct the core by removing the combined one-particle support of all recovered dominant branches from $S_\theta$, as in Sec.~\ref{sec:homogeneous-overview}. The remaining core contains both the excluded dominant branch supports and the transformed modes corresponding to $I_0$. Occupying every mode in this core represents these contributions without identifying them individually.

Using this core and the recovered blocks, construct $\ket{\Xi_{\mathrm{in}}}$ and $U_{\mathrm{out}}$ as in Step~\ref{step:overview-preparation}, using each block's particle number in place of $p$. With probability one, the output $\ket{\widetilde\Psi}:=\hat U_{\mathrm{out}}\ket{\Xi_{\mathrm{in}}}$ equals the reconstructed state $\ket{\Phi}$, defined by the product expression in Eq.~\eqref{eq:overview-numbered-output}, up to a global phase. Thus, replacing the excluded blocks by their dominant branches remains the only approximation.

\end{enumerate}

\begin{algorithm}[tbp]
\caption{\textsc{Prepare}: construction of the input and unitary}
\label{alg:prepare}

\KwIn{An orthonormal basis $V_0$ of
$\widetilde S_{\mathrm{core}}\subseteq\mathbb C^m$;
normalized blocks $\{(\ket{\widetilde\omega_b},p_b,\eta_b)\}_{b=1}^{D}$
with particle numbers $p_b\geq2$ and phase-aligned error bounds $\eta_b$.}
\KwOut{A classical description of
$\ket{\widetilde\Psi}=\hat U_{\mathrm{out}}\ket{\Xi_{\mathrm{in}}}$,
specified by $\ket{\Xi_{\mathrm{in}}}$ and $U_{\mathrm{out}}$.}

Set $L$ as in Sec.~\ref{app:iu-estimated-higher-particle},
with the failure allocation of Sec.~\ref{app:iu-input-and-unitary}\;

\BlankLine
\tcp{Recover the branch modes and coefficients.}
\For{$b=1,\ldots,D$}{
    $k\gets p_b$, \quad $\eta\gets\eta_b$,
    \quad $\ket{\widetilde\omega}\gets\ket{\widetilde\omega_b}$\;

    \eIf{$k=2$}{
        Form antisymmetric $M$ with
        $M_{ij}=\braket{ij}{\widetilde\omega}$ for $i<j$;
        $t\gets\operatorname{rank}M/2$, $z\gets0$\;

        \For{$S=1,\ldots,t$, stopping at acceptance}{
            Choose a normalized eigenvector $u_S$ of $MM^\dagger$
            with largest eigenvalue $a_S^2>0$, taking $a_S>0$\;

            $v_S\gets-M\overline{u_S}/a_S$,
            \quad $M\gets M-a_S(u_Sv_S^T-v_Su_S^T)$,
            \quad $z\gets z+a_S^2$\;

            \lIf{$\sqrt{2-2\sqrt z}\leq2\eta$}{
                accept $((u_1,v_1,\ldots,u_S,v_S),
                (a_\ell/\sqrt z)_{\ell=1}^{S})$
            }
        }
    }{
        \For{$S=1,\ldots,\lfloor m/k\rfloor$, stopping at acceptance}{
    $T_S\gets K_kS^2\eta$,
    \quad $R_S\gets H_kS^{5/2}\eta$
    by Eqs.~\eqref{eq:iu-count-dependent-thresholds} and~\eqref{eq:iu-local-residual-test}\;

    \For{up to $L$ independent trials, stopping at acceptance}{
        $g\sim\mathcal N_{\mathbb C}(0,I_m)$,
        \quad $\ket{\widetilde h}\gets
        \hat c[g]\ket{\widetilde\omega}$\;

        Diagonalize $\Gamma_{\widetilde h}^{(1)}$,
        with eigenvalues $\lambda_1\geq\cdots\geq\lambda_m$\;

        Group consecutive $\sqrt{\lambda_i}$ across gaps
        $\leq16\sqrt k\,\eta$; retain groups $G_1,\ldots,G_d$
        with $\min_{i\in G_\ell}\sqrt{\lambda_i}>T_S$\;

        Reject unless $1\leq d\leq S$ and
        $|G_\ell|=k-1$ for every $\ell$\;

        For each $G_\ell$, let
        $(\widetilde e_{\ell,j})_{j=1}^{k-1}$
        be its orthonormal eigenvectors and set
        $\ket{w_\ell}\gets
        \hat c[\widetilde e_{\ell,k-1}]\cdots
        \hat c[\widetilde e_{\ell,1}]
        \ket{\widetilde\omega}$\;

        Discard zero $w_\ell$; relabel survivors $1,\ldots,d$;
        reject if $d=0$\;

        $V_{b,\ell}\gets
        (\widetilde e_{\ell,1},\ldots,\widetilde e_{\ell,k-1},
         w_\ell/\norm{w_\ell})$
        for $\ell=1,\ldots,d$\;

        $V\gets(V_{b,1},\ldots,V_{b,d})$,
        \quad $D_\alpha\gets
        \operatorname{diag}(\norm{w_1}I_k,\ldots,\norm{w_d}I_k)$\;

        Compute $VD_\alpha^2=U\Sigma Z^\dagger$;
        $Q\gets UZ^\dagger$
        by Lem.~\ref{lem:iu-weighted-orthogonalization}\;

        Write $Q=(Q_1,\ldots,Q_d)$ with
        $Q_\ell=(q_{\ell,1},\ldots,q_{\ell,k})$\;

        $\ket{q_\ell}\gets
        \hat c^\dagger[q_{\ell,1}]\cdots
        \hat c^\dagger[q_{\ell,k}]\ket{\vac}$,
        \quad
        $a\gets(\braket{q_\ell}{\widetilde\omega})_{\ell=1}^{d}$\;

        \lIf{$a\neq0$ and $\sqrt{2-2\norm{a}_2}\leq R_S$}{
            accept $(Q,a/\norm{a}_2)$
        }
    }
}
    }

    \lIf{no candidate was accepted}{\KwRet{\textnormal{\textsc{Fail}}}}

    Discard zero coefficients and their branch modes; store the
    accepted data as $V_b,(a_{b,\ell})_{\ell=1}^{s'_b}$\;

    $D_b\gets\operatorname{diag}
    (|a_{b,1}|I_{p_b},\ldots,|a_{b,s'_b}|I_{p_b})$\;
}

\BlankLine
\tcp{Orthonormalize all modes and construct the output.}
$q\gets\dim\widetilde S_{\mathrm{core}}$,
\quad $t\gets q+\sum_{b=1}^{D}p_bs'_b$\;

\lIf{$t>m$}{\KwRet{\textnormal{\textsc{Fail}}}}

$V\gets(V_0,V_1,\ldots,V_D)$,
\quad $D_{\mathrm{all}}\gets\operatorname{diag}(I_q,D_1,\ldots,D_D)$\;

Compute $VD_{\mathrm{all}}^2=U\Sigma Z^\dagger$;
$Q\gets UZ^\dagger$ by Lem.~\ref{lem:iu-weighted-orthogonalization};
keep all coefficients fixed\;

Complete $Q$ to a unitary $U_{\mathrm{out}}\gets[Q,Q_\perp]$;
form $\ket{\Xi_{\mathrm{in}}}$ by Eq.~\eqref{eq:iu-explicit-input}
in stored core and branch order, leaving the $Q_\perp$ modes empty\;

\KwRet{$\ket{\Xi_{\mathrm{in}}}$ and $U_{\mathrm{out}}$}\;
\end{algorithm}
\subsection{Reconstruction from estimated RDMs}
\label{subsec:bounded-noisy-recovery}

We implement the exact recursion using the same threshold and Gram-splitting procedures as in the reconstruction from estimated RDMs in the homogeneous setting. The additional issue is that errors in previously recovered factors perturb the product spaces removed at later orders and thereby affect the recovery of new factors. To control this propagation and achieve the target infidelity, our stability analysis shows that it suffices to obtain Hermitian estimates $\widetilde\Gamma^{(1)},\ldots,\widetilde\Gamma^{(r)}$ satisfying
\begin{align}
    \max_{1\leq k\leq r} \norm{\widetilde\Gamma^{(k)}-\Gamma_\Psi^{(k)}} \leq\nu,
    \qquad \nu\leq c_r\frac{\varepsilon_{\mathrm{fid}}}
    {m^{\lceil(7\lfloor r/2\rfloor+r+6)/2\rceil}},
    \label{eq:bounded-rdm-accuracy-main}
\end{align}
where $c_r>0$ is a sufficiently small constant depending only on $r$. Prop.~\ref{prop:delta-free-explicit-accuracy} establishes this sufficient accuracy by controlling the accumulation of errors across orders. For fixed $r$, the required RDM accuracy is inverse-polynomial in $m$.
The operator-norm guarantee in Eq.~\eqref{eq:bounded-rdm-accuracy-main} can be achieved simultaneously for all required RDMs with probability at least $1-\delta/3$ using
\begin{align}
    O_r\left( m^{O(r)} \varepsilon_{\mathrm{fid}}^{-2} \log\frac{2mr}{\delta}\right)
    \label{eq:bounded-copy-complexity-main}
\end{align}
copies of the target state, as derived in App.~\ref{app:rdm-estimation}. The algorithm proceeds as follows.

\begin{enumerate}
\setcounter{enumi}{-1}
\item \textbf{Estimate the RDMs.}
\label{step:bounded-estimate}
Using independent copies of $\ket{\Psi}$, estimate $\Gamma_\Psi^{(1)},\ldots,\Gamma_\Psi^{(r)}$ to the accuracy in Eq.~\eqref{eq:bounded-rdm-accuracy-main} with the fermionic partial-tomography protocol of Ref.~\cite{zhao2021fermionic}. Use the same measurement records for all orders and allocate failure probability $\delta/(3r)$ to each RDM estimate. All estimates then satisfy the required accuracy with probability at least $1-\delta/3$; the subsequent reconstruction requires no additional copies. We condition the following analysis on this RDM-accuracy event.

\item \textbf{Separate the modes using the $1$-RDM estimate.}
\label{step:bounded-high}
We first construct $\widetilde S_\theta$ from $\widetilde\Gamma^{(1)}$ using the threshold rule in Eq.~\eqref{eq:noisy-one-rdm-threshold}. The same argument as in Sec.~\ref{subsec:one-rdm-equal}, with $\mu$ replaced by $\nu$, bounds the subspace error by $O(m\nu)$. As in the exact recursion, we use this common subspace at every order. It also approximates the occupied directions of the transformed always-occupied component. Throughout the analysis below, the exact comparison spaces and operators use the same $\theta$ and $t_k$ chosen from the estimates.

\item \textbf{Recover the blocks recursively.}
\label{step:bounded-recursion}
At each order $k=2,\ldots,r$, we perform the following three steps before proceeding to the next order.

\begin{enumerate}
\renewcommand{\theenumii}{(\alph{enumii})}
\renewcommand{\labelenumii}{\theenumii}
\item \textbf{Extract the candidate spaces.}
\label{step:bounded-noisy-extract}
Define $\widetilde Q_{\mathrm{hi},k}$ and $\widetilde Q_{\mathrm{lo},k}$ from $\widetilde S_\theta$ in the same way as their exact counterparts, and form
\begin{align}
    \widetilde C_k
    &:=\widetilde Q_{\mathrm{lo},k}\widetilde\Gamma^{(k)}\widetilde Q_{\mathrm{hi},k},
    \qquad
    \widetilde A_k
    :=\widetilde Q_{\mathrm{lo},k}\widetilde\Gamma^{(k)}\widetilde Q_{\mathrm{lo},k}.
    \label{eq:bounded-noisy-compressions}
\end{align}
As in Sec.~\ref{subsec:p-rdm-equal}, the projector errors are bounded by $O(km\nu)$. Together with the RDM-estimation error, this yields, for every $k\leq r$,
\begin{align}
    \max\left\{\norm{\widetilde C_k-C_k},\norm{\widetilde A_k-A_k}\right\}
    &=O_r(m\nu).
    \label{eq:bounded-noisy-compression-error}
\end{align}

The singular-value threshold must again be separated from the estimated spectrum. Since $C_k$ can contain directions associated with products of several blocks, its rank can exceed the number of individual blocks. The minimum of two particles per block limits each such product to at most $\lfloor k/2\rfloor$ blocks, giving $\operatorname{rank}C_k\leq r m^{\lfloor k/2\rfloor}$. Hence, we replace the homogeneous separation $s_0/(16m)$ by a separation that accounts for this larger rank. Let $\widetilde\sigma_{k,j}$ denote the singular values of $\widetilde C_k$, and choose $t_k\in[s_0,2s_0]$ such that
\begin{align}
    \min_j\abs{t_k-\widetilde\sigma_{k,j}}
    \geq\frac{s_0}{16r m^{\lfloor k/2\rfloor}}.
    \label{eq:bounded-singular-threshold-main}
\end{align}
Under Eq.~\eqref{eq:bounded-rdm-accuracy-main}, the additional singular values created by the perturbation remain close to zero, and such a threshold exists. Let $\widetilde R_k$ be the span of the right singular vectors of $\widetilde C_k$ with singular values larger than $t_k$. The separation condition and the compression-error bound ensure that this space has the same dimension as $R_k$ and remains close to it.

For $\widetilde A_k$, the exact gap remains at least $1-\theta\geq1/4$, even though the eigenvalue-one space now includes products of smaller blocks. We therefore define $\widetilde W_k^{\mathrm H}$ as the spectral subspace of $\widetilde A_k$ above the same midpoint threshold $(1+\theta)/2$. This space likewise has the same dimension as $W_k^{\mathrm H}$ and remains close to it under the assumed accuracy.

\item \textbf{Remove products of previously recovered factors.}
\label{step:bounded-noisy-products}
Construct the estimated product spans $\hat O_k^{\mathrm N}$ and $\hat O_k^{\mathrm H}$ using the product rule in Sec.~\ref{subsec:bounded-order-k-recovery}. Use previously recovered dominant branches for $\hat O_k^{\mathrm N}$ and complementary block states for $\hat O_k^{\mathrm H}$. Since both spaces are zero for $k=2,3$, no product removal is needed: set $\widetilde F_k^{\mathrm N}:=\widetilde R_k$ and $\widetilde T_k^{\mathrm H}:=\widetilde W_k^{\mathrm H}$.

At higher orders, both the candidate spaces and the product spans are approximate: the former inherit errors from the RDM estimates and the latter from previously recovered factors. Directly intersecting these estimated subspaces need not be stable: two subspaces that coincide in the exact setting can have a smaller intersection after an arbitrarily small perturbation. We instead compress the orthogonal complement of each estimated product span to the corresponding candidate space:
\begin{align}
    \widetilde B_k^{\mathrm N}
    &:=P_{\widetilde R_k}(I-P_{\hat O_k^{\mathrm N}})P_{\widetilde R_k},
    \qquad
    \widetilde B_k^{\mathrm H}
    :=P_{\widetilde W_k^{\mathrm H}}(I-P_{\hat O_k^{\mathrm H}})P_{\widetilde W_k^{\mathrm H}}.
    \label{eq:bounded-overlap-operators}
\end{align}
The corresponding exact operators are the projectors onto $F_k^{\mathrm N}$ and $T_k^{\mathrm H}$: they have eigenvalue one on the new branch or block directions and zero on the old-product directions. We use this unit gap and retain the eigenspaces above $1/2$:
\begin{align}
    \widetilde F_k^{\mathrm N}
    &:=\operatorname{ran}\mathbf1_{(1/2,1]}(\widetilde B_k^{\mathrm N}),
    \qquad
    \widetilde T_k^{\mathrm H}
    :=\operatorname{ran}\mathbf1_{(1/2,1]}(\widetilde B_k^{\mathrm H}).
    \label{eq:bounded-noisy-new-factor-spaces}
\end{align}
These spaces lie inside $\widetilde R_k$ and $\widetilde W_k^{\mathrm H}$, respectively. Lem.~\ref{lem:df-stable-intersection} controls their errors, including the errors in the previously recovered factors used to construct the product spans.

\item \textbf{Separate the new factors and reconstruct their blocks.}
\label{step:bounded-noisy-split}
To continue the recursion, the errors in the new-factor spaces must be small enough for robust Gram splitting. Under Eq.~\eqref{eq:bounded-rdm-accuracy-main}, and provided that all Gram-splitting calls at orders below $k$ have succeeded, these errors satisfy
\begin{align}
    \max\left\{
        \norm{P_{\widetilde F_k^{\mathrm N}}-P_{F_k^{\mathrm N}}},
        \norm{P_{\widetilde T_k^{\mathrm H}}-P_{T_k^{\mathrm H}}}
    \right\}
    =O_r\left(m^{O(r)}\frac{\nu}{s_0}\right).
\end{align}
Thus, the projector-error bounds remain linear in $\nu$, including the errors propagated from previously recovered factors. The accuracy requirement in Eq.~\eqref{eq:bounded-rdm-accuracy-main} also ensures that these spaces have the correct dimensions and satisfy the input conditions of Lem.~\ref{lem:main-robust-gram-splitting}. Therefore, we apply the robust Gram-splitting procedure to $\widetilde F_k^{\mathrm N}$ and $\widetilde T_k^{\mathrm H}$, using the corresponding computable error bounds. On success, these calls return approximations to the individual retained dominant branches and the full block states without a dominant branch, respectively.

Each returned dominant branch $\ket{\widetilde f_{k,j}}$ is converted to a full block estimate using Eq.~\eqref{eq:noisy-block-reconstruction} with $p$ replaced by $k$: apply $\widetilde\Gamma^{(k)}$ and normalize the resulting vector, returning \textsc{Fail} if it vanishes. Under the assumed accuracy and successful splitting calls, Lem.~\ref{lem:df-near-block-stability} bounds its norm below by $\sqrt\theta/2$ and controls the reconstructed block error. The outputs from $\widetilde T_k^{\mathrm H}$ are already full block estimates.

We retain the recovered dominant branches and complementary block states together with their computable error bounds so that we can control the product-span errors at later orders and verify the input conditions of subsequent Gram-splitting calls. We also retain all reconstructed full block states and their error bounds for final assembly. We then proceed to order $k+1$, unless $k=r$.
\end{enumerate}

Conditional on the RDM-accuracy event and successful earlier calls, each Gram-splitting call satisfies the required input conditions. We use fresh random trials and assign each call failure probability $\delta/[6(r-1)]$. Since there are at most two calls at each of the $r-1$ orders, summing the bounds for the first failed call yields a total failure probability of at most $\delta/3$.

\item \textbf{Construct the core, block-product input, and passive Gaussian unitary.}
\label{step:bounded-assemble}
\label{step:bounded-preparation}
After completing all orders, construct the core as in Sec.~\ref{subsec:p-rdm-equal}, using the dominant branches recovered across all orders. Sum their $1$-RDMs and let $\widetilde F$ be the spectral subspace above $1/2$. At each order, this sum is independent of the orthonormal basis chosen for the recovered branch space. Prop.~\ref{prop:df-core-stability} therefore controls the core directly from the errors of these spaces before Gram splitting. Since these branches lie in exterior powers of $\widetilde S_\theta$, we have $\widetilde F\subseteq\widetilde S_\theta$, and the estimated core has projector $P_{\widetilde S_\theta}-P_{\widetilde F}$, as in Eq.~\eqref{eq:main-noisy-core-projector}.

The exact core contains both the excluded dominant branch supports and the transformed always-occupied subspace associated with $I_0$. Occupying every core mode represents these contributions without identifying them individually. Each excluded block satisfies $\sigma_b\leq t_{p_b}\leq2s_0$ and $w_b>\theta\geq2/3$, so replacing it by its dominant branch incurs infidelity $1-w_b=\sigma_b^2/w_b\leq6s_0^2$. Since there are at most $m/2$ blocks, the total truncation infidelity is at most $\varepsilon_{\mathrm{fid}}/16$.

Let $\ket{\widetilde\omega_1},\ldots,\ket{\widetilde\omega_D}$ denote all full block states recovered across orders $2,\ldots,r$. Apply the block approximation in Step~\ref{step:equal-preparation} to each block, using its particle number and propagated error bound in place of $p$ and $\eta$, with constants chosen uniformly over particle numbers $2,\ldots,r$. Then jointly orthogonalize the recovered branch modes and the estimated core modes, and construct $\ket{\Xi_{\mathrm{in}}}$ and $U_{\mathrm{out}}$ as in that step. The final estimate is $\ket{\widetilde\Psi}:=\hat U_{\mathrm{out}}\ket{\Xi_{\mathrm{in}}}$. Assign each randomized block search failure budget $\delta/(3\max\{1,D\})$, so their combined failure probability is at most $\delta/3$.

Let $\ket{\Phi}$ be the exact comparison state using the same thresholds $\theta,t_2,\ldots,t_r$. Conditioned on successful reconstruction, the propagated block and core bounds under Eq.~\eqref{eq:bounded-rdm-accuracy-main}, together with Prop.~\ref{prop:iu-postprocessing}, ensure with probability at least $1-\delta/3$ that
\begin{align}
    \sqrt{1-|\langle\Phi|\widetilde\Psi\rangle|^2}&\leq\frac{\sqrt{\varepsilon_{\mathrm{fid}}}}{4}.
\end{align}
Combining this bound with the truncation error yields
\begin{align}
    |\langle\Psi|\widetilde\Psi\rangle|^2&\geq1-\varepsilon_{\mathrm{fid}}.
    \label{eq:bounded-final-fidelity-main}
\end{align}

\end{enumerate}

RDM estimation, Gram splitting, and the final input-and-unitary construction each contribute failure probability at most $\delta/3$. The overall success probability is therefore at least $1-\delta$.

\subsection{Classical post-processing time}
\label{sec:heterogeneous-time-complexity}

We account separately for RDM estimation and reconstruction, as in Sec.~\ref{sec:homogeneous-time-complexity}. The additional cost comes from constructing and removing products of previously recovered factors at each order.

Let $D_k=\binom{m}{k}$ be the dimension of the $k$-particle space. Using the estimation-cost bound in Sec.~\ref{sec:homogeneous-time-complexity}, we choose entrywise accuracy $\nu/D_k$ and failure probability $\delta/(3r)$ at each order $k=1,\ldots,r$. Therefore, the total classical cost of obtaining the RDM estimates required by Eq.~\eqref{eq:bounded-rdm-accuracy-main} is
\begin{align}
    O_r\left( \frac{\log(2mr/\delta)}{\nu^2} \sum_{k=1}^{r}D_k^4 \right) = O_r\left( m^{4r}\frac{\log(2mr/\delta)}{\nu^2} \right).
    \label{eq:bounded-rdm-estimation-runtime}
\end{align}

Once these estimates are available, constructing the projectors and compressed matrices and performing their spectral decompositions costs $O_r(\sum_{k=2}^{r}D_k^3)=O_r(m^{3r})$. We next bound the additional cost of constructing and removing products of previously recovered factors. At order $k$, each product contains at most $\lfloor k/2\rfloor$ of the $O_r(m)$ previously recovered factors, since each factor contains at least two particles. Thus, there are $O_r(m^{\lfloor k/2\rfloor})$ product vectors, each with $D_k=O_r(m^k)$ coefficients. Constructing these vectors, orthonormalizing their spans, and forming and diagonalizing the compressed operators in Eq.~\eqref{eq:bounded-overlap-operators} costs at most $O_r(m^{3k})$. Summing over $k=2,\ldots,r$ keeps these operations within $O_r(m^{3r})$. Additionally, the construction of $\widetilde S_{\mathrm{core}}$ and the final factor list is also included in this bound.

For Gram splitting, the operation counts in Sec.~\ref{sec:homogeneous-time-complexity} apply at each order $k$. Precomputing the transition $1$-RDMs costs $O_r(m^{k+3})$ per call, which is contained in $O_r(m^{3k})$ for $k\geq2$. Each subsequent trial costs $O_r(m^4)$. There are at most $2(r-1)$ calls, each assigned failure probability $\delta/[6(r-1)]$. Hence, we allow $O(\log(2r/\delta))$ trials per call, giving a total trial cost of $O_r(m^4\log(2r/\delta))$.

The construction in Step~\ref{step:bounded-preparation} of Sec.~\ref{subsec:bounded-noisy-recovery} follows the cost analysis in Sec.~\ref{sec:homogeneous-time-complexity}, using each block's particle number. Since all block particle numbers are at most $r$, there are $O(m)$ blocks, and each randomized search tests $O(m)$ branch-count bounds, the total cost is $O_r(m^{r+3}\log(2m/\delta))$. This bound includes the two-particle blocks and the final joint orthogonalization and vacuum completion, and is dominated by the RDM-estimation cost in Eq.~\eqref{eq:bounded-rdm-estimation-runtime}.

Combining the RDM-estimation cost with the reconstruction cost yields
\begin{align}
    O_r\left( m^{4r}\frac{\log(2mr/\delta)}{\nu^2} +m^{3r} +m^4\log\frac{2r}{\delta} \right).
    \label{eq:bounded-total-cost-before-substitution}
\end{align}
Finally, choose $\nu$ equal to the upper bound in Eq.~\eqref{eq:bounded-rdm-accuracy-main}. The RDM-estimation term dominates, yielding the total classical post-processing time
\begin{align}
    O_r\left( m^{O(r)} \varepsilon_{\mathrm{fid}}^{-2} \log\frac{2mr}{\delta}  \right).
    \label{eq:bounded-classical-runtime-main}
\end{align}
This yields the classical time bound in Thm.~\ref{thm:heterogeneous-blocks}, including both RDM estimation and state reconstruction.

\section{Necessity of the highest RDM order}
\label{sec:rdm-order-necessity}

Sec.~\ref{sec:bounded-reconstruction} shows that RDMs through order $r$ suffice when every block contains at most $r$ particles. To prove the matching necessity statement in Thm.~\ref{thm:rdm-order-necessary}, we construct two orthogonal states in the family whose RDMs of all orders below $r$ coincide.

\begin{proof}[Proof of Thm.~\ref{thm:rdm-order-necessary}]
Choose the disjoint mode sets $I_1:=\{1,\ldots,r\}$ and $I_2:=\{r+1,\ldots,2r\}$, and define
\begin{align}
    \ket{\Omega_+}:=\frac{\ket{I_1}+\ket{I_2}}{\sqrt{2}},
    \qquad \ket{\Omega_-}:=\frac{\ket{I_1}-\ket{I_2}}{\sqrt{2}}.
    \label{eq:rdm-order-witnesses}
\end{align}
Both states belong to the prescribed family with $n=1$, $p_1=r$, $s_1=2$, $I_0=\varnothing$, and $\hat U=\hat I$; any remaining modes are unoccupied. Since the two Fock states are orthonormal, $\ket{\Omega_+}$ and $\ket{\Omega_-}$ are normalized and satisfy $\braket{\Omega_+}{\Omega_-}=0$.

Fix $1\leq k<r$. Each entry of the $k$-RDM is an expectation of an operator $\hat c_A^\dagger\hat c_B$ with $\abs{A}=\abs{B}=k$. Consider the cross term $\bra{I_1}\hat c_A^\dagger\hat c_B\ket{I_2}$. If $\hat c_B\ket{I_2}=0$, this term vanishes. Otherwise, annihilating $k$ particles leaves $r-k\geq1$ occupied modes in $I_2$. Since applying $\hat c_A^\dagger$ cannot remove these remaining occupations, any nonzero resulting state is orthogonal to $\ket{I_1}$. Thus, the cross term is zero. The same argument applies with $I_1$ and $I_2$ exchanged. Expanding the expectation for either state therefore gives
\begin{align}
    \Gamma_{\Omega_\pm}^{(k)} =\frac12\left( \Gamma_{I_1}^{(k)}+\Gamma_{I_2}^{(k)} \right),
    \qquad 1\leq k<r,
    \label{eq:lower-rdm-branch-mixture}
\end{align}
which is independent of the relative sign and proves \eqref{eq:lower-rdms-identical}.
\end{proof}

The two states differ only in the relative sign between their branches, and this information is absent from every lower-order RDM. It first appears at order $r$, where $\bra{I_1}\Gamma_{\Omega_\pm}^{(r)}\ket{I_2}=\pm1/2$. Importantly, this indistinguishability already occurs in the simplest setting of a single $r$-particle block with two branches, even when the RDMs are known exactly and no Gaussian unitary is applied. Thus, the need for the $r$-RDM is an intrinsic limitation of lower-order RDMs, rather than a consequence of varying block sizes or RDM-estimation errors.

This indistinguishability also prevents high-fidelity reconstruction from lower-order RDMs. For any normalized output state $\ket{\widetilde\Psi}$, orthogonality of the two targets implies
\begin{align}
    \abs{\braket{\widetilde\Psi}{\Omega_+}}^2 +\abs{\braket{\widetilde\Psi}{\Omega_-}}^2 \leq1.
    \label{eq:rdm-order-fidelity-obstruction}
\end{align}
Hence, when $\varepsilon_{\mathrm{fid}}<1/2$, no output can satisfy the fidelity requirement for both targets. An algorithm given only the RDMs of orders below $r$ receives identical input for the two states and therefore has the same output distribution for both. Since the sets of successful outputs are disjoint, their two success probabilities sum to at most one. At least one target consequently has a success probability at most $1/2$. Thus, no algorithm using only these RDMs can solve Problem~\ref{prob:learning} uniformly over the family when $\varepsilon_{\mathrm{fid}}<1/2$ and $\delta<1/2$, even with exact input data and unlimited classical computation. Together with Thm.~\ref{thm:heterogeneous-blocks}, this establishes that $r$ is the necessary highest RDM order in the worst case for reconstruction from RDM data. Taking $r=p$ and $m=2p$ gives the corresponding necessity statement for the homogeneous setting of Thm.~\ref{thm:equal-blocks}, with no additional vacuum modes.

\section{Discussion}\label{Sec:Discussion}


We have shown that efficient learning remains possible with an extensive number of non-Gaussian input blocks, even when an unknown particle-number-preserving free-fermion evolution mixes modes across all input blocks. Within the family studied here, a fixed upper bound on the particle number per block suffices to keep both the sample and classical computational costs polynomial in the system size and in the inverse of the target infidelity.

Our results also determine the RDM order needed to reconstruct the hidden block structure. RDMs of orders up to the maximal particle number per block suffice to construct a block-product input and a passive Gaussian unitary that prepares the final estimate. This highest order is necessary in the worst case: orthogonal states within the same family can have identical RDMs of every lower order.

Several questions remain open. First, the optimal sample and classical computational complexities for learning this family remain unknown. A dominant cost in our algorithm comes from estimating and processing large RDMs at the high precision required to control error propagation throughout reconstruction. Our bounds may be improved by reducing this overhead or by directly estimating the information needed to recover the blocks, without reconstructing the full RDMs.

Another direction is to establish efficient learnability for broader fermionic families. One extension is to allow branches within a block to share occupied modes. Such blocks are used to model strongly correlated fermionic systems~\cite{jimenez2015cluster}. Another is to allow Gaussian evolution that does not preserve particle number and mixes creation and annihilation operators. Related progress in bosonic systems includes efficient learning of states obtained by applying arbitrary Gaussian unitaries to Fock inputs~\cite{iosue2025higher}.

Finally, it would be interesting to extend our approach to bosonic inputs whose blocks contain superpositions of Fock configurations. The key question is whether higher-order correlations can recover the hidden block structure after an unknown passive Gaussian evolution, enabling efficient learning beyond the Fock-input setting.

\section*{Statement on the Use of Artificial Intelligence}
The authors developed the learning protocol and the overall proof framework, including the principal theorems and lemmas. Generative artificial intelligence (AI) tools (ChatGPT 5 and 6) assisted with exploring reconstruction and post-processing ideas, deriving and refining some proofs, and drafting and revising portions of the manuscript. The authors reviewed and verified all AI-assisted arguments and text and take full responsibility for the content of this work.

\begin{acknowledgments}
This work was supported by the National Research Foundation of Korea (NRF) Grants (No. RS-2024-00431768 and No. RS-2025-00515456) funded by the Korean government (Ministry of Science and ICT (MSIT)) and the Institute of Information \& Communications Technology Planning \& Evaluation (IITP) Grants funded by the Korean government (MSIT) (No. RS-2024-00437284, No. IITP-2025-RS-2025-02283189 and No. IITP-2025-RS-2025-02263264) and by the Global Partnership Program of Leading Universities in Quantum Science and Technology (RS-2025-08542968) through the NRF funded by the Korean government (MSIT).
\end{acknowledgments}

\bibliography{References.bib}

@book{fetter2003quantum,
  title     = {Quantum Theory of Many-Particle Systems},
  author    = {Fetter, Alexander L. and Walecka, John Dirk},
  publisher = {Dover Publications},
  address   = {Mineola, New York},
  year      = {2003},
  isbn      = {9780486428277}
}

@article{DavisKahan1970,
  author = {Davis, Chandler and Kahan, W. M.},
  title = {The Rotation of Eigenvectors by a Perturbation. {III}},
  journal = {SIAM Journal on Numerical Analysis},
  volume = {7},
  number = {1},
  pages = {1--46},
  year = {1970},
  doi = {10.1137/0707001}
}

@book{KatoPerturbation,
  author = {Kato, Tosio},
  title = {Perturbation Theory for Linear Operators},
  publisher = {Springer},
  edition = {2},
  year = {1995},
  doi = {10.1007/978-3-642-66282-9},
  series = {Classics in Mathematics},
  address = {Berlin, Heidelberg}
}

@article{Wedin1972,
  author  = {Wedin, Per-{\AA}ke},
  title   = {Perturbation Bounds in Connection with Singular Value Decomposition},
  journal = {BIT Numerical Mathematics},
  volume  = {12},
  pages   = {99--111},
  year    = {1972},
  doi     = {10.1007/BF01932678}
}

@article{Coleman1963,
  author  = {Coleman, A. J.},
  title   = {Structure of Fermion Density Matrices},
  journal = {Reviews of Modern Physics},
  volume  = {35},
  number  = {3},
  pages   = {668--686},
  year    = {1963},
  doi     = {10.1103/RevModPhys.35.668}
}

@book{Bhatia1997MatrixAnalysis,
  author = {Bhatia, Rajendra},
  title = {Matrix Analysis},
  publisher = {Springer},
  year = {1997},
  doi = {10.1007/978-1-4612-0653-8},
  series = {Graduate Texts in Mathematics},
  volume = {169},
  address = {New York}
}

@article{zyczkowski-sommers,
    author  = {{\.Z}yczkowski, Karol and Sommers, Hans-J{\"u}rgen},
    title   = {Induced Measures in the Space of Mixed Quantum States},
    journal = {Journal of Physics A: Mathematical and General},
    volume  = {34},
    number  = {35},
    pages   = {7111--7125},
    year    = {2001},
    doi     = {10.1088/0305-4470/34/35/335}
}

@article{haah2017sample,
  title = {Sample-optimal tomography of quantum states},
  author = {Haah, Jeongwan and Harrow, Aram W. and Ji, Zhengfeng and Wu, Xiaodi and Yu, Nengkun},
  journal = {IEEE Transactions on Information Theory},
  volume = {63},
  number = {9},
  pages = {5628--5641},
  year = {2017},
  doi = {10.1109/tit.2017.2719044},
  eprint = {1508.01797},
  archiveprefix = {arXiv},
  primaryclass = {quant-ph}
}

@article{huang2020classical,
  title = {Predicting many properties of a quantum system from very few measurements},
  author = {Huang, Hsin-Yuan and Kueng, Richard and Preskill, John},
  journal = {Nature Physics},
  volume = {16},
  pages = {1050--1057},
  year = {2020},
  doi = {10.1038/s41567-020-0932-7},
  eprint = {2002.08953},
  archiveprefix = {arXiv},
  primaryclass = {quant-ph},
  number = {10}
}

@article{zhao2021fermionic,
  title = {Fermionic partial tomography via classical shadows},
  author = {Zhao, Andrew and Rubin, Nicholas C. and Miyake, Akimasa},
  journal = {Physical Review Letters},
  volume = {127},
  pages = {110504},
  year = {2021},
  doi = {10.1103/physrevlett.127.110504},
  eprint = {2010.16094},
  archiveprefix = {arXiv},
  primaryclass = {quant-ph},
  number = {11}
}

@inproceedings{aaronson2023fermion,
  title        = {Efficient Tomography of Non-Interacting-Fermion States},
  author       = {Aaronson, Scott and Grewal, Sabee},
  booktitle    = {18th Conference on the Theory of Quantum Computation, Communication and Cryptography (TQC 2023)},
  series       = {Leibniz International Proceedings in Informatics},
  volume       = {266},
  pages        = {12:1--12:18},
  year         = {2023},
  doi          = {10.4230/LIPIcs.TQC.2023.12},
  eprint       = {2102.10458},
  archivePrefix= {arXiv},
  primaryClass = {quant-ph}
}

@article{terhal2002classical,
  title        = {Classical simulation of noninteracting-fermion quantum circuits},
  author       = {Terhal, Barbara M. and DiVincenzo, David P.},
  journal      = {Physical Review A},
  volume       = {65},
  pages        = {032325},
  year         = {2002},
  doi          = {10.1103/PhysRevA.65.032325},
  eprint       = {quant-ph/0108010},
  archivePrefix= {arXiv}
}

@article{oszmaniec2022fermion,
  title = {Fermion Sampling: A robust quantum computational advantage scheme using fermionic linear optics and magic input states},
  author = {Oszmaniec, Micha{\l} and Dangniam, Ninnat and Morales, Mauro E. S. and Zimbor{\'a}s, Zolt{\'a}n},
  journal = {PRX Quantum},
  volume = {3},
  pages = {020328},
  year = {2022},
  doi = {10.1103/prxquantum.3.020328},
  eprint = {2012.15825},
  archiveprefix = {arXiv},
  primaryclass = {quant-ph},
  number = {2}
}

@article{mele2025few,
  title = {Efficient learning of quantum states prepared with few fermionic non-{Gaussian} gates},
  author = {Mele, Antonio Anna and Herasymenko, Yaroslav},
  journal = {PRX Quantum},
  year = {2025},
  eprint = {2402.18665},
  archiveprefix = {arXiv},
  primaryclass = {quant-ph},
  doi = {10.1103/prxquantum.6.010319},
  volume = {6},
  number = {1},
  pages = {010319}
}

@misc{chen2026optimal,
  title        = {Optimal tomography of bosonic and fermionic {Gaussian} states},
  author       = {Chen, Senrui and Fanizza, Marco and Girardi, Filippo and Lami, Ludovico and Mele, Francesco Anna and Walter, Michael and Witteveen, Freek},
  year         = {2026},
  eprint       = {2607.11847},
  archivePrefix= {arXiv},
  primaryClass = {quant-ph}
}

@misc{alam2025dynamics,
  title = {Fermionic dynamics on a trapped-ion quantum computer beyond exact classical simulation},
  author = {Alam, Faisal and others},
  author_full = {Alam, Faisal and Bosse, Jan Lukas and {\v{C}}epait{\.e}, Ieva and Chapman, Adrian and Clinton, Laura and Crichigno, Marcos and Crosson, Elizabeth and Cubitt, Toby and Derby, Charles and Dowinton, Oliver and Eassa, Norhan and Faehrmann, Paul K. and Flammia, Steve and Flynn, Brian and Gambetta, Filippo Maria and Garc{\'i}a-Patr{\'o}n, Ra{\'u}l and Hunter-Gordon, Max and Jones, Glenn and Khedkar, Abhishek and Klassen, Joel and Kreshchuk, Michael and McMullan, Edward Harry and Mineh, Lana and Montanaro, Ashley and Mora, Caterina and Morton, John J. L. and Nocera, Alberto and Patel, Dhrumil and Rolph, Pete and Santos, Raul A. and Seddon, James R. and Sheridan, Evan and Somogyi, Wilfrid and Svensson, Marika and Vaishnav, Niam and Wang, Sabrina Yue and Wright, Gethin and Chertkov, Eli and Dreyer, Henrik and Foss-Feig, Michael},
  year = {2025},
  eprint = {2510.26300},
  archiveprefix = {arXiv},
  primaryclass = {quant-ph},
  doi = {10.48550/arXiv.2510.26300}
}

@misc{bako2025fermionic,
  title = {Fermionic {Born} Machines: Classical training of quantum generative models based on {Fermion Sampling}},
  author = {Bak{\'o}, Bence and Kolarovszki, Zolt{\'a}n and Zimbor{\'a}s, Zolt{\'a}n},
  year = {2025},
  eprint = {2511.13844},
  archiveprefix = {arXiv},
  primaryclass = {quant-ph},
  doi = {10.48550/arXiv.2511.13844}
}

@misc{kerenidis2026scalable,
  title = {Scalable Quantum Machine Learning: Trainability, Expressivity and Efficiency},
  author = {Kerenidis, Iordanis},
  year = {2026},
  eprint = {2607.24014},
  archiveprefix = {arXiv},
  primaryclass = {quant-ph},
  doi = {10.48550/arXiv.2607.24014}
}

@article{huggins2022unbiasing,
  title = {Unbiasing fermionic quantum {Monte Carlo} with a quantum computer},
  author = {Huggins, William J. and O'Gorman, Bryan A. and Rubin, Nicholas C. and Reichman, David R. and Babbush, Ryan and Lee, Joonho},
  journal = {Nature},
  volume = {603},
  number = {7901},
  pages = {416--420},
  year = {2022},
  doi = {10.1038/s41586-021-04351-z}
}

@article{hebenstreit2019all,
  title = {All pure fermionic non-{Gaussian} states are magic states for matchgate computations},
  author = {Hebenstreit, M. and Jozsa, R. and Kraus, B. and Strelchuk, S. and Yoganathan, M.},
  journal = {Physical Review Letters},
  volume = {123},
  number = {8},
  pages = {080503},
  year = {2019},
  doi = {10.1103/physrevlett.123.080503}
}

@misc{oh2026classical,
  title = {Classical simulation of free-fermionic dynamics and quantum chemistry with magic input},
  author = {Oh, Changhun and Oszmaniec, Micha{\l} and Reardon-Smith, Oliver and Zimbor{\'a}s, Zolt{\'a}n},
  year = {2026},
  eprint = {2604.26813},
  archiveprefix = {arXiv},
  primaryclass = {quant-ph},
  doi = {10.48550/arXiv.2604.26813}
}

@article{ivanov2016computational,
  title = {Computational complexity of exterior products and multiparticle amplitudes of noninteracting fermions in entangled states},
  author = {Ivanov, Dmitri A.},
  journal = {Physical Review A},
  year = {2017},
  doi = {10.1103/physreva.96.012322},
  volume = {96},
  number = {1},
  pages = {012322},
  eprint = {1603.02724},
  archiveprefix = {arXiv},
  primaryclass = {quant-ph}
}

@misc{koizumi2026provably,
  title = {Provably Efficient Learning of Fermionic Correlations under Particle-Number Symmetry},
  author = {Koizumi, Yuki and Wada, Kaito and Takama, Toshinori P. and Yoshioka, Nobuyuki},
  year = {2026},
  eprint = {2606.30601},
  archiveprefix = {arXiv},
  primaryclass = {quant-ph},
  doi = {10.48550/arXiv.2606.30601}
}

@article{heyraud2025unified,
  title = {Unified framework for matchgate classical shadows},
  author = {Heyraud, Valentin and Chomet, H{\'e}loise and Tilly, Jules},
  journal = {npj Quantum Information},
  volume = {11},
  number = {1},
  pages = {65},
  year = {2025},
  doi = {10.1038/s41534-025-01015-y}
}

@misc{low2022classical,
  title = {Classical shadows of fermions with particle number symmetry},
  author = {Low, Guang Hao},
  year = {2022},
  eprint = {2208.08964},
  archiveprefix = {arXiv},
  primaryclass = {quant-ph},
  doi = {10.48550/arXiv.2208.08964}
}

@article{kottmann2022optimized,
  author = {Kottmann, Jakob S. and Aspuru-Guzik, Al{\'a}n},
  title = {Optimized low-depth quantum circuits for molecular electronic structure using a separable-pair approximation},
  journal = {Physical Review A},
  volume = {105},
  number = {3},
  pages = {032449},
  year = {2022},
  doi = {10.1103/physreva.105.032449},
  eprint = {2105.03836},
  archiveprefix = {arXiv},
  primaryclass = {quant-ph}
}

@misc{montanaro2017learning,
  title = {Learning stabilizer states by {Bell} sampling},
  author = {Montanaro, Ashley},
  year = {2017},
  eprint = {1707.04012},
  archiveprefix = {arXiv},
  primaryclass = {quant-ph},
  doi = {10.48550/arXiv.1707.04012}
}

@article{bittel2025optimal,
  title = {Optimal trace-distance bounds for free-fermionic states: Testing and improved tomography},
  author = {Bittel, Lennart and Mele, Antonio Anna and Eisert, Jens and Leone, Lorenzo},
  journal = {PRX Quantum},
  volume = {6},
  number = {3},
  pages = {030341},
  year = {2025},
  doi = {10.1103/pzx6-nkfb}
}

@article{mele2025learning,
  title = {Learning quantum states of continuous-variable systems},
  author = {Mele, Francesco A. and Mele, Antonio A. and Bittel, Lennart and Eisert, Jens and Giovannetti, Vittorio and Lami, Ludovico and Leone, Lorenzo and Oliviero, Salvatore F. E.},
  journal = {Nature Physics},
  volume = {21},
  number = {12},
  pages = {2002--2008},
  year = {2025},
  doi = {10.1038/s41567-025-03086-2}
}

@article{grewal2025efficient,
  title = {Efficient learning of quantum states prepared with few non-{Clifford} gates},
  author = {Grewal, Sabee and Iyer, Vishnu and Kretschmer, William and Liang, Daniel},
  journal = {Quantum},
  volume = {9},
  pages = {1907},
  year = {2025},
  doi = {10.22331/q-2025-11-06-1907}
}

@inproceedings{arunachalam2022optimal,
  author = {Arunachalam, Srinivasan and Bravyi, Sergey and Dutt, Arkopal and Yoder, Theodore J.},
  title = {Optimal Algorithms for Learning Quantum Phase States},
  booktitle = {18th Conference on the Theory of Quantum Computation, Communication and Cryptography (TQC 2023)},
  series = {Leibniz International Proceedings in Informatics (LIPIcs)},
  volume = {266},
  editor = {Fawzi, Omar and Walter, Michael},
  pages = {3:1--3:24},
  year = {2023},
  publisher = {Schloss Dagstuhl -- Leibniz-Zentrum f{\"u}r Informatik},
  address = {Dagstuhl, Germany},
  doi = {10.4230/LIPIcs.TQC.2023.3},
  eprint = {2208.07851},
  archiveprefix = {arXiv},
  primaryclass = {quant-ph}
}

@inproceedings{huang2024learning,
  title = {Learning shallow quantum circuits},
  author = {Huang, Hsin-Yuan and Liu, Yunchao and Broughton, Michael and Kim, Isaac and Anshu, Anurag and Landau, Zeph and McClean, Jarrod R.},
  booktitle = {Proceedings of the 56th Annual ACM Symposium on Theory of Computing},
  pages = {1343--1351},
  year = {2024},
  doi = {10.1145/3618260.3649722}
}

@inproceedings{landau2025learning,
  title = {Learning quantum states prepared by shallow circuits in polynomial time},
  author = {Landau, Zeph and Liu, Yunchao},
  booktitle = {Proceedings of the 57th Annual ACM Symposium on Theory of Computing},
  pages = {1828--1838},
  year = {2025},
  doi = {10.1145/3717823.3718311}
}

@misc{iosue2025higher,
  title = {Higher moment theory and learnability of bosonic states},
  author = {Iosue, Joseph T. and Wang, Yu-Xin and Datta, Ishaun and Ghosh, Soumik and Oh, Changhun and Fefferman, Bill and Gorshkov, Alexey V.},
  year = {2025},
  eprint = {2510.01610},
  archiveprefix = {arXiv},
  primaryclass = {quant-ph},
  doi = {10.48550/arXiv.2510.01610}
}

@article{gigena2021many,
  title = {Many-body entanglement in fermion systems},
  author = {Gigena, N. and Di Tullio, M. and Rossignoli, R.},
  journal = {Physical Review A},
  volume = {103},
  number = {5},
  pages = {052424},
  year = {2021},
  doi = {10.1103/physreva.103.052424}
}

@article{cianciulli2024bipartite,
  title = {Bipartite representations and many-body entanglement of pure states of {$N$} indistinguishable particles},
  author = {Cianciulli, J. A. and Rossignoli, R. and Di Tullio, M. and Gigena, N. and Petrovich, Federico},
  journal = {Physical Review A},
  volume = {110},
  number = {3},
  pages = {032414},
  year = {2024},
  doi = {10.1103/physreva.110.032414}
}

@article{lanyon2017efficient,
  title = {Efficient tomography of a quantum many-body system},
  author = {Lanyon, B. P. and others},
  author_full = {Lanyon, B. P. and Maier, C. and Holz{\"a}pfel, M. and Baumgratz, T. and Hempel, C. and Jurcevic, P. and Dhand, I. and Buyskikh, A. S. and Daley, A. J. and Cramer, M. and Plenio, M. B. and Blatt, R. and Roos, C. F.},
  journal = {Nature Physics},
  volume = {13},
  number = {12},
  pages = {1158--1162},
  year = {2017},
  doi = {10.1038/nphys4244}
}

@article{lange2023adaptive,
  title = {Adaptive quantum state tomography with active learning},
  author = {Lange, Hannah and Kebri{\v{c}}, Matja{\v{z}} and Buser, Maximilian and Schollw{\"o}ck, Ulrich and Grusdt, Fabian and Bohrdt, Annabelle},
  journal = {Quantum},
  volume = {7},
  pages = {1129},
  year = {2023},
  doi = {10.22331/q-2023-10-09-1129}
}

@incollection{d2003quantum,
  author = {D'Ariano, G. Mauro and Paris, Matteo G. A. and Sacchi, Massimiliano F.},
  title = {Quantum Tomography},
  booktitle = {Advances in Imaging and Electron Physics},
  volume = {128},
  pages = {205--308},
  publisher = {Elsevier},
  year = {2003},
  doi = {10.1016/S1076-5670(03)80065-4},
  eprint = {quant-ph/0302028},
  archiveprefix = {arXiv}
}

@book{paris2004quantum,
  editor = {Paris, Matteo and {\v{R}}eh{\'a}{\v{c}}ek, Jaroslav},
  title = {Quantum State Estimation},
  series = {Lecture Notes in Physics},
  volume = {649},
  publisher = {Springer},
  address = {Berlin, Heidelberg},
  year = {2004},
  doi = {10.1007/b98673}
}

@article{anshu2024survey,
  title = {A survey on the complexity of learning quantum states},
  author = {Anshu, Anurag and Arunachalam, Srinivasan},
  journal = {Nature Reviews Physics},
  volume = {6},
  number = {1},
  pages = {59--69},
  year = {2024},
  doi = {10.1038/s42254-023-00662-4}
}

@article{eisert2020quantum,
  title = {Quantum certification and benchmarking},
  author = {Eisert, Jens and Hangleiter, Dominik and Walk, Nathan and Roth, Ingo and Markham, Damian and Parekh, Rhea and Chabaud, Ulysse and Kashefi, Elham},
  journal = {Nature Reviews Physics},
  volume = {2},
  number = {7},
  pages = {382--390},
  year = {2020},
  doi = {10.1038/s42254-020-0186-4}
}

@misc{qin2026statistical,
  title = {Statistical and Algorithmic Foundations of Probing Quantum Systems with Compressive Measurements: A Review},
  author = {Qin, Zhen and Wakin, Michael B and Zhu, Zhihui},
  year = {2026},
  eprint = {2605.27191},
  archiveprefix = {arXiv},
  primaryclass = {quant-ph},
  doi = {10.48550/arXiv.2605.27191}
}

@book{derezinski2013mathematics,
  title={Mathematics of quantization and quantum fields},
  author={Derezi{\'n}ski, Jan and G{\'e}rard, Christian},
  year={2013},
  publisher={Cambridge University Press}
}

@article{jimenez2015cluster,
  title={Cluster-based mean-field and perturbative description of strongly correlated fermion systems: Application to the one- and two-dimensional {Hubbard} model},
  author={Jim{\'e}nez-Hoyos, Carlos A and Scuseria, Gustavo E},
  journal={Physical Review B},
  volume={92},
  number={8},
  pages={085101},
  year={2015},
  publisher={APS}
}

\newpage

\begingroup
\renewcommand{\tocname}{Appendix Contents}
\tableofcontents
\endgroup

\appendix
\addtocontents{toc}{\protect\let\protect\contentsline\protect\savedcontentsline}
\crefalias{section}{appendix}

\section{Mathematical preliminaries and notation}
\label{app:mathematical-preliminaries}

We first relate the Fock-space notation of the main text to the \textit{exterior algebra} representation used in the proofs, then collect the linear algebra notation and perturbation bounds needed below. State symbols are unchanged; we generally omit ket notation for exterior-algebra vectors. App.~\ref{app:basic-rdm-identities} uses this notation to establish the RDM identities.

\subsection{Exterior algebra and fermionic system}
\label{app:basic-exterior-algebra}

We recall the standard fermionic Fock-space description~\cite{fetter2003quantum, derezinski2013mathematics} in exterior-algebra notation. Let $H\simeq\mathbb C^m$ be an $m$-dimensional complex Hilbert space with fixed orthonormal basis $e_1,\ldots,e_m$. For $0\leq k\leq m$, set $\mathcal I_k:=\{I\subseteq[m]:|I|=k\}$. For $I=\{i_1<\cdots<i_k\}\in\mathcal I_k$, define $e_I:=e_{i_1}\wedge\cdots\wedge e_{i_k}$, where $\wedge$ denotes the antisymmetric product satisfying $e_i\wedge e_j=-e_j\wedge e_i$ and $e_i\wedge e_i=0$. The $k$-th exterior power of $H$ is
\begin{align}
    \wedge^kH
    &:=
    \operatorname{span}\{e_I:I\in\mathcal I_k\},
    \qquad
    D_k:=\dim\wedge^kH=\binom{m}{k}.
    \label{eq:basic-qth-exterior-power}
\end{align}
We use the conventions $\wedge^0H:=\mathbb C$, $e_\varnothing:=1$, and $\wedge^1H:=H$, and equip $\wedge^kH$ with the inner product for which $\{e_I:I\in\mathcal I_k\}$ is orthonormal.
The wedge product extends bilinearly to vectors of different degrees.
For $x\in\wedge^kH$ and $y\in\wedge^lH$,
$x\wedge y\in\wedge^{k+l}H$ and $x\wedge y=(-1)^{kl}y\wedge x$. For a subspace $E\subseteq H$, we write $\wedge^kE:=\operatorname{span}\{v_1\wedge\cdots\wedge v_k:v_1,\ldots,v_k\in E\}\subseteq\wedge^kH$. The one-particle support of a nonzero vector $x\in\wedge^kH$, $k\geq1$, is the smallest subspace $B\subseteq H$ such that $x\in\wedge^kB$.

To connect with the Fock-space notation of Sec.~\ref{Sec:State}, we identify a $k$-particle state with its exterior-algebra representation by
\begin{align}
    \ket{\psi}&=\sum_{I\in\mathcal I_k}\psi_I\ket I
    \quad\longleftrightarrow\quad
    \psi=\sum_{I\in\mathcal I_k}\psi_I e_I.
    \label{eq:basic-fock-exterior-identification}
\end{align}
Here, $\ket I=\hat c_{i_1}^\dagger\cdots\hat c_{i_k}^\dagger\ket{\vac}$ for $I=\{i_1<\cdots<i_k\}\in\mathcal I_k$.
With this identification,
the $m$-mode fermionic Fock space is represented as
$\bigoplus_{k=0}^m\wedge^kH$.
For $x=\sum_{I\in\mathcal I_k}x_Ie_I$, the creation polynomial $\hat c^\dagger[x]:=\sum_Ix_I\hat c_I^\dagger$ defined in Sec.~\ref{Sec:State} satisfies $x\wedge y=\hat c^\dagger[x]\,y$ for $y\in\wedge^lH$. We also write $\hat c[v]:=(\hat c^\dagger[v])^\dagger$ for a general exterior vector $v$.

A passive Gaussian unitary $\hat U$ associated with $U\in\operatorname{U}(m)$ applies the same single-particle transformation $U$ to each particle. With the global phase fixed by $\hat U\ket{\vac}=\ket{\vac}$, its action on the $k$-particle sector is represented by $\wedge^kU$:
\begin{align}
    (\wedge^kU)(v_1\wedge\cdots\wedge v_k)&=Uv_1\wedge\cdots\wedge Uv_k,\qquad \hat U\ket{\psi}\longleftrightarrow(\wedge^kU)\psi.
    \label{eq:basic-exterior-unitary-action}
\end{align}
The first relation extends linearly to a unitary operator on $\wedge^kH$.

\subsection{Linear-algebra notation}
\label{app:linear-algebra-notation}

For a linear map $A:V\to W$ between finite-dimensional Hilbert spaces, we write $\operatorname{ran}A$ for its range, $\ker A$ for its kernel, and $A^\dagger$ for its adjoint. The notation $\|\cdot\|$ denotes the Hilbert-space norm on vectors and the induced operator norm on linear maps, $\|A\|:=\sup_{\|x\|=1}\|Ax\|$. The trace and Hilbert--Schmidt norms are $\|A\|_1:=\operatorname{Tr}\sqrt{A^\dagger A}$ and $\|A\|_{\mathrm{HS}}:=\sqrt{\operatorname{Tr}(A^\dagger A)}$, respectively.

For a subspace $E\subseteq V$, let $P_E$ be the orthogonal projector onto $E$ and $I_E$ the identity on $E$. We write $I$ when the underlying space is clear. For a normalized vector $x\in V$, let $\Pi_x:=|x\rangle\langle x|$.

Under the standard isometric identification of $\wedge^kH$ with the antisymmetric subspace of $H^{\otimes k}$, the projector onto states with all $k$ particles in $E\subseteq H$ is
\begin{align}
    P_{\wedge^kE}&=\left.P_E^{\otimes k}\right|_{\wedge^kH}.
    \label{eq:lifted-projector-notation}
\end{align}

For a Hermitian operator $A$, let $\operatorname{spec}(A)$ denote its set of eigenvalues. For a linear map $B:V\to W$, let $\operatorname{sing}(B):=\{\sqrt\lambda:\lambda\in\operatorname{spec}(B^\dagger B)\}$ denote its set of singular values. If $A=\sum_{\lambda\in\operatorname{spec}(A)}\lambda P_\lambda$ is the spectral decomposition of $A$, then, for $J\subseteq\mathbb R$, we write
\begin{align}
    \mathbf 1_J(A)
    &:={}
    \sum_{\lambda\in \operatorname{spec}(A)\cap J} P_\lambda.
    \label{eq:spectral-projector-notation}
\end{align}
Thus, $\operatorname{ran}\mathbf 1_{(\theta,\infty)}(A)$ is the subspace spanned by eigenvectors of $A$ with eigenvalues greater than $\theta$.

For a point $x$ in a normed space and a nonempty subset $\mathcal S$, define
\begin{align}
    \operatorname{dist}\left(x,\mathcal S\right)
    &:={}
    \inf_{y\in\mathcal S}\|x-y\|.
    \label{eq:distance-to-set-notation}
\end{align}
In particular, for a real threshold $t$,
\begin{align*}
    \operatorname{dist}(t,\operatorname{spec}(A))
    =\min_{\lambda\in\operatorname{spec}(A)}|t-\lambda|,\qquad
    \operatorname{dist}(t,\operatorname{sing}(B))
    =\min_{\sigma\in\operatorname{sing}(B)}|t-\sigma|.
\end{align*}

\subsection{Eigenvalue perturbation and subspace alignment}
\label{app:basic-perturbation-bounds}

We recall Weyl's eigenvalue bound and a unitary alignment of nearby subspaces.

\begin{lemma}[Weyl's inequality~\cite{Bhatia1997MatrixAnalysis}]\label{lem:weyl-inequality}
Let \(A\) and \(\widetilde A\) be Hermitian operators on a \(D\)-dimensional Hilbert space, and suppose $ \|\widetilde{A} - A\|\leq \epsilon$. Let $\lambda_1 \geq\cdots \geq \lambda_D$ and $\widetilde\lambda_1 \geq \cdots \geq \widetilde\lambda_D$ be the eigenvalues of \(A\) and \(\widetilde A\), respectively. Then, $|\widetilde{\lambda}_j - \lambda_j|\leq \epsilon$ for every \(j\in[D]\).
\end{lemma}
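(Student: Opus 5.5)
The statement is Weyl's inequality, so I would prove it directly from the Courant--Fischer min-max characterization rather than from anything specific to the paper. For a Hermitian operator $A$ on a $D$-dimensional space with eigenvalues ordered decreasingly, I will use
\begin{align}
    \lambda_j(A)=\max_{\dim V=j}\ \min_{x\in V,\ \|x\|=1}\langle x,Ax\rangle .
\end{align}
I take this as the standard variational principle. If a self-contained argument is wanted, it follows from the spectral theorem by a dimension count. Let $V$ range over $j$-dimensional subspaces and let $W$ be the span of the eigenvectors for $\lambda_j,\ldots,\lambda_D$. Then $V\cap W\neq\{0\}$, which gives the upper bound on the min-max expression. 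Choosing $V$ to be the span of the top $j$ eigenvectors attains it.

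The key step is a pointwise comparison of quadratic forms. Write $E:=\widetilde A-A$. For every unit vector $x$,
\begin{align}
    |\langle x,\widetilde Ax\rangle-\langle x,Ax\rangle|=|\langle x,Ex\rangle|\leq\|E\|\leq\epsilon .
\end{align}
Hence $\langle x,Ax\rangle-\epsilon\leq\langle x,\widetilde Ax\rangle\leq\langle x,Ax\rangle+\epsilon$ for all unit $x$.

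Next I would push these inequalities through the min and the max. Both operations are monotone and commute with adding a constant. Fixing $j$ and applying the min-max formula to $\widetilde A$ then gives $\lambda_j-\epsilon\leq\widetilde\lambda_j\leq\lambda_j+\epsilon$. This is the claimed bound $|\widetilde\lambda_j-\lambda_j|\leq\epsilon$ for every $j\in[D]$.

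There is no real obstacle here. The only point needing care is that the same index $j$ is used on both sides, with both spectra ordered decreasingly. The min-max characterization handles this automatically, since it indexes eigenvalues by subspace dimension rather than by any pairing of eigenvectors.
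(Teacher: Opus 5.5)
Your proof is correct: passing the pointwise bound $|\langle x,(\widetilde A-A)x\rangle|\le\epsilon$ through the Courant--Fischer max--min formula gives exactly $|\widetilde\lambda_j-\lambda_j|\le\epsilon$ for each fixed $j$. The paper gives no proof of its own and simply cites Bhatia's \emph{Matrix Analysis}, where Weyl's perturbation theorem is derived from the same minimax principle, so your argument is essentially the standard one the paper invokes.
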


The following lemma identifies nearby subspaces by a unitary map with a controlled displacement of each unit vector.

\begin{lemma}[Canonical unitary alignment]\label{lem:canonical-unitary-alignment}
Let $E,F$ be subspaces of a finite-dimensional Hilbert space with $\dim E=\dim F=d$ and $\|P_E-P_F\|\leq\eta<1$. The map $Q:=P_FP_E(P_EP_FP_E|_E)^{-1/2}:E\to F$ is unitary and, for every unit vector $x\in E$,
\begin{align}
    \|Qx-x\|&\leq\sqrt2\eta.
    \label{eq:canonical-unitary-vector-bound}
\end{align}
The same bound holds for $\|Q^\dagger y-y\|$ when $y\in F$ is normalized. In particular, $Q$ maps the Haar measure on the unit sphere of $E$ to that on the unit sphere of $F$ when $d\geq1$.
\end{lemma}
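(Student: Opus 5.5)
We need to prove the canonical unitary alignment lemma. Write $P=P_E$, $R=P_F$, and $M:=P R P|_E$, a positive operator on $E$. For unit $x\in E$,
\[
\langle x,Mx\rangle=\|Rx\|^2=1-\|(I-R)x\|^2 .
\]
Since $(I-R)x=(P-R)x$, we have $\|(I-R)x\|\le\eta$. Hence $M\ge(1-\eta^2)I_E$, so $M$ is invertible and $M^{-1/2}$ is well defined.

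\textbf{Unitarity.} Set $Q:=RM^{-1/2}$ on $E$. Then
\[
Q^\dagger Q=M^{-1/2}PRRP M^{-1/2}=M^{-1/2}MM^{-1/2}=I_E ,
\]
so $Q$ is an isometry from $E$ into $F$. Because $\dim E=\dim F=d$, it is onto, hence unitary.

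\textbf{Displacement bound.} I first write $\|Qx-x\|^2=2-2\operatorname{Re}\langle x,Qx\rangle$. Next,
\[
\langle x,Qx\rangle=\langle x,RM^{-1/2}x\rangle=\langle x,PRPM^{-1/2}x\rangle=\langle x,M^{1/2}x\rangle ,
\]
using $x\in E$ and $M^{-1/2}x\in E$. Since $(1-\eta^2)I\le M\le I$, operator monotonicity gives $M^{1/2}\ge M$. Therefore
\[
\langle x,M^{1/2}x\rangle\ge\langle x,Mx\rangle\ge1-\eta^2 .
\]
So $\|Qx-x\|^2\le2\eta^2$, which is the claimed bound $\sqrt2\,\eta$.

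For $y\in F$, $Q^\dagger y\in E$ and $Q$ is unitary, so
\[
\|Q^\dagger y-y\|=\|Q^\dagger y-QQ^\dagger y\|=\|Qx-x\|\quad\text{with }x=Q^\dagger y\text{ a unit vector},
\]
and the same bound applies.

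\textbf{Haar measure.} A unitary map between $d$-dimensional spaces carries the unit sphere of $E$ isometrically onto that of $F$. It also intertwines the unitary groups, since $V\mapsto QVQ^\dagger$ is a bijection $\operatorname{U}(E)\to\operatorname{U}(F)$. The pushforward of Haar measure is therefore $\operatorname{U}(F)$-invariant on the sphere of $F$, and by uniqueness of the invariant probability measure it equals Haar measure.

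The only delicate step is the inequality $M^{1/2}\ge M$. It holds because $0\le M\le I$, so every eigenvalue $\lambda$ of $M$ satisfies $\sqrt\lambda\ge\lambda$, and no further perturbation estimate is needed.
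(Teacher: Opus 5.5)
Your proof is correct. It reaches the paper's estimate by a coordinate-free route, with the same underlying content.

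The paper proceeds through principal angles:
\begin{itemize}
\item It chooses principal-vector bases $e_j$ of $E$ and $f_j$ of $F$ with $\langle e_i,f_j\rangle=\delta_{ij}\cos\theta_j$.
\item It reads off $Qe_j=f_j$, which gives unitarity.
\item It computes $\|Qx-x\|^2=\sum_j|x_j|^2\,2(1-\cos\theta_j)$.
\item It finishes with $1-\cos\theta\le\sin^2\theta$ and the identity $\sin\theta_{\max}=\|P_E-P_F\|$.
\end{itemize}

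You instead obtain $M\ge(1-\eta^2)I_E$ directly from $\|(P_E-P_F)x\|\le\eta$. You get unitarity from $Q^\dagger Q=I_E$ together with the dimension count. You then reduce the displacement to $2-2\langle x,M^{1/2}x\rangle$. The eigenvalues of $M$ are exactly the $\cos^2\theta_j$, so your inequality $M^{1/2}\ge M$ is the paper's $\cos\theta_j\ge\cos^2\theta_j$ in disguise.

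Each route has an advantage. Yours is self-contained: it needs neither the principal-angle structure theorem nor the equality $\sin\theta_{\max}=\|P_E-P_F\|$, and it uses $\dim E=\dim F$ only for surjectivity. The paper's gives the explicit picture that $Q$ sends principal vectors to principal vectors, and it gives the exact supremum $2(1-\cos\theta_{\max})$.

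Two small corrections:
\begin{itemize}
\item The step $M^{1/2}\ge M$ is not operator monotonicity. It is the scalar inequality $\sqrt\lambda\ge\lambda$ on $[0,1]$ applied through the functional calculus of $M$, as your last paragraph correctly says.
\item The bound $M\le I_E$ should be stated explicitly. It is immediate from $\langle x,Mx\rangle=\|P_Fx\|^2\le\|x\|^2$.
\end{itemize}

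Your Haar-measure argument, via intertwining the unitary groups and uniqueness of the invariant measure, is fine. It is more detailed than the paper's one-line appeal to unitarity.
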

\begin{proof}
For $d\geq1$, choose orthonormal principal-vector bases $e_j$ of $E$ and $f_j$ of $F$, with $\langle e_i,f_j\rangle=\delta_{ij}\cos\theta_j$~\cite{KatoPerturbation}. Since $\sin\theta_{\max}=\|P_E-P_F\|\leq\eta<1$, all $\cos\theta_j$ are positive. The identities $P_Fe_j=\cos\theta_j f_j$ and $(P_EP_FP_E)e_j=\cos^2\theta_j e_j$ therefore show that the inverse square root is well defined and $Qe_j=f_j$. Thus $Q$ is unitary. The vectors $f_j-e_j$ are mutually orthogonal, with squared norms $2(1-\cos\theta_j)$, so
\begin{align}
    \sup_{\substack{x\in E\\\|x\|=1}}\|Qx-x\|^2&=2(1-\cos\theta_{\max})\leq2\sin^2\theta_{\max}\leq2\eta^2.
    \label{eq:canonical-alignment-principal-calculation}
\end{align}
The adjoint bound follows by taking $x=Q^\dagger y$, and unitarity preserves Haar measure. The case $d=0$ is immediate.
\end{proof}

\section{Particle RDMs and identities}
\label{app:basic-rdm-identities}
\label{app:exact-structural-tools}
\label{app:exact-rdm-structure}

We recall the standard particle-RDM formalism~\cite{Coleman1963} in the exterior-algebra notation of App.~\ref{app:basic-exterior-algebra}, using the normalization of Sec.~\ref{Sec:Main results}. We then derive the identities for individual branches, blocks, and block-product states used in reconstruction. The short proofs are included to make the normalization and fermionic signs explicit.

\subsection{RDM definition}
\label{app:rdm-definition-and-covariance}

For $N$-particle vectors $x,y$, define the transition $k$-RDM by
\begin{align}
    (\Gamma_{x,y}^{(k)})_{I,J}
    &:=
    \langle y|\hat c_J^\dagger\hat c_I|x\rangle
    =
    \langle \hat c_Jy,\hat c_Ix\rangle,
    \qquad I,J\in\mathcal I_k.
    \label{eq:exterior-transition-rdm-definition}
\end{align}
We likewise regard $\Gamma_{x,y}^{(k)}$ as an operator on $\wedge^k H$ through its matrix in the basis $\{e_I\}_{I\in\mathcal I_k}$. In particular, $\Gamma_x^{(k)}:=\Gamma_{x,x}^{(k)}$ is the usual
$k$-RDM of $x$. For $N\geq1$ and $x\neq0$, its one-particle support is $\operatorname{ran}\Gamma_x^{(1)}$. The second expression in
Eq.~\eqref{eq:exterior-transition-rdm-definition} identifies each RDM entry
with the overlap of the states left after removing the particles in $I$
and $J$, respectively. By comparing these residual states, we can determine
which parts of the RDM retain coherence and which are orthogonal.

We use the unnormalized particle-RDM convention introduced in
Sec.~\ref{Sec:Main results}. With this convention,
\begin{align}
    \operatorname{Tr}\Gamma_{x,y}^{(k)}
    &=
    \binom Nk\langle y,x\rangle.
    \label{eq:transition-rdm-trace}
\end{align}
Indeed,
$\sum_{I\in\mathcal I_k}\hat c_I^\dagger\hat c_I
=\binom Nk I$
on the $N$-particle sector. Hence, if $x$ is normalized,
$\operatorname{Tr}\Gamma_x^{(k)}=\binom Nk$. We also have
$\Gamma_{x,y}^{(0)}=\langle y,x\rangle$, while
$\Gamma_{x,y}^{(k)}=0$ for $k>N$.

To transfer the block structure from the input basis, where the occupied
mode sets are explicitly disjoint, to the unknown output basis, we use
the covariance of RDMs under passive Gaussian evolution.
If $\hat U$ is the passive Gaussian unitary associated with
$U\in\operatorname{U}(m)$, then
\begin{align}
    \Gamma_{\hat Ux,\hat Uy}^{(k)}
    &=
    (\wedge^kU)\Gamma_{x,y}^{(k)}(\wedge^kU)^\dagger.
    \label{eq:rdm-unitary-covariance}
\end{align}
To see this, for $|I|=k$ we expand the annihilation operators after
the single-particle transformation as
$\hat U^\dagger\hat c_I\hat U
=\sum_{J\in\mathcal I_k}\det(U_{I,J})\,\hat c_J$,
where $U_{I,J}$ is the submatrix with rows indexed
by $I$ and columns indexed by $J$, both in increasing order.
Substituting this expression into
Eq.~\eqref{eq:exterior-transition-rdm-definition} gives the stated
conjugation. The matrix with entries $\det(U_{I,J})$ is exactly
the action of $U$ on the $k$-particle sector, denoted by
$\wedge^kU$. Thus, spectral and orthogonality properties proved
for the input RDM remain valid after the unknown passive Gaussian
unitary.

\subsection{RDMs of individual branches and blocks}
\label{app:individual-block-rdms}

We next record properties of the RDM that will be used repeatedly
for the branch states. Let $F\subseteq H$ be a $p$-dimensional
one-particle subspace. Since $\wedge^pF$ is one-dimensional, any
normalized $f\in\wedge^pF\subseteq\wedge^pH$ is, up to a global
phase, the $p$-particle state that occupies all modes in $F$.
In particular, since $f$ is a $p$-particle vector, its $q$-RDM
$\Gamma_f^{(q)}$ is defined by
Eq.~\eqref{eq:exterior-transition-rdm-definition} with $x=y=f$.
The following proposition shows that these RDMs are simply the
projectors onto the corresponding exterior powers of $F$.

\begin{proposition}[RDMs of a fully occupied subspace]\label{prop:occupied-subspace-rdm}
Let $F\subseteq H$ be a $p$-dimensional subspace and let
$f\in\wedge^pF$ be normalized. Then, for every $0\leq q\leq p$,
\begin{align}
    \Gamma_f^{(q)}&=P_{\wedge^qF},
    \label{eq:occupied-subspace-rdm}
\end{align}
where $P_{\wedge^qF}$ denotes the orthogonal projector onto
$\wedge^qF\subseteq\wedge^qH$.
\end{proposition}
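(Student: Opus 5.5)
The plan is to reduce to a coordinate basis with the covariance identity, Eq.~\eqref{eq:rdm-unitary-covariance}, and then compute directly. Choose an orthonormal basis $u_1,\ldots,u_p$ of $F$ and extend it to an orthonormal basis $u_1,\ldots,u_m$ of $H$. Let $U\in\operatorname{U}(m)$ be the unitary with $Ue_j=u_j$. Since $\wedge^pF$ is one-dimensional and spanned by $u_1\wedge\cdots\wedge u_p=(\wedge^pU)e_{[p]}$, the normalized vector $f$ equals $e^{i\phi}(\wedge^pU)e_{[p]}$ for some phase. The phase cancels in $\Gamma_f^{(q)}$. By Eq.~\eqref{eq:rdm-unitary-covariance}, we then get $\Gamma_f^{(q)}=(\wedge^qU)\Gamma_{e_{[p]}}^{(q)}(\wedge^qU)^\dagger$.

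Next I would compute the $q$-RDM of the Fock state $\ket{[p]}=\hat c_{[p]}^\dagger\ket{\vac}$ entrywise, using the residual-state description in Eq.~\eqref{eq:exterior-transition-rdm-definition}. The vector $\hat c_I\ket{[p]}$ is nonzero only if $I\subseteq[p]$. In that case it equals $\pm\ket{[p]\setminus I}$, and the phase is a sign determined by anticommutation. For $I\neq J$, the residuals are distinct Fock states and hence orthogonal, so the off-diagonal entries vanish. For $I=J\subseteq[p]$, the entry is $\|\hat c_I\ket{[p]}\|^2=1$, because both occurrences of $I$ produce the same sign. We therefore obtain $\Gamma_{e_{[p]}}^{(q)}=\sum_{I\subseteq[p],|I|=q}\ket{e_I}\bra{e_I}$, which is the orthogonal projector onto $\wedge^q\operatorname{span}\{e_1,\ldots,e_p\}$.

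Finally, conjugating by the unitary $\wedge^qU$ maps this projector to the projector onto $(\wedge^qU)\wedge^q\operatorname{span}\{e_1,\ldots,e_p\}$. By Eq.~\eqref{eq:basic-exterior-unitary-action}, that space is $\wedge^q\operatorname{span}\{u_1,\ldots,u_p\}=\wedge^qF$. This gives $\Gamma_f^{(q)}=P_{\wedge^qF}$.

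The edge case $q=0$ reduces to $\Gamma^{(0)}=\langle f,f\rangle=1=P_{\wedge^0F}$. As a consistency check, the trace is $\binom pq$, matching Eq.~\eqref{eq:transition-rdm-trace}.

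I expect no real obstacle. The only points needing care are the sign bookkeeping in the diagonal entries, which is harmless because the same sign appears on both sides, and checking that the covariance identity is applied with the global-phase convention $\hat U\ket{\vac}=\ket{\vac}$.
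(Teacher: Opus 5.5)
Your proposal is correct and follows essentially the same route as the paper: reduce to $F=\operatorname{span}\{e_1,\ldots,e_p\}$ via the covariance identity, note that $\hat c_I e_{[p]}$ is either zero or $\pm$ the Fock state on $[p]\setminus I$, and read off the diagonal projector, with $q=0$ handled separately. You also spell out the final conjugation $(\wedge^qU)\,P_{\wedge^q\operatorname{span}\{e_1,\ldots,e_p\}}\,(\wedge^qU)^\dagger=P_{\wedge^qF}$, which the paper leaves implicit.
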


\begin{proof}
By the covariance in Eq.~\eqref{eq:rdm-unitary-covariance}, it suffices to take $F=\operatorname{span}\{e_1,\ldots,e_p\}$ and $f=e_1\wedge\cdots\wedge e_p$; a global phase does not affect the RDM. For $1\leq q\leq p$, the contraction $\hat c_If$ vanishes unless $I\subseteq[p]$. Otherwise, it is, up to a sign, the normalized state occupying $[p]\setminus I$. Distinct sets $I$ leave orthogonal residual states, so
\begin{align}
    (\Gamma_f^{(q)})_{I,J}
    &=\langle\hat c_Jf,\hat c_If\rangle
    =\begin{cases}
        1,&I=J\subseteq[p],\\
        0,&\text{otherwise}.
    \end{cases}
\end{align}
These are precisely the matrix elements of $P_{\wedge^qF}$. For $q=0$, both sides are the identity on $\wedge^0H\simeq\mathbb C$.
\end{proof}

The preceding proposition provides the RDMs of a single branch.
A block in Eq.~\eqref{eq:block-source} is a superposition of
such branches on mutually orthogonal one-particle subspaces.
We next compute its RDMs. The distinction is whether fewer
than $p$ particles or all $p$ particles are removed: in the
first case, the remaining particles distinguish the branches,
whereas in the second case every branch leaves the vacuum.

\begin{proposition}[RDMs of a single block]\label{prop:correlated-block-rdm}
Let $E\subseteq H$, let $p\geq2$, and let
$\omega:=\sum_{l=1}^s\omega_l f_l\in\wedge^pE$,
where $F_1,\ldots,F_s\subseteq E$ are mutually orthogonal
$p$-dimensional subspaces, each $f_l\in\wedge^pF_l$ is
normalized, and $\sum_l|\omega_l|^2=1$. Then,
$\Gamma_\omega^{(0)}=1$ and $\Gamma_\omega^{(p)}=\Pi_\omega$,
while
\begin{align}
    \Gamma_\omega^{(q)}
    &=
    \sum_{l=1}^s|\omega_l|^2P_{\wedge^qF_l},
    \qquad 1\leq q<p.
    \label{eq:single-block-rdm-orders}
\end{align}
In particular,
$\|\Gamma_\omega^{(q)}\|=\max_l|\omega_l|^2$
for $1\leq q<p$.
\end{proposition}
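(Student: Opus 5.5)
We compute the entries of each RDM directly from the bilinear expansion $(\Gamma_\omega^{(q)})_{I,J}=\langle\hat c_J\omega,\hat c_I\omega\rangle=\sum_{l,l'}\overline{\omega_{l'}}\omega_l\langle\hat c_Jf_{l'},\hat c_If_l\rangle$, i.e. $\Gamma_\omega^{(q)}=\sum_{l,l'}\omega_l\overline{\omega_{l'}}\,\Gamma^{(q)}_{f_l,f_{l'}}$. The plan is to show that the off-diagonal transition RDMs $\Gamma^{(q)}_{f_l,f_{l'}}$ with $l\neq l'$ vanish for $q<p$, and that for $q=p$ they reduce to rank-one operators $|f_l\rangle\langle f_{l'}|$. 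The diagonal terms are then handled by Prop.~\ref{prop:occupied-subspace-rdm}, which gives $\Gamma_{f_l}^{(q)}=P_{\wedge^qF_l}$.

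\textbf{Case $q=0$.} Here $\Gamma^{(0)}_\omega=\|\omega\|^2=\sum_l|\omega_l|^2=1$, since the $f_l$ are orthonormal; they lie in $\wedge^pF_l$ with mutually orthogonal $F_l$.

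\textbf{Case $1\le q<p$.} By the covariance in Eq.~\eqref{eq:rdm-unitary-covariance}, we may rotate so that the $F_l$ are spanned by disjoint sets of basis vectors $e_j$, $j\in J_l$, with $f_l=e_{J_l}$ up to a phase. The covariance holds for transition RDMs, so the whole identity transforms consistently, and both $P_{\wedge^qF_l}$ and $\Pi_\omega$ transform by conjugation with $\wedge^qU$. In this basis, $\hat c_If_l$ is either zero or $\pm e_{J_l\setminus I}$, which is a nonzero Fock state containing $p-q\ge1$ particles of $J_l$. Likewise, $\hat c_Jf_{l'}$ is either zero or $\pm e_{J_{l'}\setminus J}$. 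For $l\neq l'$, these are Fock states on nonempty subsets of disjoint index sets, so they are distinct and hence orthogonal. This is the same residual-occupation argument as in the proof of Thm.~\ref{thm:rdm-order-necessary}. The cross terms therefore vanish, leaving $\sum_l|\omega_l|^2\Gamma^{(q)}_{f_l}=\sum_l|\omega_l|^2P_{\wedge^qF_l}$.

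\textbf{Case $q=p$.} Now $\hat c_If_l$ is nonzero only when $I=J_l$, where it equals a scalar $\langle e_I,f_l\rangle$, since the sign and phase are absorbed in the overlap. Thus $(\Gamma^{(p)}_{f_l,f_{l'}})_{I,J}=\langle e_I,f_l\rangle\overline{\langle e_J,f_{l'}\rangle}$, i.e. $\Gamma^{(p)}_{f_l,f_{l'}}=|f_l\rangle\langle f_{l'}|$. More generally, for any $N=p$ vectors, $(\Gamma^{(p)}_{x,y})_{I,J}=\langle\hat c_Jy,\hat c_Ix\rangle=\overline{y_J}x_I$, so $\Gamma^{(p)}_{x,y}=|x\rangle\langle y|$ directly, without any basis choice. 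Applying this with $x=y=\omega$ gives $\Gamma^{(p)}_\omega=\Pi_\omega$ immediately; this is the easiest route for this case.

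\textbf{Norm claim.} The projectors $P_{\wedge^qF_l}$ have mutually orthogonal ranges, because $\wedge^qF_l\perp\wedge^qF_{l'}$ when $F_l\perp F_{l'}$ and $q\ge1$. Each range is nonzero since $q\le p=\dim F_l$. The operator is therefore diagonal with eigenvalues $|\omega_l|^2$, and its norm is $\max_l|\omega_l|^2$.

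\textbf{Main obstacle.} The only delicate point is the orthogonality of residuals in the $q<p$ case when the $F_l$ are general subspaces rather than coordinate subspaces. I will handle this by the simultaneous rotation above: choosing an orthonormal basis adapted to $F_1\oplus\cdots\oplus F_s\oplus(F_1\oplus\cdots\oplus F_s)^\perp$ and invoking Eq.~\eqref{eq:rdm-unitary-covariance} for transition RDMs.
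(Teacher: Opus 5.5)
Your proof is correct and follows essentially the same route as the paper: you expand $\Gamma_\omega^{(q)}$ over pairs of branches, obtain the diagonal terms from Prop.~\ref{prop:occupied-subspace-rdm}, kill the cross terms for $q<p$ by orthogonality of the residual states, and compute $\Gamma_\omega^{(p)}$ directly from $\hat c_I\omega=\omega_I\ket{\vac}$. The only difference is cosmetic. The paper skips your rotation to coordinate subspaces by noting directly that any nonzero $\hat c_If_l$ lies in $\wedge^{p-q}F_l$, which is orthogonal to $\wedge^{p-q}F_{l'}$ because $p-q\geq1$.
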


\begin{proof}
The branch vectors are orthonormal, so $\|\omega\|=1$ and $\Gamma_\omega^{(0)}=1$. For $1\leq q<p$, expansion in the branch basis gives
\begin{align}
    \Gamma_\omega^{(q)}
    &=\sum_{l,l'}\omega_l\overline{\omega_{l'}}
      \Gamma_{f_l,f_{l'}}^{(q)}.
\end{align}
The diagonal terms are $|\omega_l|^2P_{\wedge^qF_l}$ by Prop.~\ref{prop:occupied-subspace-rdm}. For $l\neq l'$, every nonzero residual state $\hat c_If_l$ lies in $\wedge^{p-q}F_l$, which is orthogonal to $\wedge^{p-q}F_{l'}$ because $p-q\geq1$. Hence $\Gamma_{f_l,f_{l'}}^{(q)}=0$. This proves Eq.~\eqref{eq:single-block-rdm-orders}; the mutually orthogonal ranges of its projectors give the stated operator norm.

For $q=p$, write $\omega=\sum_{I\in\mathcal I_p}\omega_Ie_I$. All particles are removed, so $\hat c_I\omega=\omega_I|\mathrm{vac}\rangle$ and
\begin{align}
    (\Gamma_\omega^{(p)})_{I,J}
    &=\omega_I\overline{\omega_J}.
\end{align}
Thus $\Gamma_\omega^{(p)}=\Pi_\omega$, retaining the coherence between branches.
\end{proof}

\subsection{RDMs of block-product states}
\label{app:block-product-rdms}

To obtain the RDM of a product of blocks, we group the removed
particles according to how many come from each block. Since different
choices leave different residual particle numbers in at least one block,
the resulting sectors are orthogonal. Within each sector, the RDM
factorizes into the tensor product of the corresponding block RDMs. We now
introduce the notation needed to state this factorization.

Let $H\simeq\mathbb C^m$ be the single-particle space, and let
$H=H_0\oplus H_1\oplus\cdots\oplus H_n$
be an orthogonal decomposition, where $H_0$ is reserved for unoccupied modes. For $a=1,\ldots,n$, let
$p_a\geq0$ and let $\psi_a\in\wedge^{p_a}H_a$ be normalized. Define
$\Psi:=\psi_1\wedge\cdots\wedge\psi_n\in\wedge^NH$,
where $N:=p_1+\cdots+p_n$. For $\mathbf q=(q_0,\ldots,q_n)\in\mathbb Z_{\geq0}^{n+1}$
satisfying $\sum_{a=0}^n q_a=k$, define
\begin{align}\label{eq:exterior-ortho}
\mathcal H_{\mathbf q}
&:=
(\wedge^{q_0}H_0)\wedge\cdots\wedge(\wedge^{q_n}H_n)
\subseteq\wedge^kH.
\end{align}
Thus, $\mathcal H_{\mathbf q}$ is the sector containing exactly
$q_a$ particles in $H_a$ for each $a$, and
\begin{align}
\wedge^kH
&=
\bigoplus_{\substack{q_0,\ldots,q_n\geq0\\q_0+\cdots+q_n=k}}
\mathcal H_{\mathbf q}.
\end{align}
We identify each sector with the corresponding tensor product as
\begin{align}
\mathcal H_{\mathbf q}
&\simeq
\wedge^{q_0}H_0\otimes\cdots\otimes\wedge^{q_n}H_n,
\qquad
\eta_0\wedge\cdots\wedge\eta_n
\longmapsto
\eta_0\otimes\cdots\otimes\eta_n,
\end{align}
where $\eta_a\in\wedge^{q_a}H_a$ and the factors are ordered by
increasing block index. Since the spaces $H_a$ are mutually
orthogonal, this identification preserves inner products. We use
$\wedge^0H_a\simeq\mathbb C$ and
$\wedge^{q_a}H_a=\{0\}$ whenever $q_a>\dim H_a$. With this notation, the RDM of $\Psi$ has the following blockwise factorization.

\begin{proposition}[RDM factorization by sector]\label{prop:blockwise-rdm-factorization}
For $0\leq k\leq N$, the RDM $\Gamma_\Psi^{(k)}$ is block diagonal with respect to the decomposition in Eq.~\eqref{eq:exterior-ortho}. On a sector $\mathcal H_{\mathbf q}$ with $q_0=0$ and $0\leq q_a\leq p_a$ for every $a\geq1$, the identification above gives
\begin{align}
    \Gamma_\Psi^{(k)}\big|_{\mathcal H_{\mathbf q}}&=\bigotimes_{a=1}^n\Gamma_{\psi_a}^{(q_a)}.
\end{align}
The restriction is zero on all other sectors.
\end{proposition}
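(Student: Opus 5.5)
Expand each RDM entry as the overlap of residual states, as in Eq.~\eqref{eq:exterior-transition-rdm-definition}, and organize the removed particles by the orthogonal decomposition $H=H_0\oplus\cdots\oplus H_n$.

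\textbf{Sector basis and contractions.} Choose an orthonormal basis of $H$ adapted to the decomposition. By covariance, Eq.~\eqref{eq:rdm-unitary-covariance}, we may rotate within each $H_a$ freely, so that the basis vectors $e_I$ of $\wedge^kH$ split as $e_I=\pm e_{I_0}\wedge\cdots\wedge e_{I_n}$ with $I_a$ indexing modes in $H_a$. Each such basis vector lies in exactly one sector $\mathcal H_{\mathbf q}$ with $q_a=|I_a|$.

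Next compute the contraction $\hat c_I\Psi$. Annihilation operators of modes in $H_a$ act only on the factor $\psi_a$, and they anticommute past the other factors. This gives
\[
\hat c_I\Psi=\epsilon_I\,(\hat c_{I_1}\psi_1)\wedge\cdots\wedge(\hat c_{I_n}\psi_n),
\]
where $\epsilon_I=\pm1$ is a sign depending only on $(|I_a|)_a$, $(p_a)_a$, and the ordering convention. If $I_0\neq\varnothing$, the contraction vanishes because no $\psi_a$ has support in $H_0$. It also vanishes if some $q_a>p_a$.

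\textbf{Block diagonality.} The residual $\hat c_I\Psi$ lies in the sector with $p_a-q_a$ particles in each $H_a$. Since these sectors are mutually orthogonal, $\langle\hat c_J\Psi,\hat c_I\Psi\rangle=0$ unless $I$ and $J$ have the same sector label $\mathbf q$. This gives the block-diagonal structure and shows that the restriction is zero on sectors with $q_0>0$ or with some $q_a>p_a$.

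\textbf{Tensor factorization on a surviving sector.} Within a fixed sector $\mathbf q$, the inner product of wedge products of vectors in mutually orthogonal $H_a$ factorizes as a product of the factor inner products. Hence
\[
\langle\hat c_J\Psi,\hat c_I\Psi\rangle=\epsilon_I\epsilon_J\prod_a\langle\hat c_{J_a}\psi_a,\hat c_{I_a}\psi_a\rangle=\epsilon_I\epsilon_J\prod_a(\Gamma^{(q_a)}_{\psi_a})_{I_a,J_a}.
\]

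\textbf{Main obstacle: the signs.} The delicate step is showing that these signs cancel against the identification $e_{I_1}\wedge\cdots\wedge e_{I_n}\mapsto e_{I_1}\otimes\cdots\otimes e_{I_n}$. The plan is to compute the matrix element directly in the sector basis $e_{I_1}\wedge\cdots\wedge e_{I_n}$, which is what the identification uses, rather than in the sorted basis $e_I$. The sign arising from reordering $e_I$ into sector order appears identically on the bra and ket sides and therefore cancels.

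It then remains to check that $\epsilon$ depends only on $\mathbf q$. Moving $\hat c_{I_a}$ past $\psi_1\wedge\cdots\wedge\psi_{a-1}$ produces the sign $(-1)^{q_a(p_1+\cdots+p_{a-1})}$. Moving the reversed product of annihilators past the other blocks produces further signs that are likewise determined by $\mathbf q$ and $(p_a)_a$. So $\epsilon_I=\epsilon_J=\epsilon_{\mathbf q}$, and $\epsilon_I\epsilon_J=1$.

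This yields $\Gamma_\Psi^{(k)}|_{\mathcal H_{\mathbf q}}=\bigotimes_a\Gamma^{(q_a)}_{\psi_a}$. The edge cases $q_a=0$ are consistent, since $\Gamma^{(0)}_{\psi_a}=\|\psi_a\|^2=1$.
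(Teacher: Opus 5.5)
Your proposal is correct and follows essentially the same route as the paper. You contract $\Psi$ blockwise with a sign that depends only on $\mathbf q$ and the $p_a$, you use orthogonality of the residual particle-number sectors to get block diagonality and vanishing when $q_0>0$ or $q_a>p_a$, and you factor the inner product over the orthogonal $H_a$ so that the common sign squares to one.

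The only imprecision is your explicit sign formula. Since $\hat c[\eta_1\wedge\cdots\wedge\eta_n]=\hat c[\eta_n]\cdots\hat c[\eta_1]$ acts on block $1$ first, the earlier factors have already lost particles, which gives $(-1)^{\sum_{a<b}(p_a-q_a)q_b}$ as in the paper. This does not affect the argument, because all that matters is that the sign is the same for every pair of sector vectors.
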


\begin{proof}
If $q_0>0$, then any corresponding contraction in $\mathcal H_{\mathbf q}$ vanishes because $\Psi$ has no particles in $H_0$. Likewise, if $q_a>p_a$ for some $a\geq1$, the contraction vanishes because $\psi_a$ contains only $p_a$ particles. Hence the corresponding restriction of $\Gamma_\Psi^{(k)}$ is zero.
For two remaining sectors, if $\mathbf q\neq\mathbf q'$, then
$q_a\neq q'_a$ for some block $a$. The corresponding contractions
leave $p_a-q_a$ and $p_a-q'_a$ particles in $H_a$, respectively,
and hence lie in orthogonal particle-number sectors. Therefore all
matrix elements between $\mathcal H_{\mathbf q}$ and
$\mathcal H_{\mathbf q'}$ vanish, proving block diagonality.

Now fix $\mathbf q$ with $q_0=0$ and $q_a\leq p_a$ for every $a$.
Take $\eta=\eta_1\wedge\cdots\wedge\eta_n$ and
$\xi=\xi_1\wedge\cdots\wedge\xi_n$ in
$\mathcal H_{\mathbf q}$, where
$\eta_a,\xi_a\in\wedge^{q_a}H_a$. By the RDM definition,
\begin{align}
\langle\eta,\Gamma_\Psi^{(k)}\xi\rangle
&=
\langle\hat c[\xi]\Psi,\hat c[\eta]\Psi\rangle.
\end{align}
Since $\Psi=\psi_1\wedge\cdots\wedge\psi_n$, the contractions
factor blockwise as
\begin{align}
\hat c[\eta]\Psi
&=
(-1)^{\sum_{a<b}(p_a-q_a)q_b}
(\hat c[\eta_1]\psi_1)\wedge\cdots\wedge
(\hat c[\eta_n]\psi_n),
\end{align}
and similarly for $\xi$. The sign is the same for $\eta$ and
$\xi$, since both remove $q_a$ particles from each block, and
therefore cancels in the inner product. Since the spaces $H_a$
are mutually orthogonal, the inner product of the remaining
states factorizes, giving
\begin{align}
\langle\eta,\Gamma_\Psi^{(k)}\xi\rangle
&=
\prod_{a=1}^n
\langle\hat c[\xi_a]\psi_a,\hat c[\eta_a]\psi_a\rangle
=
\prod_{a=1}^n
\langle\eta_a,\Gamma_{\psi_a}^{(q_a)}\xi_a\rangle.
\end{align}
Under the identification of $\mathcal H_{\mathbf q}$ with
$\wedge^{q_1}H_1\otimes\cdots\otimes\wedge^{q_n}H_n$, this is
precisely the matrix element of
$\Gamma_{\psi_1}^{(q_1)}\otimes\cdots\otimes
\Gamma_{\psi_n}^{(q_n)}$.
\end{proof}

As a direct consequence of Prop.~\ref{prop:blockwise-rdm-factorization}, the RDMs of our target states satisfy a uniform operator-norm bound. Each block in Eq.~\eqref{eq:block-source} has RDMs of operator norm at most one by Prop.~\ref{prop:correlated-block-rdm}. The state occupying all modes in $I_0$ likewise has RDMs of operator norm at most one by Prop.~\ref{prop:occupied-subspace-rdm}. Applying Prop.~\ref{prop:blockwise-rdm-factorization} to the complete pre-unitary state, with the $I_0$ state treated as an additional block when present, shows that each tensor-product restriction of its $k$-RDM has operator norm at most one. Since these restrictions act on mutually orthogonal sectors, the norm of the full RDM is their maximum and is therefore at most one. Finally, the passive Gaussian unitary transforms the RDM by unitary conjugation and hence preserves its operator norm. Thus, for every relevant order $k$,
\begin{align}
&\|\Gamma_\Psi^{(k)}\|\leq1.
\label{eq:structured-rdm-norm}
\end{align}
This bound relies on the prescribed block structure and need not hold for arbitrary higher-order fermionic RDMs. We will use it below to control the effect of RDM and projector errors on the compressed operators.

\subsection{Summary of RDM definitions and properties}
\label{app:rdm-identities-summary}
We collect the definitions, conventions, and properties needed to use the RDMs in the reconstruction. Throughout, $H\simeq\mathbb C^m$ is the one-particle space, $P_E$ is the orthogonal projector onto a subspace $E$, and $\Pi_x:=|x\rangle\langle x|$ for normalized $x$.

\begin{enumerate}

\item \textbf{Definition and transition RDMs.} For $x,y\in\wedge^NH$, not necessarily normalized, and $k$-element subsets $I,J\subseteq[m]$, define
\begin{align*}
    (\Gamma_{x,y}^{(k)})_{I,J}&:=\langle y|\hat c_J^\dagger\hat c_I|x\rangle=\langle\hat c_Jy,\hat c_Ix\rangle,\qquad \Gamma_x^{(k)}:=\Gamma_{x,x}^{(k)}.
\end{align*}
Here $\hat c_I:=\hat c_{i_k}\cdots\hat c_{i_1}$ for $I=\{i_1<\cdots<i_k\}$, with $\hat c_\varnothing=I$. These entries give the matrix of an operator on $\wedge^kH$ in the Fock basis, of dimension $\binom mk$. The transition RDM is linear in $x$ and conjugate-linear in $y$, and $(\Gamma_{x,y}^{(k)})^\dagger=\Gamma_{y,x}^{(k)}$. In particular, writing $x=\sum_\alpha v_\alpha u_\alpha$ yields
\begin{align*}
    \Gamma_x^{(k)}&=\sum_{\alpha,\beta}\overline{v_\alpha}v_\beta\Gamma_{u_\beta,u_\alpha}^{(k)}.
\end{align*}

\item \textbf{Normalization and limiting orders.} We use the particle-RDM convention, without trace-one normalization (App.~\ref{app:rdm-definition-and-covariance}):
\begin{align*}
    \operatorname{Tr}\Gamma_{x,y}^{(k)}&=\binom Nk\langle y,x\rangle,\qquad \Gamma_{x,y}^{(0)}=\langle y,x\rangle,\qquad \Gamma_{x,y}^{(N)}=|x\rangle\langle y|.
\end{align*}
For $N<k\leq m$, the RDM is zero. For normalized $x$, $\Gamma_x^{(k)}$ is positive semidefinite with trace $\binom Nk$; in particular, $\Gamma_x^{(0)}=1$ and $\Gamma_x^{(N)}=\Pi_x$.

\item \textbf{One-particle support and stability.} For nonzero $x\in\wedge^NH$ with $N\geq1$, the smallest subspace $B\subseteq H$ satisfying $x\in\wedge^NB$ is $\operatorname{ran}\Gamma_x^{(1)}$. For normalized $x,y\in\wedge^NH$, the occupation bound and continuity estimate are
\begin{align*}
    0\leq\Gamma_x^{(1)}&\leq I,\qquad \|\Gamma_x^{(1)}-\Gamma_y^{(1)}\|\leq2\|x-y\|.
\end{align*}
These bounds hold for arbitrary fixed-particle-number states; their proof is given in Lem.~\ref{lem:df-one-rdm-lipschitz}.

\item \textbf{Passive Gaussian evolution.} For the passive Gaussian unitary $\hat U$ associated with $U\in\operatorname{U}(m)$, Eq.~\eqref{eq:rdm-unitary-covariance} implies
\begin{align*}
    \Gamma_{\hat Ux,\hat Uy}^{(k)}&=(\wedge^kU)\Gamma_{x,y}^{(k)}(\wedge^kU)^\dagger.
\end{align*}
Here $(\wedge^kU)(v_1\wedge\cdots\wedge v_k)=Uv_1\wedge\cdots\wedge Uv_k$. Thus the RDM spectrum and operator norm are preserved.

\item \textbf{Single branch} (Prop.~\ref{prop:occupied-subspace-rdm}). A normalized $f\in\wedge^pF$ with $\dim F=p$ occupies every mode in $F$, and
\begin{align*}
    \Gamma_f^{(q)}&=P_{\wedge^qF},\qquad 0\leq q\leq p.
\end{align*}
The projector selects states with all $q$ particles in $F$; in particular, $\Gamma_f^{(1)}=P_F$ when $p\geq1$.

\item \textbf{Single block} (Prop.~\ref{prop:correlated-block-rdm}). Let $p\geq2$ and $\omega=\sum_\ell\omega_\ell f_\ell$, where the $p$-dimensional spaces $F_\ell$ are mutually orthogonal, each $f_\ell\in\wedge^pF_\ell$ is normalized, and $\sum_\ell|\omega_\ell|^2=1$. For $1\leq q<p$,
\begin{align*}
    \Gamma_\omega^{(q)}&=\sum_\ell|\omega_\ell|^2P_{\wedge^qF_\ell},\qquad \|\Gamma_\omega^{(q)}\|=\max_\ell|\omega_\ell|^2,\qquad \Gamma_\omega^{(p)}=\Pi_\omega.
\end{align*}
Coherence between distinct branches vanishes at orders below $p$, whereas the $p$-RDM retains the full block state.

\item \textbf{Product of blocks} (Prop.~\ref{prop:blockwise-rdm-factorization}). Let $H=H_0\oplus H_1\oplus\cdots\oplus H_n$ be an orthogonal decomposition, with $H_0$ unoccupied. For normalized $\psi_a\in\wedge^{p_a}H_a$, $p_a\geq0$, set $\Psi=\psi_1\wedge\cdots\wedge\psi_n$. The $k$-RDM is block diagonal in the sectors $\mathcal H_{\mathbf q}:=(\wedge^{q_0}H_0)\wedge\cdots\wedge(\wedge^{q_n}H_n)$, where $q_a\geq0$ and $\sum_{a=0}^nq_a=k$. For $q_0=0$ and $q_a\leq p_a$ for all $a\geq1$, identifying wedge factors with tensor factors in increasing order of $a$ implies
\begin{align*}
    \Gamma_\Psi^{(k)}\big|_{\mathcal H_{\mathbf q}}&=\bigotimes_{a=1}^n\Gamma_{\psi_a}^{(q_a)}.
\end{align*}
All other sectors give zero. An always-occupied component is included as another factor and obeys the single-branch formula; factors with $q_a=0$ contribute the scalar one.

\item \textbf{Norm bound for the target family.} For products of the blocks above, together with always-occupied and vacuum modes, and after any passive Gaussian evolution, Eq.~\eqref{eq:structured-rdm-norm} induces
\begin{align*}
    \|\Gamma_\Psi^{(k)}\|&\leq1,\qquad 0\leq k\leq m.
\end{align*}
Unlike the one-particle occupation bound, this bound at higher orders relies on the prescribed block structure and need not hold for arbitrary fermionic states.
\end{enumerate}

\section{Gram-splitting subroutine}
\label{app:gram-splitting}

We prove the exact Gram-splitting guarantee of Lem.~\ref{lem:main-exact-gram-splitting} and its extension to estimated subspaces. The latter follows by combining a random eigenvalue-gap bound with a perturbation bound for the splitting operator.

\subsection{Exact Gram splitting}
\label{app:exact-gram-splitting}
\label{app:gram-splitting-subroutine}

\begin{proposition}[Exact Gram splitting]\label{prop:exact-gram-splitting}
Let $k\geq2$. Suppose a $d$-dimensional subspace $T\subseteq\wedge^kH$ has the form
\begin{align}
    T&=\operatorname{span}\{\omega_1,\ldots,\omega_d\},
    \label{eq:gram-splitting-ideal-space}
\end{align}
where the normalized vectors $\omega_b\in\wedge^kB_b$, $b\in[d]$, have pairwise orthogonal one-particle supports $B_b\subseteq H$. Given $k$ and a classical description of any exact orthonormal basis of $T$, a randomized classical algorithm returns normalized vectors $\hat\omega_1,\ldots,\hat\omega_d$ such that, with probability one, there exist a permutation $\pi$ of $[d]$ and phases $\varphi_j\in\mathbb R$ satisfying
\begin{align}
    \hat\omega_j&=e^{i\varphi_j}\omega_{\pi(j)},\qquad j\in[d].
\end{align}
Neither the factors $\omega_b$ nor their supports $B_b$ are required as input. For fixed $k$, the algorithm uses polynomially many arithmetic operations in $m$ and no additional copies of the target state.
\end{proposition}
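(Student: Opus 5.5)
We make the two-block computation of Sec.~\ref{sec:two-particle-overview} general. Let $u_1,\ldots,u_d$ be the given orthonormal basis of $T$. Write $\omega_b=\sum_\alpha M_{\alpha b}u_\alpha$; because the $\omega_b$ have orthogonal one-particle supports they are themselves orthonormal, so $M$ is a $d\times d$ unitary. The aim is to build a Hermitian $d\times d$ matrix whose eigenvectors are exactly the columns of $M$.

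The first step is the key identity: for any $x=\sum_b x_b\omega_b\in T$,
\begin{align*}
\Gamma_x^{(1)}=\sum_b|x_b|^2\Gamma_{\omega_b}^{(1)}.
\end{align*}
To prove it, expand $\Gamma_x^{(1)}=\sum_{b,b'}x_b\overline{x_{b'}}\Gamma^{(1)}_{\omega_b,\omega_{b'}}$. For $b\neq b'$, the entry $\langle\hat c_j\omega_{b'},\hat c_i\omega_b\rangle$ vanishes. The reason is that $\hat c_i\omega_b$ is either zero or lies in $\wedge^{k-1}B_b$, and $k-1\geq1$. Since $B_b\perp B_{b'}$, the spaces $\wedge^{k-1}B_b$ and $\wedge^{k-1}B_{b'}$ are orthogonal. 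By App.~\ref{app:rdm-identities-summary}, $\Gamma^{(1)}_{\omega_b}$ is positive with range exactly $B_b$, so these operators have mutually orthogonal ranges.

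Next, draw $z$ Haar-randomly from the unit sphere of $T$, as $z=\sum_\alpha g_\alpha u_\alpha$ with i.i.d.\ complex Gaussian $g$, normalized. Compute the transition 1-RDMs $\Gamma^{(1)}_{u_\beta,u_\alpha}$ directly from the coefficient vectors. This costs $O(d^2m^2\binom{m}{k})$ operations, which is polynomial for fixed $k$ because $d\leq m/k$. Then form
\begin{align*}
(S_z)_{\alpha\beta}:=\operatorname{Tr}\bigl[\Gamma^{(1)}_{u_\beta,u_\alpha}\Gamma^{(1)}_z\bigr].
\end{align*}
For $x=\sum_\alpha v_\alpha u_\alpha$, sesquilinearity and the identity above give
\begin{align*}
v^\dagger S_zv=\operatorname{Tr}[\Gamma^{(1)}_x\Gamma^{(1)}_z]=\sum_b|x_b|^2\lambda_b,\qquad \lambda_b:=|z_b|^2\operatorname{Tr}\bigl[(\Gamma^{(1)}_{\omega_b})^2\bigr].
\end{align*}
Here $x_b=(M^\dagger v)_b$, so $S_z=M\operatorname{diag}(\lambda)M^\dagger$. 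Hence each column of $M$ is an eigenvector of $S_z$ with eigenvalue $\lambda_b$.

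The main obstacle is showing that the $\lambda_b$ are almost surely pairwise distinct. Once they are, every eigenspace is one-dimensional, so any computed normalized eigenvector equals a column of $M$ up to phase. Mapping it back through $\hat\omega=\sum_\alpha v_\alpha u_\alpha$ gives $e^{i\varphi}\omega_{\pi(j)}$. To prove distinctness, note that $c_b:=\operatorname{Tr}[(\Gamma^{(1)}_{\omega_b})^2]>0$, since $\Gamma^{(1)}_{\omega_b}\neq0$. The coordinates $z_b=(M^\dagger g)_b/\|g\|$ are again i.i.d.\ Gaussians up to the common normalization, because $M$ is unitary. The event $\lambda_b=\lambda_{b'}$ is then $c_b|h_b|^2=c_{b'}|h_{b'}|^2$ for independent Gaussians $h_b,h_{b'}$. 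This is a proper real-algebraic hypersurface in $h$, so it has probability zero. A finite union over pairs $(b,b')$ still has probability zero.

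Diagonalizing the $d\times d$ matrix $S_z$ is polynomial. The procedure uses only the classical basis, so it needs no further copies of the state.
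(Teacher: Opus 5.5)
Your proposal is correct and follows the paper's proof essentially step for step. It uses the same vanishing of cross-block transition $1$-RDMs, the same splitting matrix $S_z$ that is diagonal in the hidden block basis, and a Haar-random $z$ followed by diagonalization. The only cosmetic difference is the almost-sure distinctness step: you use independent Gaussian coordinates and a measure-zero real-algebraic hypersurface, whereas the paper notes that $(|z_1|^2,\ldots,|z_d|^2)$ has a density on the probability simplex.
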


\begin{proof}[Proof of Prop.~\ref{prop:exact-gram-splitting}]
The cases $d=0,1$ are immediate, so assume $d\geq2$. We first show that the overlap of the $1$-RDMs defines a quadratic form diagonal in the hidden block basis. The orthogonality of the $B_b$ makes $\{\omega_b\}$ an orthonormal basis of $T$. For $b\neq c$, the residual vectors $\hat c_i\omega_b\in\wedge^{k-1}B_b$ and $\hat c_j\omega_c\in\wedge^{k-1}B_c$ are orthogonal because $k\geq2$. Hence, for $z=\sum_bz_b\omega_b$,
\begin{align}
    \Gamma_{\omega_b,\omega_c}^{(1)}&=0\quad(b\neq c),\qquad \Gamma_z^{(1)}=\sum_b|z_b|^2\Gamma_{\omega_b}^{(1)}.
    \label{eq:gram-cross-transition-vanishing}
\end{align}
The same decomposition holds for $x=\sum_bx_b\omega_b$. Since the block RDMs have mutually orthogonal supports,
\begin{align}
    \operatorname{Tr}\!\left[\Gamma_x^{(1)}\Gamma_z^{(1)}\right]&=\sum_b\tau_b|z_b|^2|x_b|^2,\qquad \tau_b:=\operatorname{Tr}[(\Gamma_{\omega_b}^{(1)})^2]>0.
\end{align}
Here $\tau_b>0$ follows from $\operatorname{Tr}\Gamma_{\omega_b}^{(1)}=k>0$. Thus distinct coefficients $\tau_b|z_b|^2$ allow the block vectors to be recovered by diagonalizing this form.

To carry out the diagonalization in the supplied basis, define the Hermitian operator $\mathcal S_z$ on $T$ by
\begin{align}
    \operatorname{Tr}\!\left[\Gamma_{x,y}^{(1)}\Gamma_z^{(1)}\right]&=\langle y,\mathcal S_zx\rangle,\qquad x,y\in T.
    \label{eq:operator-gram-splitting-matrix}
\end{align}
This definition applies to any subspace $T\subseteq\wedge^kH$ and normalized $z\in T$, and will also be used for estimated subspaces. In an orthonormal basis $u_1,\ldots,u_d$ of $T$, its matrix $S_z$ satisfies, for $x=\sum_\alpha v_\alpha u_\alpha$,
\begin{align}
    \operatorname{Tr}\!\left[\Gamma_x^{(1)}\Gamma_z^{(1)}\right]&=v^\dagger S_zv,\qquad (S_z)_{\alpha,\beta}:=\operatorname{Tr}\!\left[\Gamma_{u_\beta,u_\alpha}^{(1)}\Gamma_z^{(1)}\right].
    \label{eq:main-gram-splitting-matrix}
\end{align}
Thus $S_z$ acts on coefficient vectors, while $\mathcal S_z$ acts on the corresponding states. The quadratic-form identity above gives
\begin{align}
    \mathcal S_z&=\sum_{b=1}^d\tau_b|z_b|^2\Pi_{\omega_b}.
    \label{eq:gram-splitting-spectral-decomposition}
\end{align}

Sample a Haar-random normalized vector $z\in T$, form $S_z$, and diagonalize it. Return $\hat\omega_j:=\sum_{\alpha=1}^d r_\alpha^{(j)}u_\alpha$ for its normalized eigenvectors $r^{(j)}$; if an eigenvalue is repeated, return \textsc{Fail}. The vector $(|z_1|^2,\ldots,|z_d|^2)$ has a density on the probability simplex, where each equality $\tau_b|z_b|^2=\tau_c|z_c|^2$ defines a set of measure zero. The eigenvalues are therefore distinct with probability one, and Eq.~\eqref{eq:gram-splitting-spectral-decomposition} identifies the returned vectors with the hidden blocks up to phases and a permutation.

For fixed $k$, each supplied vector has $\binom{m}{k}$ coefficients. Computing the transition RDMs, forming $S_z$, and diagonalizing it require polynomially many arithmetic operations in $m$, using only these classical vectors and no additional state copies.
\end{proof}

\subsection{Gram splitting with an estimated subspace}
\label{app:robust-gram-splitting}

When only an estimated factor space is available, Gram splitting must control the error in each recovered factor. The following result supplies this guarantee directly from the subspace error and a prescribed failure probability.

\begin{proposition}[Robust Gram splitting]\label{prop:robust-gram-splitting}
Let $k\geq2$ and let
\begin{align}
    T&:=\operatorname{span}\{\omega_1,\ldots,\omega_d\}\subseteq\wedge^kH,
    \label{eq:certified-gram-splitting-target-space}
\end{align}
where $\omega_b\in\wedge^kB_b$, $b\in[d]$, are normalized and the one-particle subspaces $B_b\subseteq H$ are pairwise orthogonal. Given $k$, an orthonormal basis of $\widetilde T$, a supplied error bound $\eta$ satisfying
\begin{align}
    \dim\widetilde T&=\dim T=d,\qquad \|P_{\widetilde T}-P_T\|\leq\eta\leq\frac{1}{64km^3},
    \label{eq:robust-gram-input-bound}
\end{align}
and $\beta_{\mathrm{loc}}\in(0,1)$, a randomized classical algorithm returns either \textnormal{\textsc{Fail}} or an orthonormal basis $\widetilde\omega_1,\ldots,\widetilde\omega_d$ of $\widetilde T$. Its failure probability is at most $\beta_{\mathrm{loc}}$, and every returned basis satisfies, for some permutation $\pi$ and phases $\varphi_j\in\mathbb R$,
\begin{align}
    \|\widetilde\omega_j-e^{i\varphi_j}\omega_{\pi(j)}\|&\leq(8k+2)m^3\eta,\qquad j\in[d].
    \label{eq:split-factorwise-certificate}
\end{align}
For fixed $k$, the algorithm uses polynomially many arithmetic operations in $m$ and $\log(1/\beta_{\mathrm{loc}})$ and no additional copies of the target state.
\end{proposition}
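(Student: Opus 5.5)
We follow the exact Gram-splitting construction of Prop.~\ref{prop:exact-gram-splitting}, now applied to the supplied estimated space $\widetilde T$. The perturbation is controlled through the canonical unitary alignment of Lem.~\ref{lem:canonical-unitary-alignment}, and a trial is accepted only when a certified eigenvalue gap is observed.

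First, let $Q:T\to\widetilde T$ be the aligning unitary. We sample a Haar-random unit vector $\widetilde z\in\widetilde T$. By the last statement of Lem.~\ref{lem:canonical-unitary-alignment}, $z:=Q^\dagger\widetilde z$ is Haar-distributed on $T$ and satisfies $\|\widetilde z-z\|\leq\sqrt2\eta$. We then compare the operator $\mathcal S_{\widetilde z}$ on $\widetilde T$ with $Q\mathcal S_zQ^\dagger$, where by Eq.~\eqref{eq:gram-splitting-spectral-decomposition} $\mathcal S_z=\sum_b\tau_b|z_b|^2\Pi_{\omega_b}$.

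The entries of $\mathcal S$ are trilinear in $(x,y,z)$ through $\operatorname{Tr}[\Gamma^{(1)}_{x,y}\Gamma^{(1)}_z]$. For unit vectors, $\|\Gamma^{(1)}_{x,y}\|_{\mathrm{HS}}\leq\|\Gamma^{(1)}_{x,y}\|_1\leq k$, and $\|\Gamma^{(1)}_z\|\leq1$. Lipschitz estimates in each argument, as in Lem.~\ref{lem:df-one-rdm-lipschitz}, then give $\|\mathcal S_{\widetilde z}-Q\mathcal S_zQ^\dagger\|\leq c\,k\eta$ for an absolute constant $c$; we aim for something like $4\sqrt2 k\eta$. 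This bound is computable from the supplied $\eta$.

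Next we set the acceptance rule. We diagonalize $S_{\widetilde z}$ in the given basis, compute its minimum eigenvalue gap $\widetilde g$, and accept if $\widetilde g\geq g_*:=c'k m^{-3}$, with $c'$ chosen to match the constants. Otherwise we resample, for at most $L=O(\log(1/\beta_{\mathrm{loc}}))$ trials, and return \textsc{Fail} if none is accepted.

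On acceptance, Weyl's inequality (Lem.~\ref{lem:weyl-inequality}) gives a true gap of at least $g_*-2ck\eta\geq g_*/2$, because $\eta\leq 1/(64km^3)$. The Davis--Kahan $\sin\theta$ theorem applied to each one-dimensional eigenspace then shows that the $j$th eigenvector is within $O(k\eta/g_*)$ of $Q e^{i\varphi}\omega_{\pi(j)}$. Adding $\|Q\omega-\omega\|\leq\sqrt2\eta$ yields a bound of order $m^3\eta$. Tracking the constants should reproduce the stated $(8k+2)m^3\eta$, with the $8k$ coming from the Davis--Kahan term and the $2$ from the alignment. The returned vectors are eigenvectors of a Hermitian operator on $\widetilde T$, so they form an orthonormal basis of $\widetilde T$.

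The main obstacle is the anticoncentration step: we must show that each trial passes the test with probability bounded below by a constant. The true eigenvalues are $\tau_b|z_b|^2$, where $\tau_b=\operatorname{Tr}[(\Gamma^{(1)}_{\omega_b})^2]\in[k/m,k]$, since $\operatorname{Tr}\Gamma^{(1)}=k$ and the support has dimension at most $m$. The vector $(|z_1|^2,\ldots,|z_d|^2)$ is uniform (Dirichlet$(1,\ldots,1)$) on the simplex. For a fixed pair $b\neq c$, the difference $\tau_b|z_b|^2-\tau_c|z_c|^2$ has density bounded by $O(d/\min\tau)\leq O(dm/k)$, so it lies within $\gamma$ of zero with probability $O(\gamma dm/k)$. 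A union bound over at most $d^2$ pairs, with $d\leq m/k$, makes the total failure probability at most $1/2$ once $\gamma\approx k m^{-3}$. This is where the $m^{-3}$ scaling comes from.

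The density bound should follow by conditioning on the other coordinates. The marginal of $(|z_b|^2,|z_c|^2)$ has density $(d-1)(d-2)(1-u-v)^{d-3}$. I would first try bounding the one-dimensional density of the linear functional $\tau_bu-\tau_cv$ directly, and verify that the constants fit the choice of $g_*$. With an acceptance probability of at least $1/2$ per trial, $L=\lceil\log_2(1/\beta_{\mathrm{loc}})\rceil$ independent trials give failure probability at most $\beta_{\mathrm{loc}}$.

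For the runtime, the transition $1$-RDMs are precomputed once from the supplied basis vectors, each of which has $\binom mk$ coefficients. Each trial then forms and diagonalizes a $d\times d$ matrix, so the total cost is polynomial in $m$ and $\log(1/\beta_{\mathrm{loc}})$ for fixed $k$, and no additional copies of the target state are needed.
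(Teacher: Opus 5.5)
Your overall architecture is the same as the paper's: canonical alignment, a Haar-random $\widetilde z$, a splitting-operator perturbation bound $\kappa=4\sqrt2k\eta$, acceptance based on the observed gap, Weyl plus Davis--Kahan, and $\lceil\log_2(1/\beta_{\mathrm{loc}})\rceil$ repetitions. The gap is in the step you yourself flagged as the main obstacle, and it is not a matter of constants.

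\textbf{Where the anticoncentration fails.} You use a uniform lower bound on $\tau_b$: your $k/m$, or even the Cauchy--Schwarz value $k^2/m$. With it, each pair contributes probability of order $\gamma dm/k$. The union bound over $\binom d2$ pairs then gives order $\gamma d^3m/k$, and $d\le m/k$ turns this into order $\gamma m^4/k^4$. So your argument only yields a gap scale $\gamma\sim m^{-4}$, not $km^{-3}$.

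At that scale the rest of the proof collapses. The hypothesis $\eta\le1/(64km^3)$ only gives $\kappa=O(m^{-3})$, which is not small compared with an $m^{-4}$ threshold. You therefore can neither show that a trial is accepted with probability at least $1/2$ nor obtain a factor error of order $m^3\eta$.

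\textbf{The missing idea.} The worst case $\tau_b\approx k^2/m$ can only occur when a block's support fills most of $H$. That forces $d$ to be small. The paper exploits this tradeoff as follows.

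\begin{itemize}
\item Use the block-dependent bound $\tau_b\ge k^2/m_b$ with $m_b:=\dim B_b$, which follows from Cauchy--Schwarz on the spectrum of $\Gamma^{(1)}_{\omega_b}$.
\item Condition on $R:=|z_b|^2+|z_c|^2$. Given $R$, the difference $\tau_b|z_b|^2-\tau_c|z_c|^2$ is uniform on $[-\tau_cR,\tau_bR]$.
\item Since $\mathbb E[R^{-1}]=d-1$, each pair lies within $t$ of zero with probability at most $2(d-1)t/(\tau_b+\tau_c)\le(d-1)t(m_b+m_c)/(2k^2)$.
\item Summing over pairs uses $\sum_{b<c}(m_b+m_c)=(d-1)\sum_bm_b\le(d-1)m$. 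This is exactly where orthogonality of the $B_b$ enters.
\item The result is $\Pr[g(z)\le t]\le(d-1)^2mt/(2k^2)$.
\end{itemize}

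Now take $t=k^2/((d-1)^2m)\ge4/m^3$. The exact gap is at least $t$ with probability at least $1/2$, and $t>6\kappa$. Accept when the observed gap is at least $t/2$. Your remaining Weyl/Davis--Kahan/alignment steps then go through and give $4\sqrt2\kappa/t+\sqrt2\eta\le(8k+2)m^3\eta$.
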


The proof uses two estimates: a bound on how often the exact eigenvalue gap is small, and a bound on the splitting-operator perturbation caused by estimating the factor space. We establish these first, then use them to choose the acceptance threshold and number of trials.

\subsubsection{A quantitative random-gap bound}
We first quantify the separation of the eigenvalues in Eq.~\eqref{eq:gram-splitting-spectral-decomposition}.

\begin{lemma}[Eigenvalue gaps in Gram splitting]\label{lem:gram-gap-anticoncentration}
Assume the setting of Prop.~\ref{prop:exact-gram-splitting}, with $d\geq2$, and set $\tau_b:=\operatorname{Tr}[(\Gamma_{\omega_b}^{(1)})^2]$. Let $z$ be a Haar-random normalized vector in $T$, written as $z=\sum_bz_b\omega_b$. Its minimum eigenvalue gap obeys the following bound for every $t\geq0$:
\begin{align}
    g(z)&:=\min_{1\leq b<c\leq d}\left|\tau_b|z_b|^2-\tau_c|z_c|^2\right|,\qquad \Pr[g(z)\leq t]\leq\frac{(d-1)^2m}{2k^2}\,t.
    \label{eq:gram-gap-anticoncentration}
\end{align}
\end{lemma}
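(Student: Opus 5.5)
We will reduce the gap bound to an anticoncentration statement for a Dirichlet-distributed vector, followed by a union bound over pairs $b<c$. Because the $\omega_b$ form an orthonormal basis of $T$, a Haar-random unit vector $z\in T$ has coefficients $(z_1,\ldots,z_d)$ that are Haar-random on the unit sphere of $\mathbb C^d$. Hence $(|z_1|^2,\ldots,|z_d|^2)$ is uniformly distributed on the simplex $\Delta_{d-1}$, i.e.\ Dirichlet$(1,\ldots,1)$. It therefore suffices to show that for each fixed pair $b\neq c$,
\begin{align*}
    \Pr\bigl[\,|\tau_b|z_b|^2-\tau_c|z_c|^2|\leq t\,\bigr]\leq\frac{(d-1)m}{k^2}\,t.
\end{align*}
Summing this over the $\binom d2=d(d-1)/2$ pairs would give $d(d-1)^2mt/(2k^2)$, which carries an extra factor $d$ relative to the stated bound. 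The per-pair estimate therefore has to be sharpened by a factor $1/d$, or the constants traced more carefully, as described below.

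\textbf{Bounds on $\tau_b$.} Since $\Gamma_{\omega_b}^{(1)}$ satisfies $0\leq\Gamma^{(1)}\leq I$, has trace $k$, and has rank at most $m$, Cauchy--Schwarz gives
\begin{align*}
    \tau_b=\operatorname{Tr}[(\Gamma_{\omega_b}^{(1)})^2]\geq\frac{k^2}{m}.
\end{align*}
We also have $\tau_b\leq k$. The quantity $k^2/m$ is exactly the scale appearing in the claimed bound.

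\textbf{Per-pair anticoncentration.} Let $X=|z_b|^2$ and $Y=|z_c|^2$. Under the Dirichlet law, the pair $(X,Y)$ has joint density $(d-1)(d-2)(1-x-y)^{d-3}$ on $\{x,y\geq0,\ x+y\leq1\}$ when $d\geq3$, and $Y=1-X$ with $X$ uniform when $d=2$. Assume without loss of generality that $\tau_b\geq\tau_c$, and condition on $Y$. Then $\tau_bX$ must lie in an interval of length $2t$, so $X$ lies in an interval of length $2t/\tau_b$. The conditional density of $X$ given $Y=y$ is $(d-2)(1-x-y)^{d-3}/(1-y)^{d-2}\leq(d-2)/(1-y)$. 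A direct integration is better than this crude bound: integrating the joint density over the strip $|\tau_bx-\tau_cy|\leq t$, i.e.\ bounding the marginal density of the linear form $\tau_bX-\tau_cY$, should give at most $(d-1)\cdot 2t/(\tau_b+\tau_c)$ or something comparable. The general principle is that a linear form $\sum a_iX_i$ of a Dirichlet vector has density bounded by roughly $(d-1)/\max|a_i-a_j|$. Combined with $\tau_b+\tau_c\geq2k^2/m$, this yields a per-pair probability of order $(d-1)mt/k^2$.

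\textbf{Main obstacle.} The key difficulty is obtaining exactly the stated constant $(d-1)^2m/(2k^2)$ rather than an extra factor $d$. For this I would not use a naive union bound over all pairs. Instead I would order the sorted values $\tau_b|z_b|^2$ and note that the minimum gap is small only if some adjacent pair in the sorted order is close; this gives $d-1$ relevant events rather than $\binom d2$. One way to make this rigorous is to use the representation $|z_b|^2=E_b/\sum_a E_a$ with i.i.d.\ exponential $E_a$. Because the normalization is common to all coordinates, closeness of $\tau_b|z_b|^2$ and $\tau_c|z_c|^2$ is equivalent to closeness of $\tau_bE_b$ and $\tau_cE_c$ relative to $S=\sum_a E_a$. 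Conditioning on all coordinates except one and bounding the exponential density by $1$ should then produce a per-pair factor of $(d-1)m/(2k^2)\,t$, and summing over the $d-1$ comparisons gives the claimed total. If this bookkeeping fails to land the exact constant, I would accept a bound with a slightly larger constant and check that Prop.~\ref{prop:robust-gram-splitting} only uses the result up to polynomial factors in $m$.
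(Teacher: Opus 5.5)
There is a genuine gap in how you remove the extra factor $d$. Your reduction to the uniform distribution on the simplex is correct. Your per-pair target $2(d-1)t/(\tau_b+\tau_c)$ is also correct; you left it as ``should give'', but it follows cleanly by conditioning on $R:=X_b+X_c$. Given $R$, $X_b$ is uniform on $[0,R]$, so $\tau_bX_b-\tau_cX_c$ is uniform on an interval of length $R(\tau_b+\tau_c)$, and $\mathbb E[R^{-1}]=d-1$.

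The adjacent-pair argument is not a valid reduction. Which pairs are adjacent in the sorted order depends on $z$, so $\{g(z)\le t\}$ is not a union of $d-1$ fixed events. Any rigorous version still sums over all $\binom d2$ pairs, with adjacency folded into each event, and you do not analyze that.

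The exponential representation does not help either. Conditioning on everything except $E_b$ gives a per-pair probability of order $2t\,\mathbb E[\sum_{a\neq b}E_a]/\tau_b=2(d-1)t/\tau_b$. That is the same size as before, so you are back to $\binom d2$ terms, each of size about $(d-1)mt/k^2$.

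Falling back to a larger constant would not prove the stated inequality, and the constant matters downstream. Prop.~\ref{prop:robust-gram-splitting} takes $t=k^2/((d-1)^2m)\ge 4/m^3$ so that $t>6\kappa$ under $\eta\le 1/(64km^3)$. An extra factor $d\le m$ pushes $t$ down to order $m^{-4}$ and breaks that comparison.

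The missing idea is already in your Cauchy--Schwarz step: the rank of $\Gamma^{(1)}_{\omega_b}$ is $m_b:=\dim B_b$, not $m$, so $\tau_b\ge k^2/m_b$. Then
\[
\frac{1}{\tau_b+\tau_c}\le\frac{m_bm_c}{k^2(m_b+m_c)}\le\frac{m_b+m_c}{4k^2},
\]
and pairwise orthogonality of the supports gives $\sum_b m_b\le m$. A plain union bound over all pairs then lands exactly on the claim:
\[
\Pr[g(z)\le t]\le\sum_{b<c}\frac{2(d-1)t}{\tau_b+\tau_c}\le\frac{(d-1)t}{2k^2}\sum_{b<c}(m_b+m_c)=\frac{(d-1)^2t}{2k^2}\sum_{b}m_b\le\frac{(d-1)^2m}{2k^2}\,t .
\]
The blocks share the $m$ modes, so they cannot all have $\tau_b$ as small as $k^2/m$. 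Applying the global bound $\tau_b\ge k^2/m$ to every block is exactly what costs you the factor $d$.
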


\begin{proof}
For each $b\in[d]$, set $X_b:=|z_b|^2$. The vector $(X_1,\ldots,X_d)$ is uniformly distributed over the probability simplex $X_b\geq0$, $\sum_{b=1}^dX_b=1$~\cite{zyczkowski-sommers}. Fix $1\leq b<c\leq d$ and condition on $R:=X_b+X_c$. Since the conditional distribution is uniform along the segment $X_b+X_c=R$, the difference $\tau_bX_b-\tau_cX_c=(\tau_b+\tau_c)X_b-\tau_cR$ is uniform on $[-\tau_cR,\tau_bR]$. Its probability of lying in $[-t,t]$ is therefore at most $2t/[R(\tau_b+\tau_c)]$.

For $d=2$, we have $R=1$, while for $d>2$ its density is $(d-1)(d-2)s(1-s)^{d-3}$ on $0<s<1$. In either case, $\mathbb E[R^{-1}]=d-1$, so averaging the conditional bound gives
\begin{align}
    \Pr\!\left[|\tau_bX_b-\tau_cX_c|\leq t\right]&\leq\frac{2t\,\mathbb E[R^{-1}]}{\tau_b+\tau_c}=\frac{2(d-1)t}{\tau_b+\tau_c}.
    \label{eq:gram-pair-gap-bound}
\end{align}

To sum this bound over pairs, let $m_b:=\dim B_b$. Positivity and $\operatorname{Tr}\Gamma_{\omega_b}^{(1)}=k$ give $\tau_b\geq k^2/m_b$ by Cauchy--Schwarz, and hence $1/(\tau_b+\tau_c)\leq m_bm_c/[k^2(m_b+m_c)]\leq(m_b+m_c)/(4k^2)$. Since each $m_b$ occurs in exactly $d-1$ unordered pairs and $\sum_bm_b\leq m$ by orthogonality, a union bound yields
\begin{align}
    \Pr[g(z)\leq t]&\leq\frac{(d-1)t}{2k^2}\sum_{b<c}(m_b+m_c)=\frac{(d-1)^2t}{2k^2}\sum_{b=1}^dm_b\leq\frac{(d-1)^2m}{2k^2}\,t.
\end{align}
\end{proof}

\subsubsection{Perturbation of the splitting operator}

We next control the splitting-operator perturbation after aligning the two subspaces. The proof uses the following continuity bound for the $1$-RDM.

\begin{lemma}[Continuity of the $1$-RDM]\label{lem:transition-rdm-perturbation}\label{lem:df-one-rdm-lipschitz}
For normalized $x,y\in\wedge^kH$, the $1$-RDMs satisfy
\begin{align}
    0\leq\Gamma_x^{(1)}\leq I,
    \qquad
    \|\Gamma_x^{(1)}-\Gamma_y^{(1)}\|\leq2\|x-y\|.
    \label{eq:df-one-rdm-lipschitz}
\end{align}
\end{lemma}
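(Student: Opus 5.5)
We establish the two claims in Eq.~\eqref{eq:df-one-rdm-lipschitz} separately, both through the quadratic form of the $1$-RDM. By Eq.~\eqref{eq:exterior-transition-rdm-definition}, for any one-particle vector $v\in H$ we have
\begin{align}
    \langle v,\Gamma_x^{(1)}v\rangle=\|\hat c[v]x\|^2,
\end{align}
where $\hat c[v]=\sum_i\overline{v_i}\hat c_i$. This expression is manifestly nonnegative, which gives $\Gamma_x^{(1)}\geq0$.

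For the upper bound, we use the canonical anticommutation relation $\{\hat c[v],\hat c^\dagger[v]\}=\|v\|^2\hat I$, which follows from Eq.~\eqref{eq:car} by linearity. Since $\hat c^\dagger[v]\hat c[v]$ is positive, this yields $\hat c^\dagger[v]\hat c[v]\leq\|v\|^2\hat I$, and therefore $\|\hat c[v]x\|^2\leq\|v\|^2\|x\|^2=\|v\|^2$ for normalized $x$. Thus $\Gamma_x^{(1)}\leq I$. Equivalently, $\|\hat c[v]\|\leq\|v\|$ as an operator on Fock space; we record this, since it is the key estimate for the second part.

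For the continuity estimate, we use sesquilinearity of the transition RDM (item~1 of App.~\ref{app:rdm-identities-summary}) to write
\begin{align}
    \Gamma_x^{(1)}-\Gamma_y^{(1)}=\Gamma_{x-y,x}^{(1)}+\Gamma_{y,x-y}^{(1)}.
\end{align}
It then suffices to show that $\|\Gamma_{a,b}^{(1)}\|\leq\|a\|\,\|b\|$ for arbitrary $k$-particle vectors $a,b$. For unit vectors $u,v\in H$, the matrix element is
\begin{align}
    \langle u,\Gamma_{a,b}^{(1)}v\rangle=\langle\hat c[u]b,\hat c[v]a\rangle,
\end{align}
and Cauchy--Schwarz together with $\|\hat c[v]\|\leq\|v\|$ bounds its modulus by $\|a\|\,\|b\|$. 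Taking the supremum over unit $u,v$ gives the operator-norm bound. Applying it to both terms, with $\|x\|=\|y\|=1$, produces $\|x-y\|+\|x-y\|=2\|x-y\|$.

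The only step that needs care is checking that the index conventions in Eq.~\eqref{eq:exterior-transition-rdm-definition} place the conjugate-linear slot correctly, so that the matrix element really is $\langle\hat c[u]b,\hat c[v]a\rangle$. This is bookkeeping rather than a genuine obstacle: the bound $\|\hat c[v]\|\leq\|v\|$ from the CAR carries the whole argument.
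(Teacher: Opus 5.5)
Your proof is correct and is essentially the paper's argument: both use the CAR bound $0\leq\hat c[h]^\dagger\hat c[h]\leq I$ and a two-term telescoping split. The paper bounds the quadratic form $\langle h,(\Gamma_x^{(1)}-\Gamma_y^{(1)})h\rangle$ of the Hermitian difference directly, whereas you split at the operator level into $\Gamma^{(1)}_{x-y,x}+\Gamma^{(1)}_{y,x-y}$ and prove the slightly more general bound $\|\Gamma^{(1)}_{a,b}\|\leq\|a\|\,\|b\|$ through the full bilinear form. On the bookkeeping point you flagged, the paper's convention gives $\langle u,\Gamma^{(1)}_{a,b}v\rangle=\langle\hat c[v]b,\hat c[u]a\rangle$, with $u$ and $v$ swapped relative to your formula; this is harmless, since the Cauchy--Schwarz bound is symmetric in $u$ and $v$.
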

\begin{proof}
For a unit vector $h\in H$, the occupation operator $n_h:=\hat c[h]^\dagger\hat c[h]$ satisfies $0\leq n_h\leq I$ by the anticommutation relations. Since $\langle h,\Gamma_x^{(1)}h\rangle=\langle x,n_hx\rangle$, the first claim follows. For the second, write the difference of expectations as $\langle x-y,n_hx\rangle+\langle y,n_h(x-y)\rangle$ and bound each term by $\|x-y\|$. Taking the supremum over unit $h$ proves the assertion.
\end{proof}

\begin{lemma}[Perturbation of the splitting operator]\label{lem:gram-operator-perturbation}
Let $E,F\subseteq\wedge^kH$ have the same dimension. Let $J:E\hookrightarrow\wedge^kH$ be the inclusion and $Q:E\to F$ a unitary with
\begin{align}
    \|Q-J\|\leq\rho.
    \label{eq:generic-unitary-alignment-error}
\end{align}
For a normalized $\widetilde z\in E$, set $z:=Q\widetilde z$ and define the splitting operators on $E$ and $F$ by Eq.~\eqref{eq:operator-gram-splitting-matrix}. Then
\begin{align}
    \bigl\|\widetilde{\mathcal S}_{\widetilde z}-Q^\dagger\mathcal S_zQ\bigr\|
    \leq4k\rho.
    \label{eq:gram-operator-perturbation}
\end{align}
\end{lemma}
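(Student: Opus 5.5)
We compare the two operators through their matrix elements. For $x,y\in E$, the definition in Eq.~\eqref{eq:operator-gram-splitting-matrix} gives
\begin{align*}
    \langle y,\widetilde{\mathcal S}_{\widetilde z}x\rangle-\langle y,Q^\dagger\mathcal S_zQx\rangle
    &=\operatorname{Tr}\!\left[\Gamma_{x,y}^{(1)}\Gamma_{\widetilde z}^{(1)}\right]-\operatorname{Tr}\!\left[\Gamma_{Qx,Qy}^{(1)}\Gamma_{z}^{(1)}\right].
\end{align*}
We telescope this difference into two terms. The first is $\operatorname{Tr}[\Gamma_{x,y}^{(1)}(\Gamma_{\widetilde z}^{(1)}-\Gamma_z^{(1)})]$, which accounts for the change of reference vector. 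The second is $\operatorname{Tr}[(\Gamma_{x,y}^{(1)}-\Gamma_{Qx,Qy}^{(1)})\Gamma_z^{(1)}]$, which accounts for moving the arguments. Since the operator norm of a Hermitian operator on $E$ is the supremum of $|\langle y,(\cdot)x\rangle|$ over unit $x,y$, it suffices to bound each term by $2k\rho$ for unit $x,y\in E$.

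\textbf{First term.} By Lem.~\ref{lem:df-one-rdm-lipschitz}, $\|\Gamma_{\widetilde z}^{(1)}-\Gamma_z^{(1)}\|\leq2\|\widetilde z-Q\widetilde z\|$. Since $\widetilde z\in E$ is a unit vector and $J\widetilde z=\widetilde z$, we have $\|\widetilde z-Q\widetilde z\|\leq\rho$, so this norm is at most $2\rho$. To use it, we need a trace-norm bound $\|\Gamma_{x,y}^{(1)}\|_1\leq k$ for unit $k$-particle vectors $x,y$. We obtain this by writing $\Gamma_{x,y}^{(1)}=\sum_i|\hat c_i x\rangle\langle\hat c_i y|$, viewed as a map $H\to H$ with entries $\langle\hat c_j y,\hat c_i x\rangle$. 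Equivalently, $\Gamma_{x,y}^{(1)}=A_x A_y^\dagger$ with $A_x:\wedge^{k-1}H\to H$ defined by $(A_x^\dagger e_i)=\hat c_i x$. Then $\|A_x\|_{\mathrm{HS}}^2=\sum_i\|\hat c_ix\|^2=\operatorname{Tr}\Gamma_x^{(1)}=k$, and the H\"older inequality for Hilbert--Schmidt norms gives $\|\Gamma_{x,y}^{(1)}\|_1\leq\|A_x\|_{\mathrm{HS}}\|A_y\|_{\mathrm{HS}}=k$. The first term is therefore bounded by $k\cdot2\rho$.

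\textbf{Second term.} We use $\|\Gamma_z^{(1)}\|\leq1$ from Lem.~\ref{lem:df-one-rdm-lipschitz}, so it suffices to bound $\|\Gamma_{x,y}^{(1)}-\Gamma_{Qx,Qy}^{(1)}\|_1$. Sesquilinearity splits this difference as $\Gamma_{x-Qx,y}^{(1)}+\Gamma_{Qx,y-Qy}^{(1)}$. For arbitrary (unnormalized) $k$-particle vectors $a,b$, the same factorization gives $\|\Gamma_{a,b}^{(1)}\|_1\leq k\|a\|\|b\|$. Using $\|x-Qx\|,\|y-Qy\|\leq\rho$ and $\|Qx\|=\|y\|=1$, the difference has trace norm at most $2k\rho$. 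Summing the two terms gives $4k\rho$.

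The only step needing care is the trace-norm bound $\|\Gamma_{a,b}^{(1)}\|_1\leq k\|a\|\|b\|$ via the factorization $A_aA_b^\dagger$. In particular, we must check the ordering and conjugation conventions in Eq.~\eqref{eq:exterior-transition-rdm-definition} so that the factorization holds with the correct adjoint. The rest of the argument is direct telescoping.
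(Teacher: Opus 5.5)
Your proof is correct and reaches the same constant $4k\rho$, but it takes a somewhat different route from the paper's. The paper uses the fact that both splitting operators are Hermitian, so it only bounds the quadratic form $\langle x,(\widetilde{\mathcal S}_{\widetilde z}-Q^\dagger\mathcal S_zQ)x\rangle$ for unit $x\in E$. Then every RDM that appears is a diagonal $1$-RDM, which is positive with trace $k$, so its trace norm is simply $k$. Both differences, $\Gamma_x^{(1)}-\Gamma_{Qx}^{(1)}$ and $\Gamma_{\widetilde z}^{(1)}-\Gamma_z^{(1)}$, are bounded in operator norm by the Lipschitz estimate of Lem.~\ref{lem:df-one-rdm-lipschitz} and paired with trace $k$ via H\"older.

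You instead bound the full sesquilinear form $\langle y,(\cdot)x\rangle$. This needs no Hermiticity, but it forces you to handle transition RDMs. You do this with the extra estimate $\|\Gamma^{(1)}_{a,b}\|_1\leq k\|a\|\|b\|$, and you treat the change of arguments in trace norm, via sesquilinearity, against $\|\Gamma_z^{(1)}\|\leq1$, rather than through the Lipschitz lemma. The paper's version is shorter because positivity makes the trace-norm bound free. Yours is slightly more general, since it would also control a non-Hermitian perturbation, at the cost of the factorization lemma.

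On the convention point you flagged: with $A_x^\dagger e_i=\hat c_ix$ you get $\langle e_i,A_xA_y^\dagger e_j\rangle=\langle\hat c_ix,\hat c_jy\rangle=\overline{(\Gamma^{(1)}_{x,y})_{i,j}}$. So $A_xA_y^\dagger$ is the entrywise complex conjugate of $\Gamma^{(1)}_{x,y}$. It has the same singular values, so $\|\Gamma^{(1)}_{x,y}\|_1\leq\|A_x\|_{\mathrm{HS}}\|A_y\|_{\mathrm{HS}}=k$ stands. Do drop the sentence identifying $\Gamma^{(1)}_{x,y}$ with $\sum_i|\hat c_ix\rangle\langle\hat c_iy|$. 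That expression is the operator $A_x^\dagger A_y$ on $\wedge^{k-1}H$, not $\Gamma^{(1)}_{x,y}$ on $H$, and only the factorization $A_xA_y^\dagger$ is needed.
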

\begin{proof}
For any normalized $x\in E$, the alignment bound gives $\|Qx-x\|\leq\rho$ and $\|z-\widetilde z\|\leq\rho$. By Lem.~\ref{lem:df-one-rdm-lipschitz}, both corresponding $1$-RDM differences have norm at most $2\rho$. Each $1$-RDM is positive with trace $k$. Adding and subtracting $\operatorname{Tr}[\Gamma_{Qx}^{(1)}\Gamma_{\widetilde z}^{(1)}]$ therefore gives
\begin{align}
\left|\langle x,(\widetilde{\mathcal S}_{\widetilde z}-Q^\dagger\mathcal S_zQ)x\rangle\right|
=\left|\operatorname{Tr}[\Gamma_x^{(1)}\Gamma_{\widetilde z}^{(1)}]-\operatorname{Tr}[\Gamma_{Qx}^{(1)}\Gamma_z^{(1)}]\right|
\leq k\|\Gamma_x^{(1)}-\Gamma_{Qx}^{(1)}\|+k\|\Gamma_{\widetilde z}^{(1)}-\Gamma_z^{(1)}\|\leq4k\rho.
\end{align}
Taking the supremum over normalized $x\in E$ proves the claimed operator-norm bound.
\end{proof}

\subsubsection{Proof of the recovery guarantee}
\label{app:certified-gram-splitting-subroutine}

\begin{proof}[Proof of Prop.~\ref{prop:robust-gram-splitting}]
We construct one trial and bound its perturbation, then use this bound to decide when its output is reliable. The acceptance probability will determine how many trials are needed.

Set $d:=\dim\widetilde T$. Return \textsc{Fail} if $\eta>1/(64km^3)$ or $d>\lfloor m/k\rfloor$. Under the proposition's assumptions, these checks pass because $dk\leq\sum_b\dim B_b\leq m$. If $d=0$, return the empty list. If $d=1$, return a normalized spanning vector of $\widetilde T$; Lem.~\ref{lem:canonical-unitary-alignment} bounds its phase-aligned error by $\sqrt2\eta$, which is smaller than the claimed bound. Henceforth, assume $d\geq2$.

\emph{One trial and its perturbation.}
In the supplied orthonormal basis $\widetilde u_1,\ldots,\widetilde u_d$ of $\widetilde T$, sample a Haar-random normalized vector $\widetilde z$ and form its splitting matrix:
\begin{align}
    \widetilde z&:=\sum_{\alpha=1}^d z_\alpha\widetilde u_\alpha,\qquad (\widetilde S_{\widetilde z})_{\alpha,\beta}:=\operatorname{Tr}\!\left[\Gamma_{\widetilde u_\beta,\widetilde u_\alpha}^{(1)}\Gamma_{\widetilde z}^{(1)}\right].
    \label{eq:uniform-certified-splitting-matrix}
\end{align}
All entries are computed from the supplied classical vectors. Denote the corresponding operator on $\widetilde T$ by $\widetilde{\mathcal S}_{\widetilde z}$.

To assess the accuracy of its eigenvectors, let $Q:\widetilde T\to T$ be the canonical unitary of Lem.~\ref{lem:canonical-unitary-alignment} and set $z:=Q\widetilde z$. This alignment is used only in the analysis. It preserves the Haar distribution and allows us to compare the estimated and exact splitting operators on the same space. Lem.~\ref{lem:gram-operator-perturbation} gives
\begin{align}
    \bigl\|\widetilde{\mathcal S}_{\widetilde z}-Q^\dagger\mathcal S_zQ\bigr\|&\leq\kappa:=4\sqrt2 k\eta.
    \label{eq:robust-split-operator-error}
\end{align}
We therefore need an eigenvalue gap large compared with $\kappa$.

\emph{Choosing the acceptance rule.}
Lem.~\ref{lem:gram-gap-anticoncentration} supplies a gap scale reached with probability at least one half:
\begin{align*}
    t&:=\frac{k^2}{(d-1)^2m},\qquad \Pr[g(z)\geq t]\geq\frac12.
\end{align*}
Using $d\leq m$, $k\geq2$, and the input bound on $\eta$, we obtain
\begin{align}
    t&\geq\frac4{m^3},\qquad \kappa\leq4\sqrt2 k\eta\leq\frac{\sqrt2}{16m^3},\qquad t>6\kappa.
    \label{eq:robust-gram-gap-comparison}
\end{align}
The exact gap is unknown, so the algorithm must test the observed spectrum. Write the eigenvalues of $\widetilde S_{\widetilde z}$ as $\widetilde\lambda_1\leq\cdots\leq\widetilde\lambda_d$ and compute
\begin{align}
    \hat g(\widetilde z)&:=\min_{j\in[d-1]}(\widetilde\lambda_{j+1}-\widetilde\lambda_j).
    \label{eq:uniform-certified-observed-splitting-gap}
\end{align}
Weyl's inequality bounds each eigenvalue shift by $\kappa$, and hence each gap shift by $2\kappa$. A cutoff at $t/2$ therefore accepts every trial with exact gap at least $t$, while ensuring that every accepted observed gap exceeds $3\kappa$. Accordingly, accept if $\hat g(\widetilde z)\geq t/2$ and return
\begin{align}
    \widetilde\omega_j&:=\sum_{\alpha=1}^d r_\alpha^{(j)}\widetilde u_\alpha,\qquad j\in[d],
    \label{eq:uniform-certified-splitting-lift}
\end{align}
where $r^{(j)}$ is a normalized eigenvector associated with $\widetilde\lambda_j$.

We now verify the accuracy certified by this test. Match the exact and observed eigenvalues in increasing order. Since $\hat g\geq t/2>3\kappa$, Weyl's inequality makes the exact eigenvalues distinct. The aligned exact operator has eigenvectors $Q^\dagger\omega_b$ by Eq.~\eqref{eq:gram-splitting-spectral-decomposition}, and each returned vector satisfies $\|(Q^\dagger\mathcal S_zQ-\widetilde\lambda_j I)\widetilde\omega_j\|\leq\kappa$. Every other exact eigenvalue is at distance at least $\hat g-\kappa$ from $\widetilde\lambda_j$. Expanding in the exact eigenbasis therefore gives, for a permutation $\pi$,
\begin{align}
    \|\Pi_{\widetilde\omega_j}-Q^\dagger\Pi_{\omega_{\pi(j)}}Q\|&\leq\frac{\kappa}{\hat g-\kappa}\leq\frac{2\kappa}{t-2\kappa}\leq\frac{4\kappa}{t}<1.
    \label{eq:robust-split-rank-one-projector}
\end{align}
The rank-one case of Lem.~\ref{lem:canonical-unitary-alignment} bounds the corresponding vector error by $4\sqrt2\kappa/t$. Adding the alignment error and using Eq.~\eqref{eq:robust-gram-gap-comparison} yields
\begin{align}
    \|\widetilde\omega_j-e^{i\varphi_j}\omega_{\pi(j)}\|&\leq\frac{4\sqrt2\kappa}{t}+\sqrt2\eta\leq8km^3\eta+\sqrt2\eta\leq(8k+2)m^3\eta.
\end{align}
The returned vectors form an orthonormal basis of $\widetilde T$ because the coefficient vectors $r^{(j)}$ are orthonormal. Thus every accepted trial has the required output accuracy.

\emph{Repetition and failure probability.}
A trial is accepted with probability at least $1/2$, so repeat independent trials until one is accepted, allowing at most $\lceil\log_2(1/\beta_{\mathrm{loc}})\rceil$ trials. If none is accepted, return \textsc{Fail}. Independence bounds this probability by $2^{-\lceil\log_2(1/\beta_{\mathrm{loc}})\rceil}\leq\beta_{\mathrm{loc}}$. For fixed $k$, computing the transition RDMs, splitting matrix, and eigendecomposition requires polynomially many arithmetic operations in $m$ per trial. The resulting total cost is polynomial in $m$ and $\log(1/\beta_{\mathrm{loc}})$, with no additional state copies.
\end{proof}

\section{Exact reconstruction for homogeneous blocks}\label{app:homogeneous-blocks}

We establish efficient reconstruction from exact RDMs in the homogeneous setting of Sec.~\ref{sec:homogeneous-overview}. After fixing the exterior-algebra notation and recording the relevant RDM decomposition, we state the reconstruction guarantee and give its constructive proof. Estimated RDMs are treated in App.~\ref{app:homogeneous-noisy}.

\subsection{Problem reformulation and notation}\label{app:uniform-problem-reformulation}

Use the homogeneous setting of Problem~\ref{prob:learning}: $I_0=\varnothing$, there are no additional vacuum modes, and every block contains the same known number $p\geq2$ of particles. We express the state family of Eq.~\eqref{eq:block-source} in exterior-algebra notation.

For each block $b\in[n]$ and branch $l\in[s_b]$, let $I_{b,l}\subseteq[m]$ be the mutually disjoint sets from Eq.~\eqref{eq:block-source}, with $|I_{b,l}|=p$, $s_b\geq2$, and $\bigcup_{b,l}I_{b,l}=[m]$. Under the identification of App.~\ref{app:basic-exterior-algebra}, the input block states are
\begin{align}
    \omega_b^{\mathrm{in}}
    &:=\sum_{l=1}^{s_b}\omega_{b,l}e_{I_{b,l}},
    \qquad \sum_{l=1}^{s_b}|\omega_{b,l}|^2=1.
    \label{eq:uniform-block-notation}
\end{align}
The coefficient normalization and disjoint branch sets give normalized block states on mutually orthogonal one-particle supports. The full input state is
\begin{align}
    \Psi_{\mathrm{in}}
    &:=\omega_1^{\mathrm{in}}\wedge\cdots\wedge\omega_n^{\mathrm{in}}
    \in\wedge^{np}H.
    \label{eq:uniform-exterior-target}
\end{align}
For the single-particle unitary $U$ associated with $\hat U$, define the transformed branch spaces and vectors by
\begin{align}
    F_{b,l}&:=\operatorname{span}\{Ue_i:i\in I_{b,l}\},
    \qquad f_{b,l}:=(\wedge^pU)e_{I_{b,l}}.
    \label{eq:uniform-output-branch-notation}
\end{align}
Each $f_{b,l}$ occupies all modes in $F_{b,l}$. For each block, set
\begin{align}
    B_b&:=\bigoplus_{l=1}^{s_b}F_{b,l},
    \qquad \omega_b:=\sum_{l=1}^{s_b}\omega_{b,l}f_{b,l}
    =(\wedge^pU)\omega_b^{\mathrm{in}}\in\wedge^pB_b.
    \label{eq:uniform-output-block-notation}
\end{align}
Because $U$ preserves inner products, the branch spaces remain mutually orthogonal, as do the block spaces $B_1,\ldots,B_n$. Finally,
\begin{align}
    \Psi&:=(\wedge^{np}U)\Psi_{\mathrm{in}}
    =\omega_1\wedge\cdots\wedge\omega_n.
    \label{eq:uniform-output-target}
\end{align}
This is the exterior-algebra representation of the target state in Eq.~\eqref{eq:output-state}.

\subsection{The eigenvalue-one space of the \texorpdfstring{$p$}{p}-RDM}
\label{app:Exact RDM}

The sector factorization in App.~\ref{app:block-product-rdms} separates the $p$-RDM into contributions from a single complete block and from several partially selected blocks. The former give eigenvalue one, whereas the latter have operator norm strictly smaller than one. To quantify this distinction, define the largest branch weight of block $b$ by
\begin{align}
    w_b &:= \max_{l\in[s_b]} |\omega_{b,l}|^2.
    \label{eq:uniform-largest-branch-weight}
\end{align}
Since $s_b\geq2$ and all branch coefficients are nonzero, $0<w_b<1$.

\begin{lemma}[Block space at eigenvalue one]\label{lem:equal-p-full-rdm-one-space}
In the input Fock basis, the $p$-RDM has the decomposition
\begin{align}
    \Gamma_{\Psi_{\mathrm{in}}}^{(p)}&=\sum_b\Pi_{\omega_b^{\mathrm{in}}}+D_{\mathrm{mix}},\qquad D_{\mathrm{mix}}\geq0,\quad\|D_{\mathrm{mix}}\|<1,
    \label{eq:homogeneous-input-rdm-decomposition}
\end{align}
where $D_{\mathrm{mix}}$ is diagonal and supported only on sectors selecting particles from at least two blocks. In particular, the eigenvalue-one space of the output RDM is
\begin{align}
    \operatorname{ran}\mathbf 1_{\{1\}}\left(\Gamma_\Psi^{(p)}\right)&=\operatorname{span}\{\omega_1,\ldots,\omega_n\}.
    \label{eq:equal-p-full-rdm-one-space}
\end{align}
\end{lemma}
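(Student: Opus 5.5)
We apply Prop.~\ref{prop:blockwise-rdm-factorization} to the input state $\Psi_{\mathrm{in}}=\omega_1^{\mathrm{in}}\wedge\cdots\wedge\omega_n^{\mathrm{in}}$, decomposing $H=H_1\oplus\cdots\oplus H_n$ with $H_b:=\operatorname{span}\{e_i:i\in\bigcup_l I_{b,l}\}$. In the homogeneous setting there are no vacuum modes, so $H_0=\{0\}$. The $p$-RDM is then block diagonal over sectors $\mathcal H_{\mathbf q}$ with $\sum_bq_b=p$. We split these sectors into two families.

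\emph{Single-block sectors.} These have $q_b=p$ for one block $b$ and $q_c=0$ for all others. There the restriction equals $\Gamma_{\omega_b^{\mathrm{in}}}^{(p)}\otimes\bigotimes_{c\neq b}1=\Pi_{\omega_b^{\mathrm{in}}}$ by Prop.~\ref{prop:correlated-block-rdm}. Since these sectors are mutually orthogonal, together they contribute $\sum_b\Pi_{\omega_b^{\mathrm{in}}}$.

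\emph{Mixed sectors.} These have at least two nonzero $q_b$, so every $q_b<p$. Each factor is then $\Gamma_{\omega_b^{\mathrm{in}}}^{(q_b)}=\sum_l|\omega_{b,l}|^2P_{\wedge^{q_b}F^{\mathrm{in}}_{b,l}}$, where $F^{\mathrm{in}}_{b,l}=\operatorname{span}\{e_i:i\in I_{b,l}\}$. This is diagonal in the Fock basis, since each $P_{\wedge^{q}F^{\mathrm{in}}_{b,l}}$ is the coordinate projector onto $\{e_J:J\subseteq I_{b,l}\}$. The tensor product of diagonal positive operators is diagonal and positive. Its norm is $\prod_{b:q_b\ge1}w_b$, using $\|\Gamma^{(q)}_{\omega_b^{\mathrm{in}}}\|=w_b$ from Prop.~\ref{prop:correlated-block-rdm}, and factors with $q_b=0$ equal $1$. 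At least two factors are present and each $w_b<1$, so every mixed-sector norm is strictly below one. We define $D_{\mathrm{mix}}$ as the direct sum over mixed sectors. Because there are finitely many sectors, $\|D_{\mathrm{mix}}\|=\max\prod w_b<1$, and $D_{\mathrm{mix}}\ge0$ is diagonal. This gives Eq.~\eqref{eq:homogeneous-input-rdm-decomposition}.

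\emph{Eigenvalue-one space.} The single-block sectors are orthogonal to the mixed ones, and the $\omega_b^{\mathrm{in}}$ are orthonormal because their supports are disjoint. Hence $\sum_b\Pi_{\omega_b^{\mathrm{in}}}$ is a projector on its sectors with eigenvalue one exactly on $\operatorname{span}\{\omega_b^{\mathrm{in}}\}$ and zero on the rest of those sectors. On the mixed sectors every eigenvalue is at most $\|D_{\mathrm{mix}}\|<1$. So the input eigenvalue-one space is $\operatorname{span}\{\omega_b^{\mathrm{in}}\}$. Applying the covariance in Eq.~\eqref{eq:rdm-unitary-covariance}, $\Gamma_\Psi^{(p)}=(\wedge^pU)\Gamma_{\Psi_{\mathrm{in}}}^{(p)}(\wedge^pU)^\dagger$, maps this eigenspace to $\operatorname{span}\{(\wedge^pU)\omega_b^{\mathrm{in}}\}=\operatorname{span}\{\omega_b\}$, which proves Eq.~\eqref{eq:equal-p-full-rdm-one-space}.

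The only delicate point is keeping track of which sectors are present. One must check that mixed sectors never contain a full block, which holds because $\sum q_b=p$ with two nonzero entries forces each $q_b<p$. One must also use that the sectors are orthogonal and finitely many, so the strict inequality survives taking the maximum.
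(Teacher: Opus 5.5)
Your proof is correct and follows the paper's argument essentially step for step. It uses the sector factorization of Prop.~\ref{prop:blockwise-rdm-factorization}, shows that complete-block sectors contribute $\Pi_{\omega_b^{\mathrm{in}}}$ and that mixed sectors are diagonal with norm $\prod_{b:q_b>0}w_b<1$, and then transports the eigenvalue-one space by covariance under $\wedge^pU$. The only addition needed is the trivial case $n=1$: there are no mixed sectors, so you set $D_{\mathrm{mix}}=0$ rather than taking a maximum over an empty set, as the paper does explicitly.
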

\begin{proof}
By Prop.~\ref{prop:blockwise-rdm-factorization}, each input sector with $\sum_bq_b=p$ carries $\bigotimes_b\Gamma_{\omega_b^{\mathrm{in}}}^{(q_b)}$. A complete-block sector $q_b=p$ contributes $\Pi_{\omega_b^{\mathrm{in}}}$, whereas every mixed sector has $q_b<p$ for all $b$ and hence, by Prop.~\ref{prop:correlated-block-rdm},
\begin{align*}
    \left\|\bigotimes_b\Gamma_{\omega_b^{\mathrm{in}}}^{(q_b)}\right\|&=\prod_{b:q_b>0}w_b<1.
\end{align*}
Taking $D_{\mathrm{mix}}$ to be the direct sum of the mixed-sector restrictions gives Eq.~\eqref{eq:homogeneous-input-rdm-decomposition}, with $D_{\mathrm{mix}}=0$ if there are no mixed sectors. Each proper block RDM is diagonal in the input Fock basis by Eq.~\eqref{eq:single-block-rdm-orders}, since its branch supports are coordinate subspaces; hence $D_{\mathrm{mix}}$ is diagonal as well.

Because the mixed sectors are orthogonal to the complete-block sectors and $\|D_{\mathrm{mix}}\|<1$, the input RDM's eigenvalue-one space is exactly the span of the orthonormal block states $\omega_b^{\mathrm{in}}$. Covariance under $\wedge^pU$ in Eq.~\eqref{eq:rdm-unitary-covariance} maps this space to $\operatorname{span}\{\omega_1,\ldots,\omega_n\}$.
\end{proof}

\subsection{Reconstruction from exact RDMs}
\label{app:same-particle-algorithm}
\label{app:same-particle-correctness}

The eigenvalue-one space above need not have a uniform spectral gap. We therefore construct the state using the $1$-RDM to separate highly occupied modes and the $p$-RDM to recover the correlated blocks. The construction allows a controlled truncation so that the same operations can later be applied to estimated RDMs.

\begin{theorem}[Reconstruction from exact RDMs]\label{thm:exact-same-particle-reconstruction}
Let $\Psi$ be an $m$-mode state in the homogeneous setting with block particle number $p\geq2$. Given $p$, $\varepsilon_{\mathrm{fid}}\in(0,1)$, and exact classical descriptions of $\Gamma_\Psi^{(1)}$ and $\Gamma_\Psi^{(p)}$, a randomized classical algorithm returns, with probability one, a compact classical description of a normalized state $\Phi$ satisfying
\begin{align}
    1-|\langle\Phi,\Psi\rangle|^2&\leq\frac{\varepsilon_{\mathrm{fid}}}{16}.
    \label{eq:equal-p-df-truncation-bound}
\end{align}
The description consists of an orthonormal basis for a fully occupied subspace and a list of normalized $p$-particle block vectors, all with mutually orthogonal one-particle supports. For fixed $p$, the algorithm uses polynomially many arithmetic operations in $m$ and no additional copies of the target state.
\end{theorem}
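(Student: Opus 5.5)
We will prove Thm.~\ref{thm:exact-same-particle-reconstruction} by running the exact version of Steps~\ref{step:overview-high}--\ref{step:overview-core} of Sec.~\ref{sec:homogeneous-overview} with $\theta\in[2/3,3/4]$ and a truncation threshold $t:=2s_0$, $s_0=\sqrt{\varepsilon_{\mathrm{fid}}/(48m)}$. We work in the notation of App.~\ref{app:uniform-problem-reformulation}.

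\textbf{The $1$-RDM and the two compressions.} By Prop.~\ref{prop:blockwise-rdm-factorization} and Prop.~\ref{prop:correlated-block-rdm}, together with covariance, we have $\Gamma_\Psi^{(1)}=\sum_{b,l}|\omega_{b,l}|^2P_{F_{b,l}}$. Hence $S_\theta=\bigoplus_{|\omega_{b,l}|^2>\theta}F_{b,l}$. Because $\theta>1/2$, each block has at most one dominant branch $f_b$.

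Next we compute $C_p$ and $A_p$ sector by sector in the input basis and transport the result by $\wedge^pU$.
\begin{itemize}
\item[] Complete-block sectors contribute $\Pi_{\omega_b}$. Compressing gives $\sqrt{w_b}\,r_b f_b^\dagger$ in $C_p$ when $w_b>\theta$. In $A_p$ it gives $\Pi_{\omega_b}$ if $w_b\le\theta$, and $r_br_b^\dagger$, of norm $1-w_b<1-\theta$, otherwise.
\item[] Mixed sectors are diagonal in the transformed Fock basis by Lem.~\ref{lem:equal-p-full-rdm-one-space}. Every branch lies wholly in $S_\theta$ or in $S_\theta^\perp$, so $\wedge^pS_\theta$ and $\wedge^pS_\theta^\perp$ are spanned by transformed Fock vectors. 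Mixed sectors therefore contribute nothing to $C_p$.
\item[] In $A_p$, a surviving mixed-sector vector uses only non-dominant modes, so its eigenvalue is a product of at least two weights, each at most $\theta$.
\end{itemize}
This yields Eq.~\eqref{eq:overview-cross-operator-form} and Eq.~\eqref{eq:overview-low-weight-block-space}. In $C_p$ the $f_b$ are orthonormal and the $r_b$ are mutually orthogonal, so $\sigma_b=\sqrt{w_b(1-w_b)}$ and $R_p=\operatorname{span}\{f_b:w_b>\theta,\sigma_b>t\}$.

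\textbf{Splitting the two spaces.} We apply Prop.~\ref{prop:exact-gram-splitting} to $R_p$, whose dominant branches lie in disjoint supports $F_{b,l}$. It returns each $f_b$ up to phase with probability one. We then set $\omega_b\propto\Gamma_\Psi^{(p)}f_b=\sqrt{w_b}\,\omega_b$, which follows from the complete-block term, since mixed sectors annihilate $f_b$. We also apply Prop.~\ref{prop:exact-gram-splitting} to $W_p^{\mathrm H}$ to recover the non-dominant blocks. The remaining dominant blocks are the excluded ones. The core is $S_{\mathrm{core}}=S_\theta\cap F^\perp$ with $F=\operatorname{ran}\sum\Gamma^{(1)}_{f_b}$. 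By Prop.~\ref{prop:occupied-subspace-rdm} and the orthogonality of the $F_{b,l}$, $S_{\mathrm{core}}$ is exactly $\bigoplus F_{b,\mathrm{dom}}$ over the excluded blocks. Its fully occupied state therefore equals $\bigwedge f_b$ up to phase. The output is $\Phi$ as in Eq.~\eqref{eq:overview-numbered-output}, and all the supports involved ($B_b$ or $F_{b,\mathrm{dom}}$) are mutually orthogonal.

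\textbf{Fidelity and cost.} We reorder factors, which costs only a sign. Then $\langle\Phi,\Psi\rangle$ factorizes as a product of per-block overlaps: $1$ for each recovered block and $\langle f_b,\omega_b\rangle=\sqrt{w_b}$ for each excluded block. Hence
\begin{align*}
1-|\langle\Phi,\Psi\rangle|^2\le\sum_{\text{excl}}(1-w_b)=\sum_{\text{excl}}\frac{\sigma_b^2}{w_b}\le\frac{n t^2}{\theta}\le\frac{m}{2}\cdot\frac{3}{2}\cdot\frac{4\varepsilon_{\mathrm{fid}}}{48m}=\frac{\varepsilon_{\mathrm{fid}}}{16}.
\end{align*}
All operations are spectral decompositions of $D_p\times D_p$ matrices plus Gram splitting, so the cost is polynomial for fixed $p$.

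The main obstacle is the middle step: showing rigorously that the mixed sectors drop out of $C_p$ and are bounded by $\theta$ in $A_p$. This requires checking that the projectors $Q_{\mathrm{hi}}$ and $Q_{\mathrm{lo}}$ are block diagonal in the sector decomposition after transport by $\wedge^pU$. We will argue this by noting that both are spanned by transformed Fock vectors, on which every sector restriction is diagonal.
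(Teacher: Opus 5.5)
Your proposal is correct and follows essentially the same route as the paper's proof in App.~\ref{app:homogeneous-blocks}. You obtain $S_\theta$ from the $1$-RDM spectrum, kill the diagonal mixed part in $C_p$ and bound it by $\theta$ in $A_p$ because the high/low projectors are coordinate projectors in the transformed Fock basis, apply Gram splitting to $R_p$ and $W_p^{\mathrm H}$, recover dominant blocks via $\Gamma_\Psi^{(p)}f_b=\sqrt{w_b}\,\omega_b$, and take the core as $S_\theta$ minus the retained branch supports; your fixed $t=2s_0$ is just an endpoint of the paper's interval $[s_0,2s_0]$, and your estimate $\sum_{\text{excl}}(1-w_b)\le nt^2/\theta\le\varepsilon_{\mathrm{fid}}/16$ matches the paper's $6ns_0^2\le3ms_0^2$.
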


We construct the algorithm and prove the theorem in four stages. First, the $1$-RDM identifies a subspace containing the dominant branches. We then recover the dominant and complementary blocks from two compressions of the $p$-RDM, before assembling the remaining occupied core and bounding the approximation error.

\subsubsection{Identifying the high-occupation space}

Choose $\theta\in[2/3,3/4]$ and compute
\begin{align}
    S_\theta&:=\operatorname{ran}\mathbf1_{(\theta,\infty)}\left(\Gamma_\Psi^{(1)}\right).
    \label{eq:uniform-ideal-high-occupation-space}
\end{align}
To identify the branches selected by this threshold, set $\mathcal N_\theta:=\{b\in[n]:w_b>\theta\}$. Since $\theta>1/2$, each such block has a unique branch of weight above $\theta$; denote its occupied support by $F_b$.

\begin{lemma}[High-occupation space]\label{lem:equal-p-df-high-space}
The $1$-RDM and its high-occupation space satisfy
\begin{align}
    \Gamma_\Psi^{(1)}&=\sum_{b=1}^n\sum_{l=1}^{s_b}|\omega_{b,l}|^2P_{F_{b,l}},\qquad S_\theta=\bigoplus_{b\in\mathcal N_\theta}F_b.
    \label{eq:equal-p-df-one-body-formula}
\end{align}
Every other branch support lies in $S_\theta^\perp$.
\end{lemma}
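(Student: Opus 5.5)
The plan is to compute the $1$-RDM in the input basis and transport it by covariance. I would apply Prop.~\ref{prop:blockwise-rdm-factorization} with $k=1$ to $\Psi_{\mathrm{in}}=\omega_1^{\mathrm{in}}\wedge\cdots\wedge\omega_n^{\mathrm{in}}$. The block decomposition is $H=\bigoplus_b H_b$, with $H_b$ spanned by the coordinate vectors in $\bigcup_l I_{b,l}$; in the homogeneous setting there are no vacuum modes. For $k=1$, the admissible sectors are those with $q_b=1$ for a single block and $q_c=0$ otherwise. The factors with $q_c=0$ contribute the scalar $\Gamma^{(0)}_{\omega_c^{\mathrm{in}}}=1$. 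Hence $\Gamma^{(1)}_{\Psi_{\mathrm{in}}}=\bigoplus_b\Gamma^{(1)}_{\omega_b^{\mathrm{in}}}$, with the sectors mutually orthogonal.

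Since $p\ge2$, order $q=1$ is a proper lower order. Prop.~\ref{prop:correlated-block-rdm} then gives $\Gamma^{(1)}_{\omega_b^{\mathrm{in}}}=\sum_l|\omega_{b,l}|^2P_{\operatorname{span}\{e_i:i\in I_{b,l}\}}$. Conjugating by $U$ via Eq.~\eqref{eq:rdm-unitary-covariance} with $k=1$ (so that $\wedge^1U=U$) sends each coordinate projector to $P_{F_{b,l}}$. This gives the first identity in Eq.~\eqref{eq:equal-p-df-one-body-formula}.

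For the spectral statement, the spaces $F_{b,l}$ are mutually orthogonal and, by the covering assumption $\bigcup I_{b,l}=[m]$, they sum to $H$. The displayed formula is therefore a spectral decomposition, with eigenvalue $|\omega_{b,l}|^2$ on $F_{b,l}$ (several branches may share an eigenvalue). It follows that $\operatorname{ran}\mathbf 1_{(\theta,\infty)}$ is the direct sum of those $F_{b,l}$ with $|\omega_{b,l}|^2>\theta$.

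It remains to identify which branches qualify. Since $\sum_l|\omega_{b,l}|^2=1$ and $\theta>1/2$, at most one branch per block can exceed $\theta$. Such a branch exists exactly when $w_b>\theta$, i.e.\ $b\in\mathcal N_\theta$, and it is then the branch with support $F_b$. This yields $S_\theta=\bigoplus_{b\in\mathcal N_\theta}F_b$. Every other $F_{b,l}$ is an eigenspace with eigenvalue at most $\theta$; being orthogonal to all the selected spaces, it lies in $S_\theta^\perp$.

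No step here is a real obstacle. The only point needing care is checking that the $k=1$ sector bookkeeping in Prop.~\ref{prop:blockwise-rdm-factorization} produces no cross-block terms. This holds because that proposition gives block diagonality across sectors. Within a block, cross-branch coherences vanish because $p-1\ge1$ particles remain, as shown in Prop.~\ref{prop:correlated-block-rdm}.
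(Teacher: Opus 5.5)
Your proof is correct and follows the paper's argument. Like the paper, you use the order-one sector factorization of Prop.~\ref{prop:blockwise-rdm-factorization}, with every other block contributing $\Gamma^{(0)}=1$, the single-block formula of Prop.~\ref{prop:correlated-block-rdm} (valid since $p\geq2$), covariance under $U$, and spectral thresholding over mutually orthogonal branch supports, where $\theta>1/2$ allows at most one selected branch per block; you merely spell out steps the paper leaves implicit, and the covering assumption $\bigcup_{b,l}I_{b,l}=[m]$ is not actually needed, since uncovered modes would have occupation $0<\theta$.
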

\begin{proof}
At order one, the blockwise factorization in Prop.~\ref{prop:blockwise-rdm-factorization} selects one particle from one block, while all other blocks contribute their order-zero RDMs. Prop.~\ref{prop:correlated-block-rdm} therefore yields Eq.~\eqref{eq:equal-p-df-one-body-formula}. Since the branch supports are mutually orthogonal, spectral thresholding selects exactly those with weight above $\theta$, proving the remaining claims.
\end{proof}

Thus each dominant block has one branch in $S_\theta$ and its remaining branches in $S_\theta^\perp$, while every complementary block lies entirely in $S_\theta^\perp$. We use this separation to recover the two types of blocks in turn.

\subsubsection{Recovering the dominant blocks}

To access the coherence between the dominant branch and the rest of its block, let $Q_{\mathrm{hi}}:=P_{\wedge^pS_\theta}$ and $Q_{\mathrm{lo}}:=P_{\wedge^p(S_\theta^\perp)}$, and form the cross compression
\begin{align}
    C_p&:=Q_{\mathrm{lo}}\Gamma_\Psi^{(p)}Q_{\mathrm{hi}}.
    \label{eq:homogeneous-compressed-operators}
\end{align}
These projectors select states with all $p$ particles in the high or low space, respectively. States occupying both spaces lie in neither range, so $Q_{\mathrm{lo}}\neq I-Q_{\mathrm{hi}}$ in general.

For $b\in\mathcal N_\theta$, let $f_b$ be the normalized dominant branch with its coefficient's phase absorbed into it. Writing $G_b:=B_b\cap F_b^\perp$ for the combined support of the other branches, decompose the block and define its coherence scale by
\begin{align}
    \omega_b&=\sqrt{w_b}f_b+r_b,\qquad \|r_b\|^2=1-w_b,\qquad \sigma_b:=\sqrt{w_b}\|r_b\|=\sqrt{w_b(1-w_b)}.
    \label{eq:uniform-near-block-decomposition}
\end{align}
Here $r_b\in\wedge^pG_b$ and $F_b\perp G_b$.
The following lemma shows how these branches appear in $C_p$ and how each branch determines its full block.

\begin{lemma}[Structure of the cross compression]\label{lem:equal-p-df-coherence}
For every $x\in\wedge^pS_\theta$,
\begin{align}
    C_px&=\sum_{b\in\mathcal N_\theta}\sqrt{w_b}r_b\langle f_b,x\rangle.
    \label{eq:equal-p-df-coherence-formula}
\end{align}
Its nonzero singular values are the $\sigma_b$ for $b\in\mathcal N_\theta$, with corresponding right singular vectors $f_b$. In particular, $\operatorname{rank}C_p\leq|\mathcal N_\theta|\leq n$. Moreover, for every $b\in\mathcal N_\theta$,
\begin{align}
    C_pf_b&=\sqrt{w_b}r_b,\qquad \Gamma_\Psi^{(p)}f_b=\sqrt{w_b}\omega_b.
    \label{eq:equal-p-df-near-identities}
\end{align}
\end{lemma}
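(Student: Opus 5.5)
We work in the input basis and transfer everything through the covariance Eq.~\eqref{eq:rdm-unitary-covariance}. Since $\wedge^pU$ maps $\wedge^p$ of the input coordinate subspace underlying $S_\theta$ onto $\wedge^pS_\theta$, and similarly for $S_\theta^\perp$, it also conjugates $Q_{\mathrm{hi}}$ and $Q_{\mathrm{lo}}$ into their input counterparts. So it suffices to compute $Q_{\mathrm{lo}}\Gamma^{(p)}_{\Psi_{\mathrm{in}}}Q_{\mathrm{hi}}$ in the input basis.

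We start from the decomposition of Lem.~\ref{lem:equal-p-full-rdm-one-space}, namely $\Gamma_\Psi^{(p)}=\sum_b\Pi_{\omega_b}+D_{\mathrm{mix}}$ after transformation, and treat the two contributions separately. For the mix part, the input $D_{\mathrm{mix}}$ is diagonal in the Fock basis $\{e_I\}$. By Lem.~\ref{lem:equal-p-df-high-space}, the input high space is a coordinate subspace spanned by the dominant-branch modes. Hence each $e_I$ lies either in $\wedge^p$ of the high space, or in $\wedge^p$ of the low space, or in neither. A diagonal operator therefore has no matrix elements from $\operatorname{ran}Q_{\mathrm{hi}}$ to $\operatorname{ran}Q_{\mathrm{lo}}$, since these ranges are orthogonal coordinate spans. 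This gives $Q_{\mathrm{lo}}D_{\mathrm{mix}}Q_{\mathrm{hi}}=0$.

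For the complete-block part, we compute $Q_{\mathrm{hi}}\omega_b$ and $Q_{\mathrm{lo}}\omega_b$ for each $b$. Take $b\in\mathcal N_\theta$. Then $f_b\in\wedge^pF_b\subseteq\wedge^pS_\theta$, while $r_b\in\wedge^pG_b$ with $G_b\subseteq S_\theta^\perp$, because every other branch support lies in $S_\theta^\perp$. This yields
\[
Q_{\mathrm{hi}}\omega_b=\sqrt{w_b}f_b,\qquad Q_{\mathrm{lo}}\omega_b=r_b.
\]
For $b\notin\mathcal N_\theta$, we have $\omega_b\in\wedge^p(S_\theta^\perp)$, so $Q_{\mathrm{hi}}\omega_b=0$. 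Together these give
\[
C_p=\sum_{b\in\mathcal N_\theta}\sqrt{w_b}\,|r_b\rangle\langle f_b|,
\]
which is Eq.~\eqref{eq:equal-p-df-coherence-formula} when applied to $x\in\wedge^pS_\theta$.

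Next we read off the SVD. The vectors $f_b$ are orthonormal, because they are normalized and supported on the mutually orthogonal spaces $F_b\subseteq B_b$. The vectors $r_b$ are mutually orthogonal, because they are supported in the orthogonal spaces $G_b\subseteq B_b$. The formula above is therefore an SVD with right singular vectors $f_b$ and singular values $\sqrt{w_b}\|r_b\|=\sigma_b$. Each $\sigma_b$ is nonzero since $w_b<1$. It follows that $\operatorname{rank}C_p=|\mathcal N_\theta|\leq n$.

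Finally we verify the identities in Eq.~\eqref{eq:equal-p-df-near-identities}. The first, $C_pf_b=\sqrt{w_b}r_b$, follows directly from the display above by orthonormality of the $f_b$. For the second, expand
\[
\Gamma_\Psi^{(p)}f_b=\sum_c\omega_c\langle\omega_c,f_b\rangle+D_{\mathrm{mix}}f_b .
\]
We have $\langle\omega_c,f_b\rangle=\delta_{bc}\sqrt{w_b}$, using orthogonal supports for $c\neq b$ and branch orthogonality for $c=b$. We also need $D_{\mathrm{mix}}f_b=0$. In the input basis $f_b$ is a single Fock vector $e_I$ whose $p$ modes all belong to one block, so it lies in the complete-block sector of block $b$, which is orthogonal to every mixed sector. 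Hence $\Gamma_\Psi^{(p)}f_b=\sqrt{w_b}\,\omega_b$.

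The main care point is the sector bookkeeping for $D_{\mathrm{mix}}$. One must check that diagonality in the input Fock basis, together with $S_\theta$ being a coordinate subspace in that basis, really makes the hi-to-lo compression vanish, and that $f_b$ avoids all mixed sectors.
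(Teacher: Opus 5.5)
Your proposal is correct and follows essentially the same route as the paper's proof. Both arguments pass to the input Fock basis by covariance and note that $D_{\mathrm{mix}}$ is diagonal while $Q_{\mathrm{hi}},Q_{\mathrm{lo}}$ become coordinate projectors, so the mixed part has zero cross compression. Both then read off $C_p=\sum_{b\in\mathcal N_\theta}\sqrt{w_b}|r_b\rangle\langle f_b|$ from $Q_{\mathrm{hi}}\omega_b=\sqrt{w_b}f_b$ and $Q_{\mathrm{lo}}\omega_b=r_b$, and obtain $\Gamma_\Psi^{(p)}f_b=\Pi_{\omega_b}f_b$ because $f_b$ lies in the complete-block sector of block $b$. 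Your explicit checks that $\sigma_b>0$ and that $D_{\mathrm{mix}}f_b=0$ simply spell out steps the paper states more tersely.
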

\begin{proof}
By covariance, we may evaluate the compression in the input Fock basis. The high- and low-space projectors are then coordinate projectors, so the diagonal mixed part of Lem.~\ref{lem:equal-p-full-rdm-one-space} has zero cross compression. Returning to the output basis, a dominant block contributes through $Q_{\mathrm{hi}}\omega_b=\sqrt{w_b}f_b$ and $Q_{\mathrm{lo}}\omega_b=r_b$, whereas a complementary block has no high component. Orthogonality of different block supports therefore gives
\begin{align}
    C_p&=\sum_{b\in\mathcal N_\theta}\sqrt{w_b}|r_b\rangle\langle f_b|,\qquad C_p^\dagger C_p=\sum_{b\in\mathcal N_\theta}\sigma_b^2|f_b\rangle\langle f_b|.
    \label{eq:uniform-cross-compression-operator-form}
\end{align}
These identities prove the formula and singular-value claims. The full RDM restricts to $\Pi_{\omega_b}$ on the complete-block sector, so $\Gamma_\Psi^{(p)}f_b=\Pi_{\omega_b}f_b=\sqrt{w_b}\omega_b$.
\end{proof}

Since $1-w_b=\sigma_b^2/w_b$, a dominant block with small coherence is close to its fully occupied dominant branch. We therefore choose a threshold $t\in[s_0,2s_0]$ at the resolution below and retain the right singular space above it:
\begin{align}
    s_0^2&:=\frac{\varepsilon_{\mathrm{fid}}}{48m},\qquad R_p:=\operatorname{ran}\mathbf1_{(t^2,\infty)}(C_p^\dagger C_p),\qquad d_{\mathrm N}:=\dim R_p.
    \label{eq:equal-p-df-s0}
\end{align}
Writing $\mathcal R:=\{b\in\mathcal N_\theta:\sigma_b>t\}$ for the retained blocks, Lem.~\ref{lem:equal-p-df-coherence} gives
\begin{align}
    R_p&=\operatorname{span}\{f_b:b\in\mathcal R\}.
    \label{eq:equal-p-df-right-space}
\end{align}
The branches in this span have mutually orthogonal one-particle supports, so apply the exact Gram-splitting algorithm of Prop.~\ref{prop:exact-gram-splitting} to an orthonormal basis of $R_p$. If it returns \textsc{Fail}, return \textsc{Fail}; otherwise denote the returned factors by $\hat f_1,\ldots,\hat f_{d_{\mathrm N}}$ and reconstruct
\begin{align}
    \hat\omega_j^{\mathrm N}&:=\frac{\Gamma_\Psi^{(p)}\hat f_j}{\|\Gamma_\Psi^{(p)}\hat f_j\|},\qquad j\in[d_{\mathrm N}].
    \label{eq:uniform-main-near-factor}
\end{align}
With probability one, Gram splitting recovers the retained branches up to phases and a permutation. After matching and phase alignment, we may index the outputs by $b\in\mathcal R$; the lemma then gives
\begin{align}
    \hat f_b&=f_b,\qquad \hat\omega_b^{\mathrm N}=\omega_b,\qquad \|\Gamma_\Psi^{(p)}\hat f_b\|=\sqrt{w_b}>\sqrt\theta.
    \label{eq:equal-p-exact-near-block-recovery}
\end{align}
Thus the normalization is well defined and returns every retained dominant block. Keep both the branch list and the full-block list: the branches will identify which occupied modes to remove when constructing the core.

\subsubsection{Recovering the complementary blocks}

Complementary blocks lie entirely in $S_\theta^\perp$. To recover them, form the all-low compression
\begin{align}
    A_p&:=Q_{\mathrm{lo}}\Gamma_\Psi^{(p)}Q_{\mathrm{lo}}.
    \label{eq:uniform-ideal-all-low-compression}
\end{align}
Let $\mathcal H_\theta:=[n]\setminus\mathcal N_\theta=\{b:w_b\leq\theta\}$ and compute $W_p^{\mathrm H}:=\operatorname{ran}\mathbf1_{\{1\}}(A_p)$, with $d_{\mathrm H}:=\dim W_p^{\mathrm H}$. The next lemma identifies this space and separates it from the dominant-block remainders and mixed-sector contributions.

\begin{lemma}[Gap of the all-low compression]\label{lem:equal-p-df-strong-gap}
The eigenvalue-one space of $A_p$ is
\begin{align}
    W_p^{\mathrm H}&=\operatorname{span}\{\omega_b:b\in\mathcal H_\theta\}.
    \label{eq:uniform-exact-eigenvalue-one-space}
\end{align}
On the all-low space,
\begin{align}
    0\leq A_p-P_{W_p^{\mathrm H}}&\leq\theta\left(I_{\wedge^p(S_\theta^\perp)}-P_{W_p^{\mathrm H}}\right).
    \label{eq:equal-p-df-strong-remainder-bound}
\end{align}
In particular, every other eigenvalue is at most $\theta$, giving a gap of at least $1-\theta\geq1/4$.
\end{lemma}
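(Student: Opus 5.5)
We will work in the input Fock basis, using the covariance in Eq.~\eqref{eq:rdm-unitary-covariance}. There $S_\theta$ is spanned by the coordinate modes of the dominant branches (Lem.~\ref{lem:equal-p-df-high-space}), so $Q_{\mathrm{lo}}$ becomes the coordinate projector onto $p$-subsets of the low modes $L:=[m]\setminus\bigcup_{b\in\mathcal N_\theta}I_{b,\ell_b}$, where $\ell_b$ denotes the dominant branch. By Prop.~\ref{prop:blockwise-rdm-factorization}, the input $p$-RDM is block diagonal in the sectors $\mathcal H_{\mathbf q}$ indexed by how many particles come from each block. Each such sector is spanned by coordinate basis vectors, and $Q_{\mathrm{lo}}$ commutes with the sector decomposition. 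Hence $A_p$ is block diagonal in these sectors. On each sector it equals the compression of $\bigotimes_b\Gamma_{\omega_b^{\mathrm{in}}}^{(q_b)}$ to $\bigotimes_b\wedge^{q_b}(H_b\cap L)$, where $H_b$ is the support of block $b$, that is, a tensor product of the compressed block RDMs $P^{(b)}\Gamma^{(q_b)}_{\omega_b^{\mathrm{in}}}P^{(b)}$.

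We then analyze the sectors by type. For a complete-block sector ($q_b=p$) with $b\in\mathcal H_\theta$, block $b$ lies entirely in $L$, so the compression is $\Pi_{\omega_b^{\mathrm{in}}}$: eigenvalue one on $\omega_b^{\mathrm{in}}$ and zero on the rest of that sector. For a complete-block sector with $b\in\mathcal N_\theta$, the compression is $\Pi_{\omega_b}$ squeezed by $Q_{\mathrm{lo}}$, namely $|r_b\rangle\langle r_b|$, whose only nonzero eigenvalue is $\|r_b\|^2=1-w_b<1-\theta\leq\theta$ (using $\theta\geq 2/3>1/2$). For a mixed sector, every $q_b<p$, so by Prop.~\ref{prop:correlated-block-rdm} each factor is a diagonal matrix $\sum_l|\omega_{b,l}|^2P_{\wedge^{q_b}F_{b,l}}$ compressed to the low modes. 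Any factor with $q_b\geq1$ therefore has norm at most the largest weight of a branch lying in $L$. For $b\in\mathcal H_\theta$ this is $w_b\leq\theta$; for $b\in\mathcal N_\theta$ it is at most $1-w_b<\theta$, since the dominant branch has been removed. A mixed sector has at least two factors with $q_b\geq1$, each of norm at most $\theta\le1$, so its norm is at most $\theta$. The sector decomposition is orthogonal, and $A_p$ is positive semidefinite as a compression of a PSD operator.

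Collecting the three cases, the eigenvalue-one space of $A_p$ is the span of the $\omega_b^{\mathrm{in}}$ with $b\in\mathcal H_\theta$. On its orthogonal complement within $\wedge^p L$, the operator $A_p$ is PSD with norm at most $\theta$. Because this eigenspace is an exact invariant subspace (a direct summand), we get $A_p-P_{W}=A_p(I-P_W)$, which is positive and bounded by $\theta(I-P_W)$. This gives Eq.~\eqref{eq:equal-p-df-strong-remainder-bound}. Conjugating by $\wedge^pU$ transports the statement to the output basis, giving Eq.~\eqref{eq:uniform-exact-eigenvalue-one-space}.

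The main obstacle is the first step: justifying that compressing by $Q_{\mathrm{lo}}$ respects both the sector decomposition and the tensor factorization. This needs care with fermionic ordering and with the fact that $\wedge^pL$ mixes blocks. We plan to handle it by noting that every operator involved is diagonal or block-structured in the coordinate basis $\{e_I\}$. The only off-diagonal entries come from complete-block coherences $\Pi_{\omega_b^{\mathrm{in}}}$, and these are treated separately.
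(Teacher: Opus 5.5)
Your proposal is correct and follows essentially the same route as the paper: in the input (equivalently transformed) Fock basis $Q_{\mathrm{lo}}$ is a coordinate projector, so complementary complete-block projectors survive unchanged, dominant ones reduce to $|r_b\rangle\langle r_b|$ with norm $1-w_b<\theta$, and the mixed part keeps only products of weights of low branches, each at most $\theta$. The paper organizes this through the decomposition of Lem.~\ref{lem:equal-p-full-rdm-one-space} rather than sector-by-sector tensor factors, which is only a presentational difference; the ordering-sign issue you flag does not arise, since a diagonal 0/1 projector in the coordinate basis is unaffected by those signs.
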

\begin{proof}
Conjugating the decomposition of Lem.~\ref{lem:equal-p-full-rdm-one-space} by $\wedge^pU$ gives the complete-block projectors $\Pi_{\omega_b}$ and a mixed part diagonal in the transformed Fock basis. A complementary block lies entirely in the low space, so its projector is unchanged by the compression. For a dominant block, only the remainder survives, giving $|r_b\rangle\langle r_b|$ with norm $1-w_b<1-\theta<\theta$.

The all-low projector is a coordinate projector in the same transformed Fock basis. Every branch selected in a nonzero mixed-sector entry has weight at most $\theta$, so the product giving that entry is also at most $\theta$. The dominant remainders and mixed contributions act on mutually orthogonal subspaces, all orthogonal to the complementary block vectors. Thus $A_p$ is the identity on the span of the complementary block vectors and a positive operator of norm at most $\theta$ on its orthogonal complement, proving both claims.
\end{proof}

The lemma verifies the input structure required by Prop.~\ref{prop:exact-gram-splitting}. Apply that algorithm to an orthonormal basis of $W_p^{\mathrm H}$, again returning \textsc{Fail} if the call does so. With probability one, its outputs $\hat\omega_1^{\mathrm H},\ldots,\hat\omega_{d_{\mathrm H}}^{\mathrm H}$ are precisely the complementary blocks, up to phases and a permutation.

\subsubsection{Assembling the state and bounding the error}

We have recovered the complementary blocks and the dominant blocks whose coherence exceeds $t$. The remaining high-occupation modes belong to the unresolved dominant branches. Compute
\begin{align}
    K&:=\sum_{j=1}^{d_{\mathrm N}}\Gamma_{\hat f_j}^{(1)},\qquad F:=\operatorname{ran}K,\qquad S_{\mathrm{core}}:=S_\theta\cap F^\perp.
    \label{eq:uniform-main-retained-support-operator}
\end{align}
For an empty branch list, use $K=0$ and $F=\{0\}$. Choose an orthonormal basis of $S_{\mathrm{core}}$ and let $\hat\sigma_{\mathrm{core}}$ be the state occupying all its basis modes, using the vacuum when the core is zero-dimensional. The output is the factorized description
\begin{align}
    \Phi&:=\hat\sigma_{\mathrm{core}}\wedge\bigwedge_{j=1}^{d_{\mathrm N}}\hat\omega_j^{\mathrm N}\wedge\bigwedge_{j=1}^{d_{\mathrm H}}\hat\omega_j^{\mathrm H},
    \label{eq:uniform-main-output-vector}
\end{align}
with factors in a fixed order. Return the core basis and the two ordered block lists. The following lemma identifies this output and bounds the loss from replacing unresolved blocks by their dominant branches.

\begin{lemma}[Output and truncation]\label{lem:equal-p-exact-output}
On the probability-one event that both Gram-splitting calls recover their factors, the core is
\begin{align}
    S_{\mathrm{core}}&=\bigoplus_{b\in\mathcal N_\theta\setminus\mathcal R}F_b.
    \label{eq:equal-p-exact-core-decomposition}
\end{align}
For the analysis, define the truncated target $\Psi_{\mathrm{tr}}$ using the exact retained blocks and any normalized occupied state $\sigma_{\mathrm{core}}$ of this space. The exact-RDM output $\Phi$ agrees with this comparison state up to an overall phase:
\begin{align}
    \Psi_{\mathrm{tr}}&:=\sigma_{\mathrm{core}}\wedge\bigwedge_{b\in\mathcal R}\omega_b\wedge\bigwedge_{b\in\mathcal H_\theta}\omega_b.
    \label{eq:equal-p-truncated-target}
\end{align}
These factors have mutually orthogonal one-particle supports, so $\Phi$ is normalized. Its infidelity satisfies $1-|\langle\Phi,\Psi\rangle|^2\leq3ms_0^2$.
\end{lemma}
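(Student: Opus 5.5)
I will prove Lem.~\ref{lem:equal-p-exact-output} in three steps. First I identify the core. Then I show that $\Phi$ equals $\Psi_{\mathrm{tr}}$ up to phase. Finally I compute the overlap with $\Psi$ factor by factor.

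\emph{Core.} On the success event, Prop.~\ref{prop:exact-gram-splitting} gives $\hat f_j=e^{i\varphi_j}f_{\pi(j)}$ with $\pi$ a bijection onto $\mathcal R$. By Prop.~\ref{prop:occupied-subspace-rdm}, $\Gamma_{\hat f_j}^{(1)}=P_{F_{\pi(j)}}$, since the phase drops out. The $F_b$ are mutually orthogonal, so $K=\sum_{b\in\mathcal R}P_{F_b}$ is the projector onto $F=\bigoplus_{b\in\mathcal R}F_b$. Lem.~\ref{lem:equal-p-df-high-space} gives $S_\theta=\bigoplus_{b\in\mathcal N_\theta}F_b$, and $F\subseteq S_\theta$. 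Hence
\begin{align*}
S_{\mathrm{core}}=S_\theta\cap F^\perp=\bigoplus_{b\in\mathcal N_\theta\setminus\mathcal R}F_b .
\end{align*}

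\emph{Identification with $\Psi_{\mathrm{tr}}$.} The space $\wedge^{\dim S_{\mathrm{core}}}S_{\mathrm{core}}$ is one-dimensional, so $\hat\sigma_{\mathrm{core}}$ equals $\sigma_{\mathrm{core}}$ up to phase. The same holds for $\bigwedge_{b\notin\mathcal R,\,b\in\mathcal N_\theta}f_b$, since it is a normalized top wedge of $S_{\mathrm{core}}$. Eq.~\eqref{eq:equal-p-exact-near-block-recovery} identifies the $\hat\omega^{\mathrm N}_j$ with $\omega_b$ for $b\in\mathcal R$, up to phases. The Gram-splitting outputs on $W_p^{\mathrm H}$ are the $\omega_b$ for $b\in\mathcal H_\theta$ by Lem.~\ref{lem:equal-p-df-strong-gap}. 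Reordering wedge factors changes only a sign, because each has definite particle number. Therefore $\Phi=e^{i\phi}\Psi_{\mathrm{tr}}$.

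The supports $F_b$ ($b\notin\mathcal R$), $B_b$ ($b\in\mathcal R$) and $B_b$ ($b\in\mathcal H_\theta$) are mutually orthogonal. The wedge of normalized vectors on orthogonal supports is normalized, by the tensor identification of App.~\ref{app:block-product-rdms}. So $\Phi$ is normalized.

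\emph{Fidelity.} Write $\Psi=\omega_1\wedge\cdots\wedge\omega_n$ and reorder it with the same sign convention as $\Psi_{\mathrm{tr}}$. Both vectors are then wedges over the same orthogonal decomposition $H=\bigoplus_b B_b$. The inner product therefore factorizes as a product of blockwise overlaps, using the isometric identification $\mathcal H_{\mathbf q}\simeq\bigotimes\wedge^{q_a}H_a$.

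Retained and complementary blocks contribute overlap $1$. For an excluded block, $\langle f_b,\omega_b\rangle=\sqrt{w_b}$ by Eq.~\eqref{eq:uniform-near-block-decomposition}, since $r_b\perp f_b$. This gives
\begin{align*}
|\langle\Phi,\Psi\rangle|^2=\prod_{b\in\mathcal N_\theta\setminus\mathcal R}w_b\geq1-\sum_{b\in\mathcal N_\theta\setminus\mathcal R}(1-w_b).
\end{align*}
For each excluded $b$ we have $\sigma_b\leq t\leq2s_0$ and $w_b>\theta\geq2/3$. Hence $1-w_b=\sigma_b^2/w_b\leq 4s_0^2\cdot\tfrac32=6s_0^2$.

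Each block has at least $2p\geq4$ modes, so the number of excluded blocks is at most $n\leq m/4$. The total infidelity is thus at most $6s_0^2\cdot m/4\leq 3ms_0^2/2\leq3ms_0^2$. With $s_0^2=\varepsilon_{\mathrm{fid}}/(48m)$ this equals $\varepsilon_{\mathrm{fid}}/16$.

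\emph{Main obstacle.} The delicate point is the sign bookkeeping in the factorized overlap. I would handle it by writing $\sigma_{\mathrm{core}}$ explicitly as the ordered wedge of the excluded $f_b$ times a phase. Then $\Psi_{\mathrm{tr}}$ and the reordered $\Psi$ differ only factorwise ($f_b$ versus $\omega_b$), and Prop.~\ref{prop:blockwise-rdm-factorization}'s isometry applies directly.
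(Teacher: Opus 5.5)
Your proposal is correct and follows essentially the same route as the paper: you identify $K=\sum_{b\in\mathcal R}P_{F_b}$ from the recovered branches and remove its range from $S_\theta$, match $\hat\sigma_{\mathrm{core}}$ with the wedge of the unresolved dominant branches, and bound the infidelity by $\sum_{b\in\mathcal N_\theta\setminus\mathcal R}(1-w_b)$ with $1-w_b\leq6s_0^2$. The differences are cosmetic: you make the wedge-reordering signs explicit, and you use $s_b\geq2$ to get $n\leq m/4$, whereas the paper uses the weaker $n\leq m/p\leq m/2$, which already suffices.
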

\begin{proof}
The occupied-subspace identity in Eq.~\eqref{eq:occupied-subspace-rdm} and the recovery of the retained branches give $K=\sum_{b\in\mathcal R}P_{F_b}$. Removing its range from the high space identified in Lem.~\ref{lem:equal-p-df-high-space} proves Eq.~\eqref{eq:equal-p-exact-core-decomposition}. The wedge product of the unresolved dominant branches occupies every mode in this core and therefore agrees with $\hat\sigma_{\mathrm{core}}$ up to phase. Since the other blocks are recovered exactly up to phases and permutations, the returned state is $\Psi_{\mathrm{tr}}$ up to an overall phase.

For every unresolved block, $\sigma_b\leq t\leq2s_0$ and $w_b>\theta\geq2/3$, giving $1-w_b=\sigma_b^2/w_b\leq6s_0^2$. Only these blocks change in $\Psi_{\mathrm{tr}}$, and their supports are mutually orthogonal. Hence
\begin{align}
    |\langle\Psi_{\mathrm{tr}},\Psi\rangle|^2&=\prod_{b\in\mathcal N_\theta\setminus\mathcal R}w_b,\qquad 1-|\langle\Psi_{\mathrm{tr}},\Psi\rangle|^2\leq\sum_{b\in\mathcal N_\theta\setminus\mathcal R}(1-w_b)\leq6ns_0^2\leq3ms_0^2.
\end{align}
The last inequality uses $p\geq2$ and $np\leq m$.
\end{proof}

\begin{proof}[Completion of the proof of Thm.~\ref{thm:exact-same-particle-reconstruction}]
Each Gram-splitting call succeeds with probability one by Prop.~\ref{prop:exact-gram-splitting}, so both do so with probability one. Lem.~\ref{lem:equal-p-exact-output} and the choice $s_0^2=\varepsilon_{\mathrm{fid}}/(48m)$ then imply the claimed fidelity.

For fixed $p$, the projectors, compressions, and spectral decompositions act on spaces of dimension at most $\binom{m}{p}$ or $m$, and both Gram-splitting calls have polynomial arithmetic cost. Each recovered block has $\binom{m}{p}$ coefficients, there are at most $n\leq m/p$ such blocks, and the core is stored through at most $m$ orthonormal one-particle vectors. Thus constructing and storing the returned description has polynomial cost in $m$. Every operation uses the supplied RDMs and classically recovered vectors, so no additional copies of the target are required.
\end{proof}

\section{Exact reconstruction for heterogeneous blocks}
\label{app:heterogeneous-blocks}

We extend the exact reconstruction guarantee to the heterogeneous setting, allowing different block particle numbers, an always-occupied component, and additional vacuum modes. After fixing the state notation, we state the guarantee and construct the algorithm by proving the spectral identities and product-removal property required at each order.

\subsection{Problem reformulation and notation}
\label{app:heterogeneous-problem-reformulation}

Fix a known integer $2\leq r\leq m$. Let $I_0\subseteq[m]$. For each block $b\in[n]$, fix integers $2\leq p_b\leq r$ and $s_b\geq2$, and let $I_{b,l}\subseteq[m]$ satisfy $|I_{b,l}|=p_b$ for $l\in[s_b]$. Assume that $I_0$ and all sets $I_{b,l}$ are mutually disjoint, and that all coefficients $\omega_{b,l}$ are nonzero and satisfy $\sum_{l=1}^{s_b}|\omega_{b,l}|^2=1$.

Modes outside $I_0\cup\bigcup_{b,l}I_{b,l}$ are in the vacuum state. Use the input block states and transformed branch and block notation of App.~\ref{app:uniform-problem-reformulation}, with $p$ replaced by $p_b$ for block $b$. In particular, $\omega_b=(\wedge^{p_b}U)\omega_b^{\mathrm{in}}\in\wedge^{p_b}B_b$ is normalized. For the always-occupied component, define
\begin{align*}
    S:=\operatorname{span}\{Ue_i:i\in I_0\}, \qquad \sigma_S:=(\wedge^{|I_0|}U)e_{I_0}.
\end{align*}
When $I_0=\varnothing$, use $S=\{0\}$ and $\sigma_S=1$. The full input state and its particle number are
\begin{align}
    \Psi_{\mathrm{in}}:=e_{I_0}\wedge\omega_1^{\mathrm{in}}\wedge\cdots\wedge\omega_n^{\mathrm{in}}, \qquad N:=|I_0|+\sum_{b=1}^{n}p_b.
    \label{eq:heterogeneous-input-target}
\end{align}
The spaces $S,B_1,\ldots,B_n$ are mutually orthogonal, and the target is
\begin{align}
    \Psi :=(\wedge^NU)\Psi_{\mathrm{in}} =\sigma_S\wedge\omega_1\wedge\cdots\wedge\omega_n \in\wedge^NH.
    \label{eq:heterogeneous-output-target}
\end{align}

\subsection{Reconstruction from exact RDMs}
\label{app:different-particle-algorithm}
\label{app:different-particle-correctness}

At order $k$, the RDM can contain both individual $k$-particle blocks and products of smaller blocks. We therefore process the orders in increasing sequence, using the blocks recovered at earlier orders to remove these product directions before applying Gram splitting. As in the homogeneous case, we allow a controlled truncation of dominant blocks so that the construction extends to estimated RDMs.

\begin{samepage}
\begin{theorem}[Reconstruction from exact RDMs]\label{thm:ideal-delta-free-correctness}
Let $\Psi$ be an $m$-mode state in the heterogeneous setting, with block particle numbers at most a known $2\leq r\leq m$. Given $r$, $\varepsilon_{\mathrm{fid}}\in(0,1)$, and exact classical descriptions of $\Gamma_\Psi^{(1)},\ldots,\Gamma_\Psi^{(r)}$, a randomized classical algorithm returns, with probability one, a compact classical description of a normalized state $\Phi$ satisfying
\begin{align}
    1-|\langle\Phi,\Psi\rangle|^2&\leq\frac{\varepsilon_{\mathrm{fid}}}{16}.
    \label{eq:heterogeneous-exact-reconstruction-guarantee}
\end{align}
The description consists of an orthonormal basis for a fully occupied subspace and a list of normalized block vectors, each with particle number between $2$ and $r$, all with mutually orthogonal one-particle supports. The individual particle numbers, block decomposition, and passive Gaussian unitary are not required as input. For fixed $r$, the algorithm uses polynomially many arithmetic operations in $m$ and no additional copies of the target state.
\end{theorem}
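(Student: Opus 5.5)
The proof runs by induction over orders $k=2,\ldots,r$, following the exact recursion of Sec.~\ref{subsec:bounded-order-k-recovery} and reusing the homogeneous lemmas of App.~\ref{app:homogeneous-blocks}. First, apply Prop.~\ref{prop:blockwise-rdm-factorization} to $\Psi=\sigma_S\wedge\omega_1\wedge\cdots\wedge\omega_n$, treating $\sigma_S$ as an extra single-branch block. At order one this gives $\Gamma_\Psi^{(1)}=P_S+\sum_{b,l}|\omega_{b,l}|^2P_{F_{b,l}}$. Hence $S_\theta=S\oplus\bigoplus_{b\in\mathcal N_\theta}F_b$, exactly as in Lem.~\ref{lem:equal-p-df-high-space}. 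Vacuum modes lie in the kernel, and non-dominant branches lie in $S_\theta^\perp$.

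At order $k$, the sectors are labelled by $(q_S,q_1,\ldots,q_n)$. For each sector, I would compute the compressions $C_k=Q_{\mathrm{lo},k}\Gamma_\Psi^{(k)}Q_{\mathrm{hi},k}$ and $A_k=Q_{\mathrm{lo},k}\Gamma_\Psi^{(k)}Q_{\mathrm{lo},k}$, working in the transformed input Fock basis where both projectors are coordinate projectors. A factor $\Gamma^{(q_b)}_{\omega_b}$ with $0<q_b<p_b$ is diagonal by Prop.~\ref{prop:correlated-block-rdm}, so it kills any cross compression and contributes at most $\theta$ to $A_k$ when $b\notin\mathcal N_\theta$. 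A factor with $q_S>0$ is annihilated by $Q_{\mathrm{lo},k}$. The surviving pieces of $C_k$ therefore come from sectors in which every selected block is complete and dominant. Using $Q_{\mathrm{hi}}\omega_b=\sqrt{w_b}f_b$ and $Q_{\mathrm{lo}}\omega_b=r_b$, such a sector gives the rank-one term $\prod_{b\in J}\sqrt{w_b}\,|\wedge_{b\in J} r_b\rangle\langle\wedge_{b\in J} f_b|$, where $J$ is the set of selected blocks with $\sum_{b\in J} p_b=k$. Terms for different $J$ have orthogonal left and right vectors, so the singular values are $\prod_{b\in J}\sigma_b$ with right singular vectors $\wedge_{b\in J} f_b$. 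Likewise, $A_k$ equals the projector onto $\operatorname{span}\{\wedge_{b\in J}\omega_b: J\subseteq\mathcal H_\theta,\ \sum_J p_b=k\}$ plus a remainder bounded by $\theta$ on its complement, generalizing Lem.~\ref{lem:equal-p-df-strong-gap}.

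The inductive hypothesis at order $k$ is that every retained dominant branch $f_b$ with $p_b<k$ and $\sigma_b>t_{p_b}$, and every complementary block with $p_b<k$, has already been recovered up to phase. For the product-removal step, take a product with $|J|\geq2$ lying in $R_k$. If some constituent was excluded at a lower order, its singular value is at most $2s_0\cdot\tfrac12=s_0\leq t_k$, so it is not in $R_k$ after all. Every product in $R_k$ is therefore built from recovered factors and lies in $O_k^{\mathrm N}$. Conversely, $O_k^{\mathrm N}$ is spanned by products and is orthogonal to the new $k$-particle branches, whose one-particle supports are orthogonal. This yields \eqref{eq:bounded-exact-old-new-decomposition} and the stated form of $F_k^{\mathrm N}$. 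The same argument, without thresholds, handles $T_k^{\mathrm H}$. One detail needs checking: a product of recovered factors has the same span as a product of the exact factors, because the phases only rescale.

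Next, apply Prop.~\ref{prop:exact-gram-splitting} to $F_k^{\mathrm N}$ and $T_k^{\mathrm H}$. The required orthogonal-support hypothesis now holds, and the full dominant blocks follow from $\Gamma_\Psi^{(k)}f_b=\sqrt{w_b}\omega_b$, with norm greater than $\sqrt\theta$. To assemble the state, the core is $S_{\mathrm{core}}=S\oplus\bigoplus_{b\in\mathcal N_\theta\setminus\mathcal R}F_b$, and the output equals the truncated target up to phase. The infidelity is at most $\sum(1-w_b)\leq6ns_0^2\leq3ms_0^2=\varepsilon_{\mathrm{fid}}/16$, as in Lem.~\ref{lem:equal-p-exact-output}. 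Polynomial cost for fixed $r$ follows because there are at most $m^{\lfloor k/2\rfloor}$ products at each order. I expect the main obstacle to be the sector bookkeeping for $C_k$: one must verify that partially selected dominant blocks and the $S$ factor give no cross terms, and that distinct $J$ have orthogonal singular vectors, so that the product-threshold argument is valid.
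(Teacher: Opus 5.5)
Your proposal is correct and follows essentially the same route as the paper. The shared steps are the sector-by-sector computation of $C_k$ and $A_k$ (the paper's Lems.~\ref{lem:heterogeneous-coherence-decomposition} and~\ref{lem:constant-gap-strong-hierarchy}), the $2s_0\cdot\tfrac12\leq t_k$ argument showing that retained products have only recovered constituents (Lem.~\ref{lem:near-hierarchy-closure}), induction over $k$ with exact Gram splitting, and the core and truncation bound of Lem.~\ref{lem:heterogeneous-exact-output}. Your derivation of $F_k^{\mathrm N}$ differs slightly: retained products lie in $O_k^{\mathrm N}$ and new singletons are orthogonal to it. This is a valid substitute for the paper's observation that $P_{R_k}$ and $P_{O_k^{\mathrm N}}$ commute, since both spaces are spanned by subsets of the orthonormal family $\{f_J\}$.
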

\end{samepage}

We first identify the common high-occupation space. We then process $k=2,\ldots,r$ in increasing order, performing the dominant- and complementary-block recovery steps below at each order, and finally assemble the recovered blocks with the remaining occupied core. Both recovery steps use only outputs from smaller orders. We use the convention $\Gamma_\Psi^{(k)}=0$ for $k>N$.

\subsubsection{Identifying the high-occupation space}

Choose $\theta\in[2/3,3/4]$ and compute
\begin{align}
    S_\theta&:=\operatorname{ran}\mathbf1_{(\theta,\infty)}\left(\Gamma_\Psi^{(1)}\right).
    \label{eq:ideal-high-occupation-space}
\end{align}
To identify the branches selected by this threshold, set
\begin{align}
    w_b&:=\max_{l\in[s_b]}|\omega_{b,l}|^2,\qquad \mathcal N_\theta:=\{b\in[n]:w_b>\theta\}.
    \label{eq:heterogeneous-near-block-set}
\end{align}
We call the blocks in $\mathcal N_\theta$ dominant and the others complementary. Since $\theta>1/2$, each dominant block has a unique branch of weight above $\theta$; denote its occupied one-particle support by $F_b$.

\begin{lemma}[High-occupation space]\label{lem:heterogeneous-high-space}
The $1$-RDM and its high-occupation space satisfy
\begin{align}
    \Gamma_\Psi^{(1)}&=P_S+\sum_{b=1}^n\sum_{l=1}^{s_b}|\omega_{b,l}|^2P_{F_{b,l}},\qquad S_\theta=S\oplus\bigoplus_{b\in\mathcal N_\theta}F_b.
    \label{eq:exact-high-occupation-decomposition}
\end{align}
Every other branch support lies in $S_\theta^\perp$.
\end{lemma}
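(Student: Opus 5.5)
The plan is to mirror the proof of Lem.~\ref{lem:equal-p-df-high-space}, treating the always-occupied component $\sigma_S$ as one more factor in the sector factorization.

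First, apply Prop.~\ref{prop:blockwise-rdm-factorization} to the input state $\Psi_{\mathrm{in}}=e_{I_0}\wedge\omega_1^{\mathrm{in}}\wedge\cdots\wedge\omega_n^{\mathrm{in}}$ at order $k=1$. Take the orthogonal decomposition of $H$ into the coordinate spans of $I_0$, of each block's modes $\bigcup_lI_{b,l}$, and of the remaining vacuum modes, the last playing the role of $H_0$. A sector at order one selects exactly one particle from exactly one factor, and every other factor contributes its order-zero RDM, which equals $1$. The vacuum sector contributes zero. The $I_0$ sector contributes $\Gamma^{(1)}_{e_{I_0}}=P_{\operatorname{span}\{e_i:i\in I_0\}}$ by Prop.~\ref{prop:occupied-subspace-rdm}. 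Block $b$ contributes $\sum_l|\omega_{b,l}|^2P_{\operatorname{span}\{e_i:i\in I_{b,l}\}}$ by Prop.~\ref{prop:correlated-block-rdm} with $q=1<p_b$; here $p_b\geq2$ is what kills the cross-branch coherence. Summing these direct-sum pieces and conjugating by $U$ via Eq.~\eqref{eq:rdm-unitary-covariance} then yields the displayed formula, with $UP_{\mathrm{span}}U^\dagger=P_{UE}$.

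Second, the formula is already a spectral decomposition: $S$, the $F_{b,l}$, and the orthogonal complement of their sum (the transformed vacuum, eigenvalue $0$) are mutually orthogonal. The eigenvalue of $S$ is $1>\theta$, and the eigenvalue of each $F_{b,l}$ is $|\omega_{b,l}|^2$. Hence $\mathbf1_{(\theta,\infty)}(\Gamma^{(1)}_\Psi)$ is the sum of the projectors onto those pieces with eigenvalue above $\theta$, namely $S$ together with the branches of weight $>\theta$. Because $\theta>1/2$ and the weights in a block sum to one, at most one branch per block qualifies. A branch qualifies exactly when $w_b>\theta$, and the qualifying branch is then the dominant branch $F_b$. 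This gives $S_\theta=S\oplus\bigoplus_{b\in\mathcal N_\theta}F_b$.

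Finally, every other branch support $F_{b,l}$ is orthogonal to $S$ and to all $F_{b'}$ by mutual orthogonality, so it lies in $S_\theta^\perp$. The same holds for the vacuum directions. The only point needing care is the bookkeeping in the first step: the $I_0$ factor must be included as an extra block with a single branch, and the vacuum modes must be placed in $H_0$. With that done, no step is a genuine obstacle.
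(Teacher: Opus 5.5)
Your proposal is correct and follows essentially the same route as the paper. You use the order-one sector factorization of Prop.~\ref{prop:blockwise-rdm-factorization}. The always-occupied factor contributes $P_S$ via Prop.~\ref{prop:occupied-subspace-rdm}, and each block contributes its weighted branch projectors via Prop.~\ref{prop:correlated-block-rdm}. You then threshold a sum of projectors with mutually orthogonal ranges. Compared with the paper's proof, you only spell out the covariance step and the use of $\theta>1/2$ to show that at most one branch per block survives the threshold.
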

\begin{proof}
The blockwise factorization in Prop.~\ref{prop:blockwise-rdm-factorization} and the individual-block formula in Prop.~\ref{prop:correlated-block-rdm} give the block contributions, while the always-occupied component contributes $P_S$ by Eq.~\eqref{eq:occupied-subspace-rdm}. These operators have mutually orthogonal supports. Spectral thresholding therefore retains $S$ and precisely the branches of weight above $\theta$.
\end{proof}

For $k=2,\ldots,r$, form the cross and all-low compressions
\begin{align}
    Q_{\mathrm{hi},k}&:=P_{\wedge^kS_\theta},\qquad Q_{\mathrm{lo},k}:=P_{\wedge^k(S_\theta^\perp)},\qquad C_k:=Q_{\mathrm{lo},k}\Gamma_\Psi^{(k)}Q_{\mathrm{hi},k},\qquad A_k:=Q_{\mathrm{lo},k}\Gamma_\Psi^{(k)}Q_{\mathrm{lo},k}.
    \label{eq:ideal-compressed-operators}
\end{align}
The cross compression accesses dominant-block coherence, while the all-low compression contains the complementary blocks. Since $S\subseteq S_\theta$, the low projection removes any sector selecting particles from the always-occupied component.

\subsubsection{Recovering the dominant blocks}

For $b\in\mathcal N_\theta$, write $\omega_b=\sqrt{w_b}f_b+r_b$ with the phase convention of Eq.~\eqref{eq:uniform-near-block-decomposition}, now with $p$ replaced by $p_b$. Thus $f_b$ occupies $F_b$, $r_b\in\wedge^{p_b}G_b$ for $G_b:=B_b\cap F_b^\perp$, and $\sigma_b:=\sqrt{w_b(1-w_b)}\leq1/2$ is the block's coherence scale. To describe products of complete blocks, set $p(J):=\sum_{b\in J}p_b$ for $J\subseteq[n]$, and define
\begin{align}
    f_J&:=\bigwedge_{b\in J}f_b,\qquad r_J:=\bigwedge_{b\in J}r_b,\qquad J\subseteq\mathcal N_\theta,
    \label{eq:heterogeneous-subset-products}
\end{align}
with factors ordered by increasing block label. The following lemma identifies both the individual branches and the product directions in $C_k$.

\begin{lemma}[Structure of the cross compression]\label{lem:heterogeneous-coherence-decomposition}
For every $k=2,\ldots,r$,
\begin{align}
    C_k&=\sum_{\substack{J\subseteq\mathcal N_\theta\\p(J)=k}}\left(\prod_{b\in J}\sqrt{w_b}\right)|r_J\rangle\langle f_J|.
    \label{eq:heterogeneous-coherence-decomposition}
\end{align}
All products use the same fixed block order on the bra and ket sides. The nonzero singular values are
\begin{align}
    \prod_{b\in J}\sigma_b, \qquad J\subseteq\mathcal N_\theta, \qquad p(J)=k,
    \label{eq:heterogeneous-coherence-singular-values}
\end{align}
and the corresponding right singular vectors may be chosen from the orthonormal family $\{f_J\}$. In particular, for every
$b\in\mathcal N_\theta$,
\begin{align}
    C_{p_b}f_b&=\sqrt{w_b}r_b, \qquad \Gamma_\Psi^{(p_b)}f_b=\sqrt{w_b}\omega_b.
    \label{eq:exact-near-identities}
\end{align}
In particular, the rank is bounded independently of the branch weights:
\begin{align}
    \operatorname{rank}C_k&\leq\sum_{j=1}^{\lfloor k/2\rfloor}\binom nj\leq r m^{\lfloor k/2\rfloor}.
    \label{eq:df-coherence-rank-count}
\end{align}
\end{lemma}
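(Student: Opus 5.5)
The plan is to work in the input Fock basis via the covariance identity Eq.~\eqref{eq:rdm-unitary-covariance} and read off $C_k$ from the sector factorization of Prop.~\ref{prop:blockwise-rdm-factorization}, treating the always-occupied component $\sigma_S$ as an extra factor. Since $\wedge^kU$ maps $\wedge^k S_\theta$ and $\wedge^k(S_\theta^\perp)$ of the input onto those of the output, the compressions commute with the conjugation. By Lem.~\ref{lem:heterogeneous-high-space}, $S_\theta$ and $S_\theta^\perp$ are coordinate subspaces in the input basis. Both $Q_{\mathrm{hi},k}$ and $Q_{\mathrm{lo},k}$ therefore act diagonally on the Fock basis, and they are compatible with the sector decomposition $\mathcal H_{\mathbf q}$. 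Any sector with $q_S>0$ is killed by $Q_{\mathrm{lo},k}$. So $C_k$ is the direct sum over sectors $\mathbf q$, with $q_S=0$ and $\sum_b q_b=k$, of $Q_{\mathrm{lo}}\bigl(\bigotimes_b\Gamma_{\omega_b}^{(q_b)}\bigr)Q_{\mathrm{hi}}$.

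Next I classify sectors by whether each block is partially or completely selected. If some $q_b$ satisfies $0<q_b<p_b$, then by Prop.~\ref{prop:correlated-block-rdm} the factor $\Gamma_{\omega_b}^{(q_b)}=\sum_l|\omega_{b,l}|^2P_{\wedge^{q_b}F_{b,l}}$ is diagonal with each branch lying entirely in $S_\theta$ or in $S_\theta^\perp$. The tensor product is then diagonal in a Fock basis adapted to the split, so its lo--hi compression vanishes. This leaves only sectors with $q_b\in\{0,p_b\}$ for every $b$, indexed by $J=\{b:q_b=p_b\}$ with $p(J)=k$, where the restriction is $\bigotimes_{b\in J}\Pi_{\omega_b}$. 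Under the identification of $\mathcal H_{\mathbf q}$ with a tensor product, $Q_{\mathrm{hi}}$ restricted there is $\bigotimes_b P_{\wedge^{p_b}(B_b\cap S_\theta)}$, and similarly for $Q_{\mathrm{lo}}$. Hence $Q_{\mathrm{hi}}\omega_b=\sqrt{w_b}f_b$ if $b\in\mathcal N_\theta$ and $0$ otherwise, while $Q_{\mathrm{lo}}\omega_b=r_b$. Every $b\in J$ must therefore be dominant, and the sector contributes $\prod_{b\in J}\sqrt{w_b}\,|r_J\rangle\langle f_J|$ once the tensor identification is translated back to wedge products in the fixed block order. This is the one place requiring care: the fermionic reordering signs are identical on the bra and ket sides, because both use the same $J$ and order, so they cancel.

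The singular-value statements then follow from orthogonality. The $f_J$ for distinct $J$ lie in distinct sectors and are orthonormal, and the same holds for the $r_J$ with $\|r_J\|^2=\prod_{b\in J}(1-w_b)$. This gives $C_k^\dagger C_k=\sum_J\prod_{b\in J}\sigma_b^2\,|f_J\rangle\langle f_J|$. For the identities, $C_{p_b}f_b=\sqrt{w_b}r_b$ is immediate. For $\Gamma_\Psi^{(p_b)}f_b$, note that $f_b$ lies in the complete-block sector of $b$ alone, and partial sectors are orthogonal to it. The restriction there is $\Pi_{\omega_b}$, which gives $\sqrt{w_b}\omega_b$. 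One subtlety I would check is that products $f_J$ with $|J|\geq2$ and $p(J)=p_b$ lie in different sectors from $f_b$, so they contribute no cross term.

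Finally, the rank is at most the number of $J$ with $p(J)=k$. Since $p_b\geq2$ gives $|J|\leq\lfloor k/2\rfloor$, this count is at most $\sum_{j\le\lfloor k/2\rfloor}\binom nj\leq\lfloor k/2\rfloor n^{\lfloor k/2\rfloor}\leq rm^{\lfloor k/2\rfloor}$. I expect the main obstacle to be the vanishing of the partial sectors in mixed cases where some blocks are complete and others partial. There, diagonality fails for the complete factors, but the partial factor alone already forces the lo--hi compression to vanish on the tensor product.
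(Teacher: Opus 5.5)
Your proposal is correct and follows essentially the same route as the paper. Both arguments use the sector factorization of Prop.~\ref{prop:blockwise-rdm-factorization}, the vanishing of the local lo--hi compression for any partially selected block, and the factor $\sqrt{w_b}|r_b\rangle\langle f_b|$ from each complete dominant block with matching bra/ket signs; the singular values then come from orthogonality of the $f_J$ and $r_J$, the identity $\Gamma_\Psi^{(p_b)}f_b=\sqrt{w_b}\omega_b$ from the single-block sector, and the rank bound from counting subsets. One intermediate sentence needs tightening: in mixed sectors the tensor product is not diagonal, and the valid reason, which you give at the end and which the paper uses, is that the compression factorizes over blocks and the partially selected block's local factor already vanishes.
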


\begin{proof}
A matrix element of $C_k$ compares removal from all-high modes with removal from all-low modes. Selecting a core particle is impossible on the low side, and selecting a complementary-block particle is impossible on the high side. A dominant block can contribute only if all its particles are removed: otherwise a residual particle distinguishes its high branch from every low branch and makes the overlap zero. By Prop.~\ref{prop:correlated-block-rdm}, the local cross compression is
\begin{align}
    P_{\wedge^qG_b}\Gamma_{\omega_b}^{(q)}P_{\wedge^qF_b}&=\begin{cases}
        1,&q=0,\\
        0,&1\leq q<p_b,\\
        \sqrt{w_b}|r_b\rangle\langle f_b|,&q=p_b.
    \end{cases}
    \label{eq:heterogeneous-local-cross-compression}
\end{align}
Thus the surviving contributions are exactly the subsets $J$ of complete dominant blocks with $p(J)=k$. Prop.~\ref{prop:blockwise-rdm-factorization} multiplies the local contributions, with the same fixed factor order on the bra and ket sides. Its sign cancellation gives the coefficient $\prod_{b\in J}\sqrt{w_b}$. This proves the formula for $C_k$.

Distinct subsets use different collections of orthogonal block supports, so the $f_J$ are orthonormal, as are the normalized $r_J$. Since $\|r_J\|=\prod_{b\in J}\|r_b\|$,
\begin{align}
    C_k^\dagger C_k&=\sum_{\substack{J\subseteq\mathcal N_\theta\\p(J)=k}} \left(\prod_{b\in J}\sigma_b^2\right)|f_J\rangle\langle f_J|.
\end{align}
This proves the singular-value statement. For a singleton $J=\{b\}$ it gives $C_{p_b}f_b=\sqrt{w_b}r_b$. The column $f_b$ belongs to the sector selecting all $p_b$ particles from block $b$ and none from the other factors. Products of other blocks therefore cannot contribute to this column of the uncompressed RDM. Its restriction is $\Pi_{\omega_b}$, giving $\Gamma_\Psi^{(p_b)}f_b=|\omega_b\rangle\langle\omega_b|f_b=\sqrt{w_b}\omega_b$.
Finally, each contributing subset has at most $\lfloor k/2\rfloor$ blocks because every block contains at least two particles. Counting such subsets proves Eq.~\eqref{eq:df-coherence-rank-count}.
\end{proof}

Since a block with small $\sigma_b$ is close to its dominant branch, use the resolution scale $s_0^2=\varepsilon_{\mathrm{fid}}/(48m)$ from Eq.~\eqref{eq:equal-p-df-s0} and choose thresholds $t_k\in[s_0,2s_0]$ for $k=2,\ldots,r$. At order $k$, compute
\begin{align}
    R_k&:=\operatorname{ran}\mathbf1_{(t_k^2,\infty)}(C_k^\dagger C_k).
    \label{eq:ideal-thresholded-right-space}
\end{align}
For the proof, denote the retained dominant blocks by
\begin{align}
    \mathcal R&:=\{b\in\mathcal N_\theta:\sigma_b>t_{p_b}\}.
    \label{eq:retained-near-set}
\end{align}
Lem.~\ref{lem:heterogeneous-coherence-decomposition} gives
\begin{align}
    R_k&=\operatorname{span}\left\{f_J:J\subseteq\mathcal N_\theta,\ p(J)=k,\ \prod_{b\in J}\sigma_b>t_k\right\}.
    \label{eq:exact-thresholded-right-space}
\end{align}
This space can contain products of several branches, whose one-particle supports need not be disjoint from those of other candidates. We must remove these products before invoking Gram splitting.

Let $O_k^{\mathrm N}$ be the span of exterior products of at least two distinct dominant branches recovered at smaller orders, each used at most once, with total degree $k$. Use $O_k^{\mathrm N}=\{0\}$ when the list is empty. In particular, no such product exists at orders $k=2,3$. Compute
\begin{align}
    F_k^{\mathrm N}&:=R_k\cap(O_k^{\mathrm N})^\perp,\qquad d_k^{\mathrm N}:=\dim F_k^{\mathrm N}.
    \label{eq:ideal-new-near-space}
\end{align}
Under the induction hypothesis that all retained dominant branches at smaller orders have been recovered up to phases and a permutation, the constructed old-product space is
\begin{align}
    O_k^{\mathrm N}&=\operatorname{span}\{f_J:J\subseteq\mathcal R,\ |J|\geq2,\ p(J)=k\}.
    \label{eq:near-old-new-spaces}
\end{align}
The common threshold interval ensures that these earlier outputs suffice to remove every retained product.

\begin{lemma}[Removal of earlier branch products]\label{lem:near-hierarchy-closure}
Let $J\subseteq\mathcal N_\theta$ satisfy $|J|\geq2$, $p(J)=k$, and $\prod_{b\in J}\sigma_b>t_k$. Then, $J\subseteq\mathcal R$. If all retained dominant branches of degree below $k$ have been recovered up to phases and a permutation,
\begin{align}
    R_k&=(R_k\cap O_k^{\mathrm N})\oplus F_k^{\mathrm N}, \qquad F_k^{\mathrm N} =R_k\cap(O_k^{\mathrm N})^\perp =\operatorname{span}\{f_b:b\in\mathcal R,\ p_b=k\}.
    \label{eq:near-old-new-decomposition}
\end{align}
Moreover, $P_{R_k}$ and $P_{O_k^{\mathrm N}}$ commute.
\end{lemma}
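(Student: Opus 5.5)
The plan has three parts: a threshold argument for the first claim, orthonormality of the family $\{f_J\}$ for the decomposition, and a simultaneous-diagonalization observation for the commutation claim.

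\textbf{Closure of retained products.} Let $J$ satisfy $|J|\geq2$, $p(J)=k$ and $\prod_{b\in J}\sigma_b>t_k$. Suppose for contradiction that some $a\in J$ is not in $\mathcal R$, so $\sigma_a\leq t_{p_a}\leq2s_0$. Every other factor has $\sigma_b=\sqrt{w_b(1-w_b)}\leq1/2$, and there is at least one such factor. Hence
\begin{align*}
\prod_{b\in J}\sigma_b\leq 2s_0\cdot\tfrac12=s_0\leq t_k,
\end{align*}
which is a contradiction. So $J\subseteq\mathcal R$, and each $b\in J$ has $p_b<k$, because $|J|\geq2$ and every $p_b\geq2$. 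This is exactly the point where the common interval $[s_0,2s_0]$ is used.

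\textbf{The decomposition.} By the induction hypothesis, the recovered branches are the $f_b$ for $b\in\mathcal R$ with $p_b<k$, up to phases. The products built from them are therefore $f_J$ up to phase and fermionic sign. Reordering the factors of a wedge product only changes the sign, so the span $O_k^{\mathrm N}$ is the one given in Eq.~\eqref{eq:near-old-new-spaces}.

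Both $R_k$ (by Eq.~\eqref{eq:exact-thresholded-right-space}) and $O_k^{\mathrm N}$ are spanned by subfamilies of the orthonormal family $\{f_J : J\subseteq\mathcal N_\theta,\ p(J)=k\}$ from Lem.~\ref{lem:heterogeneous-coherence-decomposition}. These vectors are orthonormal because distinct $J$ occupy distinct collections of mutually orthogonal block supports.

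For spaces spanned by subsets $\mathcal A$ and $\mathcal B$ of a common orthonormal family, the intersection is spanned by $\mathcal A\cap\mathcal B$, and $R_k\cap(O_k^{\mathrm N})^\perp$ is spanned by $\mathcal A\setminus\mathcal B$. The first paragraph shows that every non-singleton index in $\mathcal A$ lies in $\mathcal B$. Hence $\mathcal A\setminus\mathcal B$ consists of the singletons $\{b\}$ with $p_b=k$ and $\sigma_b>t_k$, that is, $b\in\mathcal R$ with $p_b=k$. This gives both the orthogonal decomposition $R_k=(R_k\cap O_k^{\mathrm N})\oplus F_k^{\mathrm N}$ and the description of $F_k^{\mathrm N}$ in Eq.~\eqref{eq:near-old-new-decomposition}.

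\textbf{Commutation.} $P_{R_k}$ and $P_{O_k^{\mathrm N}}$ are both diagonal in the same orthonormal family $\{f_J\}$, extended to an orthonormal basis of $\wedge^kH$, so they commute.

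The main care point is the identification of $O_k^{\mathrm N}$ with the span of the exact $f_J$. The phases and the arbitrary wedge ordering of recovered factors must only rescale each product by a unit scalar, and distinct subsets must never collapse to the same direction. Both facts follow from the orthogonality of the block supports.
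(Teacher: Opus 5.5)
Your proof is correct and follows the paper's argument. The $2s_0\cdot\tfrac12\le s_0\le t_k$ contradiction shows that retained products contain only resolved constituents, and both the decomposition and the commutation follow because $R_k$ and $O_k^{\mathrm N}$ are spanned by subfamilies of the orthonormal family $\{f_J\}$. Your index-set bookkeeping, which identifies $\mathcal A\setminus\mathcal B$ with the retained singletons, spells out what the paper states in a single line.
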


\begin{proof}
A retained product cannot contain an unresolved constituent. Indeed, if $b\in J$ were unresolved, then $\sigma_b\leq t_{p_b}\leq2s_0$. Since $|J|\geq2$ and every other coherence scale is at most $1/2$,
\begin{align}
    \prod_{c\in J}\sigma_c&\leq2s_0\cdot\frac12\leq s_0\leq t_k,
\end{align}
contradicting retention of that product. Every nonsingleton direction in $R_k$ is therefore generated by already recovered lower-order branches.

Since both $R_k$ and $O_k^{\mathrm N}$ are spanned by subsets of the same orthonormal family $\{f_J:p(J)=k\}$, their projectors commute, and removing the old-product coordinates from $R_k$ leaves precisely the retained singleton coordinates. This proves the stated decomposition. Some old products can lie below the current threshold, so $O_k^{\mathrm N}\subseteq R_k$ is not required.
\end{proof}

The remaining space $F_k^{\mathrm N}$ is therefore spanned by the new retained dominant branches, which have mutually orthogonal one-particle supports. Apply Prop.~\ref{prop:exact-gram-splitting} to an orthonormal basis of this space. If the call returns \textsc{Fail}, return \textsc{Fail}; otherwise denote its outputs by $\hat f_{k,1},\ldots,\hat f_{k,d_k^{\mathrm N}}$ and reconstruct
\begin{align}
    \hat\omega_{k,j}^{\mathrm N}&:=\frac{\Gamma_\Psi^{(k)}\hat f_{k,j}}{\|\Gamma_\Psi^{(k)}\hat f_{k,j}\|},\qquad j\in[d_k^{\mathrm N}].
    \label{eq:heterogeneous-exact-near-block-recovery}
\end{align}
With probability one, the recovered branches agree with the $f_b$ for $b\in\mathcal R$, $p_b=k$, up to phases and a permutation. Eq.~\eqref{eq:exact-near-identities} then shows that the normalization denominator is $\sqrt{w_b}>\sqrt\theta$ and that each output is the corresponding full block. Store the branch list for constructing $O_\ell^{\mathrm N}$ at later orders and the full-block list for final assembly.

\subsubsection{Recovering the complementary blocks}

For the complementary blocks, set
\begin{align}
    \mathcal H_\theta&:=[n]\setminus\mathcal N_\theta=\{b\in[n]:w_b\leq\theta\},
    \label{eq:heterogeneous-strong-block-set}
\end{align}
and write $\omega_J:=\bigwedge_{b\in J}\omega_b$ for $J\subseteq\mathcal H_\theta$, with factors in increasing block order. At order $k$, compute the eigenvalue-one space of the all-low compression:
\begin{align}
    W_k^{\mathrm H}&:=\operatorname{ran}\mathbf1_{\{1\}}(A_k).
    \label{eq:ideal-strong-all-space}
\end{align}
As with the dominant branches, we remove products of already recovered factors. Let $O_k^{\mathrm H}$ be the span of exterior products of at least two distinct complementary block states recovered at smaller orders, each used at most once, with total degree $k$, and use the zero space if there is no such product. Compute
\begin{align}
    T_k^{\mathrm H}&:=W_k^{\mathrm H}\cap(O_k^{\mathrm H})^\perp,\qquad d_k^{\mathrm H}:=\dim T_k^{\mathrm H}.
    \label{eq:ideal-new-strong-space}
\end{align}
Under the induction hypothesis that all complementary blocks at smaller orders have been recovered up to phases and a permutation,
\begin{align}
    O_k^{\mathrm H}&=\operatorname{span}\{\omega_J:J\subseteq\mathcal H_\theta,\ |J|\geq2,\ p(J)=k\}.
    \label{eq:strong-all-old-new-spaces}
\end{align}
The next lemma identifies the surviving blocks and gives the spectral gap needed when the RDMs are estimated.

\begin{lemma}[Gap of the all-low compression]\label{lem:constant-gap-strong-hierarchy}
The eigenvalue-one space of $A_k$ is
$W_k^{\mathrm H}=\operatorname{span}\{\omega_J:J\subseteq\mathcal H_\theta,\ p(J)=k\}$,
and, on the all-low space,
\begin{align}
    0\leq A_k-P_{W_k^{\mathrm H}}&\leq\theta\left(I_{\wedge^k(S_\theta^\perp)}-P_{W_k^{\mathrm H}}\right).
    \label{eq:strong-constant-gap-remainder}
\end{align}
Thus every remaining eigenvalue is at most $\theta$, with spectral gap at least $1-\theta\geq1/4$. If all complementary blocks of degree below $k$ have been recovered up to phases and a permutation,
\begin{align}
    W_k^{\mathrm H}&=O_k^{\mathrm H}\oplus T_k^{\mathrm H}, \qquad T_k^{\mathrm H}=W_k^{\mathrm H}\cap(O_k^{\mathrm H})^\perp =\operatorname{span}\{\omega_b:b\in\mathcal H_\theta,\ p_b=k\}.
    \label{eq:strong-old-new-decomposition}
\end{align}
\end{lemma}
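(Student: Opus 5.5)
The plan is to work in the input Fock basis and then transfer to the output basis, exactly as in Lem.~\ref{lem:equal-p-df-strong-gap}. By covariance, Eq.~\eqref{eq:rdm-unitary-covariance}, $A_k$ is unitarily equivalent to the compression of the input $k$-RDM. In that basis $Q_{\mathrm{lo},k}$ is the coordinate projector onto $k$-subsets drawn only from input modes of non-dominant branches and vacuum modes. I would apply Prop.~\ref{prop:blockwise-rdm-factorization} to $\Psi_{\mathrm{in}}$, treating $e_{I_0}$ as an additional factor and placing the vacuum modes in $H_0$. This writes $\Gamma^{(k)}_{\Psi_{\mathrm{in}}}$ as a direct sum over sectors $\mathbf q$ of the tensor products $\bigotimes_a\Gamma^{(q_a)}_{\psi_a}$. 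Sectors with $q_0>0$ vanish. Sectors selecting core particles are killed by $Q_{\mathrm{lo},k}$, since $S\subseteq S_\theta$.

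Within a surviving sector, $Q_{\mathrm{lo},k}$ restricted to that sector factorizes as $\bigotimes_a P_{\wedge^{q_a}(B_a\cap S_\theta^\perp)}$. This holds because the low space splits as $\bigoplus_a (B_a\cap S_\theta^\perp)\oplus(\text{vacuum})$, and each $B_a$ is a sum of branch supports lying entirely on one side. So each sector of $A_k$ is a tensor product of local compressions $L_a^{(q_a)}:=P\,\Gamma^{(q_a)}_{\omega_a}P$. I would then classify these local factors using Prop.~\ref{prop:correlated-block-rdm}.
\begin{itemize}
\item For $q_a=0$, the factor is $1$.
\item For $q_a=p_a$ and $a\in\mathcal H_\theta$, it is $\Pi_{\omega_a}$, since the whole block is low.
\item For $q_a=p_a$ and $a\in\mathcal N_\theta$, it is $|r_a\rangle\langle r_a|$, with norm $1-w_a<1-\theta\le\theta$.
\item For $0<q_a<p_a$, it is the positive operator $\sum_{l\,\text{low}}|\omega_{a,l}|^2P_{\wedge^{q_a}F_{a,l}}$. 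Every low branch has weight at most $\theta$ (for dominant blocks the low branches have weight $\le 1-w_a<\theta$), so its norm is at most $\theta$.
\end{itemize}

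A sector therefore has norm exactly one, with eigenvalue-one eigenvector $\omega_J$, precisely when every active factor is a complete complementary block. In that case the sector operator is $\Pi_{\omega_J}$ for $J=\{a:q_a=p_a\}\subseteq\mathcal H_\theta$ with $p(J)=k$. Every other sector is a positive tensor product containing at least one factor of norm $\le\theta$, with all others of norm $\le1$, so its norm is at most $\theta$. Because sectors are mutually orthogonal, $A_k$ equals $\sum_J\Pi_{\omega_J}$ plus a positive operator of norm at most $\theta$ supported orthogonally to all the $\omega_J$. The $\omega_J$ are orthonormal, since distinct $J$ lie in distinct sectors. This gives the description of $W_k^{\mathrm H}$ and the operator inequality \eqref{eq:strong-constant-gap-remainder}, and conjugating back by $\wedge^kU$ preserves all of it.

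For the decomposition \eqref{eq:strong-old-new-decomposition}, the induction hypothesis lets me replace recovered factors by exact $\omega_b$ up to phases. The recovered products of at least two distinct factors with total degree $k$ are then, up to phase and a sign from reordering wedges, exactly the $\omega_J$ with $|J|\ge2$, which establishes \eqref{eq:strong-all-old-new-spaces}. Every such $\omega_J$ lies in $W_k^{\mathrm H}$, and all members of the family $\{\omega_J:p(J)=k\}$ are orthonormal. So $O_k^{\mathrm H}\subseteq W_k^{\mathrm H}$, and its orthogonal complement inside $W_k^{\mathrm H}$ is spanned by the singletons $\omega_b$ with $p_b=k$.

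The main obstacle I expect is making the sector-wise factorization of the compression rigorous. Concretely, I need to check that $Q_{\mathrm{lo},k}$ commutes with the sector decomposition and acts as a tensor product of local low projectors. I also need to verify the partial-block norm bound $\le\theta$ for dominant blocks, where low branches carry weights at most $1-w_a$. I would handle both by working entirely in the input coordinate basis, where every projector involved is diagonal.
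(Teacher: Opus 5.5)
Your proposal is correct and follows the paper's proof essentially step for step. Both use the sector factorization of Prop.~\ref{prop:blockwise-rdm-factorization} and the same classification of compressed local block RDMs: complete complementary blocks give $\Pi_{\omega_b}$, and every other active factor has norm at most $\theta$. Both then multiply norms within each sector and use orthonormality of the family $\{\omega_J\}$ for the old/new splitting. The only difference is that you pass to the input Fock basis to make the sector-wise factorization of $Q_{\mathrm{lo},k}$ explicit. The paper instead applies the factorization directly to the output decomposition $S\oplus B_1\oplus\cdots\oplus B_n\oplus(\text{unused modes})$ and leaves that compatibility implicit.
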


\begin{proof}
Apply Prop.~\ref{prop:blockwise-rdm-factorization} to the orthogonal decomposition into $S$, the block supports, and the unused modes. The low projection removes any sector selecting particles from $S$, while unused modes contribute zero. By Prop.~\ref{prop:correlated-block-rdm}, a complete complementary block contributes $\Pi_{\omega_b}$ and a partial complementary block contributes an operator of norm $w_b\leq\theta$. For a dominant block, a partial selection has norm at most the largest nondominant branch weight, hence at most $1-w_b<\theta$; a complete selection gives $|r_b\rangle\langle r_b|$ with norm $1-w_b$.

The factors multiply within each fixed block-selection sector, and different sectors are orthogonal. The tensor-product eigenvectors with eigenvalue one are precisely the products $\omega_J$ of complete complementary blocks with $p(J)=k$, and they span $W_k^{\mathrm H}$. The orthogonal directions within those sectors have eigenvalue zero. In every other nonzero sector, at least one factor has norm at most $\theta$ and all others have norm at most one, so the sector norm is at most $\theta$.

The sum of the eigenvalue-one projectors is $P_{W_k^{\mathrm H}}$. The remaining part $A_k-P_{W_k^{\mathrm H}}$ is positive, vanishes on $W_k^{\mathrm H}$, and is bounded by $\theta$ on its orthogonal complement. Finally, products with $|J|\geq2$ belong to $O_k^{\mathrm H}$, while singletons are the new $k$-particle blocks. These orthogonal subsets give Eq.~\eqref{eq:strong-old-new-decomposition}.
\end{proof}

The lemma verifies the input structure of Prop.~\ref{prop:exact-gram-splitting} for $T_k^{\mathrm H}$. Apply that algorithm to an orthonormal basis of this space, returning \textsc{Fail} if the call does so. With probability one, its outputs $\hat\omega_{k,1}^{\mathrm H},\ldots,\hat\omega_{k,d_k^{\mathrm H}}^{\mathrm H}$ are precisely the complementary blocks of particle number $k$, up to phases and a permutation. Store them both for constructing $O_\ell^{\mathrm H}$ at later orders and for final assembly. The old-product lists thus contain dominant branches for the cross compression and complementary full blocks for the all-low compression.

To verify the recursion, start at $k=2$, where both old-product spaces are zero. At a general order $k$, every product of at least two blocks has constituents of particle number at most $k-2$. If the earlier calls have succeeded, Lems.~\ref{lem:near-hierarchy-closure} and~\ref{lem:constant-gap-strong-hierarchy} therefore justify both recovery steps at order $k$. Induction shows that all blocks in $\mathcal R\cup\mathcal H_\theta$ are recovered, together with the dominant branches indexed by $\mathcal R$. Each call succeeds with conditional probability one by Prop.~\ref{prop:exact-gram-splitting}, and there are at most $2(r-1)$ calls, so the entire recursion succeeds with probability one.

\subsubsection{Assembling the state and bounding the error}

After level $r$, the high-occupation modes not used by recovered dominant branches belong to the original occupied component or to unresolved dominant branches. Compute
\begin{align}
    K&:=\sum_{k=2}^r\sum_{j=1}^{d_k^{\mathrm N}}\Gamma_{\hat f_{k,j}}^{(1)},\qquad F:=\operatorname{ran}K,\qquad S_{\mathrm{core}}:=S_\theta\cap F^\perp.
    \label{eq:ideal-core-space}
\end{align}
For an empty branch list, use $K=0$ and $F=\{0\}$. Choose an orthonormal basis of $S_{\mathrm{core}}$ and let $\hat\sigma_{\mathrm{core}}$ be the normalized state occupying all its basis modes, using the vacuum when the core is zero-dimensional. Form
\begin{align}
    \Phi&:=\hat\sigma_{\mathrm{core}}\wedge\bigwedge_{k=2}^r\bigwedge_{j=1}^{d_k^{\mathrm N}}\hat\omega_{k,j}^{\mathrm N}\wedge\bigwedge_{k=2}^r\bigwedge_{j=1}^{d_k^{\mathrm H}}\hat\omega_{k,j}^{\mathrm H}.
    \label{eq:ideal-delta-free-output}
\end{align}
Return the core basis and the two ordered block lists, with all factors in a fixed deterministic order. The following lemma identifies the returned state and bounds the loss from replacing unresolved blocks by their dominant branches.

\begin{lemma}[Output and truncation]\label{lem:heterogeneous-exact-output}
On the probability-one event that all Gram-splitting calls recover their factors, the core is
\begin{align}
    S_{\mathrm{core}}&=S\oplus\bigoplus_{b\in\mathcal N_\theta\setminus\mathcal R}F_b.
    \label{eq:exact-core-decomposition}
\end{align}
For the analysis, define the truncated target $\Psi_{\mathrm{tr}}$ using the exact retained blocks and any normalized occupied state $\sigma_{\mathrm{core}}$ of this space. The exact-RDM output $\Phi$ agrees with this comparison state up to an overall phase:
\begin{align}
    \Psi_{\mathrm{tr}}&:=\sigma_{\mathrm{core}}\wedge\bigwedge_{b\in\mathcal R}\omega_b\wedge\bigwedge_{b\in\mathcal H_\theta}\omega_b.
    \label{eq:truncated-target}
\end{align}
These factors have mutually orthogonal one-particle supports, so $\Phi$ is normalized. Its infidelity satisfies
\begin{align}
    1-|\langle\Psi_{\mathrm{tr}},\Psi\rangle|^2&\leq3ms_0^2=\frac{\varepsilon_{\mathrm{fid}}}{16}.
    \label{eq:ideal-truncation-bound}
\end{align}
\end{lemma}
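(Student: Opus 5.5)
The plan mirrors the proof of Lem.~\ref{lem:equal-p-exact-output}, with the always-occupied component $S$ carried along as an extra fully occupied factor. The work splits into three steps: identify the core, identify the output state $\Phi$, and bound the truncation infidelity.

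\textbf{Step 1: the core.} On the success event, every output of the $F_k^{\mathrm N}$ calls equals some $f_b$ with $b\in\mathcal R$, up to phase, by Lem.~\ref{lem:near-hierarchy-closure} and the induction carried out above. Conversely, every $b\in\mathcal R$ is recovered at order $p_b$. Hence Prop.~\ref{prop:occupied-subspace-rdm} gives
\begin{align*}
K&=\sum_{b\in\mathcal R}P_{F_b}.
\end{align*}
The spaces $F_b$ are mutually orthogonal, so this $K$ is a projector with range $F=\bigoplus_{b\in\mathcal R}F_b$. Lem.~\ref{lem:heterogeneous-high-space} writes $S_\theta=S\oplus\bigoplus_{b\in\mathcal N_\theta}F_b$ as an orthogonal sum. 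Intersecting with $F^\perp$ removes exactly the retained summands, which yields Eq.~\eqref{eq:exact-core-decomposition}.

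\textbf{Step 2: identifying $\Phi$.} I use that $\wedge^{\dim E}E$ is one-dimensional. Therefore $\hat\sigma_{\mathrm{core}}$ equals $\sigma_S\wedge\bigwedge_{b\in\mathcal N_\theta\setminus\mathcal R}f_b$ up to phase, and likewise equals any $\sigma_{\mathrm{core}}$ up to phase. The recovered full blocks coincide with $\omega_b$ up to phase for $b\in\mathcal R\cup\mathcal H_\theta$:
\begin{itemize}
\item for $b\in\mathcal R$, by Eq.~\eqref{eq:exact-near-identities};
\item for $b\in\mathcal H_\theta$, by Lem.~\ref{lem:constant-gap-strong-hierarchy} together with Prop.~\ref{prop:exact-gram-splitting}.
\end{itemize}
Reordering wedge factors changes only a sign. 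Hence $\Phi$ equals $\Psi_{\mathrm{tr}}$ up to an overall phase.

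Normalization follows from the mutually orthogonal supports of the factors: $S$, the unresolved $F_b$, and the supports $B_b$ of the other blocks. With orthogonal supports, the wedge norm factorizes, as in the isometric tensor identification of App.~\ref{app:block-product-rdms}.

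\textbf{Step 3: truncation bound.} I compute the overlap by the same factorization. Both $\Psi$ and $\Psi_{\mathrm{tr}}$ are wedge products over the decomposition $S\oplus\bigoplus_b B_b$, with factors in the same order; I fix the phase of $\sigma_{\mathrm{core}}$ to match this ordering. The two states agree on $S$ and on every retained block. For each unresolved block $b$, the factor $f_b\in\wedge^{p_b}B_b$ replaces $\omega_b$, contributing the overlap
\begin{align*}
\langle f_b,\omega_b\rangle&=\sqrt{w_b}.
\end{align*}
Therefore
\begin{align*}
|\langle\Psi_{\mathrm{tr}},\Psi\rangle|^2&=\prod_{b\in\mathcal N_\theta\setminus\mathcal R}w_b\geq 1-\sum_{b\in\mathcal N_\theta\setminus\mathcal R}(1-w_b).
\end{align*}

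Each unresolved block has $\sigma_b\leq t_{p_b}\leq 2s_0$ and $w_b>\theta\geq 2/3$. This gives
\begin{align*}
1-w_b&=\frac{\sigma_b^2}{w_b}\leq 6s_0^2.
\end{align*}
Since every block has $p_b\geq2$ particles on disjoint modes, the number of blocks satisfies $n\leq m/2$. The total infidelity is thus at most $6ns_0^2\leq 3ms_0^2=\varepsilon_{\mathrm{fid}}/16$.

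\textbf{Main obstacle.} The only delicate point is Step~1's claim that exactly the branches indexed by $\mathcal R$ are recovered, with no spurious product directions. This is precisely the induction already closed by Lems.~\ref{lem:near-hierarchy-closure} and~\ref{lem:constant-gap-strong-hierarchy}, so I would invoke it rather than re-derive it. The remaining care is purely bookkeeping of phases and wedge-ordering signs, which do not affect fidelity.
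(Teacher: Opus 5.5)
Your proposal is correct and follows essentially the same route as the paper's proof. You obtain $K=\sum_{b\in\mathcal R}P_{F_b}$ from the recovered branches and remove its range from the orthogonal decomposition of $S_\theta$ in Lem.~\ref{lem:heterogeneous-high-space}; you identify $\hat\sigma_{\mathrm{core}}$ and the recovered blocks with the exact factors up to phase; and you bound the infidelity via $|\langle\Psi_{\mathrm{tr}},\Psi\rangle|^2=\prod_{b\in\mathcal N_\theta\setminus\mathcal R}w_b$, with $1-w_b\leq 6s_0^2$ and $2n\leq m$, where your use of the one-dimensionality of top exterior powers and of $\langle f_b,\omega_b\rangle=\sqrt{w_b}$ merely spells out steps the paper leaves implicit.
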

\begin{proof}
After matching factors and phases, Eq.~\eqref{eq:occupied-subspace-rdm} gives $K=\sum_{b\in\mathcal R}P_{F_b}$. Removing its range from the high space in Lem.~\ref{lem:heterogeneous-high-space} proves Eq.~\eqref{eq:exact-core-decomposition}. The wedge product of $\sigma_S$ and the unresolved dominant branches occupies every mode in this core and therefore agrees with $\hat\sigma_{\mathrm{core}}$ up to phase. Since all remaining blocks have been recovered up to phases and permutations, the returned state is $\Psi_{\mathrm{tr}}$ up to an overall phase.

For each unresolved block, $\sigma_b\leq t_{p_b}\leq2s_0$ and $w_b>\theta\geq2/3$, giving $1-w_b=\sigma_b^2/w_b\leq6s_0^2$. Only these blocks change in $\Psi_{\mathrm{tr}}$, so orthogonality of the block supports gives
\begin{align}
    |\langle\Psi_{\mathrm{tr}},\Psi\rangle|^2&=\prod_{b\in\mathcal N_\theta\setminus\mathcal R}w_b,\qquad 1-|\langle\Psi_{\mathrm{tr}},\Psi\rangle|^2\leq\sum_{b\in\mathcal N_\theta\setminus\mathcal R}(1-w_b)\leq6ns_0^2\leq3ms_0^2.
\end{align}
The last inequality uses $p_b\geq2$ and $|I_0|+\sum_bp_b=N\leq m$.
\end{proof}

\begin{proof}[Completion of the proof of Thm.~\ref{thm:ideal-delta-free-correctness}]
The induction above establishes probability-one recovery at every order. Lem.~\ref{lem:heterogeneous-exact-output} then proves normalization, the claimed form of the output, and the fidelity guarantee.

For fixed $r$, all projectors, compressions, and spectral decompositions act on spaces of dimension at most $m^r$. At order $k$, each old product uses at most $\lfloor k/2\rfloor$ factors from lists containing at most $m/2$ blocks, so enumerating the products, computing their spans, and removing those spans require polynomially many arithmetic operations in $m$. The at most $2(r-1)$ Gram-splitting calls have polynomial arithmetic cost by Prop.~\ref{prop:exact-gram-splitting}. Finally, at most $m/2$ recovered blocks are stored, each with at most $m^r$ coefficients, and the core is stored through at most $m$ orthonormal one-particle vectors. Thus the returned description and its construction have polynomial cost in $m$. Every operation uses the supplied RDMs and classically recovered vectors, so no additional target-state copies are required.
\end{proof}

\section{Stability of the reconstruction steps}
\label{app:perturbation-tools}
\label{app:mixed-hierarchy-stability}

We now bound the errors in the reconstruction steps, following their computational order: selecting subspaces from the RDM estimates, removing previously recovered products, recovering blocks, constructing the remaining occupied subspace, and assembling the final state. Throughout this appendix, exact RDMs belong to the state family of App.~\ref{app:heterogeneous-problem-reformulation}, and their Hermitian estimates have operator-norm errors at most $\nu$. The homogeneous case uses only orders $1$ and $p$, with $\nu$ replaced by $\mu$. All thresholds used to define exact comparison spaces are the same thresholds chosen from the estimates.

\subsection{Thresholds and compressed RDMs}
\label{app:one-level-reconstruction-stability}

We first state the spectral-projector bound and then apply it to thresholds chosen from the estimated spectra. This controls the spaces extracted from the RDMs and their compressions.

\begin{lemma}[Spectral-projector perturbation~\cite{DavisKahan1970}]\label{lem:projector-perturbation}
Let $A$ and $\widetilde A$ be Hermitian operators on the same finite-dimensional Hilbert space, and suppose that $\|\widetilde A-A\|\leq\epsilon$. Let $P$ be a spectral projector of $A$, with $P\neq0,I$. Assume that either the eigenvalues selected by $P$ or those selected by $I-P$ occupy consecutive positions in the ordered spectrum of $A$, counting multiplicities. Suppose that the selected and unselected eigenvalues are separated by a gap of at least $g>0$.

If $\epsilon<g/2$, let $\widetilde P$ select the eigenvalues of $\widetilde A$ in the same ordered positions as those selected by $P$. Then, $\widetilde P$ has the same rank as $P$, and
$\|\widetilde P-P\| \leq\frac{\epsilon}{g-\epsilon} \leq\frac{2\epsilon}{g}$.
The cases $P=0$ and $P=I$ hold trivially with $\widetilde P=P$.
\end{lemma}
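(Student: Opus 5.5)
The plan is to prove the Davis--Kahan-type bound by combining Weyl's inequality (Lem.~\ref{lem:weyl-inequality}) with the Sylvester-equation argument of the $\sin\Theta$ theorem. By replacing $P$ with $I-P$ if necessary (which leaves $\|\widetilde P-P\|$ unchanged, since $\widetilde{I-P}=I-\widetilde P$), I assume the selected eigenvalues occupy consecutive ordered positions $j_1,\ldots,j_2$. The separation hypothesis then gives an interval $[a,b]$ containing all selected eigenvalues of $A$, with every unselected eigenvalue at distance at least $g$ from $[a,b]$.

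\textbf{Rank and location.} By Weyl, each eigenvalue of $\widetilde A$ moves by at most $\epsilon$ from its ordered counterpart. The eigenvalues of $\widetilde A$ in positions $j_1,\ldots,j_2$ therefore lie in $[a-\epsilon,b+\epsilon]$, and all other eigenvalues lie at distance at least $g-\epsilon$ from $[a,b]$. Since $\epsilon<g/2$, these two groups remain disjoint, so $\widetilde P$ is a well-defined spectral projector of the same rank as $P$. Its selected eigenvalues are at distance at least $g-\epsilon$ from the unselected eigenvalues of $A$.

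\textbf{Sylvester bound.} Write $E=\widetilde A-A$ and set $X:=(I-P)\widetilde P$. Using $A(I-P)=(I-P)A$ and $\widetilde A\widetilde P=\widetilde P\widetilde A$, we get
\[
(I-P)A(I-P)\,X-X\,\widetilde P\widetilde A\widetilde P=-(I-P)E\widetilde P .
\]
The two compressed operators have spectra separated by $g-\epsilon$. There are two cases for how the spectra sit relative to each other.
\begin{itemize}
\item[(i)] The unselected eigenvalues lie entirely on one side of the selected ones (the case when the selected positions form a top or bottom segment). Then the standard Davis--Kahan $\sin\Theta$ argument for separated-by-a-gap spectra gives $\|X\|\leq\epsilon/(g-\epsilon)$.
\item[(ii)] The unselected eigenvalues lie on both sides. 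Here I use the fact that the selected spectrum of $\widetilde A$ sits in an interval while the unselected spectrum of $A$ lies outside its $(g-\epsilon)$-neighborhood. The Bhatia--Davis--McIntosh/Davis--Kahan interval version then again gives $\|X\|\leq\epsilon/(g-\epsilon)$.
\end{itemize}
The same argument with the roles of the two operators exchanged bounds $\|P(I-\widetilde P)\|$. Since $P$ and $\widetilde P$ have equal rank, $\|\widetilde P-P\|=\max\{\|(I-P)\widetilde P\|,\|P(I-\widetilde P)\|\}$. For the exchanged bound, the selected spectrum of $A$ must be compared with the unselected spectrum of $\widetilde A$; the Weyl step gives separation $g-\epsilon$ there as well.

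\textbf{Main obstacle.} The delicate step is getting the constant $1/(g-\epsilon)$ rather than something like $\pi/(2(g-\epsilon))$ when the unselected spectrum surrounds the selected one on both sides. In that case I would cite the Davis--Kahan $\sin\Theta$ theorem directly, which gives exactly $\epsilon/\delta$ when one spectrum lies in an interval and the other lies outside an enlarged interval. Finally, $\epsilon<g/2$ gives $\epsilon/(g-\epsilon)\leq2\epsilon/g$. The trivial cases $P=0$ and $P=I$ hold with $\widetilde P=P$.
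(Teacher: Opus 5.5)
Your proposal is correct and follows essentially the same route as the paper. Both arguments use Weyl's inequality to fix the rank and localize the perturbed eigenvalues, then apply the interval form of the Davis--Kahan $\sin\Theta$ theorem with gap $g-\epsilon$ together with the projector-norm identity. The paper bounds only $\|(I-\widetilde P)P\|$, comparing the selected interval of $A$ with the unselected spectrum of $\widetilde A$, and uses $\|\widetilde P-P\|=\|(I-\widetilde P)P\|$ for equal ranks. So your second Sylvester bound and your case split (i)/(ii) are harmless but unnecessary: the interval version already covers both spectral configurations with constant $1/(g-\epsilon)$.
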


\begin{proof}
Suppose first that $P$ selects a consecutive group. Weyl's inequality moves each ordered eigenvalue by at most $\epsilon$, so $2\epsilon<g$ keeps this group separated from the remaining perturbed eigenvalues. Thus $\widetilde P$ is well defined and has the same rank as $P$. The exact selected spectral interval is separated from the unselected spectrum of $\widetilde A$ by at least $g-\epsilon$. The operator-norm Davis--Kahan theorem~\cite{DavisKahan1970} therefore gives $\|(I-\widetilde P)P\|\leq\epsilon/(g-\epsilon)$. For equal-rank orthogonal projectors, $\|\widetilde P-P\|=\|(I-\widetilde P)P\|$, yielding the claimed bound since $g-\epsilon>g/2$. If instead $I-P$ selects a consecutive group, apply the same argument to the complementary projectors, whose difference has the same norm.
\end{proof}

\begin{lemma}[Stability at a separated threshold]\label{lem:df-certified-threshold}
Let $A$ and $\widetilde A$ be Hermitian operators and suppose $\|\widetilde A-A\|\leq\varepsilon_{\mathrm{op}}$. Suppose that the learner chooses a threshold $\tau$ satisfying $\operatorname{dist}(\tau,\operatorname{spec}\widetilde A)\geq h/2$. If $\varepsilon_{\mathrm{op}}<h/4$, then we have $\operatorname{dist}(\tau,\operatorname{spec}A)\geq h/4$. The exact and noisy spectral projectors onto eigenvalues larger than $\tau$ have the same rank and satisfy
\begin{align}
    \|P_{\widetilde E}-P_E\| &\leq \frac{8\varepsilon_{\mathrm{op}}}{h}.
    \label{eq:df-threshold-projector-error}
\end{align}
For square operators $C$ and $\widetilde C$ on the same space and a positive threshold $\tau$, the same conclusions hold for the right singular spaces, assuming $\|\widetilde C-C\|\leq\varepsilon_{\mathrm{op}}$ and $\operatorname{dist}(\tau,\operatorname{sing}\widetilde C)\geq h/2$.
\end{lemma}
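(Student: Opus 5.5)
The plan is to first transfer the separation from the estimated spectrum to the exact one, and then invoke Lem.~\ref{lem:projector-perturbation} with an explicit gap. By Weyl's inequality (Lem.~\ref{lem:weyl-inequality}), the ordered eigenvalues of $A$ and $\widetilde A$ differ pairwise by at most $\varepsilon_{\mathrm{op}}$. Every exact eigenvalue $\lambda_j$ is therefore within $\varepsilon_{\mathrm{op}}$ of $\widetilde\lambda_j$, which is itself at distance at least $h/2$ from $\tau$. This gives $|\tau-\lambda_j|\geq h/2-\varepsilon_{\mathrm{op}}>h/4$, proving the first claim.

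Next, the count of exact eigenvalues above $\tau$ equals the count of estimated eigenvalues above $\tau$. The reason is that each ordered pair $(\lambda_j,\widetilde\lambda_j)$ lies in an interval of length at most $\varepsilon_{\mathrm{op}}<h/4$ that cannot contain $\tau$, so both members lie on the same side of $\tau$. Hence the two projectors select the same ordered positions, namely the top ones. This selected group is consecutive, and the exact selected and unselected eigenvalues are separated by a gap $g\geq 2\cdot h/4=h/2$, since they lie on opposite sides of $\tau$ at distance at least $h/4$ each.

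We now apply Lem.~\ref{lem:projector-perturbation} with $g=h/2$ and $\epsilon=\varepsilon_{\mathrm{op}}<h/4=g/2$. It gives equal rank and
\begin{align*}
\|P_{\widetilde E}-P_E\|\leq \frac{2\varepsilon_{\mathrm{op}}}{g}=\frac{4\varepsilon_{\mathrm{op}}}{h}\leq\frac{8\varepsilon_{\mathrm{op}}}{h}.
\end{align*}
The trivial cases, where no eigenvalue or every eigenvalue lies above $\tau$, are covered by the last sentence of that lemma, since the ranks agree.

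For the singular-value version, the natural route is the Hermitian dilation
\begin{align*}
\mathcal C=\begin{pmatrix}0&C\\ C^\dagger&0\end{pmatrix}.
\end{align*}
Its eigenvalues are $\pm$ the singular values of $C$, and $\|\widetilde{\mathcal C}-\mathcal C\|=\|\widetilde C-C\|$. The threshold $\tau>0$ is separated from $\operatorname{sing}\widetilde C$ by $h/2$. Using $\tau>h/2$, which is implicit when $0\in\operatorname{sing}\widetilde C$ (or else handled by noting that negative eigenvalues are at distance more than $\tau$), it is separated from $\operatorname{spec}\widetilde{\mathcal C}$ by $h/2$ as well. The previous argument then gives a projector $\mathcal P$ onto eigenvalues of $\mathcal C$ above $\tau$ with the same bound. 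This projector equals $\tfrac12\begin{pmatrix}P_L&\ast\\ \ast&P_R\end{pmatrix}$, built from the left and right singular spaces. Its lower-right block is $\tfrac12 P_R$, so the raw bound on $P_R$ picks up a factor of two.

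I therefore expect the only delicate point to be this factor. A cleaner alternative avoids it: apply the Hermitian part of the argument directly to $C^\dagger C$ with threshold $\tau^2$, noting that $\|\widetilde C^\dagger\widetilde C-C^\dagger C\|\leq(2\|C\|+\varepsilon_{\mathrm{op}})\varepsilon_{\mathrm{op}}$. The gap around $\tau^2$ in that case is roughly $\tau h$, so the two factors of $\tau$ cancel only up to constants. Given that this second route has its own constant issues, I would go with the dilation. The factor of two it costs is absorbed by the slack between the bound $4\varepsilon_{\mathrm{op}}/h$ obtained above and the claimed $8\varepsilon_{\mathrm{op}}/h$, which is presumably why the lemma is stated with the constant $8$.
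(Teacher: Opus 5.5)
Your proof is correct. The Hermitian part follows the paper's argument: Weyl's inequality transfers the separation and matches the selected ranks, and then Lem.~\ref{lem:projector-perturbation} is applied. The only difference is the reference operator. The paper applies the Davis--Kahan lemma with $\widetilde A$ as reference, where the observed selected and unselected eigenvalues are separated by $h$, and gets $\varepsilon_{\mathrm{op}}/(h-\varepsilon_{\mathrm{op}})$. You first move the separation to $A$ and use the gap $h/2$, which gives $4\varepsilon_{\mathrm{op}}/h$.

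The genuine difference is in the singular-value part. The paper cites Wedin's $\sin\Theta$ theorem with $\widetilde C$ as reference and obtains $\sqrt2\,\varepsilon_{\mathrm{op}}/(h-\varepsilon_{\mathrm{op}})$. You instead reduce to the Hermitian case through the dilation $\mathcal C$, and read off $P_R$ as twice the lower-right block of its spectral projector above $\tau>0$.

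Your route is self-contained: it needs only the paper's own Lem.~\ref{lem:projector-perturbation}. The equal-rank and $h/4$-distance conclusions also transfer, because $\operatorname{rank}\mathcal P=\operatorname{rank}P_R$ and $\operatorname{sing}C\subseteq\operatorname{spec}\mathcal C$; state this explicitly. As you noticed, the bound lands exactly on $8\varepsilon_{\mathrm{op}}/h$ only because your Hermitian constant is $4$ rather than the stated $8$. The paper's route gives a sharper constant, at the cost of relying on an external theorem.

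Two minor points. First, no case split is needed for the negative eigenvalues $-\widetilde\sigma_j$ of $\widetilde{\mathcal C}$. As phrased, ``distance more than $\tau$'' would not suffice when $\tau<h/2$. Simply use $\tau+\widetilde\sigma_j\geq|\tau-\widetilde\sigma_j|\geq h/2$ for $\tau,\widetilde\sigma_j\geq0$.

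Second, the discarded $C^\dagger C$ route is worse than ``up to constants.'' Its perturbation is of order $\|C\|\varepsilon_{\mathrm{op}}$, while its gap at $\tau^2$ is about $\tau h$. It therefore yields $O(\|C\|\varepsilon_{\mathrm{op}}/(\tau h))$, losing a factor $\|C\|/\tau$, which is up to order $1/s_0$ in the paper's application. The dilation is therefore the right choice, not merely the tidier one.
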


\begin{proof}
Write $e=\varepsilon_{\mathrm{op}}$. Weyl's inequality gives $\operatorname{dist}(\tau,\operatorname{spec}A)\geq h/2-e>h/4$. The eigenvalues of $A$ and $\widetilde A$ in matching ordered positions are on the same side of $\tau$, so the selected ranks agree. If both spectral groups are nonempty, the observed selected and unselected eigenvalues are separated by at least $h$. Apply Lem.~\ref{lem:projector-perturbation} with $\widetilde A$ as the reference operator. It gives $\|P_{\widetilde E}-P_E\|\leq e/(h-e)\leq8e/h$. For an empty or full selected group, equal ranks make the two projectors identical.

For singular spaces, the singular-value perturbation bound gives the same distance from $\tau$ and equality of the selected ranks. Let $E$ and $\widetilde E$ denote the right singular spaces above $\tau$. If the selected rank is neither zero nor full, the observed singular values above and below $\tau$ are separated by at least $h$. Applying the operator-norm form of Wedin's $\sin\Theta$ theorem~\cite{Wedin1972}, with $\widetilde C$ as the reference matrix, gives
\begin{align*}
    \|P_{\widetilde E}-P_E\|
    &\leq\frac{\sqrt{2}\,e}{h-e}
    \leq\frac{8e}{h}.
\end{align*}
The empty and full cases again follow from equality of the ranks.

\end{proof}

\begin{lemma}[Selection of separated thresholds]\label{lem:df-threshold-selection}
Given a Hermitian matrix $\widetilde A$ of dimension $D\geq1$ and a closed interval $J$ of length $L>0$, one can choose $\tau\in J$ from its observed spectrum such that $\operatorname{dist}(\tau,\operatorname{spec}\widetilde A)\geq h/2$, with $h=L/(8D)$.

For singular values, suppose $s>0$, $\operatorname{rank}C\leq M$ for a supplied integer $M\geq1$, and $\|\widetilde C-C\|\leq e<s/2$. Given $\widetilde C$, $s$, and $M$, one can choose $\tau\in[s,2s]$ such that $\operatorname{dist}(\tau,\operatorname{sing}\widetilde C)\geq h/2$, with $h=s/(8M)$. Neither $C$ nor its singular vectors are required as input. Both selections use only a spectral decomposition and sorting.
\end{lemma}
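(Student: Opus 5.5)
Both claims are pigeonhole arguments on an interval. We place small forbidden intervals around the observed spectral values and show their total length is less than $L$. Any point of $J$ outside their union then works, and it can be found by sorting.

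\emph{Eigenvalue case.} Let $\widetilde\lambda_1\leq\cdots\leq\widetilde\lambda_D$ be the eigenvalues of $\widetilde A$. Around each we place the open interval $(\widetilde\lambda_j-h/2,\widetilde\lambda_j+h/2)$ with $h=L/(8D)$. Each has length $h$, so their union has measure at most $Dh=L/8<L$. Hence $J$ contains a point $\tau$ outside the union, and this point satisfies $\operatorname{dist}(\tau,\operatorname{spec}\widetilde A)\geq h/2$.

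To find $\tau$ constructively, sort the endpoints of the forbidden intervals that meet $J$, together with the endpoints of $J$. Then scan the at most $2D+1$ consecutive gaps and return any point of a gap of positive length. We may take the center of the largest gap; it satisfies the bound and lies in $J$. This needs only a spectral decomposition of $\widetilde A$ and sorting.

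\emph{Singular-value case.} Here the total number of singular values of $\widetilde C$ may be much larger than $M$. The key step is to show that only at most $M$ of them can lie near $[s,2s]$.

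By Weyl's inequality for singular values, $|\widetilde\sigma_j-\sigma_j|\leq\|\widetilde C-C\|\leq e$ for the ordered singular values. Since $\operatorname{rank}C\leq M$, we have $\sigma_j=0$ for $j>M$, so $\widetilde\sigma_j\leq e<s/2$ for all $j>M$.

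Set $h=s/(8M)$. For $j>M$, the forbidden interval around $\widetilde\sigma_j$ lies below
\[
s/2+h/2\leq s/2+s/16<s.
\]
So these intervals do not meet $[s,2s]$. Only the top $M$ singular values can create forbidden intervals inside $[s,2s]$. These have total length at most $Mh=s/8<s$, which is the length of $[s,2s]$.

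The same sort-and-scan procedure then produces $\tau$. It uses only the singular values of $\widetilde C$, which we take as the square roots of the eigenvalues of $\widetilde C^\dagger\widetilde C$, together with the supplied $s$ and $M$; neither $C$ nor its singular vectors are needed. The only delicate point is this rank-based exclusion of the perturbation-created singular values via Weyl's inequality and $e<s/2$. The remainder is routine measure counting.
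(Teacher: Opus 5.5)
Your proof is correct and follows the same argument as the paper. You place exclusion intervals of radius $h/2$ around the observed values, with total length at most $L/8$ (resp.\ $Mh=s/8$). In the singular-value case, Weyl's inequality together with $\operatorname{rank}C\leq M$ puts every singular value of $\widetilde C$ beyond the first $M$ at most $e<s/2$, so their exclusion intervals miss $[s,2s]$, and a threshold is then found by sorting endpoints.
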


\begin{proof}
In the Hermitian case, exclude intervals of radius $h/2$ around the observed eigenvalues. Their total length is at most $Dh=L/8$, so they cannot cover $J$. Sorting their endpoints finds a point in the complement.

In the singular-value case, perturbation bounds place every singular value of $\widetilde C$ after the first $M$ at most $e<s/2$. These values are more than $s/2$ away from $[s,2s]$. The exclusion intervals around the remaining at most $M$ singular values have total length at most $Mh=s/8$, leaving a valid threshold in $[s,2s]$. This argument also applies when $\widetilde C$ has full rank; if $C=0$, one may use $M=1$.
\end{proof}

\begin{lemma}[Stability of RDM compressions]\label{lem:df-compressed-operator-error}
Fix $k\geq2$ and suppose $\|\Gamma_\Psi^{(k)}\|\leq1$, as holds for the block-product states considered here by Eq.~\eqref{eq:structured-rdm-norm}. Suppose that the Hermitian RDM estimate and the estimated high-occupation space satisfy
$\|\widetilde\Gamma^{(k)}-\Gamma_\Psi^{(k)}\| \leq \nu, \|P_{\widetilde S_\theta}-P_{S_\theta}\| \leq \eta_S$.
Let $Q_{\mathrm{hi},k}:=P_{\wedge^kS_\theta}$ and $Q_{\mathrm{lo},k}:=P_{\wedge^k(S_\theta^\perp)}$, and define $C_k:=Q_{\mathrm{lo},k}\Gamma_\Psi^{(k)}Q_{\mathrm{hi},k}$ and $A_k:=Q_{\mathrm{lo},k}\Gamma_\Psi^{(k)}Q_{\mathrm{lo},k}$. Define their estimated counterparts using $\widetilde S_\theta$ and $\widetilde\Gamma^{(k)}$.

Then, the lifted projectors satisfy
\begin{align}
    \|P_{\wedge^k\widetilde S_\theta}-P_{\wedge^kS_\theta}\|&\leq k\eta_S, \qquad \|P_{\wedge^k(\widetilde S_\theta^\perp)} -P_{\wedge^k(S_\theta^\perp)}\|\leq k\eta_S.
    \label{eq:df-exterior-projector-bound}
\end{align}
Moreover, by setting $e_k:=\nu+2k\eta_S$, the compressed operators
satisfy
\begin{align}
    \max\left\{ \|\widetilde C_k-C_k\|, \|\widetilde A_k-A_k\| \right\} &\leq e_k.
    \label{eq:df-compressed-error}
\end{align}
\end{lemma}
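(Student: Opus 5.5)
The plan has two parts: first bound the lifted exterior projectors, then treat the compressions as products of three factors and telescope. For the first part I will use the identification in Eq.~\eqref{eq:lifted-projector-notation}. It gives $P_{\wedge^kE}=P_E^{\otimes k}|_{\wedge^kH}$, and $\wedge^kH$ is invariant under every $X^{\otimes k}$. So it suffices to bound $\|P_{\widetilde S_\theta}^{\otimes k}-P_{S_\theta}^{\otimes k}\|$ on the full tensor space and then restrict. The standard telescoping identity
\begin{align*}
    A^{\otimes k}-B^{\otimes k}=\sum_{j=1}^{k}A^{\otimes(j-1)}\otimes(A-B)\otimes B^{\otimes(k-j)}
\end{align*}
applies with $A=P_{\widetilde S_\theta}$ and $B=P_{S_\theta}$. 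Since $\|A\|,\|B\|\leq1$, each term has norm at most $\eta_S$, which gives the bound $k\eta_S$.

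For the low projector, I first note that $P_{\widetilde S_\theta^\perp}-P_{S_\theta^\perp}=-(P_{\widetilde S_\theta}-P_{S_\theta})$, so its norm is also at most $\eta_S$. The same telescoping argument then gives Eq.~\eqref{eq:df-exterior-projector-bound} for $\wedge^k(S_\theta^\perp)$. The restriction to $\wedge^kH$ can only decrease norms.

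For the compressions, I write $\widetilde C_k-C_k=\widetilde Q_{\mathrm{lo}}\widetilde\Gamma\widetilde Q_{\mathrm{hi}}-Q_{\mathrm{lo}}\Gamma Q_{\mathrm{hi}}$ and split it into three differences:
\begin{align*}
    (\widetilde Q_{\mathrm{lo}}-Q_{\mathrm{lo}})\widetilde\Gamma\widetilde Q_{\mathrm{hi}}
    +Q_{\mathrm{lo}}(\widetilde\Gamma-\Gamma)\widetilde Q_{\mathrm{hi}}
    +Q_{\mathrm{lo}}\Gamma(\widetilde Q_{\mathrm{hi}}-Q_{\mathrm{hi}}).
\end{align*}
The middle term is at most $\nu$, because projectors have norm at most one. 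The last term is at most $k\eta_S$ by the structured norm bound $\|\Gamma_\Psi^{(k)}\|\leq1$. The first term is the one obstacle: it carries $\widetilde\Gamma$, and the available bound is only $\|\widetilde\Gamma\|\leq1+\nu$, which would give $k\eta_S(1+\nu)$ instead of $k\eta_S$. To avoid this, I will change which factor carries the noisy RDM. The alternative ordering is
\begin{align*}
    (\widetilde Q_{\mathrm{lo}}-Q_{\mathrm{lo}})\Gamma\widetilde Q_{\mathrm{hi}}
    +\widetilde Q_{\mathrm{lo}}(\widetilde\Gamma-\Gamma)\widetilde Q_{\mathrm{hi}}
    +Q_{\mathrm{lo}}\Gamma(\widetilde Q_{\mathrm{hi}}-Q_{\mathrm{hi}}).
\end{align*}
Expanding it reproduces the difference exactly. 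Now only the exact $\Gamma$ appears next to the projector differences, so the three terms are bounded by $k\eta_S$, $\nu$ and $k\eta_S$. That yields $\nu+2k\eta_S=e_k$.

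The identical decomposition with $Q_{\mathrm{hi}}$ replaced by $Q_{\mathrm{lo}}$ bounds $\|\widetilde A_k-A_k\|$ in the same way. I will also check that the operator norm here is the one on $\wedge^kH$, which is consistent with how the RDMs are treated as operators. No spectral arguments are needed at this step.
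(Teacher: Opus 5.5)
Your proposal is correct and matches the paper's proof: you use the same telescoping identity for $P_{\widetilde S_\theta}^{\otimes k}-P_{S_\theta}^{\otimes k}$ restricted to $\wedge^kH$, and your final three-term split of $\widetilde C_k-C_k$ is exactly the paper's, up to the order of the terms. In both, the exact $\Gamma_\Psi^{(k)}$ is the factor adjacent to each projector difference, which gives $\nu+2k\eta_S$.
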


\begin{proof}
Set $P:=P_{S_\theta}$ and $\widetilde P:=P_{\widetilde S_\theta}$. By Eq.~\eqref{eq:lifted-projector-notation}, the lifted projectors are the restrictions of $P^{\otimes k}$ and $\widetilde P^{\otimes k}$ to the antisymmetric subspace, respectively. Therefore, the identity
$\widetilde P^{\otimes k}-P^{\otimes k} = \sum_{j=1}^{k} \widetilde P^{\otimes(j-1)} \otimes(\widetilde P-P) \otimes P^{\otimes(k-j)}$
yields the first bound in Eq.~\eqref{eq:df-exterior-projector-bound}. The second follows in the same way because
$\|P_{\widetilde S_\theta^\perp}-P_{S_\theta^\perp}\| = \|P_{\widetilde S_\theta}-P_{S_\theta}\| \leq\eta_S$.
The assumed bound $\|\Gamma_\Psi^{(k)}\|\leq1$ controls both compressions.

For the cross compression, write
\begin{align}
    \widetilde C_k-C_k &= \widetilde Q_{\mathrm{lo},k} (\widetilde\Gamma^{(k)}-\Gamma_\Psi^{(k)}) \widetilde Q_{\mathrm{hi},k} + (\widetilde Q_{\mathrm{lo},k}-Q_{\mathrm{lo},k}) \Gamma_\Psi^{(k)} \widetilde Q_{\mathrm{hi},k} + Q_{\mathrm{lo},k}\Gamma_\Psi^{(k)} (\widetilde Q_{\mathrm{hi},k}-Q_{\mathrm{hi},k}).
\end{align}
All orthogonal projectors have norm at most one, so Eq.~\eqref{eq:df-exterior-projector-bound} implies
$\|\widetilde C_k-C_k\| \leq\nu+k\eta_S+k\eta_S =e_k$.
Replacing the high-sector projectors with the corresponding low-sector projectors implies the same bound for $\widetilde A_k-A_k$.
\end{proof}

The following corollary collects the conditions and error bounds for the three candidate spaces used in the reconstruction.

\begin{corollary}[Stability of the selected subspaces]\label{cor:df-threshold-space-errors}
Consider the heterogeneous setting with $2\leq k\leq r$, or the homogeneous setting with $k=p$, and fix $\theta\in[2/3,3/4]$. Suppose the Hermitian estimates of the $1$- and $k$-RDMs have operator-norm errors at most $\nu$. Let $\widetilde S_\theta$ select eigenvectors of $\widetilde\Gamma^{(1)}$ above $\theta$, and form $\widetilde C_k$ and $\widetilde A_k$ as in Lem.~\ref{lem:df-compressed-operator-error}. Suppose that, for $t_k,h_1,h_k>0$,
\begin{align}
    \operatorname{dist}(\theta,\operatorname{spec}\widetilde\Gamma^{(1)})&\geq h_1/2,\qquad \operatorname{dist}(t_k,\operatorname{sing}\widetilde C_k)\geq h_k/2.
\end{align}
Define the error bounds
\begin{align}
    \eta_S&:=8\nu/h_1,\qquad e_k:=\nu+2k\eta_S,\qquad \eta_{R,k}:=8e_k/h_k,\qquad \eta_{W,k}^{\mathrm H}:=4e_k/(1-\theta).
\end{align}
Let $\widetilde R_k$ select right singular vectors of $\widetilde C_k$ above $t_k$, and let $\widetilde W_k^{\mathrm H}$ select eigenvectors of $\widetilde A_k$ above $(1+\theta)/2$. If $\nu<h_1/4$ and $e_k<\min\{h_k/4,(1-\theta)/4\}$, these spaces have the dimensions of $S_\theta$, $R_k$, and $W_k^{\mathrm H}$ selected from the exact operators at the same thresholds, and satisfy
\begin{align}
    \|P_{\widetilde S_\theta}-P_{S_\theta}\|&\leq\eta_S,\qquad \|P_{\widetilde R_k}-P_{R_k}\|\leq\eta_{R,k},\qquad \|P_{\widetilde W_k^{\mathrm H}}-P_{W_k^{\mathrm H}}\|\leq\eta_{W,k}^{\mathrm H}.
    \label{eq:df-threshold-space-errors}
\end{align}
\end{corollary}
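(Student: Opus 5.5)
The corollary follows by applying Lem.~\ref{lem:df-certified-threshold} three times, once per selected space, in the computational order. The input for each application comes from the previous step.

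\emph{High-occupation space.} Apply Lem.~\ref{lem:df-certified-threshold} to $A=\Gamma_\Psi^{(1)}$ and $\widetilde A=\widetilde\Gamma^{(1)}$, with $\varepsilon_{\mathrm{op}}=\nu$, threshold $\tau=\theta$ and separation $h=h_1$. The hypothesis $\nu<h_1/4$ is exactly the lemma's condition. Hence $\widetilde S_\theta$ and $S_\theta$ have equal dimension and $\|P_{\widetilde S_\theta}-P_{S_\theta}\|\leq8\nu/h_1=\eta_S$.

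\emph{Compressions.} With this $\eta_S$ in hand, invoke Lem.~\ref{lem:df-compressed-operator-error}. Its hypothesis $\|\Gamma_\Psi^{(k)}\|\leq1$ holds by Eq.~\eqref{eq:structured-rdm-norm}, in both the heterogeneous and homogeneous settings. The lemma gives $\max\{\|\widetilde C_k-C_k\|,\|\widetilde A_k-A_k\|\}\leq e_k=\nu+2k\eta_S$. Here $C_k$ and $A_k$ are built from the exact $S_\theta$, the one selected at the same $\theta$, which is what the corollary refers to.

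\emph{Right singular space.} Apply the singular-value part of Lem.~\ref{lem:df-certified-threshold} to $C_k$ and $\widetilde C_k$, with $\varepsilon_{\mathrm{op}}=e_k$, $\tau=t_k>0$ and $h=h_k$. The assumption $e_k<h_k/4$ gives equal dimension of $\widetilde R_k$ and $R_k$ and the bound $\eta_{R,k}=8e_k/h_k$.

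One point needs checking: the lemma is stated for square operators, whereas $C_k$ maps $\wedge^kH$ into itself, so it is square on the full $k$-particle space. Its right singular space above $t_k>0$ equals $\operatorname{ran}\mathbf1_{(t_k^2,\infty)}(C_k^\dagger C_k)$, matching the definition of $R_k$.

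\emph{Complementary block space.} For $\widetilde W_k^{\mathrm H}$ no observed-spectrum separation is assumed, so the gap must come from the exact spectrum. Lem.~\ref{lem:constant-gap-strong-hierarchy}, or Lem.~\ref{lem:equal-p-df-strong-gap} in the homogeneous case, shows that $A_k$ has eigenvalues in $\{1\}\cup[0,\theta]$. (The heterogeneous lemma is stated for the exact $S_\theta$, which is what we compare against.) So the threshold $(1+\theta)/2$ is at distance at least $(1-\theta)/2$ from $\operatorname{spec}A_k$.

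Since $e_k<(1-\theta)/4$, Weyl's inequality puts every eigenvalue of $\widetilde A_k$ within $e_k$ of the corresponding exact one. The ordered positions selected above $(1+\theta)/2$ therefore coincide, giving equal dimension, here including the eigenvalue-one multiplicity. Lem.~\ref{lem:projector-perturbation} applies with gap $g=1-\theta$ and $\epsilon=e_k<g/2$; the selected group is the top consecutive block. It gives $\|P_{\widetilde W_k^{\mathrm H}}-P_{W_k^{\mathrm H}}\|\leq2e_k/(1-\theta)\leq\eta_{W,k}^{\mathrm H}$. The empty case $W_k^{\mathrm H}=0$ and the full case are trivial once the ranks agree.

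The main care point is this last step: the gap must be taken from the exact operator rather than from the estimated one, and $W_k^{\mathrm H}$ must be shown to be literally the spectral subspace above the midpoint. Everything else is bookkeeping of the constants $h_1,h_k$.
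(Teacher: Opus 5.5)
Your proposal is correct and follows the paper's proof step for step. You use Lem.~\ref{lem:df-certified-threshold} for $\widetilde S_\theta$, Lem.~\ref{lem:df-compressed-operator-error} for the compression error $e_k$, the singular-value part of Lem.~\ref{lem:df-certified-threshold} for $\widetilde R_k$, and the exact gap of Lem.~\ref{lem:constant-gap-strong-hierarchy} together with Lem.~\ref{lem:projector-perturbation} for $\widetilde W_k^{\mathrm H}$. Your use of the full gap $g=1-\theta$ even gives the sharper bound $2e_k/(1-\theta)$, which implies the stated $\eta_{W,k}^{\mathrm H}$.
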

\begin{proof}
Apply Lem.~\ref{lem:df-certified-threshold} to the $1$-RDM at $\theta$ to obtain the dimension and error bound for $\widetilde S_\theta$. Lem.~\ref{lem:df-compressed-operator-error} then bounds both compression errors by $e_k$. A second application of Lem.~\ref{lem:df-certified-threshold} yields the dimension and error bound for $\widetilde R_k$. For $\widetilde W_k^{\mathrm H}$, the exact gap in Lem.~\ref{lem:constant-gap-strong-hierarchy} separates eigenvalue one from the remaining spectrum by at least $1-\theta$. Since $e_k<(1-\theta)/4$, the midpoint threshold selects the corresponding perturbed cluster, and Lem.~\ref{lem:projector-perturbation} gives the stated bound.
\end{proof}

For homogeneous blocks, use the exact spaces from App.~\ref{app:same-particle-correctness} and the same calculation with $k=p$ and $\nu=\mu$.

\subsection{Products of recovered factors and their removal}
\label{app:product-removal-stability}

In the heterogeneous recursion, products of previously recovered factors must be removed from the candidate spaces. We first bound their exterior products, since the estimated factors need not have exactly orthogonal one-particle supports. The total degree of each product used to remove earlier factors is at most $r$, so the factorial factor in the following lemma depends only on $r$ in that application. For the final product of many blocks, App.~\ref{app:assembly-explicit-accuracy} uses a separate expansion around the exact orthogonal factors; it does not apply this factorial bound to all blocks at once.

\begin{lemma}[Norm and stability of exterior products]\label{lem:exterior-product-stability}
Let $H\simeq\mathbb{C}^m$ be a finite-dimensional single-particle Hilbert space. For $j\in[n]$, let $p_j\geq0$ and $\alpha_j\in\wedge^{p_j}H$. Set
$P:=\sum_{j=1}^n p_j$.
Then, we have
\begin{align}
    \left\| \alpha_1\wedge\cdots\wedge\alpha_n \right\| &\leq \sqrt{ \frac{P!}{p_1!\cdots p_n!} } \prod_{j=1}^n\|\alpha_j\|.
    \label{eq:general-exterior-product-norm}
\end{align}
Consequently, let $\psi_j,\widetilde\psi_j\in \wedge^{p_j}H$ be normalized vectors satisfying, for $j\in [n]$, $\|\widetilde\psi_j-\psi_j\|  \leq \epsilon_j$. Then, we obtain
\begin{align}
    \left\| \widetilde\psi_1\wedge\cdots\wedge\widetilde\psi_n - \psi_1\wedge\cdots\wedge\psi_n \right\| &\leq \sqrt{ \frac{P!}{p_1!\cdots p_n!}} \sum_{j=1}^n\epsilon_j.
    \label{eq:general-exterior-product-stability}
\end{align}
The prefactor in Eq.~\eqref{eq:general-exterior-product-norm} also bounds the operator norm of the linear exterior-product map $\bigotimes_j\wedge^{p_j}H\to\wedge^PH$.
\end{lemma}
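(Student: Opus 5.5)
The norm bound will follow from the tensor-product embedding of the exterior algebra, and the stability bound from a telescoping sum. Under the isometric identification of $\wedge^kH$ with the antisymmetric subspace of $H^{\otimes k}$, I use the convention matching the orthonormal basis $e_I$. In that convention, $e_{i_1}\wedge\cdots\wedge e_{i_k}$ corresponds to $\frac{1}{\sqrt{k!}}\sum_{\pi}\operatorname{sgn}(\pi)\,e_{i_{\pi(1)}}\otimes\cdots\otimes e_{i_{\pi(k)}}$. With this normalization, the wedge map satisfies
\begin{align*}
\alpha_1\wedge\cdots\wedge\alpha_n&=\sqrt{\frac{P!}{p_1!\cdots p_n!}}\;\mathcal A_P\,(\alpha_1\otimes\cdots\otimes\alpha_n),
\end{align*}
where $\mathcal A_P$ is the orthogonal projector onto the antisymmetric subspace of $H^{\otimes P}$. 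Each $\alpha_j$ is identified with its antisymmetric tensor, and we work in $H^{\otimes P}$.

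I will verify this identity on basis vectors $e_{I_1},\ldots,e_{I_n}$. The left side is $\pm e_{I_1\cup\cdots\cup I_n}$ when the index sets are disjoint and zero otherwise. On the right side, the tensor product of the $n$ normalized antisymmetrizers is a sum over $\prod_j p_j!$ permutations, each with coefficient $\prod_j (p_j!)^{-1/2}$. Applying $\mathcal A_P$ collapses these to $(P!)^{-1/2}\cdot\prod_j (p_j!)^{1/2}$ times the normalized antisymmetric tensor. Multiplying by the prefactor then gives exactly $e_{I_1\cup\cdots}$, with the same sign convention as the wedge product. The identity extends by multilinearity to arbitrary $\alpha_j$.

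With the identity in hand, the norm bound follows because $\|\mathcal A_P\|\le1$ and tensor norms are multiplicative, so $\|\alpha_1\otimes\cdots\otimes\alpha_n\|=\prod_j\|\alpha_j\|$. This also shows that the linear map $\bigotimes_j\wedge^{p_j}H\to\wedge^PH$ has operator norm at most the same prefactor, which is the final claim of the lemma. The main point of care is this normalization bookkeeping: the prefactor has to come out as the square root of the multinomial coefficient, not its reciprocal. I will check it on the case $n=2$, $p_1=p_2=1$, where $\|e_1\wedge e_2\|=1$ and $\|e_1\otimes e_2\|=1$. There $\mathcal A_2(e_1\otimes e_2)$ has norm $1/\sqrt2$, so the prefactor must be $\sqrt2=\sqrt{2!/(1!\,1!)}$, which agrees.

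For the stability bound, I write the difference as a telescoping sum:
\begin{align*}
\widetilde\psi_1\wedge\cdots\wedge\widetilde\psi_n-\psi_1\wedge\cdots\wedge\psi_n=\sum_{j=1}^n\psi_1\wedge\cdots\wedge\psi_{j-1}\wedge(\widetilde\psi_j-\psi_j)\wedge\widetilde\psi_{j+1}\wedge\cdots\wedge\widetilde\psi_n.
\end{align*}
I apply the norm bound to each term. All factors other than the $j$th are normalized, and the $j$th has norm at most $\epsilon_j$, so each term is bounded by the prefactor times $\epsilon_j$. Summing over $j$ and using the triangle inequality gives the stated estimate.
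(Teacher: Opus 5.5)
Your proposal is correct and matches the paper's proof. Both write the wedge product as $\sqrt{P!/(p_1!\cdots p_n!)}$ times the antisymmetrizer applied to the tensor product, checked on the occupation basis. Both use $\|\mathsf A_P\|\le1$, together with linearity for the operator-norm claim, and then replace one factor at a time to get the stability bound.
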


\begin{proof}
If $P>m$, both exterior products vanish. Otherwise, identify each $\wedge^qH$ isometrically with the antisymmetric subspace of $H^{\otimes q}$. Let $\mathsf A_P$ be the orthogonal projector onto that subspace at degree $P$. With the normalization of App.~\ref{app:basic-exterior-algebra},
\begin{align}
    \alpha_1\wedge\cdots\wedge\alpha_n
    &=\sqrt{\frac{P!}{p_1!\cdots p_n!}}\,
    \mathsf A_P(\alpha_1\otimes\cdots\otimes\alpha_n).
    \label{eq:exterior-product-antisymmetrization}
\end{align}
The identity follows on the orthonormal occupation basis. By linearity, the exterior-product map on the full tensor-product space has operator norm at most $\sqrt{P!/(p_1!\cdots p_n!)}$, not merely this bound on simple tensors. Since $\|\mathsf A_P\|\leq1$, this also proves \eqref{eq:general-exterior-product-norm}. To obtain \eqref{eq:general-exterior-product-stability}, replace the factors one at a time. Each difference contains one factor $\widetilde\psi_j-\psi_j$ and otherwise normalized factors. Apply the norm bound to each term and sum the resulting errors.
\end{proof}

To remove products of previously recovered factors, we need a bound on their span in terms of the errors already attached to the factors. The following lemma gives this bound directly from the largest factor error, with a coefficient depending only on the maximal degree and the number of modes.

\begin{lemma}[Stability of product spaces]\label{lem:df-product-span-stability}
For each degree $\ell\geq2$, let $v_{\ell,1},\ldots,v_{\ell,d_\ell}$ and $\widetilde v_{\ell,1},\ldots,\widetilde v_{\ell,d_\ell}$ be orthonormal lists in $\wedge^\ell H$, with $\sum_\ell d_\ell\leq m/2$. Suppose that, after matching and phase alignment, every factor error satisfies $\|\widetilde v_{\ell,a}-v_{\ell,a}\|\leq x$. Choose any set of configurations of distinct factors of total degree $2\leq k\leq r$, and assume that the corresponding exact products are orthonormal. Let $O$ and $\widetilde O$ be the spans of the exact and estimated products, formed in the same fixed order.

Define the explicit error bound
\begin{align}
    \rho&:=r\sqrt{2^r r!\,m}\,x.
    \label{eq:df-product-span-rho}
\end{align}
If $\rho<1$, then
\begin{align}
    \dim\widetilde O&=\dim O,\qquad \|P_{\widetilde O}-P_O\|\leq\rho.
    \label{eq:df-product-span-projector}
\end{align}
For an empty configuration set, both spaces are zero and we take $\rho=0$. Thus, for fixed $r$, the projector-error bound is $O_r(\sqrt m\,x)$.
\end{lemma}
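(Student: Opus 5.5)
The plan is to compare the exact and estimated product families through their Gram matrices and to convert a Gram-matrix perturbation bound into a projector bound. Index the chosen configurations by $c=1,\ldots,M$. Each configuration is a set of distinct factors of total degree $k$, and each factor has degree at least $2$, so a configuration contains at most $\lfloor k/2\rfloor\leq r/2$ factors. Write $\pi_c$ and $\widetilde\pi_c$ for the exact and estimated products in the fixed order. By hypothesis the $\pi_c$ are orthonormal, so $O$ has dimension $M$. Form the synthesis maps $\Pi:\mathbb C^M\to\wedge^kH$, $e_c\mapsto\pi_c$, and $\widetilde\Pi$, $e_c\mapsto\widetilde\pi_c$. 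Then $\Pi$ is an isometry and $P_O=\Pi\Pi^\dagger$. If we show $\|\widetilde\Pi-\Pi\|\leq\epsilon$ with $\epsilon$ small, then $\widetilde\Pi$ is injective, which gives $\dim\widetilde O=\dim O$. Moreover $\widetilde O=\operatorname{ran}\widetilde\Pi$, and the standard bound for ranges of nearby injective maps, $\|P_{\widetilde O}-P_O\|\leq\|\widetilde\Pi-\Pi\|\,\|\Pi^{+}\|$-type estimates (or Davis--Kahan applied to $\widetilde\Pi\widetilde\Pi^\dagger$ versus $\Pi\Pi^\dagger$, Lem.~\ref{lem:projector-perturbation}), controls the projector error by a constant multiple of $\epsilon$. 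To keep the constant at most one, I would use the polar part: set $\widetilde\Pi(\widetilde\Pi^\dagger\widetilde\Pi)^{-1/2}$ and bound $\|P_{\widetilde O}-P_O\|\leq\|(I-P_O)\widetilde\Pi\|\,\|(\widetilde\Pi^\dagger\widetilde\Pi)^{-1/2}\|\leq\epsilon/(1-\epsilon)$.

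The core step is bounding $\|\widetilde\Pi-\Pi\|$ in operator norm rather than columnwise. Columnwise, Lem.~\ref{lem:exterior-product-stability} gives $\|\widetilde\pi_c-\pi_c\|\leq\sqrt{r!/2^{|c|}}\,|c|\,x\leq\sqrt{r!}\,r\,x$, using $p_j\geq2$ in the multinomial. The operator norm is at most the Frobenius norm, so $\|\widetilde\Pi-\Pi\|\leq\sqrt M\max_c\|\widetilde\pi_c-\pi_c\|$. However, $M$ can be as large as $m^{r/2}$, and that would destroy the claimed $\sqrt m$ dependence. I therefore expect the main obstacle to be exploiting structure. I would telescope the difference factor by factor: $\widetilde\pi_c-\pi_c=\sum_j(\cdots\wedge(\widetilde v_{j}-v_{j})\wedge\cdots)$. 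For a fixed position $j$, and a fixed choice of which factor $a$ sits there, the map $\sum_c$ factors through the linear exterior-product map $\wedge^{\ell}H\otimes\wedge^{k-\ell}H\to\wedge^kH$. Its norm is bounded by $\sqrt{k!/(\ell!(k-\ell)!)}\leq\sqrt{2^r}$ (last clause of Lem.~\ref{lem:exterior-product-stability}). It acts on the tensor $(\widetilde v_a-v_a)\otimes(\text{rest})$. The rest consists of products of the remaining factors, some exact and some estimated, and these are nearly orthonormal with norms at most $\sqrt{r!}$.

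To finish, sum over the at most $m/2$ choices of $a$. I would bound the contribution of each $a$ by $\|\widetilde v_a-v_a\|\leq x$ times the norm of the synthesis map of the rest, which is $O(\sqrt{r!})$, and then combine the choices of $a$ via Cauchy--Schwarz, which yields the factor $\sqrt m$. Summing over the at most $r$ positions gives $\epsilon\lesssim r\sqrt{2^r r!\,m}\,x$. Matching this to the stated $\rho$ (absorbing $\epsilon/(1-\epsilon)$ into the constant, or tightening the per-step constants) and noting that the empty case is trivial would complete the argument. The delicate point to verify is that distinctness of the factors within a configuration, together with orthonormality of the exact products, keeps the rest-synthesis maps well conditioned, so that the Cauchy--Schwarz step costs only $\sqrt m$.
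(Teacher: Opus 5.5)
Your architecture is the right one: synthesis maps, factor-by-factor telescoping, the operator-norm bound for the exterior-product map, one Cauchy--Schwarz step that costs $\sqrt m$, and then a range-perturbation argument. However, the step that carries the lemma is not established.

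After fixing a position $j$ and a factor $a$, you need an operator-norm bound on the rest-synthesis map $e_c\mapsto(\text{product of the remaining factors of }c)$. This bound must not grow with the number of configurations, which can be of order $m^{\lfloor k/2\rfloor-1}$. The reasons you give do not provide it:
\begin{itemize}
\item A bound of $\sqrt{r!}$ on each rest vector reproduces, one level down, exactly the $\sqrt M$ obstruction you identified.
\item The rests mix estimated and exact factors. The lemma assumes orthonormality only for the exact \emph{full} products, and this says nothing about partial or mixed products.
\item Calling the rests ``nearly orthonormal'' presupposes the very bound in question.
\end{itemize}

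The missing ingredient is that both factor lists are orthonormal at each degree, so the column matrices $V_\ell$ and $\widetilde V_\ell$ are isometries.

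\begin{enumerate}
\item Group the configurations by their nondecreasing degree tuple $\boldsymbol\ell=(\ell_1,\ldots,\ell_q)$, ordering factors by degree and then index. Let $\Pi_{\boldsymbol\ell}$ and $\widetilde\Pi_{\boldsymbol\ell}$ be the corresponding columns of $\Pi$ and $\widetilde\Pi$.
\item The matrix of all such products, whether exact, estimated, or mixed, is obtained in two moves. First select distinct columns of $X_1\otimes\cdots\otimes X_q$ with $X_i\in\{V_{\ell_i},\widetilde V_{\ell_i}\}$. Then apply the exterior-product map, whose operator norm on the whole tensor space is at most $\sqrt{k!/(\ell_1!\cdots\ell_q!)}$ by the last clause of Lem.~\ref{lem:exterior-product-stability}.
\item A tensor product of isometries is an isometry, and selecting columns with fixed signs does not increase the norm. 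Telescoping directly on the tensor factors therefore gives
\[
\|\widetilde\Pi_{\boldsymbol\ell}-\Pi_{\boldsymbol\ell}\|\leq\sqrt{\frac{k!}{\ell_1!\cdots\ell_q!}}\sum_{j=1}^q\|\widetilde V_{\ell_j}-V_{\ell_j}\|_{\mathrm{HS}}\leq r\sqrt{r!\,m}\,x .
\]
Here the $\sqrt m$ enters only through the Hilbert--Schmidt bound $\|\widetilde V_\ell-V_\ell\|_{\mathrm{HS}}\leq\sqrt{d_\ell}\,x$ on a single factor matrix. No decomposition over $a$ is needed, although the same isometry observation would also bound your rest-synthesis maps and close your $(j,a)$ route.
\item Concatenate the at most $2^r$ degree tuples and apply Cauchy--Schwarz. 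This supplies the factor $\sqrt{2^r}$ and gives $\|\widetilde\Pi-\Pi\|\leq\rho$.
\end{enumerate}

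Finally, your projector estimate $\epsilon/(1-\epsilon)$ does not yield the stated explicit bound $\rho$. Use instead the equal-rank identity $\|P_{\widetilde O}-P_O\|=\|(I-P_{\widetilde O})P_O\|$. Every unit vector $\Pi z\in O$ lies within $\|\Pi z-\widetilde\Pi z\|\leq\rho$ of $\widetilde\Pi z\in\widetilde O$, which gives exactly $\rho$.
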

\begin{proof}
Let $V_\ell$ and $\widetilde V_\ell$ have the respective factor lists as columns. Since the lists are orthonormal, $\|V_\ell\|,\|\widetilde V_\ell\|\leq1$ and $\|\widetilde V_\ell-V_\ell\|\leq\sqrt{d_\ell}\,x$.

Group the product columns by their nondecreasing degree tuples $\boldsymbol\ell=(\ell_1,\ldots,\ell_q)$, and write $W_{\boldsymbol\ell}$ and $\widetilde W_{\boldsymbol\ell}$ for the corresponding column matrices. Each matrix is obtained by selecting columns of the tensor product of the factor matrices and then taking their exterior products. Ordering the factors by degree and then by index makes the selected coordinate columns distinct; the signs restoring the fixed product order are the same for both matrices. Column selection and these signs therefore do not increase the norm of a difference. Replacing the tensor factors one at a time and using the operator-norm bound in Lem.~\ref{lem:exterior-product-stability} gives
\begin{align}
    \|\widetilde W_{\boldsymbol\ell}-W_{\boldsymbol\ell}\|&\leq\sqrt{\frac{k!}{\ell_1!\cdots\ell_q!}}\sum_{j=1}^q\sqrt{d_{\ell_j}}\,x\leq r\sqrt{r!\,m}\,x.
    \label{eq:product-column-factorization}
\end{align}
There are at most $2^r$ such degree tuples. Concatenating their column matrices into $W$ and $\widetilde W$ and applying Cauchy--Schwarz therefore yields $\|\widetilde W-W\|\leq\rho$. Grouping the columns in this way avoids a factor depending on the total number of products.

Because $W$ has orthonormal columns, $\rho<1$ makes $\widetilde W$ injective. The two spans consequently have the same dimension, and every unit vector $Wz$ is within $\rho$ of $\widetilde Wz$. The equal-dimension projector identity now gives Eq.~\eqref{eq:df-product-span-projector}.
\end{proof}

To apply the lemma, take $x$ to be the largest stored error among the constituent factors. The estimated products need not be orthogonal: only each factor list at a fixed degree must be orthonormal, as guaranteed by Gram splitting for both the dominant branches and the complementary blocks. The full dominant blocks reconstructed afterwards are not used to form these product spaces. We next control the error in removing the estimated product spaces from the candidates.

\begin{lemma}[Stable product removal]\label{lem:df-stable-intersection}
Let $P,Q$ be commuting orthogonal projectors, and set $F:=\operatorname{ran}P\cap\ker Q$, so that $P_F=P(I-Q)P$. Let $\widetilde P,\widetilde Q$ be orthogonal projectors with errors at most $\eta_P,\eta_Q$, respectively. Define $\widetilde B:=\widetilde P(I-\widetilde Q)\widetilde P$ and $\delta_\cap:=2\eta_P+\eta_Q$. Then
\begin{align}
    \|\widetilde B-P_{F}\|&\leq\delta_\cap.
    \label{eq:df-intersection-operator-error}
\end{align}
If $\delta_\cap\leq1/4$, the space $\widetilde F:=\operatorname{ran}\mathbf1_{(1/2,1]}(\widetilde B)$ lies in $\operatorname{ran}\widetilde P$, has dimension $\dim F$, and satisfies
\begin{align}
    &\|P_{\widetilde F}-P_{F}\|\leq4\delta_\cap.
    \label{eq:df-intersection-projector-error}
\end{align}
\end{lemma}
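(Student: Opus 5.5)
The plan is to split the argument into an operator-norm bound for $\widetilde B-P_F$, followed by an application of the spectral-projector perturbation result, Lem.~\ref{lem:projector-perturbation}. Since $P$ and $Q$ commute, $P(I-Q)P=P(I-Q)$ is an orthogonal projector, and its range is $\operatorname{ran}P\cap\ker Q=F$; this justifies $P_F=P(I-Q)P$. For the operator bound I will telescope the difference of the two triple products, replacing one factor at a time:
\begin{align*}
    \widetilde B-P(I-Q)P
    &=(\widetilde P-P)(I-\widetilde Q)\widetilde P
     +P\bigl(Q-\widetilde Q\bigr)\widetilde P
     +P(I-Q)(\widetilde P-P).
\end{align*}
Each of $\widetilde P$, $P$, $I-\widetilde Q$, and $I-Q$ is an orthogonal projector and so has norm at most one. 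The three terms are therefore bounded by $\eta_P$, $\eta_Q$, and $\eta_P$, and summing them gives $\|\widetilde B-P_F\|\leq2\eta_P+\eta_Q=\delta_\cap$.

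For the spectral part, $\widetilde B$ is Hermitian and satisfies $0\leq\widetilde B\leq I$. The exact operator $P_F$ has eigenvalue $1$ on $F$ and $0$ on $F^\perp$, so its spectrum has a gap $g=1$ between these two clusters. When $\delta_\cap\leq1/4<1/2$, Weyl's inequality (Lem.~\ref{lem:weyl-inequality}) confines every eigenvalue of $\widetilde B$ to $[0,1/4]\cup[3/4,1]$. Exactly $\dim F$ eigenvalues, counted with multiplicity, fall in the upper interval. Hence thresholding at $1/2$ selects the same ordered positions as the projector onto $F$, and $\widetilde F$ has dimension $\dim F$. Lem.~\ref{lem:projector-perturbation} with $g=1$ then gives $\|P_{\widetilde F}-P_F\|\leq\delta_\cap/(1-\delta_\cap)\leq\frac43\delta_\cap$, which is better than the claimed $4\delta_\cap$. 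The degenerate cases $F=\{0\}$ and $F$ equal to the whole space follow from the equal-rank remark in that lemma.

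Containment $\widetilde F\subseteq\operatorname{ran}\widetilde P$ is the one step that needs a little care. Suppose $\widetilde Bv=\lambda v$ with $\lambda>1/2$. Then $v=\lambda^{-1}\widetilde P(I-\widetilde Q)\widetilde Pv$, which lies in $\operatorname{ran}\widetilde P$. Every eigenvector with nonzero eigenvalue therefore lies in $\operatorname{ran}\widetilde P$, and so does their span $\widetilde F$. Apart from keeping track of which cluster is consecutive when applying Lem.~\ref{lem:projector-perturbation}, I expect no real obstacle in this argument.
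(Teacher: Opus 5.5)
Your proposal is correct and follows the paper's proof step for step. It uses the same one-factor-at-a-time telescoping to get $\|\widetilde B-P_F\|\leq2\eta_P+\eta_Q$, the same Weyl localization of the spectrum of $0\leq\widetilde B\leq I$ to $[0,\delta_\cap]\cup[1-\delta_\cap,1]$, Lem.~\ref{lem:projector-perturbation} with unit gap, and the same eigenvector identity $v=\lambda^{-1}\widetilde P(I-\widetilde Q)\widetilde Pv$ for the containment in $\operatorname{ran}\widetilde P$. The differences are cosmetic: you keep the sharper constant $\delta_\cap/(1-\delta_\cap)\leq\tfrac43\delta_\cap$ where the paper relaxes to $4\delta_\cap$, and you state the containment for $\lambda>1/2$ where the paper states it for all $\lambda>0$.
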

\begin{proof}
Commutativity gives $P_F=P(I-Q)P$. Replacing the three factors one at a time bounds $\|\widetilde B-P_F\|$ by $2\eta_P+\eta_Q=\delta_\cap$. Since $0\leq\widetilde B\leq I$, Weyl's inequality places its eigenvalues in $[0,\delta_\cap]\cup[1-\delta_\cap,1]$. Thus the threshold $1/2$ selects exactly $\operatorname{rank}P_F$ eigenvalues when $\delta_\cap\leq1/4$. Lem.~\ref{lem:projector-perturbation} gives $\|P_{\widetilde F}-P_F\|\leq\delta_\cap/(1-\delta_\cap)\leq4\delta_\cap$; the zero- and full-rank cases follow from the same spectral localization. Finally, if $\widetilde Bv=\lambda v$ with $\lambda>0$, then $v=\lambda^{-1}\widetilde P(I-\widetilde Q)\widetilde Pv\in\operatorname{ran}\widetilde P$.
\end{proof}

\subsection{Errors in the recovered block states}
\label{app:recovered-block-stability}

Prop.~\ref{prop:robust-gram-splitting} converts a degree-$k$ subspace error $\eta\leq1/(64km^3)$ into a factor error at most $(8k+2)m^3\eta$. This applies directly to the retained dominant branches and complementary block states. A recovered dominant branch must still be converted into its full block state; the following lemma controls that normalization.

\begin{lemma}[Full-block reconstruction]\label{lem:df-near-block-stability}
Let $\theta\in[2/3,3/4]$, let $f,\omega\in\wedge^kH$ be normalized, and suppose that $\|\Gamma^{(k)}\|\leq1$ and $\Gamma^{(k)}f=a\omega$ for some $a\in[\sqrt\theta,1]$. Let $\widetilde f$ be normalized and assume, after phase alignment, that $\|\widetilde f-f\|\leq\xi$ and $\|\widetilde\Gamma^{(k)}-\Gamma^{(k)}\|\leq\nu$. Set $\widetilde y:=\widetilde\Gamma^{(k)}\widetilde f$. If $\nu+\xi<\sqrt\theta/2$, then $\|\widetilde y\|>\sqrt\theta/2$ and
\begin{align}
    \left\|\frac{\widetilde y}{\|\widetilde y\|}-\omega\right\|
    &\leq\frac{2(\nu+\xi)}{\sqrt\theta}.
    \label{eq:df-near-block-error}
\end{align}
\end{lemma}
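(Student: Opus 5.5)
The plan is a triangle-inequality comparison of $\widetilde y$ with the exact image $y:=\Gamma^{(k)}f=a\omega$. We do not need the phase of $\widetilde y$ relative to $\omega$ beyond what $\widetilde f$ already fixes. Since $\|f-\widetilde f\|\leq\xi$ is taken after phase alignment, the normalized image of $\widetilde f$ should be close to $\omega$ itself, not merely to some rotation of it.

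The first step is to bound $\|\widetilde y-y\|$. Split
\[
\widetilde y-y=(\widetilde\Gamma^{(k)}-\Gamma^{(k)})\widetilde f+\Gamma^{(k)}(\widetilde f-f).
\]
Using $\|\widetilde f\|=1$ and $\|\Gamma^{(k)}\|\leq1$, this gives $\|\widetilde y-y\|\leq\nu+\xi=:\epsilon$. Since $\|y\|=a\geq\sqrt\theta$, we get $\|\widetilde y\|\geq a-\epsilon>\sqrt\theta-\sqrt\theta/2=\sqrt\theta/2$. This is the first claim, and in particular normalization is well defined.

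The second step controls the normalized vector. Write
\[
\frac{\widetilde y}{\|\widetilde y\|}-\omega=\frac{\widetilde y-y}{\|\widetilde y\|}+y\left(\frac{1}{\|\widetilde y\|}-\frac1a\right).
\]
The first term has norm at most $\epsilon/\|\widetilde y\|$. The second has norm $a\,|a-\|\widetilde y\||/(a\|\widetilde y\|)\leq\epsilon/\|\widetilde y\|$, by the reverse triangle inequality. Together these give the bound $2\epsilon/\|\widetilde y\|$.

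To obtain exactly $2\epsilon/\sqrt\theta$, the denominator bound $\|\widetilde y\|>\sqrt\theta/2$ is too weak by a factor of 2. So I would instead use the standard sharper estimate
\[
\left\|\frac{u}{\|u\|}-\frac{v}{\|v\|}\right\|\leq\frac{2\|u-v\|}{\max\{\|u\|,\|v\|\}}.
\]
Take $v=y$, with $\|v\|=a\geq\sqrt\theta$. To prove this estimate, assume without loss of generality $\|v\|\geq\|u\|$. Compare $u/\|u\|$ with $u/\|v\|$ and then with $v/\|v\|$: each piece has norm at most $\|u-v\|/\|v\|$. If instead $\|u\|>\|v\|$, the symmetric argument gives the denominator $\|u\|$, which is still at least $\max$. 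Either way we obtain $\leq2\epsilon/a\leq2\epsilon/\sqrt\theta$.

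The only delicate point is this choice of denominator. The argument must normalize against the exact norm $a$, rather than the perturbed norm, to avoid losing the constant. Everything else is routine.
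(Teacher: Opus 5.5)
Your proof is correct and follows the paper's argument. The same splitting $\widetilde y-a\omega=(\widetilde\Gamma^{(k)}-\Gamma^{(k)})\widetilde f+\Gamma^{(k)}(\widetilde f-f)$ gives both $\|\widetilde y-a\omega\|\leq\nu+\xi$ and the lower bound on $\|\widetilde y\|$, and your final step reduces to the paper's normalization estimate $\|y/\|y\|-\omega\|\leq2\|y-a\omega\|/a$. Your max-denominator version with a case split is valid, but the case split is unnecessary: inserting $\widetilde y/a$ as the intermediate vector gives the $2(\nu+\xi)/a$ bound directly in both cases, since $\bigl|\|\widetilde y\|-a\bigr|\leq\|\widetilde y-a\omega\|$.
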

\begin{proof}
Write $\widetilde y-a\omega=(\widetilde\Gamma^{(k)}-\Gamma^{(k)})\widetilde f+\Gamma^{(k)}(\widetilde f-f)$, whose norm is at most $\nu+\xi$. Hence $\|\widetilde y\|\geq a-(\nu+\xi)>\sqrt\theta/2$. For any nonzero $y$ and unit vector $\omega$, the triangle inequality gives $\|y/\|y\|-\omega\|\leq2\|y-a\omega\|/a$ when $a>0$. Applying this to $\widetilde y$ proves the bound. A phase multiplying $\widetilde f$ multiplies the reconstructed block by the same phase.
\end{proof}

For selected spaces with the correct dimensions and valid projector-error bounds $\eta_{F,k}^{\mathrm N},\eta_{T,k}^{\mathrm H}\leq1/(64km^3)$, Prop.~\ref{prop:robust-gram-splitting} gives the computable factor-error bounds
\begin{align}
    \xi_{f,k}&:=(8k+2)m^3\eta_{F,k}^{\mathrm N},\qquad \xi_{\mathrm H,k}:=(8k+2)m^3\eta_{T,k}^{\mathrm H},\qquad \xi_{\mathrm N,k}:=\frac{2(\nu+\xi_{f,k})}{\sqrt\theta}.
    \label{eq:df-factor-certificates}
\end{align}
Set $\xi_{f,k}=0$ or $\xi_{\mathrm H,k}=0$ when the corresponding output list is empty. The first two bounds apply to the returned dominant branches and complementary blocks; the last applies after full-block reconstruction whenever $\nu+\xi_{f,k}<\sqrt\theta/2$.

\subsection{Stability of the remaining occupied subspace}
\label{app:effective-core-stability}

Once the retained branch spaces have been recovered, the remaining occupied core can be estimated directly from their subspace errors. The following proposition collects the inputs and the resulting guarantees; its construction and proof follow below.

\begin{proposition}[Recovery of the occupied core]\label{prop:df-core-stability}
Let $S_\theta\subseteq H$, and for each degree $\ell\geq2$ let $T_\ell$ be spanned by $d_\ell$ normalized $\ell$-particle Gaussian pure states. Assume that the one-particle supports of all these states are pairwise orthogonal and contained in $S_\theta$. Write $F$ for their combined one-particle support and $S_{\mathrm{core}}:=S_\theta\cap F^\perp$ for the remaining occupied subspace.

The inputs are orthonormal bases of $\widetilde S_\theta$ and $d_\ell$-dimensional spaces $\widetilde T_\ell\subseteq\wedge^\ell\widetilde S_\theta$, together with bounds
\begin{align}
    \|P_{\widetilde S_\theta}-P_{S_\theta}\|&\leq\eta_S,\qquad \|P_{\widetilde T_\ell}-P_{T_\ell}\|\leq\eta_\ell.
    \label{eq:core-recovery-input}
\end{align}
If $\eta_S<1$ and $\sum_\ell d_\ell\eta_\ell\leq1/4$, a deterministic procedure returns a subspace $\widetilde S_{\mathrm{core}}\subseteq\widetilde S_\theta$ of the correct dimension and its normalized occupied state $\widetilde\sigma_{\mathrm{core}}$, satisfying
\begin{align}
    \|P_{\widetilde S_{\mathrm{core}}}-P_{S_{\mathrm{core}}}\|&\leq\eta_S+4\sum_\ell d_\ell\eta_\ell.
    \label{eq:df-core-projector-error}
\end{align}
After phase alignment with the normalized occupied state $\sigma_{\mathrm{core}}$ of $S_{\mathrm{core}}$,
\begin{align}
    \|\widetilde\sigma_{\mathrm{core}}-\sigma_{\mathrm{core}}\|&\leq\sqrt{2m}\left(\eta_S+4\sum_\ell d_\ell\eta_\ell\right).
    \label{eq:df-core-volume-error}
\end{align}
For fixed maximal degree, the procedure uses polynomially many arithmetic operations in $m$, using only the supplied classical descriptions and no additional state copies. Empty branch lists contribute zero to the sum, and a zero-dimensional core gives the vacuum.
\end{proposition}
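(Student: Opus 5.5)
The procedure follows the algorithm: form $\widetilde K:=\sum_\ell\sum_{j}\Gamma^{(1)}_{\widetilde u_{\ell,j}}$ from the supplied orthonormal basis $\{\widetilde u_{\ell,j}\}$ of each $\widetilde T_\ell$, let $\widetilde F$ be its spectral subspace above $1/2$, set $\widetilde S_{\mathrm{core}}:=\widetilde S_\theta\cap\widetilde F^\perp$, and occupy an orthonormal basis of it. The first step is to check that $\widetilde K$ is basis-independent within each $\widetilde T_\ell$. This holds because $\sum_j\Gamma^{(1)}_{\widetilde u_{\ell,j}}$ is the partial-trace-type linear map $\Gamma^{(1)}$ applied to $P_{\widetilde T_\ell}$, i.e.\ $X\mapsto\sum_{I,J}\cdots$ extended linearly from $|x\rangle\langle x|\mapsto\Gamma^{(1)}_x$. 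The exact analogue is $K:=\sum_\ell\sum_b\Gamma^{(1)}_{\psi_{\ell,b}}=\sum P_{F_{\ell,b}}=P_F$ by Prop.~\ref{prop:occupied-subspace-rdm}, since the spanning states are Gaussian, i.e.\ fully occupy their supports, and those supports are orthogonal.

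The second step bounds $\|\widetilde K-P_F\|$. Write $\Lambda(X)$ for the linear map sending a degree-$\ell$ operator $X$ to its $1$-RDM, so that $\widetilde K-K=\sum_\ell\Lambda(P_{\widetilde T_\ell}-P_{T_\ell})$. The difference $P_{\widetilde T_\ell}-P_{T_\ell}$ has rank at most $2d_\ell$ and norm at most $\eta_\ell$. I would estimate $\Lambda$ on a Hermitian operator $Y$ by its spectral decomposition, $\|\Lambda(Y)\|\leq\sum_i|\lambda_i|\,\|\Gamma^{(1)}_{y_i}\|\leq\|Y\|_1$, using $0\leq\Gamma^{(1)}_y\leq I$ from Lem.~\ref{lem:df-one-rdm-lipschitz}. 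This gives $\|\widetilde K-P_F\|\leq\sum_\ell 2d_\ell\eta_\ell\leq 1/2$. That bound is too weak to give the stated constant, so I would refine it through a canonical unitary alignment $Q_\ell$ (Lem.~\ref{lem:canonical-unitary-alignment}). Choosing $\widetilde u_{\ell,j}=Q_\ell^\dagger\psi_{\ell,j}$, each vector moves by at most $\sqrt2\eta_\ell$, and the Lipschitz bound gives $\|\widetilde K-P_F\|\leq2\sqrt2\sum d_\ell\eta_\ell$. Either route yields $\delta:=\|\widetilde K-P_F\|\leq c\sum_\ell d_\ell\eta_\ell$ with $c\leq 2\sqrt2<4$, hence $\delta<1/2$ under the hypothesis. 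Weyl's inequality then separates the spectrum of $\widetilde K$, and Lem.~\ref{lem:projector-perturbation} with gap $1$ gives $\dim\widetilde F=\dim F$ and $\|P_{\widetilde F}-P_F\|\leq\delta/(1-\delta)$. Getting the advertised constant $4$ requires $\delta/(1-\delta)\leq 4\sum d_\ell\eta_\ell$, which follows from $\delta\leq 2\sqrt2 s$ and $s\leq1/4$. Matching this constant is the delicate bookkeeping step.

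The third step handles the core. Because $\widetilde T_\ell\subseteq\wedge^\ell\widetilde S_\theta$, each $\Gamma^{(1)}_{\widetilde u}$ is supported in $\widetilde S_\theta$, and so are $\widetilde K$ and $\widetilde F$. Hence $P_{\widetilde S_{\mathrm{core}}}=P_{\widetilde S_\theta}-P_{\widetilde F}$, and likewise $P_{S_{\mathrm{core}}}=P_{S_\theta}-P_F$ because $F\subseteq S_\theta$. The triangle inequality gives the projector bound~\eqref{eq:df-core-projector-error}. Dimensions match since $\dim\widetilde S_\theta=\dim S_\theta$, which follows from $\eta_S<1$. For the state bound, let $\epsilon$ denote the projector error and let $q$ be the core dimension. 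If $\epsilon\geq1$ there is nothing to prove beyond the trivial bound $2$, so assume $\epsilon<1$ and align the cores with Lem.~\ref{lem:canonical-unitary-alignment}, taking principal-vector bases $e_j\mapsto f_j$. The occupied states are $\wedge_je_j$ and $\wedge_jf_j$, and their overlap is $\prod_j\cos\theta_j\geq 1-\sum_j(1-\cos\theta_j)\geq1-q\epsilon^2$. Therefore $\|\widetilde\sigma-\sigma\|^2=2-2\prod\cos\theta_j\leq2q\epsilon^2\leq2m\epsilon^2$, which is~\eqref{eq:df-core-volume-error}. The cost is polynomial, since forming $\widetilde K$ uses transition RDMs of at most $m$ vectors of dimension $\binom m\ell$ followed by $m\times m$ diagonalizations.
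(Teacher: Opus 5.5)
Your construction and Steps 1 and 3 match the paper's proof: basis independence via $\langle h,K(T)h\rangle=\operatorname{Tr}(n_hP_T)$, the identity $P_{\widetilde S_{\mathrm{core}}}=P_{\widetilde S_\theta}-P_{\widetilde F}$, and the principal-angle bound for the occupied state. The gap is the key estimate in Step 2, where both of your claims fail. Write $s:=\sum_\ell d_\ell\eta_\ell\le1/4$. You claim that $c\le2\sqrt2$ gives $\delta<1/2$, and that $\delta\le2\sqrt2\,s$ gives $\delta/(1-\delta)\le4s$.

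At $s=1/4$ you only get $\delta\le\sqrt2/2>1/2$. Then Weyl's inequality no longer separates the clusters $[0,\delta]$ and $[1-\delta,1+\delta]$ of $\widetilde K$, so the threshold at $1/2$ may select the wrong number of eigenvalues. Even $\dim\widetilde F=\dim F$ is therefore not established. Lem.~\ref{lem:projector-perturbation} also does not apply, since it needs $\delta<g/2=1/2$. Where it does apply, $2\sqrt2\,s/(1-2\sqrt2\,s)\le4s$ holds only for $s\le(\sqrt2-1)/4\approx0.10$. In fact the dimension claim and the constant $4$ together require $c<2$, because $cs/(1-cs)\le4s$ at $s=1/4$ already forces $c\le2$.

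The alignment route is also not a refinement: it loses a factor $\sqrt2$ relative to your first trace-norm estimate. That first estimate gives $c=2$, hence $\delta\le2s\le1/2$ and $2s/(1-2s)\le4s$. This suffices for $s<1/4$, but it still fails at the allowed boundary $s=1/4$, where the two eigenvalue intervals touch at $1/2$.

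The missing idea is tracelessness. Because $\dim\widetilde T_\ell=\dim T_\ell$, the Hermitian operator $X:=P_{\widetilde T_\ell}-P_{T_\ell}$ has $\operatorname{Tr}X=0$, rank at most $2d_\ell$, and norm at most $\eta_\ell$. Its positive and negative parts therefore have equal trace $\|X\|_1/2\le d_\ell\eta_\ell$. Since $0\le n_h\le I$, each of $\operatorname{Tr}(n_hX_\pm)$ lies in $[0,\operatorname{Tr}X_\pm]$, so
$|\langle h,(K(\widetilde T_\ell)-K(T_\ell))h\rangle|=|\operatorname{Tr}(n_hX)|\le d_\ell\eta_\ell$.
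Your bound $\sum_i|\lambda_i|\,\|\Gamma^{(1)}_{y_i}\|$ discards exactly this sign cancellation. With it you get $\delta\le s\le1/4<1/2$, and then $\delta/(1-\delta)\le4\delta/3\le4s$. This is the paper's Lem.~\ref{lem:one-rdm-subspace-sum}, and once it replaces your Step 2 bound, the rest of your argument goes through unchanged.
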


In the homogeneous case, apply the proposition with $T_p=R_p$, $\widetilde T_p=\widetilde R_p$, and $\eta_p=\eta_{R,p}$. In the heterogeneous case, use $T_\ell=F_\ell^{\mathrm N}$, $\widetilde T_\ell=\widetilde F_\ell^{\mathrm N}$, and $\eta_\ell=\eta_{F,\ell}^{\mathrm N}$. These are the spaces and error bounds available before Gram splitting.

The proof uses two elementary estimates. The first controls the sum of the $1$-RDMs over any orthonormal basis of a subspace.

\begin{lemma}[Basis-independent sum of $1$-RDMs]
\label{lem:one-rdm-subspace-sum}
For a $d$-dimensional subspace $T\subseteq\wedge^kH$, set $K(T):=\sum_{a=1}^d\Gamma_{u_a}^{(1)}$, where $u_1,\ldots,u_d$ is any orthonormal basis. This matrix is independent of the basis. If $\widetilde T$ has the same dimension and $\|P_{\widetilde T}-P_T\|\leq\eta$, then
\begin{align}
    \|K(\widetilde T)-K(T)\|&\leq d\eta.
    \label{eq:one-rdm-subspace-sum-error}
\end{align}
We set $K(\{0\})=0$.
\end{lemma}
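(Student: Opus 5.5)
The plan is to write $K(T)$ as a linear function of the projector $P_T$. Then basis independence is immediate, and the perturbation bound reduces to a trace-norm estimate.

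For fixed $i,j\in[m]$, let $a_{ij}:=\hat c_j^\dagger\hat c_i$, restricted to the $k$-particle sector. For any orthonormal basis $u_1,\ldots,u_d$ of $T$, the definition gives
\begin{align*}
    \bigl(K(T)\bigr)_{i,j}=\sum_{a=1}^d\langle u_a,a_{ij}u_a\rangle=\operatorname{Tr}\bigl[a_{ij}P_T\bigr],
\end{align*}
so $K(T)$ depends only on $P_T$.

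More invariantly, for unit vectors $g,h\in H$,
\begin{align*}
    \langle g,K(T)h\rangle=\operatorname{Tr}\bigl[\hat c[g]^\dagger\hat c[h]\,P_T\bigr].
\end{align*}
Hence, for the difference of the two subspaces,
\begin{align*}
    \langle g,(K(\widetilde T)-K(T))h\rangle=\operatorname{Tr}\bigl[\hat c[g]^\dagger\hat c[h](P_{\widetilde T}-P_T)\bigr].
\end{align*}
The operator $\hat c[g]^\dagger\hat c[h]$ has operator norm at most $1$, because $\|\hat c[h]\|\leq1$ by the anticommutation relations. So the right-hand side is at most $\|P_{\widetilde T}-P_T\|_1$ in absolute value. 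Taking the supremum over unit $g,h$ bounds $\|K(\widetilde T)-K(T)\|$ by this trace norm.

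The remaining step is the trace-norm bound $\|P_{\widetilde T}-P_T\|_1\leq d\eta$; this is where the factor $d$ must come from. For equal-rank projectors, the difference $P_{\widetilde T}-P_T$ has nonzero eigenvalues $\pm\sin\theta_j$, where the $\theta_j$ are the $d$ principal angles between $T$ and $\widetilde T$. Each sign occurs once per angle, so the trace norm is $2\sum_j\sin\theta_j\leq2d\eta$. This gives only $2d\eta$, so a finer argument is needed.

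To remove the factor $2$, I would use the principal-vector bases $e_j$ of $T$ and $f_j$ of $\widetilde T$ from Lem.~\ref{lem:canonical-unitary-alignment}. Since $K$ is basis independent, $K(\widetilde T)-K(T)=\sum_j(\Gamma_{f_j}^{(1)}-\Gamma_{e_j}^{(1)})$. For each $j$, the difference $\Gamma_{f_j}^{(1)}-\Gamma_{e_j}^{(1)}$ is controlled by $\langle g,\cdot\,g\rangle=\operatorname{Tr}[n_g(\Pi_{f_j}-\Pi_{e_j})]$, where $n_g:=\hat c[g]^\dagger\hat c[g]$ satisfies $0\leq n_g\leq I$. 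The operator $\Pi_{f_j}-\Pi_{e_j}$ is traceless with eigenvalues $\pm\sin\theta_j$. For any $0\leq n\leq I$, $|\operatorname{Tr}[nX]|\leq\|X_+\|_1$ whenever $X$ is traceless Hermitian. So each term has operator norm at most $\sin\theta_j\leq\eta$, since the Hermitian operator norm is attained on quadratic forms. Summing over $j$ yields $d\eta$.

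The main obstacle is getting the constant $d$ rather than $2d$. The principal-angle decomposition combined with the traceless-positive-part estimate handles it.
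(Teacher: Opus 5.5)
Your proof is correct and uses the same two ingredients as the paper: the quadratic form $\langle g,K(T)g\rangle=\operatorname{Tr}(n_gP_T)$ with $0\le n_g\le I$, and the estimate $|\operatorname{Tr}(nX)|\le\operatorname{Tr}X_+$ for traceless Hermitian $X$. The paper applies the latter once to $X=P_{\widetilde T}-P_T$, whose positive part has trace $\|X\|_1/2\le d\eta$ because $X$ is traceless with rank at most $2d$ and $\|X\|\le\eta$. Your principal-angle splitting is therefore a harmless detour. The factor $2$ in your first attempt came only from bounding the non-positive form $\hat c^\dagger[g]\hat c[h]$ against the full trace norm.
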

\begin{proof}
For each unit $h\in H$, $\langle h,K(T)h\rangle=\operatorname{Tr}(n_hP_T)$, where $n_h=\hat c[h]^\dagger\hat c[h]$. This proves basis independence. Set $X:=P_{\widetilde T}-P_T$. It is Hermitian, has trace zero and rank at most $2d$, and satisfies $\|X\|\leq\eta$. Its positive and negative parts therefore have equal trace $\|X\|_1/2\leq d\eta$. Since $0\leq n_h\leq I$, $|\operatorname{Tr}(n_hX)|\leq d\eta$. Taking the supremum over $h$ proves Eq.~\eqref{eq:one-rdm-subspace-sum-error}.
\end{proof}

The second estimate converts the core's subspace error into its state error. Its attachment bound will also be used in the final assembly.

\begin{lemma}[Stability of an occupied subspace]\label{lem:exterior-canonical-alignment}
Let $E,F\subseteq H$ have the same dimension $d$, and let $\widetilde\sigma\in\wedge^dE$ and $\sigma\in\wedge^dF$ be normalized. A choice of their relative phase gives
\begin{align}
    \|\widetilde\sigma-\sigma\|&\leq\sqrt{2d}\,\|P_E-P_F\|.
    \label{eq:exterior-canonical-alignment}
\end{align}
For $0\leq t\leq\dim F^\perp$, any choice of phases, any normalized $\Omega\in\wedge^t(F^\perp)$, and any $\widetilde\Omega\in\wedge^tH$, we also have
\begin{align}
    \|(\widetilde\sigma-\sigma)\wedge\Omega\|&\leq\|\widetilde\sigma-\sigma\|,\qquad \|\widetilde\sigma\wedge\widetilde\Omega-\sigma\wedge\Omega\|\leq\|\widetilde\sigma-\sigma\|+\|\widetilde\Omega-\Omega\|.
    \label{eq:core-attachment-stability}
\end{align}
The degree-zero case uses the vacuum convention.
\end{lemma}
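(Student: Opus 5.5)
The plan is to prove both parts through a principal-vector (canonical-angle) decomposition of $E$ relative to $F$, the same device used in the proof of Lem.~\ref{lem:canonical-unitary-alignment}. Set $\eta:=\|P_E-P_F\|$. The degree-zero case is the vacuum convention and is trivial, so assume $d\geq1$.

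\emph{First inequality.} Any two normalized vectors, phase-aligned so that $\langle\sigma,\widetilde\sigma\rangle\geq0$, satisfy $\|\widetilde\sigma-\sigma\|\leq\sqrt2$. Hence the bound is automatic when $\eta\geq1$, since then $\sqrt{2d}\,\eta\geq\sqrt2$. For $\eta<1$ we proceed as follows.
\begin{enumerate}
\item Choose orthonormal principal bases $e_1,\ldots,e_d$ of $E$ and $f_1,\ldots,f_d$ of $F$ with $\langle e_i,f_j\rangle=\delta_{ij}\cos\theta_j$ and $\cos\theta_j\geq0$.
\item Since $\wedge^dE$ and $\wedge^dF$ are one-dimensional, we may take $\widetilde\sigma=e_1\wedge\cdots\wedge e_d$ and $\sigma=f_1\wedge\cdots\wedge f_d$ after fixing phases.
\item The inner product of two Slater determinants is the determinant of the overlap matrix. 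This matrix is diagonal here, so $\langle\sigma,\widetilde\sigma\rangle=\prod_j\cos\theta_j\geq0$.
\item For $c_j\in[0,1]$ we have $1-\prod_jc_j\leq\sum_j(1-c_j)$, and also $1-\cos\theta_j\leq\sin^2\theta_j\leq\eta^2$.
\item Combining these gives $\|\widetilde\sigma-\sigma\|^2=2-2\prod_j\cos\theta_j\leq2d\eta^2$.
\end{enumerate}

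\emph{Attachment bounds.} The key point is that a general normalized $\Omega\in\wedge^t(F^\perp)$ is not a Slater determinant, so the crude bound of Lem.~\ref{lem:exterior-product-stability} would cost a binomial factor. Instead we exploit the structure of $\widetilde\sigma-\sigma$.
\begin{enumerate}
\item Use the principal decomposition without assuming $\eta<1$. Write $e_j=\cos\theta_j f_j+\sin\theta_j g_j$, where $g_j:=(e_j-P_Fe_j)/\sin\theta_j\in F^\perp$ whenever $\sin\theta_j>0$. The standard property of principal vectors makes the $g_j$ orthonormal. Terms with $\sin\theta_j=0$ carry a zero coefficient, so no $g_j$ is needed for them.
\item Expand $\widetilde\sigma-\sigma=\sum_{S\subseteq[d]}x_S\,f_S\wedge g_{S^c}$ in wedges of $f$'s and $g$'s; the vectors $f_S\wedge g_{S^c}$ are orthonormal. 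Then $(\widetilde\sigma-\sigma)\wedge\Omega=\sum_S x_S\,f_S\wedge(g_{S^c}\wedge\Omega)$.
\item Identify $\wedge H$ with $(\wedge F)\otimes(\wedge F^\perp)$, as in Prop.~\ref{prop:blockwise-rdm-factorization}. Distinct $S$ give orthogonal $F$-factors $f_S$.
\item Each $g_{S^c}\wedge\Omega$ is obtained from $\Omega$ by a product of creation operators $\hat c^\dagger[g_j]$ for unit vectors $g_j$, and each such operator has norm one. Hence $\|g_{S^c}\wedge\Omega\|\leq1$.
\item Therefore $\|(\widetilde\sigma-\sigma)\wedge\Omega\|^2\leq\sum_S|x_S|^2=\|\widetilde\sigma-\sigma\|^2$.
\end{enumerate}
A fixed sign convention for reordering the factors is harmless, since it is the same on both sides and cancels in the norm.

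\emph{Second attachment bound.} Split
\[
\widetilde\sigma\wedge\widetilde\Omega-\sigma\wedge\Omega=(\widetilde\sigma-\sigma)\wedge\Omega+\widetilde\sigma\wedge(\widetilde\Omega-\Omega).
\]
The first term is controlled by the bound just proved. For the second term, $\widetilde\sigma=e_1\wedge\cdots\wedge e_d$ is a Slater determinant, so wedging with it is a product of norm-one creation operators. This bounds the second term by $\|\widetilde\Omega-\Omega\|$.

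\emph{Main obstacle.} I expect the main obstacle to be the first attachment inequality, for the reason noted above: $\Omega$ need not be a Slater determinant, so the contraction must come from the principal-vector structure of $\widetilde\sigma-\sigma$ rather than from $\Omega$. The two points to verify carefully are the orthonormality of the $g_j$ in $F^\perp$, and that angles equal to $\pi/2$ cause no degeneracy.
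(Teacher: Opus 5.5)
Your proof is correct. For the volume bound and for the second attachment bound you argue exactly as the paper does. The first uses principal angles together with $1-\prod_j\cos\theta_j\leq\sum_j\sin^2\theta_j$; the second uses the splitting $\widetilde\sigma\wedge(\widetilde\Omega-\Omega)+(\widetilde\sigma-\sigma)\wedge\Omega$ and the fact that wedging with a Slater determinant is a product of norm-one creation operators.

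The genuine difference is in the first attachment bound. The paper never expands $\widetilde\sigma$ in mixed $f/g$ wedges. Instead it writes $\|(\widetilde\sigma-\sigma)\wedge\Omega\|^2=\|\widetilde\sigma\wedge\Omega\|^2+\|\sigma\wedge\Omega\|^2-2\operatorname{Re}\langle\sigma\wedge\Omega,\widetilde\sigma\wedge\Omega\rangle$ and handles the three terms as follows:
\begin{itemize}
\item the first term is at most one, because $\widetilde\sigma$ is a Slater determinant;
\item the second term equals one;
\item since $\Omega$ has no particles in $F$, only the $\wedge^dF$ component $\langle\sigma,\widetilde\sigma\rangle\sigma$ of $\widetilde\sigma$ survives in the overlap, so the overlap equals $\langle\sigma,\widetilde\sigma\rangle$.
\end{itemize}
The right-hand side is therefore $2-2\operatorname{Re}\langle\sigma,\widetilde\sigma\rangle=\|\widetilde\sigma-\sigma\|^2$.

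The paper's argument uses only the grading by particle number in $F$. It needs no principal vectors for this step and no orthonormality of the $g_j$.

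Your route instead uses the full orthogonal expansion $\widetilde\sigma-\sigma=\sum_S x_S\,f_S\wedge g_{S^c}$ and the isometry $\wedge H\cong(\wedge F)\otimes(\wedge F^\perp)$. This costs you the extra verification that the $g_j$ are orthonormal in $F^\perp$, which you do correctly, including the cases $\sin\theta_j=0$ and $\cos\theta_j=0$. In return it gives a finer, sector-by-sector contraction: each $F$-component of the difference carries a single Slater determinant $g_{S^c}$ in $F^\perp$, so it is contracted separately.

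In both proofs, your concern that $\Omega$ need not be a Slater determinant is resolved in the same way. The only operation ever applied to $\Omega$ is wedging with a Slater determinant: $\widetilde\sigma$ in the paper, $g_{S^c}$ in yours. The paper packages this through an inner-product identity, whereas you use an explicit expansion.
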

\begin{proof}
Let $\theta_1,\ldots,\theta_d$ be the principal angles between $E$ and $F$. Choosing the relative phase makes $\langle\sigma,\widetilde\sigma\rangle=\prod_{j=1}^d\cos\theta_j\geq0$. Since $\sin\theta_j\leq\|P_E-P_F\|$ and $1-\cos\theta_j\leq\sin^2\theta_j$,
\begin{align}
    \|\widetilde\sigma-\sigma\|^2&=2\left(1-\prod_{j=1}^d\cos\theta_j\right)\leq2\sum_{j=1}^d\sin^2\theta_j\leq2d\,\|P_E-P_F\|^2.
    \label{eq:core-volume-principal-angle-bound}
\end{align}
For the attachment bounds, wedging with the unit volume $\widetilde\sigma$ is a product of creation operators for orthonormal modes, each of norm one. Hence $\|\widetilde\sigma\wedge X\|\leq\|X\|$ for every exterior vector $X$. Also, $\|\sigma\wedge\Omega\|=1$. Because $\Omega$ has no particles in $F$, only the component of $\widetilde\sigma$ in $\wedge^dF$ contributes to the overlap with $\sigma\wedge\Omega$. Thus $\langle\sigma\wedge\Omega,\widetilde\sigma\wedge\Omega\rangle=\langle\sigma,\widetilde\sigma\rangle$, and
\begin{align}
    \|(\widetilde\sigma-\sigma)\wedge\Omega\|^2&\leq2-2\operatorname{Re}\langle\sigma,\widetilde\sigma\rangle=\|\widetilde\sigma-\sigma\|^2.
    \label{eq:core-attachment-overlap-bound}
\end{align}
Finally, write the difference in the second bound of Eq.~\eqref{eq:core-attachment-stability} as $\widetilde\sigma\wedge(\widetilde\Omega-\Omega)+(\widetilde\sigma-\sigma)\wedge\Omega$ and apply the two preceding bounds.
\end{proof}

\begin{proof}[Proof of Prop.~\ref{prop:df-core-stability}]
For each supplied basis of $\widetilde T_\ell$, sum its vectors' $1$-RDMs to form $\widetilde K:=\sum_\ell K(\widetilde T_\ell)$. Each exact branch occupies all modes of its one-particle support, so its $1$-RDM projects onto that support. Since these supports are mutually orthogonal, Lem.~\ref{lem:one-rdm-subspace-sum} gives
\begin{align}
    \sum_\ell K(T_\ell)&=P_F,\qquad \|\widetilde K-P_F\|\leq\sum_\ell d_\ell\eta_\ell.
    \label{eq:core-subspace-certificate}
\end{align}
Set $\widetilde F:=\operatorname{ran}\mathbf1_{(1/2,\infty)}(\widetilde K)$. The last bound is at most $1/4$, so thresholding this approximate projector at $1/2$ and applying Lem.~\ref{lem:projector-perturbation} yields
\begin{align}
    \dim\widetilde F&=\dim F,\qquad \|P_{\widetilde F}-P_F\|\leq4\sum_\ell d_\ell\eta_\ell.
    \label{eq:df-retained-support-error}
\end{align}

For $h\perp\widetilde S_\theta$, the operator $\hat c[h]$ vanishes on every vector in every $\widetilde T_\ell$. Their $1$-RDMs, and hence $\widetilde K$, are therefore supported on $\widetilde S_\theta$. Thus $\widetilde F\subseteq\widetilde S_\theta$, and we return $\widetilde S_{\mathrm{core}}:=\widetilde S_\theta\cap\widetilde F^\perp$. The projector identities
\begin{align}
    P_{\widetilde S_{\mathrm{core}}}&=P_{\widetilde S_\theta}-P_{\widetilde F},\qquad P_{S_{\mathrm{core}}}=P_{S_\theta}-P_F
    \label{eq:core-projector-difference}
\end{align}
give Eq.~\eqref{eq:df-core-projector-error} by the triangle inequality. The high-occupation spaces have equal dimensions because $\eta_S<1$; subtracting the equal retained-support dimensions proves the core dimension claim. Finally, occupy every mode of an orthonormal basis of $\widetilde S_{\mathrm{core}}$ to obtain $\widetilde\sigma_{\mathrm{core}}$. Lem.~\ref{lem:exterior-canonical-alignment} gives Eq.~\eqref{eq:df-core-volume-error}.
\end{proof}

\subsection{Assembly of approximate blocks and an occupied core}
\label{app:assembly-explicit-accuracy}

We now bound the final normalized state in terms of errors in the individual factors and the occupied core. Since the estimated factors need not have mutually orthogonal one-particle supports, their exterior product need not be normalized. The following proposition implies the common fidelity bound used in both settings, including truncation and normalization.

\begin{proposition}[Assembly with an occupied core]\label{prop:equal-p-assembly}
Let $\Psi_{\mathrm{tr}}=\sigma\wedge\bigwedge_{j=1}^n\psi_j$ be a normalized state on $m$ modes, where $\sigma$ is a normalized exterior volume of the core and the normalized $p_j$-particle factors $\psi_j$, $p_j\geq1$, have mutually orthogonal one-particle supports outside the core. Suppose the normalized estimates $\widetilde\psi_j$ have the same particle numbers, and $\widetilde\sigma$ is a normalized exterior volume of an estimated core of the same dimension. After phase and permutation alignment, assume $\|\widetilde\psi_j-\psi_j\|\leq\epsilon_j$ and $\|\widetilde\sigma-\sigma\|\leq\xi_{\mathrm{core}}$.

Define the total assembly-error bound
\begin{align}
    \mathfrak A&:=\xi_{\mathrm{core}}+\prod_{j=1}^n\left(1+\frac{m^{p_j/2}}{\sqrt{p_j!}}\epsilon_j\right)-1.
    \label{eq:common-core-assembly-error}
\end{align}
If $\mathfrak A<1$, the exterior product of $\widetilde\sigma$ and the $\widetilde\psi_j$ is nonzero, and its normalization $\widetilde\Phi$ satisfies
\begin{align}
    1-|\langle\widetilde\Phi,\Psi_{\mathrm{tr}}\rangle|^2&\leq\mathfrak A^2.
    \label{eq:equal-p-assembly-to-truncated}
\end{align}
If a normalized target $\Psi$ additionally satisfies $1-|\langle\Psi_{\mathrm{tr}},\Psi\rangle|^2\leq3ms_0^2$, then
\begin{align}
    1-|\langle\widetilde\Phi,\Psi\rangle|^2&\leq\left(\mathfrak A+\sqrt{3ms_0^2}\right)^2.
    \label{eq:equal-p-assembly-final}
\end{align}
An empty block list contributes the vacuum and gives $\mathfrak A=\xi_{\mathrm{core}}$; a zero-dimensional core also uses the vacuum convention.
\end{proposition}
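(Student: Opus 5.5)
The plan is to compare the unnormalized product $\widetilde X:=\widetilde\sigma\wedge\bigwedge_j\widetilde\psi_j$ with $\Psi_{\mathrm{tr}}$ in norm, show $\|\widetilde X-\Psi_{\mathrm{tr}}\|\leq\mathfrak A$, and then convert this vector error into a fidelity bound for the normalization. The conversion is elementary. If $\|\widetilde X-\Psi_{\mathrm{tr}}\|\leq\mathfrak A<1$ with $\|\Psi_{\mathrm{tr}}\|=1$, then $\widetilde X\neq0$. Moreover, the component of $\widetilde X$ orthogonal to $\Psi_{\mathrm{tr}}$ has norm at most $\mathfrak A$. Hence $\sin$ of the angle between $\widetilde\Phi$ and $\Psi_{\mathrm{tr}}$ is at most $\mathfrak A/\|\widetilde X\|$. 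I would use the sharper fact that the distance from the unit vector $\Psi_{\mathrm{tr}}$ to the line through $\widetilde X$ is at most $\mathfrak A$, which gives $1-|\langle\widetilde\Phi,\Psi_{\mathrm{tr}}\rangle|^2\leq\mathfrak A^2$ directly. The final statement then follows from the triangle inequality for the Fubini--Study angle $\arcsin\sqrt{1-|\langle\cdot,\cdot\rangle|^2}$, using $\arcsin(a)+\arcsin(b)\geq\arcsin$ of the combined quantity, or equivalently the standard sine-addition bound $\sin(\alpha+\beta)\leq\sin\alpha+\sin\beta$ for $\alpha+\beta\le\pi/2$. If the sum exceeds one, the bound is trivial.

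For the vector bound, I would split
\[
\widetilde X-\Psi_{\mathrm{tr}}=\widetilde\sigma\wedge\Big(\bigwedge_j\widetilde\psi_j-\bigwedge_j\psi_j\Big)+(\widetilde\sigma-\sigma)\wedge\bigwedge_j\psi_j .
\]
The second term has norm at most $\xi_{\mathrm{core}}$ by the first attachment bound of Lem.~\ref{lem:exterior-canonical-alignment}, since $\bigwedge_j\psi_j$ is a normalized vector in $\wedge(F^\perp)$ with $F$ the exact core. For the first term, wedging with the unit volume $\widetilde\sigma$ is a product of creation operators on orthonormal modes, so it is a contraction, and it suffices to bound $\|\bigwedge_j\widetilde\psi_j-\bigwedge_j\psi_j\|$.

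I would write $\widetilde\psi_j=\psi_j+\delta_j$ and expand multilinearly into a sum over nonempty subsets $J\subseteq[n]$ of terms where the factors in $J$ are $\delta_j$ and the rest are $\psi_j$. The key point, and the main obstacle, is to avoid the global factorial $\sqrt{P!/\prod p_j!}$ of Lem.~\ref{lem:exterior-product-stability}. That factorial would be exponential in the number of blocks.

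The remedy is to exploit the exact factors. The wedge of the untouched exact factors $\bigwedge_{j\notin J}\psi_j$ is a unit vector. Wedging a vector with an exact factor $\psi_j$ that has orthogonal support can be bounded without loss. Writing $\psi_j=\sum_l c_l f_l$ as a superposition of branches, each a unit volume, wedging with each branch is a contraction. However, a superposition can increase norm, so more care is needed. Instead I would bound each perturbed factor individually: for each $j\in J$, the operator $y\mapsto\delta_j\wedge y$ has norm at most $\sum_I|(\delta_j)_I|\leq\sqrt{\binom m{p_j}}\|\delta_j\|\leq (m^{p_j/2}/\sqrt{p_j!})\epsilon_j$, since each $\hat c^\dagger_I$ has norm at most one (Cauchy--Schwarz on the coefficients). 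Then I would order each term so that the perturbed factors act as operators on the normalized exact product $\bigwedge_{j\notin J}\psi_j$, with signs irrelevant for norms. This gives each subset term a norm at most $\prod_{j\in J}(m^{p_j/2}/\sqrt{p_j!})\epsilon_j$. Summing over nonempty $J$ yields exactly $\prod_j(1+m^{p_j/2}\epsilon_j/\sqrt{p_j!})-1$, matching $\mathfrak A-\xi_{\mathrm{core}}$.

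The delicate check is that the unperturbed part really is normalized, which holds by orthogonal supports. Because the bounds are applied to the actual operators, no orthogonality of the perturbed factors is needed.
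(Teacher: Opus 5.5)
Your proposal is correct and follows the paper's proof. It uses the same split into a core-attachment term and a block term, the same multilinear expansion over nonempty error sets with the untouched exact factors combining into a unit vector, and the same projection argument for $1-|\langle\widetilde\Phi,\Psi_{\mathrm{tr}}\rangle|^2\leq\mathfrak A^2$; your Fubini--Study triangle inequality is equivalent to the paper's trace-distance one. The only local difference is the per-factor estimate: you bound $\|\hat c^\dagger[\delta_j]\|\leq\sum_I|(\delta_j)_I|\leq\sqrt{\binom{m}{p_j}}\,\epsilon_j$, whereas the paper applies the antisymmetrizer bound of Lem.~\ref{lem:exterior-product-stability} together with $P!/(P-p(A))!\leq P^{p(A)}$. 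Both give the factors $m^{p_j/2}\epsilon_j/\sqrt{p_j!}$; the paper's intermediate bound even has $P\leq m$ in place of $m$.
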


To prove the proposition, we first control the unnormalized product of the approximate blocks. We expand around the exact factors, keeping all unchanged factors together in each term. Their product has norm one because their supports are orthogonal, which avoids applying a factorial bound to all blocks at once.

\begin{lemma}[Assembly of approximate blocks]\label{lem:structured-assembly}
Let $\psi_j$ and $\widetilde\psi_j$ be normalized $p_j$-particle states, $p_j\geq1$, with $\|\widetilde\psi_j-\psi_j\|\leq\epsilon_j$ after matching and phase alignment. If the exact factors have mutually orthogonal one-particle supports, then, with $P:=\sum_jp_j$,
\begin{align}
    \left\|\bigwedge_{j=1}^n\widetilde\psi_j-\bigwedge_{j=1}^n\psi_j\right\|&\leq\prod_{j=1}^n\left(1+\frac{P^{p_j/2}}{\sqrt{p_j!}}\epsilon_j\right)-1.
    \label{eq:structured-assembly-vector-bound}
\end{align}
For an empty factor list, both exterior products are the vacuum and the bound is zero.
\end{lemma}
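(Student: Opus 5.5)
The plan is to expand the estimated product multilinearly around the exact factors. Keep all unchanged exact factors together as one normalized block, and bound each error term with the operator-norm form of Lem.~\ref{lem:exterior-product-stability}. The combinatorial prefactor of that lemma then collapses to $\prod_{j}P^{p_j/2}/\sqrt{p_j!}$ over the perturbed factors only.

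\textbf{Step 1: expansion.} Write $\delta_j:=\widetilde\psi_j-\psi_j$, so that $\|\delta_j\|\leq\epsilon_j$. By multilinearity of the wedge product,
\begin{align*}
    \bigwedge_{j=1}^n\widetilde\psi_j-\bigwedge_{j=1}^n\psi_j&=\sum_{\varnothing\neq S\subseteq[n]}X_S,
\end{align*}
where $X_S$ is the ordered wedge product with $\delta_j$ in the positions $j\in S$ and $\psi_j$ in the positions $j\notin S$. Reordering the factors so that all $\psi_j$ with $j\notin S$ come last only changes a global sign. Hence
\begin{align*}
    \|X_S\|&=\Bigl\|\Bigl(\bigwedge_{j\in S}\delta_j\Bigr)\wedge\Omega_S\Bigr\|,\qquad \Omega_S:=\bigwedge_{j\notin S}\psi_j\in\wedge^{q_S}H,\qquad q_S:=P-\sum_{j\in S}p_j.
\end{align*}
The exact factors have mutually orthogonal one-particle supports, so $\|\Omega_S\|=1$ by the sector identification of App.~\ref{app:block-product-rdms}. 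The empty case $S^c=\varnothing$ uses the vacuum.

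\textbf{Step 2: per-term bound.} Treat $\Omega_S$ as a single factor of degree $q_S$. The final statement of Lem.~\ref{lem:exterior-product-stability} says that the exterior-product map $\bigotimes_{j\in S}\wedge^{p_j}H\otimes\wedge^{q_S}H\to\wedge^PH$ has operator norm at most $\sqrt{P!/(q_S!\prod_{j\in S}p_j!)}$. Apply it to the simple tensor $(\bigotimes_{j\in S}\delta_j)\otimes\Omega_S$, whose norm is $\prod_{j\in S}\|\delta_j\|\leq\prod_{j\in S}\epsilon_j$. For the remaining factor, use
\begin{align*}
    \frac{P!}{q_S!}=P(P-1)\cdots(q_S+1)\leq P^{P-q_S}=\prod_{j\in S}P^{p_j}.
\end{align*}
This gives $\|X_S\|\leq\prod_{j\in S}\bigl(P^{p_j/2}\epsilon_j/\sqrt{p_j!}\bigr)$. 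This step is the key point of the argument. Grouping the unchanged factors into the single normalized vector $\Omega_S$ means the factorial ratio only counts the $P-q_S$ particles carried by perturbed factors, and each of them contributes at most a factor $P$. That is exactly why the bound involves $P$ and not $m$, and why no factorial in the total number of blocks appears.

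\textbf{Step 3: summation.} By the triangle inequality, summing over nonempty $S$ gives
\begin{align*}
    \sum_{\varnothing\neq S}\prod_{j\in S}a_j=\prod_{j=1}^n(1+a_j)-1,\qquad a_j:=\frac{P^{p_j/2}}{\sqrt{p_j!}}\,\epsilon_j,
\end{align*}
which is Eq.~\eqref{eq:structured-assembly-vector-bound}. If $P>m$, every term vanishes and the bound holds trivially. An empty factor list gives zero on both sides. The only care needed is to confirm that the operator-norm clause of Lem.~\ref{lem:exterior-product-stability} applies with one factor of degree $q_S$, possibly zero, alongside the $\delta_j$. It does, since that lemma allows arbitrary degrees $p_j\geq0$.
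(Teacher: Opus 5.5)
Your proposal is correct and follows the paper's proof essentially step for step. Both arguments expand multilinearly over nonempty error sets, group the unchanged exact factors into one normalized vector of degree $P-p(A)$, bound each term by the exterior-product norm estimate of Lem.~\ref{lem:exterior-product-stability} together with $P!/(P-p(A))!\leq P^{p(A)}$, and sum to obtain $\prod_j(1+a_j)-1$; the only cosmetic difference is that you invoke the operator-norm clause on a simple tensor, whereas the paper cites Eq.~\eqref{eq:general-exterior-product-norm} directly, which gives the same bound.
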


\begin{proof}
The empty case is immediate. Align the factors as in the hypothesis and set $P:=\sum_jp_j$.

Set $e_j:=\widetilde\psi_j-\psi_j$ and, for $A\subseteq[n]$, let $p(A):=\sum_{j\in A}p_j$. Expand by multilinearity and group each term according to the nonempty set $A$ of error factors. Reordering changes only a sign, so
\begin{align}
    \left\|\bigwedge_j\widetilde\psi_j-\bigwedge_j\psi_j\right\| &\leq\sum_{\varnothing\neq A\subseteq[n]} \left\| \left(\bigwedge_{j\in A}e_j\right) \wedge\left(\bigwedge_{j\notin A}\psi_j\right) \right\|.
    \label{eq:structured-assembly-expansion}
\end{align}
The unchanged exact factors in each term combine into one normalized vector of degree $P-p(A)$, because their supports are mutually orthogonal. Apply Eq.~\eqref{eq:general-exterior-product-norm} to this vector and the error factors. Using $P!/(P-p(A))!\leq P^{p(A)}$, each summand is at most
\begin{align}
    \sqrt{\frac{P!}{(P-p(A))!\prod_{j\in A}p_j!}} \prod_{j\in A}\epsilon_j &\leq\prod_{j\in A}\frac{P^{p_j/2}\epsilon_j}{\sqrt{p_j!}}.
    \label{eq:structured-assembly-combinatorial-bound}
\end{align}
Summing over the nonempty subsets gives the product on the right-hand side of Eq.~\eqref{eq:structured-assembly-vector-bound}, completing the proof.
\end{proof}

\begin{proof}[Proof of Prop.~\ref{prop:equal-p-assembly}]
After aligning phases and factors, write $\Omega:=\bigwedge_j\psi_j$, $\widetilde\Omega:=\bigwedge_j\widetilde\psi_j$, and $Y:=\widetilde\sigma\wedge\widetilde\Omega$. The exact product $\Omega$ is normalized and supported outside the exact core. Since $\sum_jp_j\leq m$, Lem.~\ref{lem:structured-assembly} gives $\|\widetilde\Omega-\Omega\|\leq\mathfrak A-\xi_{\mathrm{core}}$. The attachment bound in Eq.~\eqref{eq:core-attachment-stability} therefore yields $\|Y-\Psi_{\mathrm{tr}}\|\leq\mathfrak A$. Thus $\|Y\|\geq1-\mathfrak A>0$. Since $(I-\Pi_{\widetilde\Phi})Y=0$, projecting $\Psi_{\mathrm{tr}}-Y$ onto the orthogonal complement of $\widetilde\Phi$ proves Eq.~\eqref{eq:equal-p-assembly-to-truncated}.

For normalized pure states $x,y$, $\tfrac12\|\Pi_x-\Pi_y\|_1=\sqrt{1-|\langle x,y\rangle|^2}$. The trace-norm triangle inequality gives
\begin{align}
    \sqrt{1-|\langle\widetilde\Phi,\Psi\rangle|^2}&\leq\sqrt{1-|\langle\widetilde\Phi,\Psi_{\mathrm{tr}}\rangle|^2}+\sqrt{1-|\langle\Psi_{\mathrm{tr}},\Psi\rangle|^2}\leq\mathfrak A+\sqrt{3ms_0^2}.
    \label{eq:appendix-fidelity-triangle}
\end{align}
Squaring proves Eq.~\eqref{eq:equal-p-assembly-final}.
\end{proof}

\section{Reconstruction from estimated RDMs in the homogeneous setting}
\label{app:homogeneous-noisy}

We establish the reconstruction guarantee from estimated RDMs in the homogeneous setting of Sec.~\ref{sec:equal-reconstruction}.

\begin{theorem}[Reconstruction from estimated RDMs]\label{thm:equal-p-noisy-compact-direct}
Fix $p\geq2$ and $\varepsilon_{\mathrm{fid}},\beta\in(0,1)$. There is a constant $c_p>0$, depending only on $p$, such that the following holds for every target $\Psi$ in the homogeneous setting. Given $p$, $\varepsilon_{\mathrm{fid}}$, $\beta$, and Hermitian RDM estimates with a supplied error bound
\begin{align}
    \max\!\left\{\|\widetilde\Gamma^{(1)}-\Gamma_\Psi^{(1)}\|,\|\widetilde\Gamma^{(p)}-\Gamma_\Psi^{(p)}\|\right\}&\leq\mu\leq c_p\frac{\varepsilon_{\mathrm{fid}}}{m^{\lceil(p+13)/2\rceil}},
    \label{eq:equal-p-df-input-accuracy}
\end{align}
a randomized classical algorithm returns, with probability at least $1-\beta$, a factorized description of a normalized state $\widetilde\Phi$ satisfying
\begin{align}
    |\langle\widetilde\Phi,\Psi\rangle|^2&\geq1-\varepsilon_{\mathrm{fid}}.
    \label{eq:equal-p-df-final-fidelity}
\end{align}
The description consists of an orthonormal basis for an occupied core and an ordered list of normalized $p$-particle vectors. Their exterior product is nonzero, with normalization implicit in the description. For fixed $p$, the algorithm uses polynomially many arithmetic operations in $m$ and $\log(1/\beta)$ and no additional copies of the target state.
\end{theorem}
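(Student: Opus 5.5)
The proof will run the exact-RDM algorithm of Thm.~\ref{thm:exact-same-particle-reconstruction} with every spectral selection replaced by its separated-threshold analogue. It then compares each estimated object to the exact comparison object built at the \emph{same} thresholds, so that $\Phi=\Psi_{\mathrm{tr}}$ of Lem.~\ref{lem:equal-p-exact-output} serves as the reference state. The final infidelity is then controlled by Prop.~\ref{prop:equal-p-assembly}. The truncation contributes $\sqrt{3ms_0^2}=\sqrt{\varepsilon_{\mathrm{fid}}}/4$, and the assembly error $\mathfrak A$ must be made at most $3\sqrt{\varepsilon_{\mathrm{fid}}}/4$ (in fact $\sqrt{\varepsilon_{\mathrm{fid}}}/4$ suffices).

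\textbf{Step 1: threshold selection and subspace errors.} First choose $\theta\in[2/3,3/4]$ with separation $h_1=1/(96m)$ via Lem.~\ref{lem:df-threshold-selection} with $D=m$, $L=1/12$. Next choose $t\in[s_0,2s_0]$ with $h_p=s_0/(8M)$ and $M=m$, which is legitimate since $\operatorname{rank}C_p\le n\le m$ by Lem.~\ref{lem:equal-p-df-coherence}. Cor.~\ref{cor:df-threshold-space-errors} then gives $\eta_S=O(m\mu)$, $e_p=O_p(m\mu)$, $\eta_{R,p}=O_p(m^2\mu/s_0)$ and $\eta_W^{\mathrm H}=O_p(m\mu)$, with correct dimensions, provided $e_p<h_p/4$. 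The latter requires $\mu\ll s_0/m^2$, which is implied by Eq.~\eqref{eq:equal-p-df-input-accuracy} because $s_0\sim\sqrt{\varepsilon_{\mathrm{fid}}/m}\ge \varepsilon_{\mathrm{fid}}/\sqrt m$.

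\textbf{Step 2: factors and core.} Call Prop.~\ref{prop:robust-gram-splitting} on $\widetilde R_p$ and $\widetilde W^{\mathrm H}_p$ with budget $\beta/2$ each. This needs $\eta\le 1/(64pm^3)$, i.e.\ $m^5\mu/s_0\ll1$. On success the branches have error $\xi_f=O_p(m^5\mu/s_0)$ and the complementary blocks $O_p(m^4\mu)$. Lem.~\ref{lem:df-near-block-stability} converts each branch into a full dominant block with error $O_p(m^5\mu/s_0)$, using Lem.~\ref{lem:equal-p-df-coherence} for $\Gamma^{(p)}_\Psi f_b=\sqrt{w_b}\omega_b$ with $\sqrt{w_b}>\sqrt\theta$. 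For the core, apply Prop.~\ref{prop:df-core-stability} with $T_p=R_p$, $d_p\le m$ and $\eta_p=\eta_{R,p}$. This gives a core-state error of $\xi_{\mathrm{core}}\le\sqrt{2m}(\eta_S+4m\eta_{R,p})=O_p(m^{7/2}\mu/s_0)$, valid when $m\eta_{R,p}\le1/4$. All outputs thus have the factorized form, consisting of a core basis and an ordered list of $p$-particle vectors.

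\textbf{Step 3: assembly (main obstacle).} With uniform factor error $\epsilon=O_p(m^5\mu/s_0)$ over $n\le m/p$ factors, the bound in Eq.~\eqref{eq:common-core-assembly-error} becomes $\mathfrak A\le\xi_{\mathrm{core}}+(1+m^{p/2}\epsilon/\sqrt{p!})^{m}-1$. This is $O_p(m^{1+p/2}\epsilon)$ once $m^{1+p/2}\epsilon\le1$, so the exponentiated product must be kept linearized; this is the delicate bookkeeping step. The result is $\mathfrak A=O_p(m^{6+p/2}\mu/s_0)$. Substituting $s_0=\sqrt{\varepsilon_{\mathrm{fid}}/(48m)}$ gives $\mathfrak A=O_p(m^{(13+p)/2}\mu/\sqrt{\varepsilon_{\mathrm{fid}}})$, and the requirement $\mathfrak A\le\sqrt{\varepsilon_{\mathrm{fid}}}/4$ is exactly $\mu\le c_p\varepsilon_{\mathrm{fid}}/m^{\lceil(p+13)/2\rceil}$. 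The same condition dominates all the smallness hypotheses used in Steps 1 and 2, since $m\ge 2p$ is fixed relative to constants. Eq.~\eqref{eq:equal-p-assembly-final} then yields infidelity at most $(\sqrt{\varepsilon_{\mathrm{fid}}}/2)^2\le\varepsilon_{\mathrm{fid}}$, and Prop.~\ref{prop:equal-p-assembly} also gives nonvanishing of the exterior product.

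The failure probability is at most $\beta$ by a union bound over the two Gram-splitting calls. The cost is polynomial by Prop.~\ref{prop:robust-gram-splitting}, Prop.~\ref{prop:df-core-stability}, and $O(D_p^3)$ dense linear algebra.
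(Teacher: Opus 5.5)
Your proposal is correct and follows the paper's own proof (Prop.~\ref{prop:equal-p-input-accuracy} plus the final union bound) essentially step for step. The shared steps are the separated thresholds with $h_1=1/(96m)$ and $h_p=s_0/(8m)$, Gram-splitting conditions reducing to $m^5\mu/s_0\ll1$, the core bound from Prop.~\ref{prop:df-core-stability} using the pre-splitting error $\eta_{R,p}$, and the linearized assembly bound $\mathfrak A=O_p(m^{p/2+6}\mu/s_0)$, which yields the exponent $\lceil(p+13)/2\rceil$; taking $m$ rather than $n\le m/p$ as the number of factors in the assembly product is only a harmless overcount.
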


We construct the algorithm below, then verify the required subspace and block accuracies and apply the assembly bound of Prop.~\ref{prop:equal-p-assembly}. The final proof combines these estimates with the two Gram-splitting failure budgets.

\subsection{Reconstruction from estimated RDMs}
\label{app:uniform-certified-noisy-algorithm}

We follow the construction of App.~\ref{app:same-particle-algorithm}, with thresholds chosen from the estimated spectra. The formulas below specify the operations and their computable error bounds. Prop.~\ref{prop:equal-p-input-accuracy} verifies these bounds, the selected dimensions, and all local stability conditions needed to prove Thm.~\ref{thm:equal-p-noisy-compact-direct}. Every \textsc{Fail} condition is checked from the estimates and their error certificates.

\paragraph*{Parameters and error bounds.}
Use the inputs and the supplied error bound $\mu>0$ of Thm.~\ref{thm:equal-p-noisy-compact-direct}, together with the resolution scale $s_0$ from Eq.~\eqref{eq:equal-p-df-s0}. For the accuracy analysis, write
\begin{align}
    \mathfrak a_p^{\mathrm{hom}}&:=\left\lceil\frac{p+13}{2}\right\rceil.
    \label{eq:equal-p-direct-rdm-exponent}
\end{align}
All error bounds below are computed from $\mu$ and known parameters. Assign failure probability $\beta/2$ to each of the two Gram-splitting calls. RDM estimation is treated separately in App.~\ref{app:rdm-estimation}; here the probability is over the internal Gram-splitting randomness.

\paragraph*{Step 1: Find the high-occupation space.}
Set $h_1:=1/(96m)$ and choose $\theta\in[2/3,3/4]$ away from the observed spectrum. Define the resulting high-occupation space and its error bound by
\begin{align}
    \operatorname{dist}\!\left(\theta,\operatorname{spec}(\widetilde\Gamma^{(1)})\right)&\geq\frac{h_1}{2},\qquad \widetilde S_\theta:=\operatorname{ran}\mathbf1_{(\theta,\infty)}(\widetilde\Gamma^{(1)}),\qquad \eta_S:=\frac{8\mu}{h_1}.
    \label{eq:uniform-noisy-high-occupation-space}
\end{align}
All exact comparison spaces and operators, including $S_\theta$, $C_p$, and $A_p$, use the thresholds chosen from the estimates.

\paragraph*{Step 2: Form the two compressed RDMs.}
After constructing the high-occupation space, we form the all-high and all-low projectors and apply them to the $p$-RDM estimate:
\begin{align}
    \widetilde Q_{\mathrm{hi}} &:=P_{\wedge^p\widetilde S_\theta},
    \qquad \widetilde Q_{\mathrm{lo}} :=P_{\wedge^p(\widetilde S_\theta^\perp)},
    \qquad \widetilde C_p :=\widetilde Q_{\mathrm{lo}}\widetilde\Gamma^{(p)} \widetilde Q_{\mathrm{hi}},
    \qquad \widetilde A_p :=\widetilde Q_{\mathrm{lo}}\widetilde\Gamma^{(p)} \widetilde Q_{\mathrm{lo}}.
    \label{eq:uniform-noisy-compressions}
\end{align}
Set $e_p:=\mu+2p\eta_S$ to bound both compression errors. This includes the RDM error and the two projector errors, as in Lem.~\ref{lem:df-compressed-operator-error}.

\paragraph*{Step 3: Recover the retained dominant blocks.}
Select the right singular space of $\widetilde C_p$ using a threshold $t\in[s_0,2s_0]$ separated from the observed singular values by
\begin{align}
    h_p&:=\frac{s_0}{8m},\qquad \operatorname{dist}\!\left(t,\operatorname{sing}(\widetilde C_p)\right)\geq\frac{h_p}{2}=\frac{s_0}{16m}.
    \label{eq:uniform-certified-singular-threshold}
\end{align}
If no such threshold exists, return \textsc{Fail}.

Define the retained right singular space and its error bound by
\begin{align}
    \widetilde R_p &:=\operatorname{ran}\mathbf1_{(t^2,\infty)} \left(\widetilde C_p^\dagger\widetilde C_p\right),
    \qquad \widetilde d_p^{\mathrm N}:=\dim\widetilde R_p,
    \qquad \eta_{R,p}:=\frac{8e_p}{h_p}.
    \label{eq:uniform-near-space-certificate}
\end{align}
Here $\eta_{R,p}$ is the projector-error bound relative to the exact space $R_p$ selected at the same threshold $t$.

Apply the Gram-splitting algorithm of Prop.~\ref{prop:robust-gram-splitting} to $(\widetilde R_p,p,\eta_{R,p},\beta/2)$. If it fails, return \textsc{Fail}; otherwise denote its orthonormal branch estimates by $\widetilde f_1,\ldots,\widetilde f_{\widetilde d_p^{\mathrm N}}$. For $j\in[\widetilde d_p^{\mathrm N}]$, reconstruct the full block by
\begin{align}
    \widetilde y_j&:=\widetilde\Gamma^{(p)}\widetilde f_j,
    \qquad
    \widetilde\omega_j^{\mathrm N}
    :=\frac{\widetilde y_j}{\|\widetilde y_j\|}.
    \label{eq:uniform-noisy-near-reconstruction}
\end{align}
Return \textsc{Fail} if $\widetilde y_j=0$. Keep the dominant branches $\widetilde f_j$ separately for the core construction; the normalized full blocks enter the final product.

\paragraph*{Step 4: Recover the complementary blocks.}
Use the midpoint of the exact gap from Lem.~\ref{lem:equal-p-df-strong-gap} to define
\begin{align}
    \widetilde W_p^{\mathrm H} &:=\operatorname{ran}\mathbf1_{((1+\theta)/2,\infty)} \left(\widetilde A_p\right),
    \qquad \widetilde d_p^{\mathrm H}:=\dim\widetilde W_p^{\mathrm H},
    \qquad \eta_{W,p}^{\mathrm H}:=\frac{4e_p}{1-\theta}.
    \label{eq:uniform-strong-space-certificate}
\end{align}
The certificate $\eta_{W,p}^{\mathrm H}$ bounds the projector error relative to the exact complementary-block space $W_p^{\mathrm H}$. Both $\widetilde d_p^{\mathrm N}$ and $\widetilde d_p^{\mathrm H}$ are computed from the estimated spaces.

Apply the same algorithm to $(\widetilde W_p^{\mathrm H},p,\eta_{W,p}^{\mathrm H},\beta/2)$. If it fails, return \textsc{Fail}; otherwise write the output as $\widetilde\omega_1^{\mathrm H},\ldots, \widetilde\omega_{\widetilde d_p^{\mathrm H}}^{\mathrm H}$. These vectors approximate the complementary block states directly.

\paragraph*{Step 5: Assemble the unresolved core and recovered blocks.}
To obtain the unresolved occupied core, remove the retained branch supports from the high-occupation space. The branches form an orthonormal basis of $\widetilde R_p$, so apply the core-recovery procedure of Prop.~\ref{prop:df-core-stability} to $\widetilde S_\theta$ and $\widetilde R_p$ by setting
\begin{align}
    \widetilde K&:=\sum_{j=1}^{\widetilde d_p^{\mathrm N}}\Gamma_{\widetilde f_j}^{(1)},\qquad \widetilde F:=\operatorname{ran}\mathbf1_{(1/2,\infty)}(\widetilde K),
    \label{eq:uniform-noisy-retained-support-space}
\end{align}
with $\widetilde K=0$ when $\widetilde d_p^{\mathrm N}=0$.
Because $\widetilde f_j\in\wedge^p\widetilde S_\theta$, the operator $\widetilde K$ is supported on $\widetilde S_\theta$, and $\widetilde F\subseteq\widetilde S_\theta$. We therefore define the remaining occupied subspace directly by
\begin{align}
    \widetilde S_{\mathrm{core}}&:=\widetilde S_\theta\cap\widetilde F^\perp,
    \qquad
    P_{\widetilde S_{\mathrm{core}}}
    =P_{\widetilde S_\theta}-P_{\widetilde F}.
    \label{eq:uniform-noisy-core-space}
\end{align}
Prop.~\ref{prop:df-core-stability} bounds the resulting core error using $\eta_S$ and $\eta_{R,p}$, the errors before Gram splitting.

Choose an orthonormal basis of $\widetilde S_{\mathrm{core}}$ and let $\widetilde\sigma_{\mathrm{core}}$ denote its unit exterior volume, using the vacuum convention when $\dim\widetilde S_{\mathrm{core}}=0$. With the factors in a fixed order, define the assembled vector and, whenever it is nonzero, its normalized state by
\begin{align}
    \widetilde Y&:=\widetilde\sigma_{\mathrm{core}}\wedge\bigwedge_{j=1}^{\widetilde d_p^{\mathrm N}}\widetilde\omega_j^{\mathrm N}\wedge\bigwedge_{j=1}^{\widetilde d_p^{\mathrm H}}\widetilde\omega_j^{\mathrm H},\qquad \widetilde\Phi:=\frac{\widetilde Y}{\|\widetilde Y\|}.
    \label{eq:uniform-noisy-output-state}
\end{align}
Return the chosen core basis and the ordered block lists, which give a factorized description of $\widetilde\Phi$.
Prop.~\ref{prop:equal-p-input-accuracy} proves nonvanishing and controls the assembly error when both splitting calls succeed. Combining this bound with the two splitting failure budgets proves Thm.~\ref{thm:equal-p-noisy-compact-direct}.

\subsection{Verification of the required accuracy}
\label{app:homogeneous-fidelity-guarantee}

Lem.~\ref{lem:equal-p-exact-output} identifies the ideal output as a product of the exact core and recovered blocks on mutually orthogonal one-particle supports, with truncation error at most $3ms_0^2$. We can therefore apply the common assembly bound of Prop.~\ref{prop:equal-p-assembly} once the block and core errors have been controlled. The following proposition verifies the reconstruction conditions and bounds the resulting assembly error $\mathfrak A$ defined there.

\begin{proposition}[Accuracy of the recovered blocks and core]\label{prop:equal-p-input-accuracy}
Fix $p\geq2$, $m\geq p$, and $0<\varepsilon_{\mathrm{fid}}\leq1$ in the homogeneous setting, and set $s_0:=\sqrt{\varepsilon_{\mathrm{fid}}/(48m)}$. There are constants $c_p,\mathsf A_p>0$, depending only on $p$, such that the following holds. Suppose the supplied Hermitian $1$- and $p$-RDM estimates have operator-norm errors at most $\mu$, with
\begin{align}
    \mu&\leq c_p\frac{\varepsilon_{\mathrm{fid}}}{m^{\lceil(p+13)/2\rceil}}.
    \label{eq:equal-p-df-explicit-mu}
\end{align}
Then the algorithm of App.~\ref{app:uniform-certified-noisy-algorithm}, run with error bound $\mu$, admits the required thresholds and satisfies both Gram-splitting input conditions.

If both calls succeed, the recovered lists match the exact retained dominant branches and full blocks up to phases and a permutation, and the occupied core has the correct dimension. Each recovered branch or full block $\widetilde v$, matched with its exact counterpart $v$, and the phase-aligned occupied-core state satisfy
\begin{align}
    \|\widetilde v-v\|&\leq \mathsf A_pm^5\frac{\mu}{s_0},\qquad \|\widetilde\sigma_{\mathrm{core}}-\sigma_{\mathrm{core}}\|\leq \mathsf A_pm^{7/2}\frac{\mu}{s_0}.
    \label{eq:homogeneous-block-core-error-contract}
\end{align}
The full-block normalizations are well defined. Inserting these bounds into Prop.~\ref{prop:equal-p-assembly} gives
\begin{align}
    \mathfrak A&\leq\frac{\sqrt{\varepsilon_{\mathrm{fid}}}}{4}.
    \label{eq:equal-p-df-explicit-assembly}
\end{align}
\end{proposition}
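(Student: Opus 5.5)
Our plan is to chain the stability lemmas of App.~\ref{app:perturbation-tools} in computational order, tracking every error as a power of $m$ times $\mu/s_0$, and to check at each stage that the chosen $\mu$ meets the local hypotheses.

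\emph{Subspace stage.} Start with Lem.~\ref{lem:df-threshold-selection}. With $h_1=1/(96m)$ it produces $\theta$, so $\eta_S=8\mu/h_1=O(m\mu)$, and Lem.~\ref{lem:df-certified-threshold} applies because $\mu<h_1/4$. Lem.~\ref{lem:df-compressed-operator-error} then gives $e_p=\mu+2p\eta_S=O_p(m\mu)$. For the singular threshold, use $\operatorname{rank}C_p\leq n\leq m$ from Lem.~\ref{lem:equal-p-df-coherence}. Taking $M=m$ and $h_p=s_0/(8m)$ requires $e_p<s_0/2$, which holds because $m\mu\ll s_0$ under Eq.~\eqref{eq:equal-p-df-explicit-mu}. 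Cor.~\ref{cor:df-threshold-space-errors} then yields
\[
\eta_{R,p}=8e_p/h_p=O_p(m^2\mu/s_0),\qquad \eta_{W,p}^{\mathrm H}=O_p(m\mu).
\]
The conditions $e_p<h_p/4$ and $e_p<(1-\theta)/4$ both reduce to $m^2\mu\lesssim s_0$.

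\emph{Gram splitting stage.} The exact spaces $R_p$ and $W_p^{\mathrm H}$ are spanned by factors with orthogonal supports, by Eq.~\eqref{eq:equal-p-df-right-space} and Lem.~\ref{lem:equal-p-df-strong-gap}. Prop.~\ref{prop:robust-gram-splitting} applies provided $\eta_{R,p}\leq 1/(64pm^3)$, which needs $m^5\mu/s_0\leq c$. This is the binding constraint. Since $s_0\sim\sqrt{\varepsilon_{\mathrm{fid}}/m}$, it reads $\mu\lesssim \sqrt{\varepsilon_{\mathrm{fid}}}\,m^{-11/2}$, which is implied by the stated bound because $\varepsilon_{\mathrm{fid}}\leq\sqrt{\varepsilon_{\mathrm{fid}}}$.

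On success, the branch errors are $(8p+2)m^3\eta_{R,p}=O_p(m^5\mu/s_0)$ and the complementary blocks have error $O_p(m^4\mu)$. Lem.~\ref{lem:df-near-block-stability} converts each branch to its full block, using $\Gamma_\Psi^{(p)}f_b=\sqrt{w_b}\omega_b$ with $w_b>\theta$ and $\|\Gamma_\Psi^{(p)}\|\leq1$ from Eq.~\eqref{eq:structured-rdm-norm}. This keeps the error at $O_p(m^5\mu/s_0)$ and guarantees $\|\widetilde y_j\|>\sqrt\theta/2$. All of these bounds together give the first part of Eq.~\eqref{eq:homogeneous-block-core-error-contract}.

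\emph{Core stage.} Apply Prop.~\ref{prop:df-core-stability} with $T_p=R_p$ and $d_p\leq m$. This requires $m\eta_{R,p}\leq1/4$, which is already implied above, and gives a core projector error $\eta_S+4m\eta_{R,p}=O_p(m^3\mu/s_0)$. Multiplying by $\sqrt{2m}$ yields the $m^{7/2}\mu/s_0$ state bound. By Lem.~\ref{lem:equal-p-exact-output}, the recovered lists and the core match the exact truncated target, so the dimensions are correct.

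\emph{Assembly.} Insert into $\mathfrak A$ of Prop.~\ref{prop:equal-p-assembly} with $\epsilon_j\leq\mathsf A_pm^5\mu/s_0$, $n\leq m/2$ factors, and weight $m^{p/2}/\sqrt{p!}$. Write $x:=m^{p/2+5}\mu/s_0$. Then
\[
\prod_j(1+O_p(x))-1\leq e^{O_p(mx)}-1=O_p(mx)
\]
once $mx\leq1$. Hence $\mathfrak A=O_p\bigl(m^{p/2+6}\mu/s_0+m^{7/2}\mu/s_0\bigr)=O_p\bigl(m^{(p+13)/2}\mu/\sqrt{\varepsilon_{\mathrm{fid}}}\bigr)$ after substituting $s_0$. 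With $\mu\leq c_p\varepsilon_{\mathrm{fid}}m^{-\lceil(p+13)/2\rceil}$ this is at most $\sqrt{\varepsilon_{\mathrm{fid}}}/4$ for small $c_p$.

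The main obstacle is this last product over $\Theta(m)$ factors, which costs an extra factor of $m$ and an $m^{p/2}$ combinatorial weight. Choosing $c_p$ uniformly so that every intermediate condition holds simultaneously is routine once the exponents have been tallied.
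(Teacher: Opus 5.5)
Your proposal is correct and follows the paper's proof essentially step for step. It uses the same gap parameters $h_1=1/(96m)$ and $h_p=s_0/(8m)$, the certificate powers $\eta_{R,p}=O_p(m^2\mu/s_0)$ and $\eta_{W,p}^{\mathrm H}=O_p(m\mu)$, the Gram-splitting condition $m^5\mu/s_0\leq c$, the core bound from Prop.~\ref{prop:df-core-stability} applied to the pre-splitting space $\widetilde R_p$, and the assembly estimate $\prod_j(1+u_j)-1\leq2\sum_ju_j=O_p(m^{p/2+6}\mu/s_0)$. One small attribution point: in the estimated setting, the correct core dimension comes from Prop.~\ref{prop:df-core-stability} together with the matching dimensions of $\widetilde S_\theta$ and $\widetilde R_p$, whereas Lem.~\ref{lem:equal-p-exact-output} only identifies the exact comparison core.
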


\begin{proof}
We follow the reconstruction order, verifying each stability condition before using it. All constants below depend only on $p$. The assumed accuracy and the definition of $s_0$ give
\begin{align}
    \frac{\mu}{s_0}&\leq\sqrt{48}\,c_p\sqrt{\varepsilon_{\mathrm{fid}}}m^{1/2-\mathfrak a_p^{\mathrm{hom}}}.
    \label{eq:equal-p-df-mu-over-s0}
\end{align}
Since $\mathfrak a_p^{\mathrm{hom}}\geq8$ and $\varepsilon_{\mathrm{fid}}\leq1$, this implies $m^5\mu/s_0\leq\sqrt{48}\,c_p$. We choose $c_p$ sufficiently small for the bounds below.

\emph{Thresholds and subspaces.} Use $h_1=1/(96m)$ and $h_p=s_0/(8m)$. Lem.~\ref{lem:df-threshold-selection} supplies the one-body threshold $\theta\in[2/3,3/4]$. The certificate definitions in App.~\ref{app:uniform-certified-noisy-algorithm} give
\begin{align}
    \eta_S&\leq\mathsf C_pm\mu,\qquad e_p\leq\mathsf C_pm\mu,\qquad \eta_{R,p}\leq\mathsf C_pm^2\frac{\mu}{s_0},\qquad \eta_{W,p}^{\mathrm H}\leq\mathsf C_pm\mu.
    \label{eq:equal-p-df-basic-power}
\end{align}
The ratios $\mu/h_1$, $e_p/h_p$, and $e_p/(1-\theta)$ are at most $\mathsf C_pm^2\mu/s_0\leq\mathsf C_pc_p$. Thus $\mu<h_1/4$ and $e_p<\min\{h_p/4,(1-\theta)/4\}$. Since $\operatorname{rank}C_p\leq n\leq m$ by Lem.~\ref{lem:equal-p-df-coherence} and $e_p<h_p/4<s_0/2$, the singular-value part of Lem.~\ref{lem:df-threshold-selection} supplies the singular threshold as well. Cor.~\ref{cor:df-threshold-space-errors}, with $k=p$ and $\nu=\mu$, now verifies the selected-space dimensions and projector-error bounds. In particular, $\widetilde d_p^{\mathrm N}+\widetilde d_p^{\mathrm H}\leq n\leq m/p$.

\emph{Recovered blocks.} The subspace certificates $\eta_{R,p}$ and $\eta_{W,p}^{\mathrm H}$ are at most $1/(64pm^3)$: multiplying either by $64pm^3$ gives at most $\mathsf C_pm^5\mu/s_0\leq\mathsf C_pc_p$. Both calls to Prop.~\ref{prop:robust-gram-splitting} therefore satisfy their input conditions. On successful calls, and after phase and permutation matching, that proposition implies the dominant-branch and complementary-block error bounds
\begin{align}
    \xi_f&:=(8p+2)m^3\eta_{R,p}\leq\mathsf C_pm^5\frac{\mu}{s_0},\qquad \xi_{\mathrm H}:=(8p+2)m^3\eta_{W,p}^{\mathrm H}\leq\mathsf C_pm^4\mu\leq\mathsf C_pm^5\frac{\mu}{s_0}.
    \label{eq:equal-p-df-splitting-power}
\end{align}
Set the corresponding bound to zero for an empty output list. We condition the remaining argument on both calls succeeding.

For full dominant-block reconstruction, these bounds give $\mu+\xi_f\leq\mathsf C_pm^5\mu/s_0\leq\mathsf C_pc_p$. Since $\theta\geq2/3$, choosing $c_p$ sufficiently small ensures $\mu+\xi_f<\sqrt\theta/2$. Lem.~\ref{lem:df-near-block-stability} then gives $\|\widetilde y_j\|>\sqrt\theta/2$, so normalization is well defined, and bounds the reconstructed block error by
\begin{align}
    \frac{2(\mu+\xi_f)}{\sqrt\theta}&\leq\mathsf C_pm^5\frac{\mu}{s_0}.
    \label{eq:equal-p-df-full-near-power}
\end{align}
Thus every recovered full block has the same error bound, regardless of its type.

\emph{Occupied core.} Core recovery uses the branch-space error $\eta_{R,p}$ before Gram splitting, since the returned branches form an orthonormal basis of $\widetilde R_p\subseteq\wedge^p\widetilde S_\theta$. Eq.~\eqref{eq:equal-p-df-basic-power} gives $\eta_S\leq\mathsf C_pm\mu<1$ and $\widetilde d_p^{\mathrm N}\eta_{R,p}\leq\mathsf C_pm^3\mu/s_0\leq1/4$ for sufficiently small $c_p$. Prop.~\ref{prop:df-core-stability} therefore applies to this single space of dimension $\widetilde d_p^{\mathrm N}$. It gives the correct core dimension and bounds the phase-aligned core error by
\begin{align}
    \sqrt{2m}\bigl(\eta_S+4\widetilde d_p^{\mathrm N}\eta_{R,p}\bigr)&\leq\mathsf C_pm^{7/2}\frac{\mu}{s_0}.
    \label{eq:equal-p-df-core-power}
\end{align}
Together with the recovered-block bounds, this proves Eq.~\eqref{eq:homogeneous-block-core-error-contract}.

\emph{Assembly.} We now insert the block and core bounds into Prop.~\ref{prop:equal-p-assembly}. There are at most $m/p$ recovered blocks, each of degree $p$, containing at most $m$ particles in total. Each first-order term $u_j:=m^{p/2}\epsilon_j/\sqrt{p!}$ in the assembly product is bounded by $\mathsf C_pm^{p/2+5}\mu/s_0$. Summing over the blocks gives
\begin{align}
    \sum_j u_j&\leq\mathsf C_pm^{p/2+6}\frac{\mu}{s_0}.
    \label{eq:equal-p-df-correlated-power}
\end{align}
By Eq.~\eqref{eq:equal-p-df-mu-over-s0} and $\mathfrak a_p^{\mathrm{hom}}\geq(p+13)/2$, this sum is at most $\mathsf C_pc_p\sqrt{\varepsilon_{\mathrm{fid}}}$, hence at most one. We may therefore use $\prod_j(1+u_j)-1\leq e^{\sum_j u_j}-1\leq2\sum_j u_j$. Adding the core bound in Eq.~\eqref{eq:equal-p-df-core-power} yields
\begin{align}
    \mathfrak A&\leq\mathsf C_pm^{p/2+6}\frac{\mu}{s_0}.
    \label{eq:equal-p-df-total-assembly-power}
\end{align}
Since $s_0=\sqrt{\varepsilon_{\mathrm{fid}}/(48m)}$, this bound scales as $\mu m^{(p+13)/2}/\sqrt{\varepsilon_{\mathrm{fid}}}$. Thus an RDM error of order $\varepsilon_{\mathrm{fid}}m^{-(p+13)/2}$ suffices, explaining the choice $\mathfrak a_p^{\mathrm{hom}}=\lceil(p+13)/2\rceil$. The assumed accuracy gives $\mathfrak A\leq\mathsf C_pc_p\sqrt{\varepsilon_{\mathrm{fid}}}$; decreasing $c_p$ if necessary proves Eq.~\eqref{eq:equal-p-df-explicit-assembly}.
\end{proof}

\subsection{Completion of the reconstruction guarantee}
\label{app:same-particle-noise-analysis}

\begin{proof}[Proof of Thm.~\ref{thm:equal-p-noisy-compact-direct}]
Choose $c_p$ as in Prop.~\ref{prop:equal-p-input-accuracy}. That proposition verifies the thresholds, selected-space dimensions, and the input conditions of both splitting calls. Each call has failure probability at most $\beta/2$ directly by Prop.~\ref{prop:robust-gram-splitting}. Hence both calls succeed with probability at least $1-\beta$.

On this event the same accuracy proposition gives $\mathfrak A\leq\sqrt{\varepsilon_{\mathrm{fid}}}/4$. Applying Prop.~\ref{prop:equal-p-assembly} and $3ms_0^2=\varepsilon_{\mathrm{fid}}/16$ shows that the final product is nonzero and its infidelity is at most $(\sqrt{\varepsilon_{\mathrm{fid}}}/4+\sqrt{\varepsilon_{\mathrm{fid}}}/4)^2\leq\varepsilon_{\mathrm{fid}}$.

All operations use the supplied RDMs and recovered classical vectors. The cost analysis in Sec.~\ref{sec:homogeneous-time-complexity} gives polynomial arithmetic cost for fixed $p$, with $O(\log(1/\beta))$ trials per splitting call, and no additional state copies.
\end{proof}

\section{Reconstruction from estimated RDMs in the heterogeneous setting}
\label{app:heterogeneous-noisy}

At each particle number, reconstruction in the heterogeneous setting inherits errors from the earlier factors through the product spaces that must be removed. The following theorem gives a sufficient input accuracy for the full recursion.

\begin{theorem}[Reconstruction from estimated RDMs]\label{thm:mixed-noisy-compact}
Fix $2\leq r\leq m$ and $\varepsilon_{\mathrm{fid}},\beta\in(0,1)$. There is a constant $c_r>0$, depending only on $r$, such that the following holds for every target $\Psi$ in the heterogeneous setting with block particle numbers at most $r$. Given $r$, $\varepsilon_{\mathrm{fid}}$, $\beta$, and Hermitian RDM estimates with a supplied error bound
\begin{align}
    \max_{1\leq k\leq r}\|\widetilde\Gamma^{(k)}-\Gamma_\Psi^{(k)}\|&\leq\nu\leq c_r\frac{\varepsilon_{\mathrm{fid}}}{m^{\lceil(7\lfloor r/2\rfloor+r+6)/2\rceil}},
    \label{eq:delta-free-common-rdm-error}
\end{align}
a randomized classical algorithm returns, with probability at least $1-\beta$, a factorized description of a normalized state $\widetilde\Phi$ satisfying
\begin{align}
    |\langle\widetilde\Phi,\Psi\rangle|^2&\geq1-\varepsilon_{\mathrm{fid}}.
    \label{eq:delta-free-final-fidelity}
\end{align}
The description consists of an orthonormal basis for an occupied core and an ordered list of normalized vectors of degrees between $2$ and $r$. Their exterior product is nonzero, with normalization implicit in the description. The individual block particle numbers, block decomposition, and passive Gaussian unitary are not required as input. For fixed $r$, the algorithm uses polynomially many arithmetic operations in $m$ and $\log(1/\beta)$ and no additional copies of the target state.
\end{theorem}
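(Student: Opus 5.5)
The plan is to mirror the homogeneous proof of Thm.~\ref{thm:equal-p-noisy-compact-direct}, replacing the single order $p$ by an induction over $k=2,\ldots,r$ that carries an explicit error certificate for every stored factor. The algorithm is the one in Sec.~\ref{subsec:bounded-noisy-recovery}. The comparison objects are the exact spaces of App.~\ref{app:heterogeneous-blocks} built with the same $\theta,t_2,\ldots,t_r$. Those exact spaces satisfy Thm.~\ref{thm:ideal-delta-free-correctness}, so the exact output agrees with $\Psi_{\mathrm{tr}}$, and Lem.~\ref{lem:heterogeneous-exact-output} gives truncation error at most $3ms_0^2=\varepsilon_{\mathrm{fid}}/16$. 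The remaining task is to show $\mathfrak A\le\sqrt{\varepsilon_{\mathrm{fid}}}/4$ in Prop.~\ref{prop:equal-p-assembly}. The failure budget is handled by giving each of the at most $2(r-1)$ Gram-splitting calls failure probability $\beta/[2(r-1)]$ and taking a union bound.

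\emph{Step 1: thresholds and candidate spaces.} Take $h_1=1/(96m)$, which gives $\eta_S=O(m\nu)$ and $e_k=O_r(m\nu)$ by Lem.~\ref{lem:df-compressed-operator-error}. Take $h_k=s_0/(8rm^{\lfloor k/2\rfloor})$. This choice is justified by the rank bound $\operatorname{rank}C_k\le rm^{\lfloor k/2\rfloor}$ of Lem.~\ref{lem:heterogeneous-coherence-decomposition}, and Lem.~\ref{lem:df-threshold-selection} then supplies $t_k\in[s_0,2s_0]$. Cor.~\ref{cor:df-threshold-space-errors} then gives correct dimensions together with
\[
\eta_{R,k}=O_r\!\left(m^{\lfloor k/2\rfloor+1}\nu/s_0\right),\qquad \eta^{\mathrm H}_{W,k}=O_r(m\nu).
\]

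\emph{Step 2: product removal.} Assume inductively that all stored factors of degree $<k$ have errors at most $x_{k-1}$. Lem.~\ref{lem:df-product-span-stability} gives product-span error $\rho_k=O_r(\sqrt m\,x_{k-1})$. Lem.~\ref{lem:near-hierarchy-closure} and Lem.~\ref{lem:constant-gap-strong-hierarchy} provide commuting exact projectors. Lem.~\ref{lem:df-stable-intersection} then gives
\[
\eta^{\mathrm N}_{F,k},\ \eta^{\mathrm H}_{T,k}\le 4(2\eta_{R,k}+\rho_k)\quad\text{(resp.\ }\eta^{\mathrm H}_{W,k}\text{)}.
\]
If this is at most $1/(64km^3)$, Prop.~\ref{prop:robust-gram-splitting} yields new factor errors $(8k+2)m^3$ times that bound. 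Lem.~\ref{lem:df-near-block-stability} converts branches into full blocks, which costs only a constant factor.

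\emph{Step 3: the recursion for $x_k$.} Unrolling the previous step gives
\[
x_k\lesssim m^3\bigl(\eta_{R,k}+\sqrt m\,x_{k-1}\bigr).
\]
I expect this to be the main obstacle. Naively every order multiplies the error by $m^{7/2}$, so the exponent would grow linearly in $r$. The exponent $7\lfloor r/2\rfloor$ in the stated accuracy suggests a sharper bookkeeping. Since every factor has degree $\ge2$, products removed at order $k$ only involve factors of degree $\le k-2$. Hence the chain of dependencies has length at most $\lfloor k/2\rfloor$, and each link costs $m^{7/2}$. I would therefore prove by strong induction that degree-$k$ factors satisfy
\[
x_k\le \mathsf C_r\, m^{7\lfloor k/2\rfloor/2+c}\,\nu/s_0 .
\]
The $\eta_{R,k}$ term must be absorbed into the same exponent. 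Doing this requires checking that $m^{\lfloor k/2\rfloor+1}$ times $m^3$ fits, and I would adjust the constant $c$ if it does not.

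\emph{Step 4: core and assembly.} Apply Prop.~\ref{prop:df-core-stability} to the spaces $\widetilde F^{\mathrm N}_\ell$ using the pre-splitting errors, which gives core error $O_r(\sqrt m\, m\max_\ell\eta^{\mathrm N}_{F,\ell})$. In Prop.~\ref{prop:equal-p-assembly}, each block contributes $m^{p_b/2}\epsilon_b\le m^{r/2}x_r$, and there are at most $m/2$ blocks. This gives
\[
\mathfrak A=O_r\!\left(m^{r/2+1}x_r\right)=O_r\!\left(m^{(7\lfloor r/2\rfloor+r+6)/2}\nu/\sqrt{\varepsilon_{\mathrm{fid}}}\right)
\]
after substituting $s_0$. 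With $c_r$ small, this yields $\mathfrak A\le\sqrt{\varepsilon_{\mathrm{fid}}}/4$, and all the smallness conditions of Steps 1–3 hold automatically under the same choice. Finally, the fidelity triangle inequality of Prop.~\ref{prop:equal-p-assembly} gives infidelity at most $\varepsilon_{\mathrm{fid}}$. The running time follows as in Sec.~\ref{sec:heterogeneous-time-complexity}.
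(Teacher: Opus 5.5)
Your proposal is correct and follows essentially the same route as the paper. The paper's one-level guarantee (Prop.~\ref{prop:df-certified-recursion}) passes only the degree-$\le k-2$ error $x_{k-2}$ into order $k$. The resulting recursion $x_k\lesssim m^{\lfloor k/2\rfloor+4}\nu/s_0+m^{7/2}x_{k-2}$ closes with your ansatz at $c=3/2$, which is the paper's exponent $(7\lfloor k/2\rfloor+3)/2$, because $\lfloor k/2\rfloor+4\le(7\lfloor k/2\rfloor+3)/2$ for $k\ge2$. With that choice, the core and assembly bounds, the final exponent count, and the union bound over the $2(r-1)$ splitting calls go through as you describe, exactly as in Props.~\ref{prop:df-simultaneous-induction} and~\ref{prop:delta-free-explicit-accuracy}.
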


We first specify the algorithm and its computable error bounds, then control their propagation with a single recurrence in App.~\ref{app:heterogeneous-recursive-stability}. The final assembly and conditional failure estimates complete the proof.

\subsection{Recursive reconstruction from estimated RDMs}
\label{app:heterogeneous-certified-noisy-algorithm}

We use the exact recursion of App.~\ref{app:different-particle-algorithm} with estimated RDMs and the full-block and core constructions of App.~\ref{app:uniform-certified-noisy-algorithm}. The error bounds below are computed from the RDM accuracy and earlier outputs. Prop.~\ref{prop:df-certified-recursion} verifies one level, and Prop.~\ref{prop:df-simultaneous-induction} verifies its conditions throughout the recursion. All \textsc{Fail} conditions are checked from the estimates and these propagated bounds.

\paragraph*{Parameters and error bounds.}
Use the inputs and the supplied error bound $\nu>0$ of Thm.~\ref{thm:mixed-noisy-compact}, together with the resolution scale $s_0$ from Eq.~\eqref{eq:equal-p-df-s0}. For the accuracy analysis, write
\begin{align}
    \mathfrak a_r^{\mathrm{het}}&:=\left\lceil\frac{7\lfloor r/2\rfloor+r+6}{2}\right\rceil.
    \label{eq:explicit-delta-free-rdm-exponent}
\end{align}
Assign each of the at most $2(r-1)$ Gram-splitting calls the local failure budget
\begin{align}
    \beta_{\mathrm{loc}}&:=\frac{\beta}{2(r-1)}.
    \label{eq:heterogeneous-local-failure-budget}
\end{align}
All error bounds below are computed from $\nu$, known parameters, and the stored bounds from earlier levels. RDM estimation is treated separately in App.~\ref{app:rdm-estimation}; here the probability is over the internal Gram-splitting randomness.

\paragraph*{Step 1: Construct the common high-occupation space.}
Use the homogeneous initialization with $\mu$ replaced by $\nu$: set $h_1:=1/(96m)$, choose $\theta\in[2/3,3/4]$ with the spectral separation below, and define the high-occupation space and its error bound by
\begin{align}
    \operatorname{dist}\!\left(\theta,\operatorname{spec}\widetilde\Gamma^{(1)}\right)&\geq\frac{h_1}{2},\qquad \widetilde S_\theta:=\operatorname{ran}\mathbf1_{(\theta,\infty)}(\widetilde\Gamma^{(1)}),\qquad \eta_S:=\frac{8\nu}{h_1}.
    \label{eq:noisy-high-occupation-space}
\end{align}
The certificate $\eta_S$ bounds the projector error relative to $S_\theta$, which includes the always-occupied component $S$ as in Eq.~\eqref{eq:exact-high-occupation-decomposition}. All exact comparison spaces and operators use the thresholds $\theta$ and $t_k$ chosen from the estimates.

After Step~1, process $k=2,\ldots,r$ in increasing order, carrying out Steps~2--4 at each order. Step~5 is performed after all orders are completed.

At each completed order $\ell$, store the dominant branches $\widetilde f_{\ell,j}$, their reconstructed full blocks $\widetilde\omega_{\ell,j}^{\mathrm N}$, and the complementary blocks $\widetilde\omega_{\ell,j}^{\mathrm H}$, together with their error bounds $\xi_{f,\ell}$, $\xi_{\mathrm N,\ell}$, and $\xi_{\mathrm H,\ell}$ from Eq.~\eqref{eq:df-factor-certificates}. Use the orthonormal branch list for later dominant products and the orthonormal complementary-block list for later complementary products. The branches also determine the occupied core, while the reconstructed dominant blocks are used only in final assembly. Since each old product contains at least two factors of degree at least two, its constituents at order $k$ have degrees at most $k-2$; the old-product spaces and their error bounds are therefore zero at $k=2,3$.

\paragraph*{Step 2: Form the two compressed RDMs at order $k$.}
Using the same estimated high-occupation space at every level, define
\begin{align}
    \widetilde Q_{\mathrm{hi},k} &:=P_{\wedge^k\widetilde S_\theta},
    \qquad\widetilde Q_{\mathrm{lo},k} :=P_{\wedge^k(\widetilde S_\theta^\perp)},
    \qquad \widetilde C_k :=\widetilde Q_{\mathrm{lo},k}\widetilde\Gamma^{(k)} \widetilde Q_{\mathrm{hi},k},
    \qquad \widetilde A_k :=\widetilde Q_{\mathrm{lo},k}\widetilde\Gamma^{(k)} \widetilde Q_{\mathrm{lo},k},
    \label{eq:heterogeneous-noisy-compressions}
\end{align}
and set $e_k:=\nu+2k\eta_S$ to bound both compression errors, as in Lem.~\ref{lem:df-compressed-operator-error}.

\paragraph*{Step 3: Remove old dominant products and recover new blocks.}
Choose the singular threshold in $[s_0,2s_0]$ using a gap parameter that accounts for products of smaller blocks. Set
\begin{align}
    q_k&:=\left\lfloor\frac{k}{2}\right\rfloor,
    \qquad h_k:=\frac{s_0}{8r m^{q_k}}.
    \label{eq:certified-singular-gap}
\end{align}
A degree-$k$ product contains at most $q_k$ blocks, giving the rank bound $\operatorname{rank}C_k\leq r m^{q_k}$ in Lem.~\ref{lem:heterogeneous-coherence-decomposition}. Choose $t_k\in[s_0,2s_0]$ satisfying
\begin{align}
    \operatorname{dist}\left( t_k,\operatorname{sing}\widetilde C_k \right)&\geq\frac{h_k}{2}.
    \label{eq:certified-singular-threshold}
\end{align}
If no such threshold exists, return \textsc{Fail}. Define
\begin{align}
    \widetilde R_k&:=\operatorname{ran}\mathbf1_{(t_k^2,\infty)}\left(\widetilde C_k^\dagger\widetilde C_k\right),
    \qquad \eta_{R,k}:=\frac{8e_k}{h_k}.
    \label{eq:heterogeneous-near-all-space-certificate}
\end{align}
Here $\eta_{R,k}$ bounds the projector error relative to the exact candidate space $R_k$ selected at $t_k$.

To remove the earlier branch products from $\widetilde R_k$, form their span $\hat O_k^{\mathrm N}$. Each product uses at least two distinct previously recovered dominant branches, each at most once, with total degree $k$. Since every constituent has degree at most $k-2$, Lem.~\ref{lem:df-product-span-stability} gives the projector-error bound
\begin{align}
    \rho_k^{\mathrm N}&:=r\sqrt{2^r r!\,m}\max_{2\leq\ell\leq k-2}\xi_{f,\ell}.
    \label{eq:df-old-branch-error-bound}
\end{align}
If no such product exists, set $\hat O_k^{\mathrm N}=\{0\}$ and $\rho_k^{\mathrm N}=0$.

Lem.~\ref{lem:df-stable-intersection} bounds the product-removal operator error by $2\eta_{R,k}+\rho_k^{\mathrm N}$ and gives the subspace-error certificate
\begin{align}
    \eta_{F,k}^{\mathrm N}&:=4\bigl(2\eta_{R,k}+\rho_k^{\mathrm N}\bigr).
    \label{eq:heterogeneous-near-recursive-certificates}
\end{align}
If $\rho_k^{\mathrm N}\geq1$ or $2\eta_{R,k}+\rho_k^{\mathrm N}>1/4$, return \textsc{Fail}. Otherwise, define
\begin{align}
    \widetilde B_k^{\mathrm N}&:=P_{\widetilde R_k}(I-P_{\hat O_k^{\mathrm N}})P_{\widetilde R_k},\qquad \widetilde F_k^{\mathrm N}:=\operatorname{ran}\mathbf1_{(1/2,1]}(\widetilde B_k^{\mathrm N}),\qquad \widetilde d_k^{\mathrm N}:=\dim\widetilde F_k^{\mathrm N}.
    \label{eq:noisy-new-near-space}
\end{align}
The new space has projector-error certificate $\eta_{F,k}^{\mathrm N}$ and lies in $\widetilde R_k\subseteq\wedge^k\widetilde S_\theta$. If the old-product space is zero, the candidate space is retained unchanged.

Apply the Gram-splitting algorithm of Prop.~\ref{prop:robust-gram-splitting} to $(\widetilde F_k^{\mathrm N},k,\eta_{F,k}^{\mathrm N},\beta_{\mathrm{loc}})$. If it returns \textsc{Fail}, return \textsc{Fail}; otherwise denote the branch estimates by $\widetilde f_{k,1},\ldots,\widetilde f_{k,\widetilde d_k^{\mathrm N}}$. Apply $\widetilde\Gamma^{(k)}$ to each branch and normalize, as in Eq.~\eqref{eq:uniform-noisy-near-reconstruction}, to obtain $\widetilde\omega_{k,j}^{\mathrm N}$. Return \textsc{Fail} if any image is zero. Keep both lists: the branches are needed for subsequent product removal and the core, and the full blocks for final assembly.

\paragraph*{Step 4: Remove old complementary products and recover new blocks.}
Using the exact gap in Lem.~\ref{lem:constant-gap-strong-hierarchy}, define
\begin{align}
    \widetilde W_k^{\mathrm H}
    &:=\operatorname{ran}\mathbf1_{((1+\theta)/2,\infty)}
    \left(\widetilde A_k\right),
    \qquad
    \eta_{W,k}^{\mathrm H}:=\frac{4e_k}{1-\theta}.
    \label{eq:noisy-strong-all-space}
\end{align}
Here $\eta_{W,k}^{\mathrm H}$ bounds the error of the candidate space. Remove the complementary block products already accounted for at lower levels as follows.

Construct $\hat O_k^{\mathrm H}$ from products of at least two distinct complementary block states returned from the lower-order $\widetilde T_\ell^{\mathrm H}$ spaces, each used at most once, with total degree $k$. Apply Lem.~\ref{lem:df-product-span-stability} to their stored errors and set
\begin{align}
    \rho_k^{\mathrm H}&:=r\sqrt{2^r r!\,m}\max_{2\leq\ell\leq k-2}\xi_{\mathrm H,\ell}.
    \label{eq:df-old-complementary-error-bound}
\end{align}
If no such product exists, set $\hat O_k^{\mathrm H}=\{0\}$ and $\rho_k^{\mathrm H}=0$. Set
\begin{align}
    \eta_{T,k}^{\mathrm H}&:=4\bigl(2\eta_{W,k}^{\mathrm H}+\rho_k^{\mathrm H}\bigr).
    \label{eq:heterogeneous-strong-recursive-certificates}
\end{align}
If $\rho_k^{\mathrm H}\geq1$ or $2\eta_{W,k}^{\mathrm H}+\rho_k^{\mathrm H}>1/4$, return \textsc{Fail}. Define
\begin{align}
    \begin{aligned} \widetilde B_k^{\mathrm H} &:=P_{\widetilde W_k^{\mathrm H}}(I-P_{\hat O_k^{\mathrm H}})P_{\widetilde W_k^{\mathrm H}},\qquad \widetilde T_k^{\mathrm H} :=\operatorname{ran}\mathbf1_{(1/2,1]}(\widetilde B_k^{\mathrm H}), \qquad \widetilde d_k^{\mathrm H}:=\dim\widetilde T_k^{\mathrm H}. \end{aligned}
    \label{eq:noisy-new-strong-space}
\end{align}
The projector-error certificate is $\eta_{T,k}^{\mathrm H}$. A zero old-product space again leaves the candidate space unchanged.
Apply the same algorithm to $(\widetilde T_k^{\mathrm H},k,\eta_{T,k}^{\mathrm H},\beta_{\mathrm{loc}})$. If it returns \textsc{Fail}, return \textsc{Fail}; otherwise denote the outputs by $\widetilde\omega_{k,1}^{\mathrm H},\ldots,\widetilde\omega_{k,\widetilde d_k^{\mathrm H}}^{\mathrm H}$.
This call returns the full complementary blocks directly.

After both calls succeed, compute the three level-$k$ error bounds from Eq.~\eqref{eq:df-factor-certificates} and store them with their respective factor lists. If $k<r$, repeat Steps~2--4 at order $k+1$; otherwise, proceed to Step~5.

\paragraph*{Step 5: Construct the core and assemble the output.}
Sum the retained-branch contributions over all completed orders:
\begin{align}
    \widetilde K
    &:=\sum_{k=2}^{r}\sum_{j=1}^{\widetilde d_k^{\mathrm N}}
    \Gamma_{\widetilde f_{k,j}}^{(1)}.
    \label{eq:noisy-retained-support}
\end{align}
with $\widetilde K=0$ when no dominant branch is recovered. Since the branches at each order form an orthonormal basis of $\widetilde F_k^{\mathrm N}$, this is the sum used by the core-recovery procedure of Prop.~\ref{prop:df-core-stability}, applied to $\widetilde S_\theta$ and the branch spaces $\widetilde F_k^{\mathrm N}$. Its construction thresholds $\widetilde K$ as in Eq.~\eqref{eq:uniform-noisy-retained-support-space} and returns $\widetilde S_{\mathrm{core}}:=\widetilde S_\theta\cap\widetilde F^\perp$. The proposition bounds this space's error relative to the exact core in Lem.~\ref{lem:heterogeneous-exact-output} directly from $\eta_S$ and $\eta_{F,k}^{\mathrm N}$.

Choose an orthonormal basis of $\widetilde S_{\mathrm{core}}$ and let $\widetilde\sigma_{\mathrm{core}}$ denote its unit exterior volume, with the same vacuum convention as in the homogeneous algorithm. With all factors in a fixed deterministic order, define the assembled vector and, whenever it is nonzero, its normalized state by
\begin{align}
    \widetilde Y&:=\widetilde\sigma_{\mathrm{core}}\wedge\bigwedge_{k=2}^{r}\bigwedge_{j=1}^{\widetilde d_k^{\mathrm N}}\widetilde\omega_{k,j}^{\mathrm N}\wedge\bigwedge_{k=2}^{r}\bigwedge_{j=1}^{\widetilde d_k^{\mathrm H}}\widetilde\omega_{k,j}^{\mathrm H},\qquad \widetilde\Phi:=\frac{\widetilde Y}{\|\widetilde Y\|}.
    \label{eq:heterogeneous-noisy-output-state}
\end{align}
Return the chosen core basis and the ordered block lists from all levels, which give a factorized description of $\widetilde\Phi$.
Use the same implicit normalization convention as in App.~\ref{app:uniform-certified-noisy-algorithm}; the learner returns the factors without expanding the many-particle vector.

The conditional success probabilities and final fidelity are proved in Thm.~\ref{thm:mixed-noisy-compact}, using the propagated certificates below and the budget in Eq.~\eqref{eq:heterogeneous-local-failure-budget}.

\subsection{Stability of the heterogeneous recursion}
\label{app:heterogeneous-recursive-stability}

We first verify one reconstruction level under the assumption that the earlier outputs have been correctly matched. We then control the accumulated errors by induction, checking that these local conditions hold at every level reached after successful earlier calls.

\subsubsection{One-level guarantee}

At a new order, the input errors come from the RDM estimates and from earlier recovered factors. The following guarantee uses only these two error bounds.

\begin{proposition}[One reconstruction level]\label{prop:df-certified-recursion}
Fix $2\leq k\leq r\leq m$, $0<s_0\leq1$, and $\beta_{\mathrm{loc}}\in(0,1)$ in the heterogeneous setting. Suppose the supplied Hermitian $1$- and $k$-RDM estimates have operator-norm errors at most $\nu$. Use the common high-occupation space selected in Step~1 of App.~\ref{app:heterogeneous-certified-noisy-algorithm}, and suppose the earlier output lists match the exact retained branches and complementary blocks at their chosen thresholds. The branch and complementary-block lists are orthonormal at each degree, and their stored norm-error bounds are valid. Let $x\geq0$ bound these stored errors at degrees at most $k-2$; take $x=0$ for $k=2,3$.

There are constants $c_r,C_r>0$, depending only on $r$, such that, if
\begin{align}
    m^{\lfloor k/2\rfloor+4}\frac{\nu}{s_0}+m^{7/2}x&\leq c_r,
    \label{eq:df-recursion-smallness}
\end{align}
Steps~2--4 at order $k$ admit the required threshold, give new-factor spaces of the correct dimensions, and satisfy both Gram-splitting input conditions. Each splitting call fails with conditional probability at most $\beta_{\mathrm{loc}}$. Whenever both calls succeed, every returned dominant branch or full block $\widetilde v$ matches its exact counterpart $v$, up to phases and a permutation, with
\begin{align}
    \|\widetilde v-v\|&\leq C_r\left(m^{\lfloor k/2\rfloor+4}\frac{\nu}{s_0}+m^{7/2}x\right).
    \label{eq:df-recursive-factor-errors}
\end{align}
The full-block normalizations are well defined, and the returned factor-error certificates are valid and bounded by the same right-hand side.
\end{proposition}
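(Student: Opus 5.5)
The argument follows the homogeneous proof of Prop.~\ref{prop:equal-p-input-accuracy}, at order $k$, with two changes. The singular-value gap $h_k$ now scales as $s_0/(8rm^{\lfloor k/2\rfloor})$. In addition, the product-removal step takes the propagated factor error $x$ as an extra input. I would write $\Lambda:=m^{\lfloor k/2\rfloor+4}\nu/s_0+m^{7/2}x$ and check every stability condition in computational order, so that each is a constant (depending on $r$) times $\Lambda$.

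\emph{Thresholds and candidate spaces.} From the definitions we get $\eta_S=768m\nu$, and since $k\le r$, also $e_k\le C_rm\nu$. Together these give
\begin{align*}
\eta_{R,k}=8e_k/h_k\le C_rm^{\lfloor k/2\rfloor+1}\nu/s_0,\qquad \eta_{W,k}^{\mathrm H}\le C_rm\nu .
\end{align*}
The smallness condition \eqref{eq:df-recursion-smallness} implies $\nu<h_1/4$ and $e_k<\min\{h_k/4,(1-\theta)/4\}$. The rank bound \eqref{eq:df-coherence-rank-count} with $M=rm^{\lfloor k/2\rfloor}$ and $e_k<s_0/2$ then lets Lem.~\ref{lem:df-threshold-selection} supply $t_k$. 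After that, Cor.~\ref{cor:df-threshold-space-errors} gives the correct dimensions of $\widetilde R_k$ and $\widetilde W_k^{\mathrm H}$ together with their projector-error certificates. The exact comparison spaces are the ones in Lems.~\ref{lem:heterogeneous-coherence-decomposition} and~\ref{lem:constant-gap-strong-hierarchy}, taken at the same $\theta,t_k$.

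\emph{Product removal.} By hypothesis the earlier lists match the exact retained factors and are orthonormal at each degree. Hence the exact old-product spaces are those in \eqref{eq:near-old-new-spaces} and \eqref{eq:strong-all-old-new-spaces}, and the exact products $f_J$ and $\omega_J$ are orthonormal. Lem.~\ref{lem:df-product-span-stability} then gives $\rho_k^{\mathrm N},\rho_k^{\mathrm H}\le r\sqrt{2^rr!\,m}\,x$. Lems.~\ref{lem:near-hierarchy-closure} and~\ref{lem:constant-gap-strong-hierarchy} supply the commuting projectors, namely $P_{R_k}$ with $P_{O_k^{\mathrm N}}$ and $P_{W_k^{\mathrm H}}$ with $P_{O_k^{\mathrm H}}$, and identify the intersections as the new-factor spaces. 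Lem.~\ref{lem:df-stable-intersection} therefore gives the correct dimensions and
\begin{align*}
\eta_{F,k}^{\mathrm N},\ \eta_{T,k}^{\mathrm H}\le C_r\bigl(m^{\lfloor k/2\rfloor+1}\nu/s_0+\sqrt m\,x\bigr).
\end{align*}
This holds provided $2\eta+\rho\le1/4$, which again follows from \eqref{eq:df-recursion-smallness}, so the algorithm's \textsc{Fail} checks do not trigger.

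\emph{Splitting and full blocks.} To meet the input condition of Prop.~\ref{prop:robust-gram-splitting}, $\eta\le1/(64km^3)$, I multiply the certificates above by $m^3$. This yields $C_r(m^{\lfloor k/2\rfloor+4}\nu/s_0+m^{7/2}x)=C_r\Lambda$, so the condition holds once $c_r$ is small. This is the step that fixes the exponents in \eqref{eq:df-recursion-smallness}, and it is the main bookkeeping obstacle: the $m^3$ amplification from Gram splitting must be matched exactly against the $m^{\lfloor k/2\rfloor}$ gap loss and the $\sqrt m$ product-span loss. The conditional failure bound $\beta_{\mathrm{loc}}$ comes directly from that proposition. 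On success, the branch and complementary-block errors are at most $(8k+2)m^3\eta\le C_r\Lambda$. For dominant blocks, $\nu+\xi_{f,k}\le C_r\Lambda<\sqrt\theta/2$. The exact identity $\Gamma_\Psi^{(k)}f_b=\sqrt{w_b}\omega_b$ from \eqref{eq:exact-near-identities} holds with $\sqrt{w_b}\ge\sqrt\theta$ and $\|\Gamma_\Psi^{(k)}\|\le1$. Lem.~\ref{lem:df-near-block-stability} then shows the normalization is well defined and that the error is at most $2(\nu+\xi_{f,k})/\sqrt\theta\le C_r\Lambda$. Since $\xi_{f,k}$, $\xi_{\mathrm H,k}$ and $\xi_{\mathrm N,k}$ in \eqref{eq:df-factor-certificates} are exactly these computed quantities, they are valid certificates bounded by the same right-hand side.
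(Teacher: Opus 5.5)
Your proposal is correct and follows the paper's proof essentially step for step. It uses the same certificate bounds $\eta_{R,k}\le C_rm^{\lfloor k/2\rfloor+1}\nu/s_0$ and $\eta_{W,k}^{\mathrm H}\le C_rm\nu$ via Cor.~\ref{cor:df-threshold-space-errors}, the product-span bound $C_r\sqrt m\,x$ from Lem.~\ref{lem:df-product-span-stability} fed into Lem.~\ref{lem:df-stable-intersection}, and then the $m^3$ Gram-splitting amplification followed by Lem.~\ref{lem:df-near-block-stability}. The only hypothesis you leave implicit, which the paper states, is that the factor lists contain at most $m/2$ vectors in total (by orthogonality of the exact supports), as Lem.~\ref{lem:df-product-span-stability} requires.
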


\begin{proof}
We propagate the two supplied error bounds through one level, checking each construction before using it. Put $q:=\lfloor k/2\rfloor$; all constants below depend only on $r$.

\emph{Thresholds and candidate spaces.} Step~1 uses $h_1=1/(96m)$, so its certificate satisfies $\eta_S\leq C_rm\nu$. Lem.~\ref{lem:df-threshold-selection} provides the one-body threshold, and Eq.~\eqref{eq:df-recursion-smallness}, with $c_r$ sufficiently small, implies $\nu<h_1/4$. Hence Lem.~\ref{lem:df-certified-threshold} validates this certificate. Lem.~\ref{lem:df-compressed-operator-error} then gives $e_k\leq C_rm\nu$ for both compressed operators. The rank bound in Lem.~\ref{lem:heterogeneous-coherence-decomposition} and the gap parameter $h_k=s_0/(8rm^q)$ imply
\begin{align}
    \eta_{R,k}&\leq C_rm^{q+1}\frac{\nu}{s_0},\qquad \eta_{W,k}^{\mathrm H}\leq C_rm\nu.
    \label{eq:df-R-power}
\end{align}
The same smallness assumption gives $e_k<h_k/4<s_0/2$ and $e_k<(1-\theta)/4$. Thus Lem.~\ref{lem:df-threshold-selection} supplies the singular threshold, and Cor.~\ref{cor:df-threshold-space-errors} verifies the candidate-space dimensions and error bounds.

\emph{Removal of earlier products.} Every degree-$k$ old product uses factors of degree at most $k-2$, whose stored errors are at most $x$. Their exact counterparts have mutually orthogonal one-particle supports, and there are at most $m/2$ factors of either type. Lem.~\ref{lem:df-product-span-stability} bounds the two product-span certificates by $C_r\sqrt m\,x$. Eq.~\eqref{eq:df-recursion-smallness} makes them smaller than one, so
\begin{align}
    \|P_{\hat O_k^{\mathrm N}}-P_{O_k^{\mathrm N}}\|&\leq\rho_k^{\mathrm N},\qquad \|P_{\hat O_k^{\mathrm H}}-P_{O_k^{\mathrm H}}\|\leq\rho_k^{\mathrm H},\qquad \rho_k^{\mathrm N}+\rho_k^{\mathrm H}\leq C_r\sqrt m\,x.
    \label{eq:df-old-space-errors}
\end{align}
The exact candidate and old-product projectors commute by Lems.~\ref{lem:near-hierarchy-closure} and~\ref{lem:constant-gap-strong-hierarchy}. The certificate definitions therefore give
\begin{align}
    \max\{\eta_{F,k}^{\mathrm N},\eta_{T,k}^{\mathrm H}\}&\leq C_r\left(m^{q+1}\frac{\nu}{s_0}+\sqrt m\,x\right).
    \label{eq:df-presplit-one-step}
\end{align}
By decreasing $c_r$, Eq.~\eqref{eq:df-recursion-smallness} makes this bound at most $1/(64km^3)$. In particular, both product-removal errors are at most $1/4$, so Lem.~\ref{lem:df-stable-intersection} gives the correct new-space dimensions and
\begin{align}
    \|P_{\widetilde F_k^{\mathrm N}}-P_{F_k^{\mathrm N}}\|&\leq\eta_{F,k}^{\mathrm N},\qquad \|P_{\widetilde T_k^{\mathrm H}}-P_{T_k^{\mathrm H}}\|\leq\eta_{T,k}^{\mathrm H}.
    \label{eq:df-new-space-errors}
\end{align}
For $k=2,3$, the old-product spaces are zero and the same argument applies with $x=0$.

\emph{Recovery of branches and full blocks.} The verified dimensions and subspace errors permit both calls to Prop.~\ref{prop:robust-gram-splitting}. Each call fails with probability at most $\beta_{\mathrm{loc}}$, conditional on the supplied earlier outputs. On success, its factor errors are bounded by $(8k+2)m^3$ times the corresponding subspace certificate, giving the right-hand side of Eq.~\eqref{eq:df-recursive-factor-errors} for branches and complementary blocks. The smallness assumption also ensures $\nu+\xi_{f,k}<\sqrt\theta/2$. Lem.~\ref{lem:df-near-block-stability} therefore makes each dominant-block normalization well defined and gives the same bound, after increasing $C_r$, for the full dominant blocks. These are precisely the stored certificates in Eq.~\eqref{eq:df-factor-certificates}.
\end{proof}

\subsubsection{Control of the propagated errors}

Prop.~\ref{prop:df-certified-recursion} expresses one level's error directly in terms of the RDM accuracy and earlier factor errors. Iterating this guarantee gives the following bounds for the full reconstruction, without requiring the intermediate subspace certificates as inputs.

\begin{proposition}[Errors throughout the recursion]\label{prop:df-simultaneous-induction}
Fix $2\leq r\leq m$ and $0<s_0\leq1$ in the heterogeneous setting. There are constants $c_r,C_r>0$, depending only on $r$, such that the following holds. Suppose the supplied Hermitian RDM estimates of orders $1,\ldots,r$ have operator-norm errors at most $\nu$, with
\begin{align}
    \frac{\nu}{s_0}&\leq c_rm^{-(7\lfloor r/2\rfloor+r+5)/2}.
    \label{eq:df-bootstrap-smallness}
\end{align}
Run the algorithm of App.~\ref{app:heterogeneous-certified-noisy-algorithm} with resolution $s_0$ and error bound $\nu$. At every order reached after successful earlier Gram-splitting calls, the required thresholds exist, the selected new-factor spaces have the correct dimensions, and all reconstruction conditions hold.

If all calls through order $k$ succeed, the returned dominant branches and full blocks of degrees at most $k$ match their exact counterparts up to phases and a permutation. Each such factor $\widetilde v$, with exact counterpart $v$, satisfies
\begin{align}
    \|\widetilde v-v\|&\leq C_rm^{(7\lfloor k/2\rfloor+3)/2}\frac{\nu}{s_0}.
    \label{eq:df-recursive-power-bound}
\end{align}
Their stored error certificates are valid and bounded by the same right-hand side. If all calls through order $r$ succeed, the occupied core has the correct dimension and its phase-aligned state satisfies
\begin{align}
    \|\widetilde\sigma_{\mathrm{core}}-\sigma_{\mathrm{core}}\|&\leq C_rm^{(7\lfloor r/2\rfloor+6)/2}\frac{\nu}{s_0}.
    \label{eq:df-core-power-bound}
\end{align}
\end{proposition}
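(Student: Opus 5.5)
We prove Prop.~\ref{prop:df-simultaneous-induction} by induction on the order $k=2,\ldots,r$, using Prop.~\ref{prop:df-certified-recursion} as the induction step. Write $q_k:=\lfloor k/2\rfloor$ and let $x_k$ be the largest stored factor error over degrees at most $k$, with $x_1:=0$. The induction hypothesis at order $k$ is that all calls through order $k-1$ succeeded, the earlier lists match the exact retained branches and complementary blocks, and
\begin{align*}
    x_{k-1}\leq C'_r m^{(7q_{k-1}+3)/2}\frac{\nu}{s_0}
\end{align*}
(trivially true for $k=2$). Since Prop.~\ref{prop:df-certified-recursion} only sees errors at degrees at most $k-2$, we apply it with $x:=x_{k-2}$.

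\emph{Verification of the smallness condition.} We must check Eq.~\eqref{eq:df-recursion-smallness}. The first term is $m^{q_k+4}\nu/s_0$. The second is
\begin{align*}
    m^{7/2}x_{k-2}\leq C'_r m^{7/2+(7q_{k-2}+3)/2}\frac{\nu}{s_0}=C'_r m^{(7q_k+3)/2}\frac{\nu}{s_0},
\end{align*}
using $q_{k-2}=q_k-1$. For $q_k\geq1$ we have $q_k+4\leq(7q_k+3)/2$, so both terms are at most $C m^{(7q_r+3)/2}\nu/s_0$. Assumption~\eqref{eq:df-bootstrap-smallness} makes this at most $C c_r$, since $(7q_r+r+5)/2\geq(7q_r+3)/2$. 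Choosing $c_r$ small enough relative to the level constants therefore verifies the smallness condition.

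\emph{Induction step.} Prop.~\ref{prop:df-certified-recursion} then supplies the thresholds, the correct dimensions, the Gram-splitting input conditions, and, on success, the factor bound
\begin{align*}
    C_r\left(m^{q_k+4}+C'_r m^{(7q_k+3)/2}\right)\frac{\nu}{s_0}\leq C''_r m^{(7q_k+3)/2}\frac{\nu}{s_0}.
\end{align*}
Combined with the earlier degrees, this closes the induction with constant $C''_r$. The constants grow at each of the $r-1$ levels, but there are only finitely many levels for fixed $r$, so a single $r$-dependent constant covers all of them. The matching claim follows because Gram splitting on spaces with correct dimensions returns a matched basis.

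\emph{Occupied core.} We apply Prop.~\ref{prop:df-core-stability} with $T_\ell=F_\ell^{\mathrm N}$ and $\eta_\ell=\eta_{F,\ell}^{\mathrm N}$. By Eq.~\eqref{eq:df-presplit-one-step},
\begin{align*}
    \eta_{F,\ell}^{\mathrm N}\lesssim m^{q_\ell+1}\frac{\nu}{s_0}+\sqrt m\,x_{\ell-2}\lesssim m^{(7q_\ell-3)/2}\frac{\nu}{s_0}
\end{align*}
(the first term dominates only when $q_\ell=1$). Summing with $d_\ell\leq m$ gives $\sum_\ell d_\ell\eta_\ell\lesssim m^{(7q_r-1)/2}\nu/s_0$, which is at most $1/4$ by the assumption. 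Together with $\eta_S\lesssim m\nu$, the factor $\sqrt{2m}$ yields an error of order $m^{(7q_r)/2}\nu/s_0$, which is within the stated $m^{(7q_r+6)/2}$ bound.

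\emph{Main obstacle.} The delicate point is the exponent bookkeeping: the level recursion multiplies the error by $m^{7/2}$ every two orders. We must confirm that the seed term $m^{q_k+4}$ never exceeds the recursive term, for instance at $k=2,3$, where $q_k+4=5=(7+3)/2$. We must also confirm that conditioning is handled correctly, so that each step is applied only on the event that earlier calls succeeded.
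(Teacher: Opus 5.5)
Your proposal is correct and follows essentially the same route as the paper: induction over orders through Prop.~\ref{prop:df-certified-recursion} with $x=x_{k-2}$, the same exponent bookkeeping $a_s=(7s+3)/2$ with $q_k+4\leq a_{q_k}$, and Prop.~\ref{prop:df-core-stability} for the core. Your smallness check uses the bound on $x_{k-2}$ from an earlier step, so the induction hypothesis should carry the bounds for all earlier degrees, not just $x_{k-1}$.

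The only substantive difference is in the core step. There you bound $\eta_{F,\ell}^{\mathrm N}$ by the pre-splitting estimate $m^{(7\lfloor \ell/2\rfloor-3)/2}\nu/s_0$, whereas the paper uses the cruder $\eta_{F,\ell}^{\mathrm N}\leq\xi_{f,\ell}\leq x_r$. Your route gives the sharper exponent $7\lfloor r/2\rfloor/2$ in place of $(7\lfloor r/2\rfloor+6)/2$, so it implies the stated core bound.
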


\begin{proof}
We apply Prop.~\ref{prop:df-certified-recursion} in increasing order. For this proof, let $x_k$ be the largest stored branch or full-block error certificate through order $k$, with $x_0=x_1=0$, and put $q_k:=\lfloor k/2\rfloor$. We show that these certificates are valid along every successful history.

\emph{Initial orders.} At $k=2,3$, there are no old products, so the one-level guarantee takes $x=0$ and its smallness condition reduces to $m^5\nu/s_0\leq c_r$. Eq.~\eqref{eq:df-bootstrap-smallness} ensures this condition after decreasing its constant. Each completed initial order therefore has valid certificates bounded by $C_rm^5\nu/s_0$.

\emph{Propagation to the next order.} An old product at order $k\geq4$ uses factors of degree at most $k-2$, so the next call to Prop.~\ref{prop:df-certified-recursion} takes $x=x_{k-2}$. Whenever its smallness condition holds, that proposition yields
\begin{align}
    x_k&\leq\max\left\{x_{k-1},C_r\left(m^{q_k+4}\frac{\nu}{s_0}+m^{7/2}x_{k-2}\right)\right\}.
    \label{eq:df-xk-recursion}
\end{align}
Thus the initial power $m^5$ increases by $7/2$ every two orders. Set $a_s:=(7s+3)/2$ for this induction; then $a_1=5$ and $a_s=a_{s-1}+7/2$. Assuming the asserted bounds through order $k-1$, the input to the next one-level guarantee satisfies
\begin{align}
    m^{q_k+4}\frac{\nu}{s_0}+m^{7/2}x_{k-2}&\leq C_rm^{a_{q_k}}\frac{\nu}{s_0}\leq C_rc_rm^{-1-r/2},
\end{align}
where $q_k+4\leq a_{q_k}$ and Eq.~\eqref{eq:df-bootstrap-smallness} were used. Choosing the input constant sufficiently small verifies the one-level hypothesis before invoking it. The proposition now gives valid new certificates and, through Eq.~\eqref{eq:df-xk-recursion}, $x_k\leq C_rm^{a_{q_k}}\nu/s_0$. This is Eq.~\eqref{eq:df-recursive-power-bound}, including all earlier orders because $a_{q_k}$ is nondecreasing. There are at most $r-1$ levels, so one may first bound the finite sequence of induction constants by a constant depending only on $r$, and then choose the input constant uniformly for all levels and successful histories.

\emph{Occupied core.} Suppose all calls through order $r$ succeed. Apply Prop.~\ref{prop:df-core-stability} to the recovered branch spaces $\widetilde F_k^{\mathrm N}$. Their dimensions are correct, and their valid subspace certificates satisfy $\eta_{F,k}^{\mathrm N}\leq\xi_{f,k}\leq x_r$ whenever the space is nonempty. Since there are at most $m/2$ retained branches,
\begin{align}
    \sum_{k=2}^r d_k^{\mathrm N}\eta_{F,k}^{\mathrm N}&\leq\frac m2 x_r\leq C_rc_rm^{-r/2}.
\end{align}
The input condition therefore makes this sum at most $1/4$ and also gives $\eta_S\leq C_rm\nu<1$. The core-recovery proposition applies and returns the correct dimension, with phase-aligned error at most $C_r\sqrt m(\eta_S+mx_r)$. Substituting the factor bound gives Eq.~\eqref{eq:df-core-power-bound}.
\end{proof}

\subsection{Verification of the required accuracy}
\label{app:heterogeneous-fidelity-guarantee}

The propagated-error bounds now connect the RDM accuracy to the final fidelity. We combine Prop.~\ref{prop:df-simultaneous-induction} with the assembly guarantee of Prop.~\ref{prop:equal-p-assembly} and the truncation bound in Lem.~\ref{lem:heterogeneous-exact-output}.

\begin{proposition}[Sufficient accuracy for final assembly]\label{prop:delta-free-explicit-accuracy}
Fix $2\leq r\leq m$ and $0<\varepsilon_{\mathrm{fid}}\leq1$ in the heterogeneous setting, and set $s_0:=\sqrt{\varepsilon_{\mathrm{fid}}/(48m)}$. There is a constant $c_r>0$, depending only on $r$, such that the following holds. Suppose the supplied Hermitian RDM estimates of orders $1,\ldots,r$ have operator-norm errors at most $\nu$, with
\begin{align}
    \nu&\leq c_r\frac{\varepsilon_{\mathrm{fid}}}{m^{\lceil(7\lfloor r/2\rfloor+r+6)/2\rceil}}.
    \label{eq:df-explicit-nu-choice}
\end{align}
Then the algorithm of App.~\ref{app:heterogeneous-certified-noisy-algorithm}, run with error bound $\nu$, satisfies every reconstruction condition at each order reached after successful earlier Gram-splitting calls.

If all calls succeed, the recovered block lists match the exact retained blocks, and the occupied core has the correct dimension. The block and phase-aligned core errors satisfy the bounds of Prop.~\ref{prop:df-simultaneous-induction}; inserting these bounds into Prop.~\ref{prop:equal-p-assembly} gives
\begin{align}
    \mathfrak A&\leq\frac{\sqrt{\varepsilon_{\mathrm{fid}}}}{4}.
    \label{eq:df-explicit-assembly-bound}
\end{align}
\end{proposition}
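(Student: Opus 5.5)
The plan is to mirror the proof of Prop.~\ref{prop:equal-p-input-accuracy}, with the homogeneous one-level bounds replaced by the recursive bounds of Prop.~\ref{prop:df-simultaneous-induction}.

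\textbf{Step 1: reduce to the induction hypothesis.} From $s_0=\sqrt{\varepsilon_{\mathrm{fid}}/(48m)}$ and Eq.~\eqref{eq:df-explicit-nu-choice},
\begin{align*}
\frac{\nu}{s_0}\leq\sqrt{48}\,c_r\sqrt{\varepsilon_{\mathrm{fid}}}\,m^{1/2-\mathfrak a_r^{\mathrm{het}}}.
\end{align*}
Since $\mathfrak a_r^{\mathrm{het}}\geq(7\lfloor r/2\rfloor+r+6)/2$ and $\varepsilon_{\mathrm{fid}}\leq1$, we get $\nu/s_0\leq\sqrt{48}\,c_r m^{-(7\lfloor r/2\rfloor+r+5)/2}$. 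This is the smallness condition \eqref{eq:df-bootstrap-smallness} once $c_r$ is small enough. Prop.~\ref{prop:df-simultaneous-induction} then gives several things directly:
\begin{itemize}
\item every reconstruction condition holds at each order reached after successful calls;
\item on full success, the recovered lists match the exact retained blocks;
\item the core has the correct dimension;
\item every block error satisfies $\epsilon_j\leq C_rm^{(7\lfloor r/2\rfloor+3)/2}\nu/s_0$, using that $(7\lfloor k/2\rfloor+3)/2$ is nondecreasing in $k$;
\item the core error satisfies $\xi_{\mathrm{core}}\leq C_rm^{(7\lfloor r/2\rfloor+6)/2}\nu/s_0$.
\end{itemize}
The comparison state is $\Psi_{\mathrm{tr}}$ from Lem.~\ref{lem:heterogeneous-exact-output}. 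Its factors have mutually orthogonal supports outside the core, which is what Prop.~\ref{prop:equal-p-assembly} requires.

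\textbf{Step 2: bound $\mathfrak A$.} Each block has degree $p_j\leq r$, so
\begin{align*}
u_j:=\frac{m^{p_j/2}\epsilon_j}{\sqrt{p_j!}}\leq m^{r/2}\epsilon_j.
\end{align*}
There are at most $m/2$ blocks, so
\begin{align*}
\sum_j u_j\leq C_r m^{1+r/2+(7\lfloor r/2\rfloor+3)/2}\frac{\nu}{s_0}=C_r m^{(7\lfloor r/2\rfloor+r+5)/2}\frac{\nu}{s_0}.
\end{align*}
Inserting the bound on $\nu/s_0$ from Step~1 gives $\sum_j u_j\leq C_rc_r\sqrt{\varepsilon_{\mathrm{fid}}}\,m^{1/2+(7\lfloor r/2\rfloor+r+5)/2-\mathfrak a_r^{\mathrm{het}}}$. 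By the ceiling in $\mathfrak a_r^{\mathrm{het}}$, the exponent of $m$ is at most $0$, so $\sum_j u_j\leq C_rc_r\sqrt{\varepsilon_{\mathrm{fid}}}\leq1$. As in the homogeneous case, this yields
\begin{align*}
\prod_j(1+u_j)-1\leq e^{\sum_j u_j}-1\leq2\sum_j u_j.
\end{align*}
The core term has exponent $(7\lfloor r/2\rfloor+6)/2$. For $r\geq2$ this is at most $(7\lfloor r/2\rfloor+r+5)/2$, so it obeys the same bound. Adding the two contributions gives $\mathfrak A\leq C_rc_r\sqrt{\varepsilon_{\mathrm{fid}}}$. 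Shrinking $c_r$ then gives Eq.~\eqref{eq:df-explicit-assembly-bound}.

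\textbf{Main obstacle.} The step needing the most care is the exponent bookkeeping. The factor $m^{p_j/2}$ from the assembly lemma, the block count $m$, the $m^{1/2}$ coming from $s_0^{-1}$, and the recursive power $m^{(7\lfloor r/2\rfloor+3)/2}$ must together fit within $\mathfrak a_r^{\mathrm{het}}$. I would check this against the ceiling both for even and for odd $r$. I would also take $c_r$ as the minimum of the constant required by Prop.~\ref{prop:df-simultaneous-induction} and the constant required by the assembly step, so that all conditions hold uniformly over successful histories.
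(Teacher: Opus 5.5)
Your proposal is correct and follows the paper's proof essentially step for step. The shared argument is: derive the bootstrap smallness condition from the bound on $\nu/s_0$; invoke Prop.~\ref{prop:df-simultaneous-induction} for matching, the core dimension, and the block and core error bounds; bound $\sum_j u_j$ by $C_r m^{(7\lfloor r/2\rfloor+r+5)/2}\nu/s_0$ using at most $m/2$ blocks of degree at most $r$; linearize the product via $\prod_j(1+u_j)-1\leq 2\sum_j u_j$; absorb the smaller core exponent; and shrink $c_r$.
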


\begin{proof}
We first verify that the supplied RDM accuracy permits the recursion, then apply its block and core bounds to assembly. All constants below depend only on $r$.

\emph{Validity of the recursion.} By the stated accuracy and $s_0=\sqrt{\varepsilon_{\mathrm{fid}}/(48m)}$,
\begin{align}
    \frac{\nu}{s_0}&\leq\sqrt{48}\,c_r\sqrt{\varepsilon_{\mathrm{fid}}}m^{1/2-\mathfrak a_r^{\mathrm{het}}}\leq\sqrt{48}\,c_r\sqrt{\varepsilon_{\mathrm{fid}}}m^{-(7\lfloor r/2\rfloor+r+5)/2}.
    \label{eq:df-nu-over-s0}
\end{align}
Since $\varepsilon_{\mathrm{fid}}\leq1$, choosing $c_r$ sufficiently small verifies the input condition in Eq.~\eqref{eq:df-bootstrap-smallness}. Prop.~\ref{prop:df-simultaneous-induction} therefore validates every reconstruction step reached after successful earlier calls. Condition henceforth on all Gram-splitting calls succeeding. The proposition then gives the matching of recovered factors and the correct core dimension, with full-block errors at most $C_rm^{(7\lfloor r/2\rfloor+3)/2}\nu/s_0$ and phase-aligned core error at most $C_rm^{(7\lfloor r/2\rfloor+6)/2}\nu/s_0$.

\emph{Assembly.} There are at most $m/2$ recovered blocks, each of degree at most $r$. Thus, in the assembly product of Prop.~\ref{prop:equal-p-assembly}, the sum of the first-order terms $u_j:=m^{p_j/2}\epsilon_j/\sqrt{p_j!}$ satisfies
\begin{align}
    \sum_j u_j&\leq C_rm^{(7\lfloor r/2\rfloor+r+5)/2}\frac{\nu}{s_0}.
    \label{eq:df-explicit-correlated-sum}
\end{align}
The exponent contains the factor-error exponent plus $r/2$ for each term and one for the number of blocks. Eq.~\eqref{eq:df-nu-over-s0} bounds this sum by $C_rc_r\sqrt{\varepsilon_{\mathrm{fid}}}$, which is at most one for sufficiently small $c_r$. Hence $\prod_j(1+u_j)-1\leq2\sum_j u_j$. Since $r\geq2$, the core-error exponent is also smaller than the exponent in Eq.~\eqref{eq:df-explicit-correlated-sum}. Adding the core error, therefore, yields
\begin{align}
    \mathfrak A&\leq C_rm^{(7\lfloor r/2\rfloor+r+5)/2}\frac{\nu}{s_0}.
\end{align}
Because $s_0=\sqrt{\varepsilon_{\mathrm{fid}}/(48m)}$, it suffices to take an RDM error of order $\varepsilon_{\mathrm{fid}}m^{-(7\lfloor r/2\rfloor+r+6)/2}$ to make this bound of order $\sqrt{\varepsilon_{\mathrm{fid}}}$. This explains the exponent $\mathfrak a_r^{\mathrm{het}}$ in Eq.~\eqref{eq:explicit-delta-free-rdm-exponent}. Substituting Eq.~\eqref{eq:df-nu-over-s0} implies $\mathfrak A\leq C_rc_r\sqrt{\varepsilon_{\mathrm{fid}}}$; decreasing $c_r$ if necessary proves Eq.~\eqref{eq:df-explicit-assembly-bound}.
\end{proof}

\subsection{Completion of the reconstruction guarantee}
\label{app:different-particle-noise-analysis}

\begin{proof}[Proof of Thm.~\ref{thm:mixed-noisy-compact}]
Choose $c_r$ as in Prop.~\ref{prop:delta-free-explicit-accuracy}. For every history in which the preceding splitting calls have succeeded, that proposition verifies the next call's input conditions. Prop.~\ref{prop:robust-gram-splitting} directly bounds its conditional failure probability by $\beta/[2(r-1)]$. Summing over the at most $2(r-1)$ possible first failed calls gives total failure probability at most $\beta$; independence between calls is not required.

If all calls succeed, Prop.~\ref{prop:delta-free-explicit-accuracy} gives the correct core dimension and $\mathfrak A\leq\sqrt{\varepsilon_{\mathrm{fid}}}/4$. Prop.~\ref{prop:equal-p-assembly} then proves that the product is nonzero and bounds its infidelity by $(\sqrt{\varepsilon_{\mathrm{fid}}}/4+\sqrt{3ms_0^2})^2\leq\varepsilon_{\mathrm{fid}}$.

The algorithm uses only the supplied RDMs and the recovered classical vectors. For fixed $r$, the cost analysis in Sec.~\ref{sec:heterogeneous-time-complexity}, with $O(\log(2r/\beta))$ trials per call, gives polynomial arithmetic cost in $m$ and $\log(1/\beta)$ without additional copies.
\end{proof}

\section{RDM estimation and state-copy complexity}\label{app:rdm-estimation}

We convert the entrywise guarantee of the fermionic partial-tomography protocol~\cite{zhao2021fermionic} into an operator-norm guarantee, including the classical cost of constructing the estimates. Substituting the accuracies from Apps.~\ref{app:homogeneous-noisy} and~\ref{app:heterogeneous-noisy} then gives the resource bounds in the main text.

\subsection{Operator-norm RDM estimation}\label{app:operator-norm-rdm-estimation}

\begin{proposition}[RDM estimation in operator norm]\label{prop:shadow-operator-norm-rdm}
Fix an RDM order $k\geq1$, independent of $m$, with $k\leq m$, an operator-norm accuracy $\nu_k\in(0,1)$, and a failure probability $\alpha_k\in(0,1)$. Write $D_k:=\binom{m}{k}$. From independent copies of $\Psi$, the Gaussian Clifford protocol of Ref.~\cite{zhao2021fermionic} produces a Hermitian estimate $\widetilde\Gamma^{(k)}$ satisfying
\begin{align}
    \left\|\widetilde\Gamma^{(k)}-\Gamma_\Psi^{(k)}\right\|&\leq\nu_k
    \label{eq:shadow-fixed-order-operator-error}
\end{align}
with probability at least $1-\alpha_k$. The required numbers of copies and classical arithmetic operations are, respectively,
\begin{align}
    O_k\left(D_k^3\frac{\log(2m/\alpha_k)}{\nu_k^2}\right)&\qquad\text{and}\qquad O_k\left(D_k^4\frac{\log(2m/\alpha_k)}{\nu_k^2}\right).
    \label{eq:shadow-fixed-order-copy-complexity}
\end{align}
\end{proposition}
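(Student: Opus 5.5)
The plan is to use the entrywise guarantee of the fermionic partial-tomography protocol of Ref.~\cite{zhao2021fermionic} and then pass from entries to operator norm with the crude bound $\|X\|\leq D_k\max_{I,J}|X_{I,J}|$. This bound holds for any $D_k\times D_k$ matrix, for instance via $\|X\|\leq\|X\|_{\mathrm{HS}}\leq D_k\max|X_{I,J}|$.

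\emph{Observable decomposition.} Each entry $\langle\Psi|\hat c_J^\dagger\hat c_I|\Psi\rangle$ of the $k$-RDM is a linear combination of $O_k(1)$ Majorana monomials of degree $2k$. Concretely, I expand every $\hat c_j$ as $(\gamma_{2j-1}+i\gamma_{2j})/2$; this gives $2^{2k}$ monomials with coefficients of modulus $2^{-2k}$. Take the Hermitian and anti-Hermitian parts, so that real and imaginary parts are estimated separately. The protocol then predicts all degree-$2k$ Majorana expectations from one shared set of measurement records. Its shadow variance for such an observable is $O_k(m^{k})$. By median-of-means, $O_k(m^{k}\varepsilon^{-2}\log(\#\text{obs}/\alpha_k))$ copies estimate all $\binom{2m}{2k}=O_k(m^{2k})$ monomials to additive error $\varepsilon$ with probability $1-\alpha_k$. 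Since $\log(m^{2k}/\alpha_k)=O_k(\log(2m/\alpha_k))$, the logarithmic factor has the required form. Plugging these into each entry gives entrywise error $O_k(\varepsilon)$.

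\emph{Norm conversion and Hermiticity.} I set $\varepsilon=\nu_k/(c_kD_k)$ for a suitable constant $c_k$. I then symmetrize the estimate as $\widetilde\Gamma\mapsto(\widetilde\Gamma+\widetilde\Gamma^\dagger)/2$. This does not increase the error, because the true RDM is Hermitian. The resulting copy count is $O_k(m^kD_k^2\nu_k^{-2}\log(2m/\alpha_k))$, and since $m^k=O_k(D_k)$ for fixed $k$, this is $O_k(D_k^3\nu_k^{-2}\log(2m/\alpha_k))$, as claimed.

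\emph{Classical cost.} For each snapshot, the protocol evaluates every needed Majorana expectation in the post-measurement Gaussian state. Each is a Pfaffian of a $2k\times 2k$ submatrix, so it costs $O_k(1)$ per monomial and $O_k(D_k)$ per entry-group, giving $O_k(D_k^2)$ entry contributions per snapshot after reuse. Multiplying by the number of snapshots gives $O_k(D_k^4\nu_k^{-2}\log(2m/\alpha_k))$ operations. The median-of-means and symmetrization steps are lower-order.

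The main obstacle is pinning down the variance constant of the matchgate/Gaussian-Clifford shadows for degree-$2k$ Majoranas, namely $\binom{2m}{2k}/\binom{m}{k}=O_k(m^k)$. I would cite Ref.~\cite{zhao2021fermionic} directly for it rather than rederive it. The only other point needing care is making the union bound over all monomials appear inside the logarithm as $\log(2m/\alpha_k)$ up to $k$-dependent constants.
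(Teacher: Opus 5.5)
Your copy-complexity argument follows the paper's route: the entrywise guarantee of Ref.~\cite{zhao2021fermionic}, then $\|X\|\leq\|X\|_{\mathrm{HS}}\leq D_k\max|X_{I,J}|$ with entrywise accuracy of order $\nu_k/D_k$, then Hermitianization. That part is fine.

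The classical-time bound, however, does not follow from what you wrote. You charge $O_k(D_k^2)$ operations per snapshot and use $O_k(D_k^3\nu_k^{-2}\log(2m/\alpha_k))$ snapshots. These multiply to $O_k(D_k^5\nu_k^{-2}\log(2m/\alpha_k))$, not $D_k^4$. Evaluating all $\Theta_k(m^{2k})$ monomial estimators in every record, via Pfaffians or otherwise, really does cost this much.

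The missing idea is the sparsity of the Gaussian Clifford estimator. The sampled unitary sends each Majorana monomial to a signed monomial. The expectation of that monomial in the measured computational-basis state vanishes unless it is a product of measured pairs $\gamma_{2j-1}\gamma_{2j}$. Hence each record has exactly $\binom{m}{q}$ nonzero degree-$2q$ estimators, one for each choice of $q$ measured mode pairs. You enumerate and accumulate only these $O_k(D_k)$ contributions per record, dividing by the total number of records rather than by the number of nonzero occurrences. Converting the Majorana estimates into RDM entries costs a further one-time $O_k(D_k^2)$. Together these give the claimed $O_k(D_k^4\nu_k^{-2}\log(2m/\alpha_k))$, which is how the paper argues.

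There is also a smaller slip. An entry $\langle\Psi|\hat c_J^\dagger\hat c_I|\Psi\rangle$ with $I\cap J\neq\varnothing$, including every diagonal entry, does not reduce to degree-$2k$ monomials only, because $\gamma_a^2=1$ lowers the degree. So you must estimate all even-degree monomials of degree at most $2k$, not just the $\binom{2m}{2k}$ of top degree. The fix is easy:
\begin{itemize}
\item the cited guarantee covers all these degrees;
\item the variance factors $\binom{2m}{2q}/\binom{m}{q}$ are $O_k(D_k)$ for $q\leq k$;
\item reducing repeated Majoranas does not increase the total absolute coefficient.
\end{itemize}
Your entrywise error bound therefore survives unchanged.
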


\begin{proof}
By the simultaneous sample-mean guarantee of Ref.~\cite[Theorem~1 and Supplemental Material, Sec.~B.5]{zhao2021fermionic}, for fixed $k$, the Gaussian Clifford protocol estimates all Majorana expectations of even degree at most $2k$ to accuracy $\varepsilon_{\mathrm{entry}}\in(0,1)$ with probability at least $1-\alpha_k$, using
\begin{align}
    &O_k\left(D_k\frac{\log(2m/\alpha_k)}{\varepsilon_{\mathrm{entry}}^2}\right)
    \label{eq:shadow-entrywise-copy-complexity}
\end{align}
independent measurement records. Here the degree-$2q$ factor in the cited bound is $\binom{2m}{2q}/\binom mq=O_k(D_k)$ for $q\leq k$, and the number of observables through degree $2k$ is $O_k(m^{2k})$, giving the logarithmic factor above. Expanding the $2k$ creation and annihilation operators in an RDM entry into Majorana operators gives at most $4^k$ terms, each with coefficient of magnitude $2^{-2k}$. Reducing repeated Majorana operators leaves only even degrees at most $2k$ and does not increase the total absolute coefficient. The triangle inequality therefore bounds the absolute error of every, possibly complex, RDM entry by $\varepsilon_{\mathrm{entry}}$ on the same simultaneous event.

To count the classical cost, use the estimator in Ref.~\cite[Eq.~(14)]{zhao2021fermionic}. For each $0\leq q\leq k$, exactly $\binom mq$ degree-$2q$ estimators are nonzero in each record, corresponding to choices of $q$ measured mode pairs. Enumerating these contributions, including their indices, signs, and rescaling factors, takes $O_k(\sum_{q=0}^k\binom mq)=O_k(D_k)$ operations per record. We accumulate only these contributions and divide by the total number of records, not by the number of nonzero occurrences of an individual estimator. Initialization and conversion to all RDM entries cost $O_k(D_k^2)$ additional operations. Hence, the accumulation cost is $O_k(D_k^2\varepsilon_{\mathrm{entry}}^{-2}\log(2m/\alpha_k))$.

Let $\hat\Gamma^{(k)}$ be the resulting entrywise estimate and set $\widetilde\Gamma^{(k)}:=(\hat\Gamma^{(k)}+\hat\Gamma^{(k)\dagger})/2$. Hermitianization costs $O(D_k^2)$ operations and does not increase the operator-norm error. On the simultaneous entrywise event,
\begin{align}
    \|\hat\Gamma^{(k)}-\Gamma_\Psi^{(k)}\|
    &\leq\|\hat\Gamma^{(k)}-\Gamma_\Psi^{(k)}\|_{\mathrm{HS}}
    \leq D_k\varepsilon_{\mathrm{entry}}.
    \label{eq:shadow-entrywise-to-operator}
\end{align}
Choosing $\varepsilon_{\mathrm{entry}}=\nu_k/D_k$ gives the claimed copy and classical-time bounds. No positivity projection or trace-one normalization is required.
\end{proof}

\subsection{Reconstruction in the homogeneous setting}\label{app:homogeneous-state-copy-complexity}

\begin{corollary}[State-copy complexity in the homogeneous setting]
\label{cor:homogeneous-rdm-copy-complexity}
Fix $p\geq2$, $m\geq p$, and $\varepsilon_{\mathrm{fid}},\alpha\in(0,1)$ in the homogeneous setting. From independent copies of the target state, Hermitian estimates of its $1$- and $p$-RDMs satisfying the input-accuracy condition of Thm.~\ref{thm:equal-p-noisy-compact-direct} can be produced with probability at least $1-\alpha$. The required numbers of copies and classical arithmetic operations are, respectively,
\begin{align}
    O_p\left(m^{3p+2\mathfrak a_p^{\mathrm{hom}}}\varepsilon_{\mathrm{fid}}^{-2}\log\frac{m}{\alpha}\right)&\qquad\text{and}\qquad O_p\left(m^{4p+2\mathfrak a_p^{\mathrm{hom}}}\varepsilon_{\mathrm{fid}}^{-2}\log\frac{m}{\alpha}\right).
    \label{eq:equal-p-df-copy-complexity}
\end{align}
\end{corollary}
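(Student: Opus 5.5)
The plan is to apply Prop.~\ref{prop:shadow-operator-norm-rdm} twice, once with $k=1$ and once with $k=p$. In both calls the operator-norm accuracy is the supplied bound $\mu:=c_p\varepsilon_{\mathrm{fid}}m^{-\mathfrak a_p^{\mathrm{hom}}}$ of Thm.~\ref{thm:equal-p-noisy-compact-direct}, and the failure probability is $\alpha/2$. A union bound then gives both Hermitian estimates within $\mu$ simultaneously with probability at least $1-\alpha$, which is exactly the input-accuracy condition of that theorem. Since $\mu\leq c_p<1$ for $c_p$ small and $m\geq1$, the hypothesis $\nu_k\in(0,1)$ of the proposition holds. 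The estimates may also share measurement records: we take the larger of the two copy counts and reuse the same records for both orders. Alternatively, we simply add the two counts, which changes nothing asymptotically.

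Next we substitute. For $k=p$ we have $D_p=\binom mp\leq m^p$, so the copy count is
\[
O_p\bigl(m^{3p}\mu^{-2}\log(4m/\alpha)\bigr)=O_p\bigl(m^{3p+2\mathfrak a_p^{\mathrm{hom}}}\varepsilon_{\mathrm{fid}}^{-2}c_p^{-2}\log(m/\alpha)\bigr).
\]
Here $c_p$ depends only on $p$ and is absorbed into $O_p$, and $\log(4m/\alpha)=O(\log(m/\alpha))$ because $m\geq2$ and $\alpha<1$. The classical cost is handled the same way, with $D_p^4\leq m^{4p}$. The $k=1$ call needs $O(m^3\mu^{-2}\log)$ copies and $O(m^4\mu^{-2}\log)$ operations. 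These are dominated because $p\geq2$, so they do not affect the stated bounds.

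The step needing the most care is the bookkeeping rather than any new estimate. First, the hidden constant in Prop.~\ref{prop:shadow-operator-norm-rdm} must depend only on $k\in\{1,p\}$, which is fixed. Second, the logarithmic factor must be written as $\log(2m/(\alpha/2))$, and we need to check that this is still $O(\log(m/\alpha))$. Third, the union bound over the two orders must be valid even if both estimates come from the same records; it is, since it needs no independence. No positivity or normalization of the estimates is required, because the theorem only asks for Hermitian estimates with operator-norm error at most $\mu$, and the proposition already Hermitianizes its output.
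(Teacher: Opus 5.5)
Your proposal is correct and follows the paper's proof: apply Prop.~\ref{prop:shadow-operator-norm-rdm} at orders $1$ and $p$ with accuracy $\mu$ and failure budget $\alpha/2$ each. Then take a common batch of records large enough for both orders, use the union bound, which needs no independence, and substitute $D_p\leq m^p$ and $\mu=c_p\varepsilon_{\mathrm{fid}}m^{-\mathfrak a_p^{\mathrm{hom}}}$.
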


\begin{proof}
To obtain the RDM accuracy required by Eq.~\eqref{eq:equal-p-df-input-accuracy}, apply Prop.~\ref{prop:shadow-operator-norm-rdm} at orders $1$ and $p$, assigning failure probability $\alpha/2$ to each estimate. Since the union bound does not require independent estimation events, we can use a common batch of measurement records whose size meets both fixed-order requirements and obtain simultaneous success probability at least $1-\alpha$. For fixed $p\geq2$, this gives the stated powers of $m$.
\end{proof}

\subsection{Reconstruction in the heterogeneous setting}\label{app:heterogeneous-state-copy-complexity}

\begin{corollary}[State-copy complexity in the heterogeneous setting]
\label{cor:heterogeneous-rdm-copy-complexity}
Fix $2\leq r\leq m$ and $\varepsilon_{\mathrm{fid}},\alpha\in(0,1)$ in the heterogeneous setting. From independent copies of the target state, Hermitian RDM estimates of orders $1,\ldots,r$ satisfying the input-accuracy condition of Thm.~\ref{thm:mixed-noisy-compact} can be produced with probability at least $1-\alpha$. The required numbers of copies and classical arithmetic operations are, respectively,
\begin{align}
    O_r\left(m^{3r+2\mathfrak a_r^{\mathrm{het}}}\varepsilon_{\mathrm{fid}}^{-2}\log\frac{mr}{\alpha}\right)&\qquad\text{and}\qquad O_r\left(m^{4r+2\mathfrak a_r^{\mathrm{het}}}\varepsilon_{\mathrm{fid}}^{-2}\log\frac{mr}{\alpha}\right).
    \label{eq:different-p-df-copy-complexity}
\end{align}
\end{corollary}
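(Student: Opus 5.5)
The plan mirrors the proof of Cor.~\ref{cor:homogeneous-rdm-copy-complexity}, with $r$ RDM orders in place of two. We need Hermitian estimates $\widetilde\Gamma^{(1)},\ldots,\widetilde\Gamma^{(r)}$ satisfying the common accuracy $\nu$ of Eq.~\eqref{eq:delta-free-common-rdm-error}. We take $\nu$ equal to its upper bound,
\begin{align}
    \nu=c_r\,\varepsilon_{\mathrm{fid}}\,m^{-\mathfrak a_r^{\mathrm{het}}},
\end{align}
with $\mathfrak a_r^{\mathrm{het}}$ from Eq.~\eqref{eq:explicit-delta-free-rdm-exponent}. Then we invoke Prop.~\ref{prop:shadow-operator-norm-rdm} at each order $k=1,\ldots,r$, with $\nu_k=\nu$ and failure budget $\alpha_k=\alpha/r$. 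Since $r$ is fixed, $c_r$ is a constant, so $\nu^{-2}=O_r(m^{2\mathfrak a_r^{\mathrm{het}}}\varepsilon_{\mathrm{fid}}^{-2})$. Also, $\log(2m/\alpha_k)=\log(2mr/\alpha)=O(\log(mr/\alpha))$.

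For the sample count, Prop.~\ref{prop:shadow-operator-norm-rdm} at order $k$ needs $O_k(D_k^3\log(2mr/\alpha)/\nu^2)$ records. We take a single common batch whose size is the maximum of these requirements over $k\le r$. Each order's estimate then holds with probability at least $1-\alpha/r$, and a union bound (which needs no independence between events) gives simultaneous success with probability at least $1-\alpha$. Because $D_k=\binom mk\le m^k$ and $k\le r$, the maximum is $O_r(m^{3r})$, which gives the copy bound $O_r(m^{3r+2\mathfrak a_r^{\mathrm{het}}}\varepsilon_{\mathrm{fid}}^{-2}\log(mr/\alpha))$. The $O_k$ constants for $k\le r$ are absorbed into a single $O_r$ constant.

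For the classical cost, we sum the per-order arithmetic bounds $O_k(D_k^4\log(2mr/\alpha)/\nu^2)$ over $k=1,\ldots,r$. This sum is at most $r$ times the largest term, namely $O_r(m^{4r}\log(mr/\alpha)/\nu^2)$. Substituting $\nu$ gives the stated time bound. If all orders are processed from the same records, the per-record cost is $\sum_k O_k(D_k)$ and the $D_k^2$ conversions are dominated, so the bound is unchanged.

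The one point needing care is the shared measurement records. Using one batch for all orders is valid because Prop.~\ref{prop:shadow-operator-norm-rdm} gives a per-order probability guarantee for any batch at least the required size, and the union bound applies without independence. Everything else is routine exponent bookkeeping.
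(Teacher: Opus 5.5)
Your proposal is correct and matches the paper's proof. The paper also sets $\nu=c_r\varepsilon_{\mathrm{fid}}m^{-\mathfrak a_r^{\mathrm{het}}}$, applies Prop.~\ref{prop:shadow-operator-norm-rdm} at each order with failure budget $\alpha/r$, uses one common batch of $M=O_r(m^{3r}\nu^{-2}\log(mr/\alpha))$ records with a union bound that needs no independence, and bounds processing by $O_r(Mm^r+m^{2r})$, which equals your $O_r(m^{4r}\nu^{-2}\log(mr/\alpha))$ before substituting $\nu$.
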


\begin{proof}
Set $\nu:=c_r\varepsilon_{\mathrm{fid}}/m^{\mathfrak a_r^{\mathrm{het}}}$. Apply Prop.~\ref{prop:shadow-operator-norm-rdm} at each order $1\leq k\leq r$ with error $\nu$ and failure probability $\alpha/r$. For fixed $r$, a common batch of $M=O_r(m^{3r}\nu^{-2}\log(mr/\alpha))$ records is at least as large as every fixed-order requirement. The union bound gives simultaneous success probability at least $1-\alpha$; independence between the RDM estimates is not required. Processing this batch through degree $2r$ costs $O_r(Mm^r+m^{2r})$ operations. Substituting $\nu$ gives both asserted bounds. Since $\mathfrak a_r^{\mathrm{het}}=\lceil(7\lfloor r/2\rfloor+r+6)/2\rceil=O(r)$, the powers of $m$ in both bounds are linear in $r$.
\end{proof}

Combining the RDM-estimation event with the corresponding reconstruction theorem gives an assembled state $\widetilde\Phi$ with success probability at least $1-\alpha-\beta$. The main text takes $\alpha=\beta=\delta/3$ and allocates the remaining $\delta/3$ to the subsequent input-and-unitary construction, giving overall success probability at least $1-\delta$.

\section{Explicit input states and passive Gaussian unitaries}
\label{app:explicit-preparation}
\label{app:input-unitary-reconstruction}

For the chosen truncation thresholds, $\Psi_{\mathrm{tr}}$ is the ideal state obtained by truncating the true target $\Psi$; it is defined from the exact blocks, not from their estimates. With exact RDMs, the preceding reconstruction assembles these retained blocks and the occupied core into a state $\Phi$ that agrees with $\Psi_{\mathrm{tr}}$ up to phase. With estimated RDMs, the preceding algorithms instead provide estimated block vectors and core modes; their normalized assembly is the intermediate state $\widetilde\Phi$. Here we turn those estimates into a block-product input $\Xi_{\mathrm{in}}$ and a passive Gaussian unitary $U_{\mathrm{out}}$, giving the final estimate $\widetilde\Psi=\hat U_{\mathrm{out}}\Xi_{\mathrm{in}}$. We compare $\widetilde\Psi$ with $\Psi_{\mathrm{tr}}$ and use the earlier truncation bound to compare $\Psi_{\mathrm{tr}}$ with the target $\Psi$. After stating the guarantee, we derive exact branch recovery, its stable version for estimated blocks, and the global mode construction.

\subsection{Post-processing guarantee}
\label{app:iu-input-data}

For the chosen thresholds, the exact retained blocks and occupied core define the truncated target state
\begin{align}
    \Psi_{\mathrm{tr}}&:=\sigma_{\mathrm{core}}\wedge\omega_1\wedge\cdots\wedge\omega_d,\qquad \omega_b:=\sum_{\ell=1}^{s_b}\omega_{b,\ell}f_{b,\ell}.
    \label{eq:iu-comparison-state}
\end{align}
Here $2\leq p_b\leq r$, each normalized $f_{b,\ell}$ occupies a $p_b$-dimensional space $F_{b,\ell}$, all coefficients are nonzero, and $\sum_\ell|\omega_{b,\ell}|^2=1$. The normalized core state $\sigma_{\mathrm{core}}$ occupies every mode of a $q$-dimensional space $S_{\mathrm{core}}$. This space and all $F_{b,\ell}$ are mutually orthogonal, so $q+\sum_b p_bs_b\leq m$. The core includes the dominant branches of excluded blocks and any original always-occupied component.

For normalized vectors, we use the phase-invariant distance
\begin{align}
    d_2(x,y)&:=\min_{\varphi\in\mathbb R}\|x-e^{i\varphi}y\|=\sqrt{2-2|\langle x,y\rangle|}.
    \label{eq:iu-phase-invariant-distance}
\end{align}

The next proposition states what can be constructed from the estimated blocks and core: an explicit block-product input and one passive Gaussian unitary whose output remains close to $\Psi_{\mathrm{tr}}$.

\begin{proposition}[Preparing a state from recovered blocks]
\label{prop:iu-postprocessing}
For fixed $r\geq2$, there are constants $c_r,C_r>0$ such that the following holds. Given normalized coefficient vectors $\widetilde\omega_b\in\wedge^{p_b}H$, their particle numbers $p_b$, and an orthonormal basis of a $q$-dimensional estimated core, let $\widetilde\sigma_{\mathrm{core}}$ be its normalized occupied state. Suppose the supplied error bounds $\eta>0$ and $\eta_0\geq0$ satisfy
\begin{align}
    d_2(\widetilde\omega_b,\omega_b)&\leq\eta\quad(b\in[d]),\qquad d_2(\widetilde\sigma_{\mathrm{core}},\sigma_{\mathrm{core}})\leq\eta_0,\qquad m^{5/2}\eta+\eta_0\leq c_r.
    \label{eq:iu-supplied-errors}
\end{align}
For any $\beta\in(0,1)$, a randomized classical algorithm returns a normalized block-product input $\Xi_{\mathrm{in}}$ and $U_{\mathrm{out}}\in\operatorname{U}(m)$ such that, with probability at least $1-\beta$,
\begin{align}
    \sqrt{1-|\langle\widetilde\Psi,\Psi_{\mathrm{tr}}\rangle|^2}&\leq C_r(m^{5/2}\eta+\eta_0),\qquad \widetilde\Psi:=\hat U_{\mathrm{out}}\Xi_{\mathrm{in}}.
    \label{eq:iu-postprocessing-guarantee}
\end{align}
It uses $O_r(m^{r+3}\log(2m/\beta))$ arithmetic operations and no additional copies of the target state. The exact blocks, their branch counts, and their supports need not be known.
\end{proposition}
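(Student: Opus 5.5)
The plan is to follow the \textsc{Prepare} procedure of Alg.~\ref{alg:prepare} in three stages. First, replace each estimated block by a candidate with an explicit branch decomposition on orthonormal modes within that block. Second, jointly orthonormalize all branch modes together with the core modes by a weighted polar decomposition. Third, read off $\Xi_{\mathrm{in}}$ and $U_{\mathrm{out}}$, and bound the final error with the assembly estimate of Prop.~\ref{prop:equal-p-assembly}.

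For the local stage I distinguish $p_b=2$ from $p_b\geq3$.

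\emph{Two-particle blocks.} The exact block $\omega_b$ has an antisymmetric coefficient matrix $M$ whose Youla (antisymmetric SVD) form is $\sum_\ell a_\ell(u_\ell v_\ell^T-v_\ell u_\ell^T)$, which the deflation in Alg.~\ref{alg:prepare} reproduces. For the estimate $\widetilde\omega_b$, the algorithm stops as soon as $\sqrt{2-2\sqrt z}\leq2\eta$. The retained truncation is then within $2\eta$ of $\widetilde\omega_b$, hence within $3\eta$ of $\omega_b$. I will argue that truncation at $S=s_b$ already passes the test, since by Mirsky's inequality the discarded Youla weight of $\widetilde M$ is at most $\eta$ in Frobenius norm. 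This shows the branch count never exceeds the exact one.

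\emph{Blocks with $p_b\geq3$.} Here I use the random contraction $\hat c[g]\widetilde\omega_b$. In the exact case, $\Gamma^{(1)}_{\hat c[g]\omega_b}$ has eigenvalues $|\omega_{b,\ell}|^2\|P_{F_{b,\ell}}g\|^2$, each with multiplicity $p-1$. I would prove anticoncentration of the gaps between distinct groups, using a Gaussian small-ball bound, so that with constant probability the groups are separated at scale $\gtrsim1/\mathrm{poly}(m)$. Weak branches, whose weight falls below the count-dependent threshold $T_S=K_kS^2\eta$, are discarded. Their total contribution is $O(S^{2}\eta)$ per discarded branch summed over at most $S$ branches, which motivates the $S^{5/2}$ residual test $R_S$. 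On the retained groups, Davis--Kahan gives eigenspace errors of order $\eta$ divided by the gap. Completing each branch by contraction and applying the weighted orthogonalization of Lem.~\ref{lem:iu-weighted-orthogonalization} produces a candidate within $H_kS^{5/2}\eta$ of $\widetilde\omega_b$. The search over $S=1,\ldots,\lfloor m/k\rfloor$ accepts no later than $S=s_b$, so branch counts never exceed the exact ones. Repeating $L=O(\log(2m/\beta))$ trials per bound, and union bounding over at most $m$ blocks, gives failure probability at most $\beta$.

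For the global stage, the exact modes (the core basis together with all $F_{b,\ell}$ bases) are orthonormal. The recovered modes approximate them blockwise, with errors weighted by $1/|a_{b,\ell}|$ for weak branches. The weighted polar factor $Q=UZ^\dagger$ minimizes $\|(Q-V)D\|_F$, so the weighted deviation of $Q$ is at most twice that of the exact orthonormal frame, which is $O(m^{1/2}\cdot m^{2}\eta+\eta_0)$ after summing over blocks. Reassembling each block from $Q$ with fixed coefficients changes it by at most the occupation-weighted mode change. Since the modes are exactly orthogonal, the assembled $\widetilde\Psi$ equals $\hat U_{\mathrm{out}}\Xi_{\mathrm{in}}$, and Lem.~\ref{lem:structured-assembly} with Lem.~\ref{lem:exterior-canonical-alignment} then gives Eq.~\eqref{eq:iu-postprocessing-guarantee}. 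The mode count $q+\sum p_bs'_b\leq m$ follows from $s'_b\leq s_b$.

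The main obstacle is the $p\geq3$ local step. Two things must be shown there: that the grouping-plus-threshold procedure succeeds with constant probability even when branch weights are nearly degenerate or tiny, and that the accept test certifies an error of $O(S^{5/2}\eta)$ relative to the \emph{exact} block rather than only the estimate. I would handle the first by showing that near-degenerate groups of distinct branches merge into groups of size $\neq k-1$, which are rejected, so that the fresh Gaussian $g$ separates them with probability $\geq1/2$. The cost bound $O_r(m^{r+3}\log(2m/\beta))$ then follows from counting $O(m)$ blocks, $O(m)$ count bounds, $L$ trials, and $O(m^{r+1})$ work per trial.
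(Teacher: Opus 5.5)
Your three-stage plan matches \textsc{Prepare}, but your analysis of the global stage has two genuine gaps.

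\textbf{Comparison frame.} You measure the recovered modes against the exact branch bases of the $F_{b,\ell}$. The local searches, however, certify only three things about the accepted output: it has at most $s_b$ branches, its modes are orthonormal, and $d_2(\omega_b',\omega_b)\le e_b$. They certify no mode-by-mode correspondence, and none need exist. For $p_b=2$ with equal weights the pair decomposition is not unique. For example, $e_1\wedge e_2+e_3\wedge e_4=u_1\wedge v_1+u_2\wedge v_2$ with $u_{1,2}=(e_1\pm e_3)/\sqrt2$ and $v_{1,2}=(e_2\pm e_4)/\sqrt2$, so recovered pairs need not be near any exact pair. For $p_b\ge3$, the accepted candidate may come from some $S<s_b$, or from a trial outside the good event you analyze at $S=s_b$. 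The missing idea is to build orthonormal comparison columns inside the exact support $B_b=\bigoplus_\ell F_{b,\ell}$ from the recovered columns $V_b$ themselves. The occupation identity $\|(I-P_{B_b})V_bD_b\|_{\mathrm{HS}}^2=\langle\omega_b'-\omega_b,N_{B_b^\perp}(\omega_b'-\omega_b)\rangle\le p_be_b^2$ uses only the state error. The polar completion of $P_{B_b}V_b$ in Lem.~\ref{lem:iu-weighted-projection} then turns this into $\|(W_b-V_b)D_b\|_{\mathrm{HS}}^2\le2p_be_b^2$; its rank condition is exactly where $s_b'\le s_b$ enters.

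\textbf{Local gap scale.} Separately, in your local step an absolute separation scale $1/\mathrm{poly}(m)$ cannot hold, because branches above the cutoff can have arbitrarily small weights. You need three ingredients:
\begin{itemize}
\item a relative gap $|t_\ell-t_j|\ge\gamma\max\{t_\ell,t_j\}$ with $\gamma\sim s^{-2}$;
\item a cutoff with $\gamma T_S\gg\eta$;
\item control of $\sum_\ell\|P_{F_\ell}g\|^{-2}$, since discarding is decided by $t_\ell=|\omega_\ell|\,\|P_{F_\ell}g\|$ and not by the weight.
\end{itemize}
With these, the amplitude-weighted Davis--Kahan errors sum to $O(s^5\eta^2)$.

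\textbf{Final estimate.} Your final bound fails quantitatively. Lem.~\ref{lem:structured-assembly} carries a factor $P^{p_b/2}/\sqrt{p_b!}$ per block, because an error factor may overlap the supports of all other blocks. With up to $m/2$ blocks you therefore get a bound of order $\sum_b m^{p_b/2}\epsilon_b$. This exceeds $C_r(m^{5/2}\eta+\eta_0)$ by a polynomial factor in $m$, and the hypothesis $m^{5/2}\eta+\eta_0\le c_r$ does not even make it small. Reassembling each block from $Q$ and comparing the two block products does not avoid this, since overlaps factorize only when both products use the same supports.

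The fix is to keep the common coefficient input $\chi$, with $\Gamma_\chi^{(1)}=D^2$, and compare $\widetilde\Psi=(\wedge^NQ)\chi$ with $\Psi_W=(\wedge^NW)\chi$. Rotate $W$ into $Q$ along $e^{i\tau h}$, and bound the variance of the one-body generator on a block-product state by $r\operatorname{Tr}(h^2\Gamma^{(1)})$ (Lems.~\ref{lem:iu-one-body-variance} and~\ref{lem:iu-block-product-mode-error}). This gives $d_2(\widetilde\Psi,\Psi_W)\le\frac{\pi\sqrt r}{2}\|(Q-W)D\|_{\mathrm{HS}}$, with no dependence on $N$ or $m$. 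By contrast, even Lem.~\ref{lem:iu-fixed-particle-mode-error} applied to all of $\chi$ would cost a factor $\sqrt N$.

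Now $\Psi_W$ and $\Psi_{\mathrm{tr}}$ place their blocks in the same mutually orthogonal spaces $B_b$. Their overlap is therefore a product of blockwise overlaps, each with infidelity $O_r(e_b^2)$. With $e_b=O(s_b^{5/2}\eta)$ and $\sum_bs_b\le m/2$, this yields the claimed $C_r(m^{5/2}\eta+\eta_0)$.
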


Even with exact RDMs, the reconstructed Fock-basis coefficients do not identify the branch modes. We therefore begin with exact branch recovery and then account for estimation errors.

\subsection{Recovering branches from an exact block}
\label{app:iu-exact-branches}

An exact recovered block still needs an explicit list of branch modes. We obtain them by an orthogonal pair decomposition for $p_b=2$ and by a random one-particle removal for $p_b\geq3$.

\subsubsection{Two-particle blocks}
Every two-particle vector admits an orthogonal pair decomposition, so the same construction applies to exact and estimated blocks.

\begin{proposition}[Orthogonal pair decomposition]\label{prop:iu-two-particle-decomposition}
For a normalized $x=\sum_{i<j}x_{ij}e_i\wedge e_j$, one can compute
\begin{align}
    x&=\sum_{\ell=1}^{s'}a_\ell u_\ell\wedge v_\ell, \qquad a_\ell>0,\qquad\sum_\ell a_\ell^2=1,
    \label{eq:iu-two-particle-decomposition}
\end{align}
where all $u_\ell,v_\ell$ are mutually orthonormal. The computation costs $O(m^3)$ arithmetic operations.
\end{proposition}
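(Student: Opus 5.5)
This is the Youla (antisymmetric Takagi) normal form of a complex antisymmetric matrix, made constructive along the lines of the guiding two-particle example. Write $M$ for the $m\times m$ antisymmetric coefficient matrix of $x$, with $M_{ij}=x_{ij}$ for $i<j$. We identify $x$ with $\tfrac12\sum_{i,j}M_{ij}e_i\wedge e_j$ and $u\wedge v$ with the matrix $uv^T-vu^T$. Under this identification, $\|x\|^2=\tfrac12\|M\|_{\mathrm{HS}}^2$ and $\|u\wedge v\|^2=1$ for orthonormal $u,v$. The plan is to peel off one pair per step by deflating $M$. Each step preserves antisymmetry and leaves a residual supported on the orthogonal complement of the pairs already extracted.

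\textbf{One deflation step.} Take a normalized eigenvector $u$ of the positive semidefinite matrix $MM^\dagger$ with eigenvalue $a^2>0$, and set $v:=-M\overline u/a$. The key facts to verify are the following.
\begin{itemize}
\item $\|v\|=1$. This follows from $\|M\overline u\|^2=u^TM^\dagger M\overline u=\overline{u^\dagger M M^\dagger u}$, because $M^\dagger M=\overline{MM^\dagger}$ when $M^T=-M$.
\item $u\perp v$. Indeed $u^\dagger M\overline u=0$, since $u^\dagger M\overline u=(u^\dagger M\overline u)^T=-u^\dagger M\overline u$.
\item $M\overline v=a u$. Compute $M\overline v=-M\overline M u/a=MM^\dagger u/a=au$, using $\overline M=-M^\dagger$.
\end{itemize}
Put $M':=M-a(uv^T-vu^T)$. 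Then $M'\overline u=M\overline u+av=0$ and $M'\overline v=au-au=0$. Since $M'$ is antisymmetric, $\overline u$ and $\overline v$ span part of the kernel of $M'$. Equivalently, the rows and columns of $M'$ are orthogonal to $u$ and $v$ in the Hermitian sense, so the residual lives on $\operatorname{span}\{u,v\}^\perp$. It follows that $M'M'^\dagger$ is the restriction of $MM^\dagger-a^2(uu^\dagger+vv^\dagger)$. Its rank has dropped by two, and its eigenvectors lie in $\operatorname{span}\{u,v\}^\perp$.

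\textbf{Iteration and normalization.} Repeat the step on $M'$ until it vanishes. The ranks of antisymmetric matrices are even, so this takes $\operatorname{rank}M/2$ steps. All $u_\ell,v_\ell$ are orthonormal by the support statement. Summing the steps gives $M=\sum_\ell a_\ell(u_\ell v_\ell^T-v_\ell u_\ell^T)$, which translates to \eqref{eq:iu-two-particle-decomposition}. Because the wedge pairs are orthonormal, $\|x\|^2=\sum_\ell a_\ell^2=1$.

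\textbf{Main obstacle: repeated eigenvalues.} Each nonzero eigenvalue of $MM^\dagger$ appears with even multiplicity, and the step must work for an arbitrary eigenvector in that eigenspace. The argument above never uses simplicity, so this is handled. The point I would check most carefully is that the deflated matrix stays antisymmetric and that $v$ again lies in the $a^2$-eigenspace, which follows from $MM^\dagger v=a^2v$ via $M\overline v=au$.

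\textbf{Cost.} One full eigendecomposition of $MM^\dagger$, which is $O(m^3)$, gives all $u$'s if we choose an orthonormal eigenbasis within each eigenspace and pair each $u$ with its $v$. Processing $u$'s not yet covered by earlier $v$'s costs $O(m^2)$ per pair, so $O(m^3)$ in total. Alternatively, recomputing per step is also polynomial; I would present the single-decomposition variant to get $O(m^3)$.
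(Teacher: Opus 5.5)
Your proposal is correct and follows essentially the same route as the paper: take a unit eigenvector $u$ of $MM^\dagger$, set $v=-M\overline u/a$, and verify $\|v\|=1$, $u\perp v$, and $M\overline v=au$. Then deflate by $a(uv^T-vu^T)$ on $\operatorname{span}\{u,v\}^\perp$, and get the $O(m^3)$ cost from one spectral decomposition plus pairing and orthogonal updates within eigenspaces, where your identity $M'M'^\dagger=MM^\dagger-a^2(uu^\dagger+vv^\dagger)$ is a slightly more explicit form of the paper's remark that $M\overline w\perp u,v$ whenever $w\perp u,v$.
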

\begin{proof}
Form the antisymmetric matrix $M$ with $M_{ij}=x_{ij}$ for $i<j$ and $M^T=-M$. Choose a unit eigenvector $u$ of $MM^\dagger$ with positive eigenvalue $a^2$, and set $v:=-M\overline u/a$. Antisymmetry gives $u^\dagger v=0$. Also, $M^\dagger M=\overline{MM^\dagger}$ and $M\overline M=-MM^\dagger$, so
\begin{align}
    \|v\|&=1,\qquad M\overline u=-av,\qquad M\overline v=au.
    \label{eq:iu-pair-identities}
\end{align}
For $w\perp\operatorname{span}\{u,v\}$, antisymmetry and these identities show that $M\overline w$ is also perpendicular to $u,v$. Thus $a(uv^T-vu^T)$ can be removed, and the same construction applied on the orthogonal complement. It terminates with Eq.~\eqref{eq:iu-two-particle-decomposition}. The corresponding exterior vectors are orthonormal, so their squared coefficients sum to $\|x\|^2=1$. One spectral decomposition followed by pairing within its eigenspaces and orthogonal basis updates gives the stated cost. Repeated eigenvalues do not require a unique choice of pairs.
\end{proof}
Applying Prop.~\ref{prop:iu-two-particle-decomposition} to an exact
block $\omega_b$ gives $\omega_b=\sum_{\ell=1}^{s_b'}a_{b,\ell}\,u_{b,\ell}\wedge v_{b,\ell}$. Hence, setting $f_{b,\ell}':=u_{b,\ell}\wedge v_{b,\ell}$ and $\omega_{b,\ell}':=a_{b,\ell}$ gives an explicit branch representation of $\omega_b$. For an estimated block, additional small pairs can appear. We control their number using the truncation test described in the next subsection.

\subsubsection{Higher-particle blocks}
For $k\geq3$, a general estimated block need not admit a decomposition into branches with mutually orthogonal supports. We therefore first recover the branches of an exact block; the next subsection analyzes stability under estimation errors. Consider an exact block $\omega=\sum_{\ell=1}^s\omega_\ell f_\ell\in\wedge^kH$ for $k\ge 3$ where $f_\ell$ occupies the $k$-dimensional space $F_\ell$, and the
spaces $F_\ell$ are mutually orthogonal. By
Prop.~\ref{prop:correlated-block-rdm},
\begin{align}
    \Gamma_\omega^{(1)}&= \sum_{\ell=1}^s|\omega_\ell|^2P_{F_\ell}.
    \label{eq:iu-block-one-rdm}
\end{align}
Thus the branch spaces are directly visible when the weights are
distinct. If some weights coincide, however, the corresponding
spaces appear only through their direct sum. We break this degeneracy
by removing one particle in a random direction.

\begin{proposition}[One-removal branch recovery]
\label{lem:iu-one-removal-recovery}
Let $k\geq3$ and $\omega=\sum_{\ell=1}^s\omega_\ell f_\ell\in\wedge^kH$, where every coefficient is nonzero and each normalized $f_\ell$ occupies a $k$-dimensional space $F_\ell$. Suppose these spaces are mutually orthogonal. Draw a standard complex Gaussian vector $g\in H$ and set $h:=\hat c[g]\omega$. Then
\begin{align}
    \Gamma_h^{(1)}&= \sum_{\ell=1}^s |\omega_\ell|^2R_\ell^2 P_{E_\ell}
    \label{eq:iu-exact-removal-rdm}
\end{align}
where $R_\ell:=\|P_{F_\ell}g\|$ and $E_\ell:=F_\ell\cap g^\perp$. With probability one, the spaces $E_\ell$ are the distinct positive eigenspaces of $\Gamma_h^{(1)}$. Having identified these eigenspaces, complete each branch separately: if $e_\ell$ is a normalized vector spanning
$\wedge^{k-1}E_\ell$, then
\begin{align}
    v_\ell&:=\hat c[e_\ell]\omega \qquad\text{satisfies}\qquad F_\ell = E_\ell\oplus\operatorname{span}\{v_\ell\}.
    \label{eq:iu-exact-completion}
\end{align}
Hence a single random annihilation suffices to recover every branch
space $F_\ell$ from the exact block $\omega$. An orthonormal basis of each recovered space determines a normalized branch vector, whose coefficient is its inner product with $\omega$.
\end{proposition}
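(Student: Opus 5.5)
The proof proceeds in three steps: compute $h$ explicitly, read off the spectral structure of its $1$-RDM, and then complete each branch.

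\emph{Step 1: the contracted vector.} By linearity, $h=\sum_\ell\omega_\ell\,\hat c[g]f_\ell$. By covariance (Eq.~\eqref{eq:rdm-unitary-covariance}) we may treat each branch in a basis of $F_\ell$ whose first vector is $P_{F_\ell}g/R_\ell$; this vector is nonzero almost surely because $\dim F_\ell=k\geq1$. In this basis, $\hat c[g]f_\ell$ equals $R_\ell$ times a normalized $(k-1)$-particle vector occupying every mode of $E_\ell=F_\ell\cap g^\perp$, up to phase. Explicitly, write $f_\ell=u_1\wedge\cdots\wedge u_k$ with $u_1=P_{F_\ell}g/R_\ell$. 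Then $\hat c[g]$ acts only through $\langle g,u_1\rangle$, which has modulus $R_\ell$, and leaves $u_2\wedge\cdots\wedge u_k$, which spans $\wedge^{k-1}E_\ell$.

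\emph{Step 2: the $1$-RDM of $h$.} We have $h=\sum_\ell\omega_\ell R_\ell h_\ell$, where the $h_\ell$ are normalized and $h_\ell\in\wedge^{k-1}E_\ell$. The spaces $E_\ell\subseteq F_\ell$ are mutually orthogonal. We repeat the argument of Prop.~\ref{prop:correlated-block-rdm} (cross terms): for $\ell\neq\ell'$, the residuals $\hat c_i h_\ell$ lie in $\wedge^{k-2}E_\ell$, and $k-2\geq1$ because $k\geq3$. These residuals are therefore orthogonal to those of $h_{\ell'}$, so the cross transition $1$-RDMs vanish. Prop.~\ref{prop:occupied-subspace-rdm} gives $\Gamma_{h_\ell}^{(1)}=P_{E_\ell}$. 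Combining these yields Eq.~\eqref{eq:iu-exact-removal-rdm}.

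For the almost-sure distinctness, note that the $P_{F_\ell}g$ are independent standard complex Gaussians on orthogonal $k$-dimensional spaces. The $R_\ell^2$ therefore have continuous (Gamma) densities and are independent. Hence the events $|\omega_\ell|^2R_\ell^2=|\omega_{\ell'}|^2R_{\ell'}^2$ have probability zero, all weights are positive, and the positive eigenspaces are exactly the $E_\ell$, each of dimension $k-1$. This is the step where $k\geq3$ is essential, since for $k=2$ the cross terms would not vanish. I expect this to be the main subtle point, though it is routine once isolated.

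\emph{Step 3: completion.} Let $e_\ell$ span $\wedge^{k-1}E_\ell$. Then $\hat c[e_\ell]f_{\ell'}=0$ for $\ell'\neq\ell$, since $E_\ell\perp F_{\ell'}$ and $k-1\geq1$. Writing $f_\ell=e_\ell\wedge n_\ell$ up to phase, where $n_\ell$ is the unit vector in $F_\ell\ominus E_\ell$, we get $v_\ell=\hat c[e_\ell]\omega$, a nonzero multiple of $n_\ell$ (the factor $\omega_\ell$ is nonzero). This gives $F_\ell=E_\ell\oplus\operatorname{span}\{v_\ell\}$. Occupying an orthonormal basis of $F_\ell$ yields $f_\ell$ up to phase, and the orthonormality of the branches gives $\langle f_\ell,\omega\rangle=\omega_\ell$ under that phase convention, so the coefficient is recovered.
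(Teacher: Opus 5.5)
Your proof is correct and follows essentially the same route as the paper. Both contract each branch to an occupied state on $E_\ell$ with magnitude $|\omega_\ell|R_\ell$, and both use $k-1\geq2$ to make the cross terms in the $1$-RDM vanish; the paper cites Prop.~\ref{prop:correlated-block-rdm} for this, while you re-derive it. Both then use independence and continuity of the $R_\ell$ to get distinct eigenvalues, and complete each branch by noting that $\hat c[e_\ell]$ annihilates every other branch. Your explicit basis computation with $u_1=P_{F_\ell}g/R_\ell$ simply makes the paper's one-line description of the contraction precise.
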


\begin{proof}
Annihilating $g$ removes from $f_\ell$ the occupied direction parallel
to $P_{F_\ell}g$. Thus, up to phase, the contribution of this branch
to $h$ is a normalized occupied state on $E_\ell$ with magnitude
$|\omega_\ell|R_\ell$. Since $k-1\geq2$ and the spaces $E_\ell$ are
mutually orthogonal, Prop.~\ref{prop:correlated-block-rdm} gives
Eq.~\eqref{eq:iu-exact-removal-rdm}. The variables $R_\ell$ are
independent and continuous, so the positive values $|\omega_\ell|R_\ell$ are pairwise distinct with probability one. For the second claim, $\hat c[e_\ell]$ annihilates every $f_j$ with
$j\neq\ell$, since $F_j\perp E_\ell$. On $f_\ell$, it removes the
$k-1$ occupied modes of $E_\ell$ and leaves the unique remaining
direction in $F_\ell\cap E_\ell^\perp$. This proves
Eq.~\eqref{eq:iu-exact-completion}.
\end{proof}

Prop.~\ref{lem:iu-one-removal-recovery} gives all branch spaces
$F_\ell$. For each $\ell$, choose a normalized occupied vector $f_\ell'\in\wedge^kF_\ell$ and set $\omega_\ell':=\langle f_\ell',\omega\rangle$. Since $\wedge^kF_\ell$ is one-dimensional, $f_\ell'$ agrees with
$f_\ell$ up to phase, and therefore
\begin{align}
    \omega&= \sum_{\ell=1}^s\omega_\ell'f_\ell'.
    \label{eq:iu-recovered-exact-branch-decomposition}
\end{align}
This gives the desired explicit branch representation. Applying the above construction to every exact block recovers all branch spaces $F_{b,\ell}$ and their coefficients. No further orthogonalization is needed in the exact setting: by the structure of the comparison state, the spaces $S_{\mathrm{core}}$ and all $F_{b,\ell}$ are already mutually orthogonal. Hence orthonormal bases of these spaces can be concatenated directly and completed to a
single-particle unitary.

\subsection{Recovering branches from an estimated block}
\label{app:iu-estimated-branches}

We now turn from exact branch recovery to the estimated blocks supplied by the reconstruction algorithm. Let $\omega=\sum_{\ell=1}^s\omega_\ell f_\ell\in\wedge^kH$ be a
normalized block, where $k\geq2$, every coefficient is nonzero,
and each normalized $f_\ell$ occupies a $k$-dimensional space $F_\ell$.
The spaces $F_\ell$ are mutually orthogonal. We are given $k$, a
normalized estimate $\widetilde\omega$, a bound $\eta>0$ satisfying
$d_2(\widetilde\omega,\omega)\leq\eta$, and a local failure budget
$\beta_{\mathrm{loc}}\in(0,1)$, but neither $s$ nor the exact branch
supports.

We seek an accurate approximation with explicit coefficients and
mutually orthonormal branch modes, using at most $s$ branches.
Since each branch requires $k$ modes, this restriction keeps the
number of modes no larger than in the exact block. It will allow
us to combine the recovered modes across blocks and the occupied
core. As in the exact construction, we treat $k=2$ and $k\geq3$
separately.

\subsubsection{Two-particle blocks}
\label{app:iu-estimated-two-particle}

\paragraph*{Step 1: Extract the mode pairs.}
Apply Prop.~\ref{prop:iu-two-particle-decomposition} to $\widetilde\omega$ and order its pairs by decreasing coefficient magnitude:
\begin{align}
    \widetilde\omega&=\sum_{\ell=1}^{t}a_\ell u_\ell\wedge v_\ell,\qquad a_1\geq\cdots\geq a_t>0,\qquad \sum_{\ell=1}^{t}a_\ell^2=1.
    \label{eq:iu-estimated-pair-decomposition}
\end{align}
All $2t$ modes are mutually orthonormal, and $t=\operatorname{rank}M/2\leq\lfloor m/2\rfloor$, where $M$ is the antisymmetric coefficient matrix of $\widetilde\omega$. The construction remains valid when coefficients coincide.

\paragraph*{Step 2: Truncate and normalize the block.}
The estimate may contain more than $s$ pairs, so we retain the
largest coefficients first. For $S=1,\ldots,t$, define
\begin{align}
    \widetilde\omega_S&:= \frac{\sum_{\ell=1}^{S}a_\ell u_\ell\wedge v_\ell} {\left(\sum_{\ell=1}^{S}a_\ell^2\right)^{1/2}}.
    \label{eq:iu-estimated-pair-partial-sum}
\end{align}
Return the first $\omega':=\widetilde\omega_S$ satisfying
$d_2(\widetilde\omega_S,\widetilde\omega)\leq2\eta$, together with
its mode pairs and normalized coefficients. This test can be
performed after each pair is extracted, without computing the
remaining pairs after acceptance.

We now show that the test succeeds for some $S\leq s$.
After aligning the phase of $\omega$, its coefficient matrix
$M_\omega$ has rank $2s$ and
$\|M-M_\omega\|_{\mathrm{HS}}^2
=2\|\widetilde\omega-\omega\|^2\leq2\eta^2$.
If $t>s$, the first $s$ pairs give a best rank-$2s$ approximation
to $M$ in Hilbert--Schmidt norm. Their omitted squared
Hilbert--Schmidt norm is therefore at most $2\eta^2$.
Each pair contributes two equal singular values, so
$R:=\sum_{\ell>s}a_\ell^2\leq\eta^2$. Consequently,
\begin{align}
    d_2(\widetilde\omega_s,\widetilde\omega)&=\sqrt{2-2\sqrt{1-R}} \leq\sqrt{2R}\leq\sqrt2\,\eta.
    \label{eq:iu-estimated-pair-tail}
\end{align}
If $t\leq s$, taking $S=t$ gives zero truncation error.
In either case the procedure returns at most $s$ pairs, and
$d_2(\omega',\omega)\leq2\eta+\eta=3\eta$.
The two-particle reconstruction is therefore deterministic.

\subsubsection{Higher-particle blocks}
\label{app:iu-estimated-higher-particle}

For $k\geq3$, we first find the $(k-1)$-dimensional occupied spaces
obtained after removing one particle from each branch. We then
recover the missing modes, make the branch modes mutually
orthogonal, and determine the coefficients. Since $s$ is unknown, try $S=1,\ldots,\lfloor m/k\rfloor$ in
increasing order. At each $S$, seek a candidate with at most $S$
branches that is sufficiently close to $\widetilde\omega$.
Set $\epsilon_{\mathrm s}:=4\sqrt k\,\eta$ and
$L:=\max\{1,\lceil\log_2(1/\beta_{\mathrm{loc}})\rceil\}$.
Use the singular-value cutoff
\begin{align}
    T_S&:=K_kS^2\eta,
    \label{eq:iu-count-dependent-thresholds}
\end{align}
and accept a candidate only within distance $H_kS^{5/2}\eta$ of $\widetilde\omega$, where $K_k,H_k>0$ are sufficiently large constants depending only on $k$. Perform the four steps below for at most $L$ independent
trials at each $S$. Stop at the first acceptance; after $L$
rejections increase $S$, and return \textsc{Fail} if all counts
are exhausted.

We describe each step for arbitrary $S$ and analyze it at $S=s$,
assuming $s^{5/2}\eta\leq c_k$ for a sufficiently small constant
$c_k>0$. For this analysis, align the phase of $\omega$ so that
$\Delta:=\widetilde\omega-\omega$ satisfies $\|\Delta\|\leq\eta$.
Constants denoted by $C_k$ depend only on $k$
and may increase between estimates.

\paragraph*{Step 1: Find the occupied spaces after removing one particle.}
Draw $g\in H$ with independent standard complex Gaussian components
and form $\widetilde h:=\hat c[g]\widetilde\omega$, without
normalizing it. We identify its occupied spaces using a coefficient
matrix: for $x\in\wedge^{k-1}H$, define
$(M_x)_{i,J}:=\langle e_i\wedge e_J,x\rangle$, where $i\in[m]$
and $J\in\mathcal I_{k-2}$. The RDM definition gives
\begin{align}
    M_xM_x^\dagger&=\Gamma_x^{(1)}.
    \label{eq:iu-coefficient-matrix}
\end{align}
Thus the left singular spaces of $M_x$ are the eigenspaces of
$\Gamma_x^{(1)}$, and its singular values are the square roots
of the RDM eigenvalues.

Compute an SVD of $M_{\widetilde h}$. In the exact construction,
each contracted branch contributes $k-1$ equal positive singular
values. To identify these groups with estimated data, sort the
singular values and group adjacent values whenever their gap is
at most $4\epsilon_{\mathrm s}$. Discard groups whose smallest
value is at most $T_S$. Reject the trial unless between one and
$S$ groups remain, each containing exactly $k-1$ values.
For each retained group $\ell$, let $\widetilde E_\ell$ be the
span of its left singular vectors.

To justify this selection at $S=s$, we bound the perturbation of
the singular values and spaces, together with the discarded branch
weight. For $x,y\in\wedge^{k-1}H$,
\begin{align}
    \|M_x-M_y\|&\leq\|x-y\|, \qquad \|\hat c[g]\Delta\|\leq\|g\|\|\Delta\|.
    \label{eq:iu-coefficient-matrix-perturbation}
\end{align}
These bounds follow from
$\|M_x^\dagger u\|=\|\hat c[u]x\|$ and
$\hat c[u]^\dagger\hat c[u]\leq\|u\|^2I$.
In particular, the matrix error is at most $\|\hat c[g]\Delta\|$.
Since $\mathbb E\|\hat c[g]\Delta\|^2=k\|\Delta\|^2$, Markov's
inequality bounds the probability that it exceeds
$\epsilon_{\mathrm s}$ by $1/16$.

For the exact contraction $h:=\hat c[g]\omega$, set
$R_\ell:=\|P_{F_\ell}g\|$ and $t_\ell:=|\omega_\ell|R_\ell$.
The occupied space $E_\ell$ of $\hat c[g]f_\ell$ has dimension
$k-1$, lies in $F_\ell$, and is a left singular space of $M_h$
with singular value $t_\ell$.
Set $\gamma:=(256ks^2)^{-1}$. We use the event
\begin{align}
    \|M_{\widetilde h}-M_h\|&\leq\epsilon_{\mathrm s},\qquad \sum_{\ell=1}^{s}R_\ell^{-2}\leq\frac{16s}{k-1},\qquad |t_\ell-t_j|\geq\gamma\max\{t_\ell,t_j\}\quad(\ell\neq j).
    \label{eq:iu-good-trial}
\end{align}
The second condition controls the contribution of small
contractions to the error bounds; the third separates different
branches. These conditions hold together with probability greater
than $1/2$. The first failure probability was bounded above.
Orthogonality of the $F_\ell$ makes the $R_\ell^2$ independent,
with density $x^{k-1}e^{-x}/(k-1)!$. Thus
$\mathbb E R_\ell^{-2}=1/(k-1)$, and Markov's inequality bounds
the second failure probability by $1/16$.
The density of $\log R_\ell$ is
$2\exp(2ky-e^{2y})/(k-1)!\leq2k$.
Independence gives the same density bound for $\log(t_\ell/t_j)$,
regardless of the coefficients. A violation of the third condition
implies $|\log(t_\ell/t_j)|<2\gamma$, with probability at most
$8k\gamma$. Summing over pairs gives at most $1/64$.

Condition on Eq.~\eqref{eq:iu-good-trial} throughout the remaining
analysis at $S=s$. Lem.~\ref{lem:weyl-inequality}, applied to the
associated Hermitian matrices, bounds each singular-value shift
by $\epsilon_{\mathrm s}$. Values from one branch therefore have
adjacent gaps at most $2\epsilon_{\mathrm s}$. Choose $K_k$ so that
$\gamma T_s$ is a sufficiently large multiple of
$\epsilon_{\mathrm s}$. Above the cutoff, values from different
branches then remain separated by more than $4\epsilon_{\mathrm s}$.
Consequently, every retained group comes from one branch and
contains exactly $k-1$ values. For the analysis, index these groups by their matching exact
branches. Let $J$ be the retained index set and put
$\delta_\ell:=\|P_{\widetilde E_\ell}-P_{E_\ell}\|$.
To bound $\delta_\ell$, apply Lem.~\ref{lem:projector-perturbation}
to the positive eigenvalue group at $t_\ell$ of
$\left(\begin{smallmatrix}0&M_h\\M_h^\dagger&0\end{smallmatrix}\right)$
and its estimated counterpart. This group has gap at least
$\gamma t_\ell$, while the matrix perturbation is at most
$\epsilon_{\mathrm s}$. Its spectral projector has upper-left
block $P_{E_\ell}/2$, so the lemma gives
\begin{align}
    \delta_\ell&\leq \frac{C_k\epsilon_{\mathrm s}}{\gamma t_\ell} \leq\frac{C_ks^2\eta}{|\omega_\ell|R_\ell}.
    \label{eq:iu-singular-space-errors}
\end{align}
Every omitted branch has $t_\ell\leq T_s+\epsilon_{\mathrm s}$.
Hence its total weight
$\rho^2:=\sum_{\ell\notin J}|\omega_\ell|^2$ is at most
$(T_s+\epsilon_{\mathrm s})^2\sum_\ell R_\ell^{-2}$.
Squaring Eq.~\eqref{eq:iu-singular-space-errors}, multiplying by
$|\omega_\ell|^2$, and summing similarly yields
\begin{align}
    \rho^2+\sum_{\ell\in J}|\omega_\ell|^2\delta_\ell^2&\leq C_ks^5\eta^2.
    \label{eq:iu-local-tail-and-space-errors}
\end{align}
Increasing $K_k$ makes the retained $\delta_\ell$ small, and
choosing $c_k$ sufficiently small gives $\rho^2<1$.
Thus at least one group remains, and the selected spaces and
discarded weight satisfy the bounds needed for mode completion.

\paragraph*{Step 2: Recover the remaining mode of each branch.}
Each selected space supplies $k-1$ modes. Choose an orthonormal
basis matrix
$\widetilde V_\ell=(\widetilde e_{\ell,1},\ldots,
\widetilde e_{\ell,k-1})$ of $\widetilde E_\ell$, and set
$\widetilde e_\ell:=\widetilde e_{\ell,1}\wedge\cdots\wedge
\widetilde e_{\ell,k-1}$.
Removing these modes from the original estimate gives the
one-particle vector $w_\ell:=\hat c[\widetilde e_\ell]\widetilde\omega$.
Discard the group if $w_\ell=0$; otherwise set
\begin{align}
    \alpha_\ell&:=\|w_\ell\|,\qquad V_\ell:=\bigl(\widetilde V_\ell,w_\ell/\alpha_\ell\bigr).
    \label{eq:iu-estimated-completed-modes}
\end{align}
Anticommutation makes the appended mode perpendicular to
$\widetilde E_\ell$, so each $V_\ell$ has orthonormal columns.
Reject the trial if no group remains.

We now compare $w_\ell$ with the exact missing mode, accounting for
the errors in both $\widetilde\omega$ and $\widetilde E_\ell$.
Lem.~\ref{lem:canonical-unitary-alignment} allows us to choose a
basis matrix $U_\ell$ of $E_\ell$ close to the computed basis:
$\|U_\ell-\widetilde V_\ell\|_{\mathrm{HS}}\leq C_k\delta_\ell$.
Their occupied vectors $e_\ell,\widetilde e_\ell$ then satisfy
$\|e_\ell-\widetilde e_\ell\|\leq C_k\delta_\ell$ by multilinearity
of the exterior product. The exact completion
$d_\ell:=\hat c[e_\ell]f_\ell$ is a unit vector in
$F_\ell\cap E_\ell^\perp$.
Put $A_\ell:=\hat c[\widetilde e_\ell]$. The completion error separates into the input estimation error, contributions from other branches, and the error in the annihilated modes:
\begin{align*}
    r_\ell&:=w_\ell-\omega_\ell d_\ell =A_\ell\Delta +\sum_{j\neq\ell}\omega_j A_\ell f_j +\omega_\ell(A_\ell f_\ell-d_\ell).
\end{align*}
We bound the summed squared norms of these three terms in order.
For the first term, bounding each completion separately would count the same estimation error once per branch. Let $\Pi_\ell:=A_\ell^\dagger A_\ell$, which projects onto occupation of all $k-1$ modes of $\widetilde E_\ell$. Since the selected spaces are orthogonal and $2(k-1)>k$, a $k$-particle state cannot occupy two such spaces fully. Thus
\begin{align}
    \sum_{\ell\in J}\Pi_\ell&\leq I,\qquad \sum_{\ell\in J}\|A_\ell\Delta\|^2\leq\eta^2.
    \label{eq:iu-exclusive-completion-events}
\end{align}
To bound the second term, let $N_{\widetilde E_j}$ count particles in $\widetilde E_j$ for a retained branch $j$.
Occupying all modes of another $\widetilde E_\ell$ leaves at most
one particle in $\widetilde E_j$, so the commuting occupation
operators satisfy
$\sum_{\ell\in J,\,\ell\neq j}\Pi_\ell
\leq[(k-1)I-N_{\widetilde E_j}]/(k-2)$.
Taking the expectation in $f_j$ and using
$\Gamma_{f_j}^{(1)}=P_{F_j}$ gives
\begin{align}
    \sum_{\substack{\ell\in J\\\ell\neq j}}\|A_\ell f_j\|^2&\leq\frac{k-1-\operatorname{Tr} (P_{\widetilde E_j}P_{F_j})}{k-2} \leq C_k\delta_j^2.
    \label{eq:iu-completion-cross-sum}
\end{align}
Here $E_j\subseteq F_j$ and
$k-1-\operatorname{Tr}(P_{\widetilde E_j}P_{E_j})
\leq(k-1)\delta_j^2$ yield the last inequality.
For an omitted branch $j$,
Eq.~\eqref{eq:iu-exclusive-completion-events} instead bounds
$\sum_{\ell\in J}\|A_\ell f_j\|^2$ by one. For fixed $\ell$, the vectors $A_\ell f_j$ lie in mutually
orthogonal $F_j$. The squared norm of the middle term is therefore
$\sum_{j\neq\ell}|\omega_j|^2\|A_\ell f_j\|^2$.

For the third term, Prop.~\ref{prop:occupied-subspace-rdm} gives
$\Gamma_{f_\ell}^{(k-1)}=P_{\wedge^{k-1}F_\ell}$ and therefore
bounds the error caused by replacing the annihilated modes:
\begin{align}
    \|A_\ell f_\ell-d_\ell\|&\leq\|\widetilde e_\ell-e_\ell\| \leq C_k\delta_\ell.
    \label{eq:iu-direct-completion-comparison}
\end{align}
Combining these bounds with Eq.~\eqref{eq:iu-local-tail-and-space-errors} yields
\begin{align}
    \sum_{\ell\in J}\|r_\ell\|^2&\leq C_k\left(\eta^2+\rho^2+ \sum_{\ell\in J}|\omega_\ell|^2\delta_\ell^2\right) \leq C_ks^5\eta^2.
    \label{eq:iu-aggregate-completion-error}
\end{align}
For $w_\ell\neq0$, normalization and the reverse triangle
inequality give
\begin{align}
    |\omega_\ell| \left\|\frac{w_\ell}{\alpha_\ell} -\frac{\omega_\ell}{|\omega_\ell|}d_\ell\right\|&\leq2\|r_\ell\|,\qquad \bigl|\alpha_\ell-|\omega_\ell|\bigr|\leq\|r_\ell\|.
    \label{eq:iu-weighted-completion-direction}
\end{align}
The factor $|\omega_\ell|$ keeps the bound useful even for small
branch coefficients. If $w_\ell=0$, its discarded weight is
$|\omega_\ell|^2=\|r_\ell\|^2$. For the surviving indices $J_+:=\{\ell\in J:w_\ell\neq0\}$,
compare $V_\ell$ with
$W_\ell:=(U_\ell,(\omega_\ell/|\omega_\ell|)d_\ell)$.
These exact matrices have mutually orthonormal columns because
they lie in the orthogonal spaces $F_\ell$.
The basis and completion bounds give
\begin{align}
    \sum_{\ell\in J_+}|\omega_\ell|^2 \|V_\ell-W_\ell\|_{\mathrm{HS}}^2 +\sum_{\ell\in J_+}(\alpha_\ell-|\omega_\ell|)^2&\leq C_ks^5\eta^2.
    \label{eq:iu-completed-space-error}
\end{align}
The total discarded weight
$\rho_*^2:=\sum_{\ell\notin J_+}|\omega_\ell|^2$ obeys the same
bound, including zero completions. For sufficiently small $c_k$,
$\rho_*^2<1$, so this step leaves at least one completed branch.

\paragraph*{Step 3: Make the branch modes mutually orthogonal.}
The columns within each $V_\ell$ are orthonormal, while those from different branches need not be. Since weak branches may have poorly determined modes, we weight each change by its branch amplitude, as in Eq.~\eqref{eq:iu-completed-space-error}. Concatenate the surviving matrices
into $V$ in their stored order, giving $k|J_+|\leq kS\leq m$ columns. Let $D_\alpha$ repeat $\alpha_\ell$ on the
$k$ columns of branch $\ell$.
We seek $Q$ with orthonormal columns minimizing
$\|(Q-V)D_\alpha\|_{\mathrm{HS}}$, so changes to each mode of
branch $\ell$ have weight $\alpha_\ell^2$.
The following lemma gives the SVD construction and bounds its error
relative to any orthonormal comparison. It will also be used when
combining modes across blocks and the core.

\begin{lemma}[Weighted adjustment of mode columns]
\label{lem:iu-weighted-orthogonalization}
Let $A\in\mathbb C^{m\times t}$, $t\leq m$, and let $D$ be
nonnegative and diagonal. An SVD $AD^2=U\Sigma Z^\dagger$ gives a
minimizer $Q=UZ^\dagger$ of $\|(Q-A)D\|_{\mathrm{HS}}$ over
$Q^\dagger Q=I_t$, completing zero singular directions orthonormally.
For any $W^\dagger W=I_t$,
\begin{align}
    \|(Q-W)D\|_{\mathrm{HS}}&\leq2\|(A-W)D\|_{\mathrm{HS}}.
    \label{eq:iu-weighted-orthogonalization}
\end{align}
\end{lemma}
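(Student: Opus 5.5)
The plan is to reduce the weighted problem to an orthogonal Procrustes problem. For every $Q$ with $Q^\dagger Q=I_t$ the weighted distance should collapse to a single linear term in $Q$, which is then maximized with the von Neumann trace inequality. The comparison bound in Eq.~\eqref{eq:iu-weighted-orthogonalization} should then follow from minimality by a triangle inequality.

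\emph{Expanding the objective.} For an isometry $Q$ we have $D Q^\dagger Q D=D^2$, so
\begin{align*}
    \|(Q-A)D\|_{\mathrm{HS}}^2&=\operatorname{Tr}(D^2)+\operatorname{Tr}(DA^\dagger AD)-2\operatorname{Re}\operatorname{Tr}\bigl(Q^\dagger AD^2\bigr).
\end{align*}
The first two terms do not depend on $Q$. Minimizing the distance is therefore the same as maximizing $\operatorname{Re}\operatorname{Tr}(Q^\dagger AD^2)$ over all $m\times t$ isometries.

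\emph{Maximizing the linear term.} Take a thin SVD $AD^2=U\Sigma Z^\dagger$, where $U\in\mathbb C^{m\times t}$ has orthonormal columns, $\Sigma\geq0$ is diagonal, and $Z\in\operatorname{U}(t)$. If some singular values vanish, the corresponding columns of $U$ are completed orthonormally; this is possible because $t\leq m$. Then $\operatorname{Tr}(Q^\dagger AD^2)=\operatorname{Tr}(Z^\dagger Q^\dagger U\Sigma)$. The matrix $Z^\dagger Q^\dagger U$ is a contraction, so the von Neumann trace inequality (equivalently, $|(Z^\dagger Q^\dagger U)_{ii}|\leq1$) gives $\operatorname{Re}\operatorname{Tr}(Q^\dagger AD^2)\leq\operatorname{Tr}\Sigma$. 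The choice $Q=UZ^\dagger$ makes $Z^\dagger Q^\dagger U=I_t$ and attains this bound. Hence $UZ^\dagger$ is a minimizer. Completing the zero-singular directions arbitrarily does not change the value of the objective, so any such completion also gives a minimizer.

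\emph{Comparison bound.} The matrix $W$ is itself feasible, so minimality gives $\|(Q-A)D\|_{\mathrm{HS}}\leq\|(W-A)D\|_{\mathrm{HS}}$. The triangle inequality then yields
\begin{align*}
    \|(Q-W)D\|_{\mathrm{HS}}\leq\|(Q-A)D\|_{\mathrm{HS}}+\|(A-W)D\|_{\mathrm{HS}}\leq2\|(A-W)D\|_{\mathrm{HS}},
\end{align*}
which is Eq.~\eqref{eq:iu-weighted-orthogonalization}. The only step that needs real care is the trace-inequality maximization with a rank-deficient $AD^2$, and the contraction argument above already handles it, since it never uses invertibility.
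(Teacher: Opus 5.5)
Your proof is correct and follows the same route as the paper. You reduce the objective over isometries to the linear term $-2\operatorname{Re}\operatorname{Tr}(Q^\dagger AD^2)$, show that $Q=UZ^\dagger$ attains the maximum $\operatorname{Tr}\Sigma$, and finish with minimality against the feasible $W$ plus the triangle inequality; your contraction argument on the diagonal entries just makes explicit, including in the rank-deficient case, the trace maximization the paper states in one line.
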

\begin{proof}
The squared objective depends on $Q$ only through
$-2\operatorname{Re}\operatorname{Tr}(Q^\dagger AD^2)$.
The SVD choice maximizes this trace at $\operatorname{Tr}\Sigma$.
Hence $\|(Q-A)D\|_{\mathrm{HS}}\leq\|(W-A)D\|_{\mathrm{HS}}$,
and the triangle inequality gives the stated bound.
\end{proof}

Apply Lem.~\ref{lem:iu-weighted-orthogonalization} with $A=V$ and $D=D_\alpha$, keeping the columns of the resulting $Q$ grouped by branch. For the error bound, concatenate the exact $W_\ell$ in the same
order to obtain $W$. The lemma gives
$\|(Q-W)D_\alpha\|_{\mathrm{HS}}
\leq2\|(V-W)D_\alpha\|_{\mathrm{HS}}$.
To use this bound for the state, replace the computed weights
$\alpha_\ell$ by the exact amplitudes $|\omega_\ell|$.
Let $D_\omega$ repeat $|\omega_\ell|$ on each branch's $k$ columns.
For matrices $X,Y$ with unit columns,
\begin{align*}
    \|(X-Y)(D_\alpha-D_\omega)\|_{\mathrm{HS}}&\leq2\sqrt{k\sum_{\ell\in J_+} (\alpha_\ell-|\omega_\ell|)^2}.
\end{align*}
Apply this inequality to $(Q,W)$ and $(V,W)$ in the lemma's bound.
Eq.~\eqref{eq:iu-completed-space-error} then gives
\begin{align}
    \|(Q-W)D_\omega\|_{\mathrm{HS}}&\leq C_ks^{5/2}\eta.
    \label{eq:iu-local-orthogonal-mode-error}
\end{align}
Thus making the modes mutually orthonormal preserves the required
amplitude-weighted error bound.

\paragraph*{Step 4: Compute the coefficients and check the approximation.}
For each branch $\ell$, write $q_{\ell,1},\ldots,q_{\ell,k}$ for
its columns in $Q$ and let
$q_\ell:=q_{\ell,1}\wedge\cdots\wedge q_{\ell,k}$.
These branch states are orthonormal.
Set $\mathcal L:=\operatorname{span}\{q_\ell\}$ and compute
$a_\ell:=\langle q_\ell,\widetilde\omega\rangle$.
Reject the trial if all overlaps vanish; otherwise form
\begin{align*}
    \omega'&:= \frac{\sum_\ell a_\ell q_\ell} {\left(\sum_\ell|a_\ell|^2\right)^{1/2}} =\frac{P_{\mathcal L}\widetilde\omega} {\|P_{\mathcal L}\widetilde\omega\|}.
\end{align*}
Accept only if
\begin{align}
    d_2(\omega',\widetilde\omega)&\leq H_kS^{5/2}\eta.
    \label{eq:iu-local-residual-test}
\end{align}
Because $\omega'$ is the normalized projection of $\widetilde\omega$, their overlap is positive real. Hence $d_2(\omega',\widetilde\omega)=\|\omega'-\widetilde\omega\|$, so the acceptance test uses the ordinary vector norm.
On acceptance, return the modes and normalized coefficients
$a_\ell/(\sum_j|a_j|^2)^{1/2}$, removing zero coefficients and
their modes. Otherwise, reject the trial.

We show that this test passes at $S=s$ by converting the mode
error in Eq.~\eqref{eq:iu-local-orthogonal-mode-error} into a
block-state error. The following lemma bounds the change in a
state when its mode vectors change but its coefficients stay fixed.

\begin{lemma}[Effect of changing the modes]
\label{lem:iu-fixed-particle-mode-error}
Let $V,W:\mathbb C^t\to H$ have orthonormal columns and let
$\chi\in\wedge^k\mathbb C^t$ be normalized. Then
\begin{align}
    \|(\wedge^kV)\chi-(\wedge^kW)\chi\|&\leq\sqrt k\, \|(V-W)(\Gamma_\chi^{(1)})^{1/2}\|_{\mathrm{HS}}.
    \label{eq:iu-fixed-particle-mode-error}
\end{align}
\end{lemma}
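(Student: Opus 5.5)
The approach is to pass to the tensor-product picture, telescope the difference $V^{\otimes k}-W^{\otimes k}$, and bound each of the $k$ resulting terms by the one-particle marginal of $\chi$, which is $\Gamma_\chi^{(1)}/k$.

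\emph{Tensor picture and telescoping.} Under the isometric identification of $\wedge^k\mathbb C^t$ and $\wedge^kH$ with the antisymmetric subspaces of $(\mathbb C^t)^{\otimes k}$ and $H^{\otimes k}$, the map $\wedge^kV$ is the restriction of $V^{\otimes k}$, as in Eq.~\eqref{eq:lifted-projector-notation} and Eq.~\eqref{eq:basic-exterior-unitary-action}. The same holds for $W$. I would then use the identity
\begin{align*}
    V^{\otimes k}-W^{\otimes k}&=\sum_{j=1}^{k}V^{\otimes(j-1)}\otimes(V-W)\otimes W^{\otimes(k-j)},
\end{align*}
which is the same identity used in the proof of Lem.~\ref{lem:df-compressed-operator-error}. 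Applying it to $\chi$ and using the triangle inequality reduces the claim to bounding each summand.

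\emph{Bounding one summand.} Since $V$ and $W$ are isometries, the outer factors do not change the norm. Hence the $j$-th term has squared norm
\begin{align*}
    \bigl\langle\chi,\bigl(I\otimes(V-W)^\dagger(V-W)\otimes I\bigr)\chi\bigr\rangle=\operatorname{Tr}\bigl[(V-W)^\dagger(V-W)\,\rho_j\bigr],
\end{align*}
where $\rho_j$ is the single-site marginal of $\chi$ on tensor slot $j$. Because $\chi$ is antisymmetric, every $\rho_j$ equals a common $\rho$. I would then check that $\rho=\Gamma_\chi^{(1)}/k$, up to the transpose convention fixed by the matrix ordering in Eq.~\eqref{eq:exterior-transition-rdm-definition}. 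One way is through the identity $\sum_j(\text{slot-}j\text{ operator }X)=\sum_{a,b}X_{ab}\hat c_a^\dagger\hat c_b$ on the antisymmetric space. This gives
\begin{align*}
    k\operatorname{Tr}[X\rho]=\langle\chi,\textstyle\sum_{a,b}X_{ab}\hat c_a^\dagger\hat c_b\chi\rangle=\operatorname{Tr}[X\Gamma_\chi^{(1)}],
\end{align*}
which is consistent with $\operatorname{Tr}\Gamma_\chi^{(1)}=k$ from Eq.~\eqref{eq:transition-rdm-trace}. Taking $X=(V-W)^\dagger(V-W)$, each term therefore has norm
\begin{align*}
    \sqrt{\operatorname{Tr}[(V-W)^\dagger(V-W)\Gamma_\chi^{(1)}]/k}=\|(V-W)(\Gamma_\chi^{(1)})^{1/2}\|_{\mathrm{HS}}/\sqrt k .
\end{align*}
Summing the $k$ identical bounds gives $k\cdot(\cdots)/\sqrt k=\sqrt k\,\|(V-W)(\Gamma_\chi^{(1)})^{1/2}\|_{\mathrm{HS}}$, which is Eq.~\eqref{eq:iu-fixed-particle-mode-error}.

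The only delicate step is the marginal identification: getting the normalization $1/k$ and the transpose/conjugation convention right. I would settle it by evaluating both sides on Fock basis vectors. Since $X$ is Hermitian and positive, a transpose at worst replaces $\Gamma_\chi^{(1)}$ by its complex conjugate. I would confirm directly that the ordering in Eq.~\eqref{eq:exterior-transition-rdm-definition} gives exactly $\operatorname{Tr}[X\Gamma_\chi^{(1)}]$.
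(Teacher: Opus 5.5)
Your proof is correct and follows the paper's argument: telescope $V^{\otimes k}-W^{\otimes k}$ on the antisymmetric tensor space, use that the isometric outer factors preserve norms, identify the single-slot marginal as $\Gamma_\chi^{(1)}/k$, and sum the $k$ equal bounds. The transpose worry does not arise: with the paper's convention $(\Gamma_\chi^{(1)})_{b,a}=\langle\chi,\hat c_a^\dagger\hat c_b\chi\rangle$, the identity $\sum_{a,b}X_{ab}\langle\chi,\hat c_a^\dagger\hat c_b\chi\rangle=\operatorname{Tr}[X\Gamma_\chi^{(1)}]$ holds exactly.
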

\begin{proof}
On the normalized antisymmetric tensor space, replace the $k$
copies of $W$ by $V$ one at a time. Each difference has squared norm
$k^{-1}\operatorname{Tr}[(V-W)^\dagger(V-W)\Gamma_\chi^{(1)}]$:
the other tensor factors preserve norms, and the reduced state of
one tensor factor is $\Gamma_\chi^{(1)}/k$.
Summing the $k$ norms proves the claim.
\end{proof}

Normalize the surviving part of the exact block:
\begin{align*}
    \omega_*&:= \frac{\sum_{\ell\in J_+}\omega_\ell f_\ell} {\sqrt{1-\rho_*^2}}.
\end{align*}
Its distance from $\omega$ is
$\sqrt{2-2\sqrt{1-\rho_*^2}}\leq\sqrt2\,\rho_*
\leq C_ks^{5/2}\eta$.
Since $e_\ell\wedge d_\ell=f_\ell$, occupying the columns of
$W_\ell$ gives $(\omega_\ell/|\omega_\ell|)f_\ell$.
Let $\chi$ be the normalized superposition of the corresponding
occupation states on the column labels, with coefficients
$|\omega_\ell|/\sqrt{1-\rho_*^2}$.
Then $(\wedge^kW)\chi=\omega_*$ and
$(\Gamma_\chi^{(1)})^{1/2}=D_\omega/\sqrt{1-\rho_*^2}$,
because the branches use disjoint sets of $k$ column labels. Apply Lem.~\ref{lem:iu-fixed-particle-mode-error} with the computed
columns $Q$ and exact columns $W$. Together with
Eq.~\eqref{eq:iu-local-orthogonal-mode-error}, it gives
\begin{align*}
    \|(\wedge^kQ)\chi-\omega_*\|&\leq\frac{\sqrt k}{\sqrt{1-\rho_*^2}} \|(Q-W)D_\omega\|_{\mathrm{HS}} \leq C_ks^{5/2}\eta.
\end{align*}
Thus $\mathcal L$ contains the unit vector $(\wedge^kQ)\chi$
within $C_ks^{5/2}\eta$ of $\omega$, and hence within
$C_ks^{5/2}\eta+\eta$ of $\widetilde\omega$. For every unit vector $z\in\mathcal L$,
$|\langle z,\widetilde\omega\rangle|
\leq\|P_{\mathcal L}\widetilde\omega\|$,
with equality for the normalized projection $\omega'$.
It therefore minimizes $d_2(z,\widetilde\omega)$ over these vectors,
and
\begin{align*}
    d_2(\omega',\widetilde\omega)&\leq C_ks^{5/2}\eta+\eta \leq H_ks^{5/2}\eta.
\end{align*}
Choose $K_k$ first to separate the groups, then $H_k$ to satisfy
this inequality, and finally $c_k$ small enough that the surviving
weight and projection are nonzero. All four steps then pass on
the event in Eq.~\eqref{eq:iu-good-trial}.

A trial at $S=s$ is consequently accepted with probability at least
$1/2$. Conditional on reaching this count, all $L$ independent
trials fail with probability at most $2^{-L}\leq\beta_{\mathrm{loc}}$.
An earlier acceptance at $S\leq s$ also uses at most $s$ branches
and, by Eq.~\eqref{eq:iu-local-residual-test}, satisfies
$d_2(\omega',\omega)\leq H_kS^{5/2}\eta+\eta$.
Therefore, under $s^{5/2}\eta\leq c_k$, the procedure returns,
with probability at least $1-\beta_{\mathrm{loc}}$, a normalized
block with explicit coefficients, mutually orthonormal branch
modes, at most $s$ branches, and
\begin{align}
    d_2(\omega',\omega)&\leq C_ks^{5/2}\eta.
    \label{eq:iu-local-recovery-bound}
\end{align}
On the exceptional event it may return \textsc{Fail} or accept a
larger count. For fixed $k$, each trial and the number of tested
counts have polynomial cost in $m$, while
$L=O(1+\log(1/\beta_{\mathrm{loc}}))$.
Thus the arithmetic cost is polynomial in $m$ and
$\log(1/\beta_{\mathrm{loc}})$.

\subsection{Constructing the input and the unitary}
\label{app:iu-input-and-unitary}
\label{app:iu-orthogonal-modes}

The recovered modes are orthonormal within each block. To combine them with the estimated core into a single unitary, we now make all mode columns mutually orthonormal while keeping the branch coefficients fixed. The local branch-count bounds ensure that these columns fit within the available $m$ modes.

For each recovered block, let $\omega'_b$ be the normalized local
output, with particle number $2\leq p_b\leq r$, branch count $s'_b$,
and normalized coefficients $a_{b,\ell}$. We first work on the
local success event, where $s'_b\leq s_b$ and
$d_2(\omega'_b,\omega_b)\leq e_b$ for every $b\in[d]$.
Assume also that the estimated core has the correct dimension and
$d_2(\widetilde\sigma_{\mathrm{core}},\sigma_{\mathrm{core}})
\leq\eta_0\leq1/2$.
Write $B_b:=\bigoplus_\ell F_{b,\ell}$ for the exact support of $\omega_b$, of dimension $p_bs_b$. The spaces $S_{\mathrm{core}},B_1,\ldots,B_d$ are mutually orthogonal by the comparison-state structure in Eq.~\eqref{eq:iu-comparison-state}. The errors $e_b$ refer to the outputs after local branch recovery. Constants denoted by
$C_r$ below depend only on $r$ and may increase between estimates.

\paragraph*{Step 1: Collect the core and branch modes.}
Let $V_0$ be the supplied orthonormal basis matrix of the estimated
core, with $q$ columns. For each block $b$, collect its recovered
modes in $V_b$, ordered by branch, and set
$D_b:=\operatorname{diag}(|a_{b,1}|I_{p_b},\ldots,
|a_{b,s'_b}|I_{p_b})$.
Thus $V_b$ has $p_bs'_b$ orthonormal columns and, by
Prop.~\ref{prop:correlated-block-rdm},
$\Gamma_{\omega'_b}^{(1)}=V_bD_b^2V_b^\dagger$.
The adjustment should reflect each mode's contribution to the state: every core mode is occupied, whereas a mode of branch $(b,\ell)$ has occupation $|a_{b,\ell}|^2$. We therefore collect the columns and their amplitude weights as
\begin{align}
    V&:=[V_0,V_1,\ldots,V_d],\qquad D:=\operatorname{diag}(I_q,D_1,\ldots,D_d).
    \label{eq:iu-common-mode-matrices}
\end{align}
Let $t:=q+\sum_b p_bs'_b$ be the number of columns of $V$.
Return \textsc{Fail} if $t>m$. On the local success event,
$p_bs'_b\leq\dim B_b$, so orthogonality of the exact supports gives
$t\leq q+\sum_b\dim B_b\leq m$.
Thus the branch-count bounds ensure that the recovered modes can
be made mutually orthonormal without discarding further branches.

\paragraph*{Step 2: Make all modes mutually orthogonal.}
Apply Lem.~\ref{lem:iu-weighted-orthogonalization} with $A=V$ and the weights $D$ to obtain $Q^\dagger Q=I_t$ minimizing $\|(Q-V)D\|_{\mathrm{HS}}$. Keep the columns in their original core and branch groups and leave all coefficients unchanged.

To bound this adjustment, we compare each $V_b$ with orthonormal
columns inside the exact support $B_b$. The state error first
controls the weighted component outside that support.
Let $N_{B_b^\perp}$ count particles outside $B_b$.
It annihilates $\omega_b$ and satisfies
$0\leq N_{B_b^\perp}\leq p_bI$ on the $p_b$-particle sector.
After aligning the phase of $\omega_b$,
\begin{align}
    \|(I-P_{B_b})V_bD_b\|_{\mathrm{HS}}^2
    =\operatorname{Tr}(P_{B_b^\perp}\Gamma_{\omega'_b}^{(1)})
    =\langle\omega'_b-\omega_b,
        N_{B_b^\perp}(\omega'_b-\omega_b)\rangle
      \leq p_be_b^2.
    \label{eq:iu-quadratic-occupation-bound}
\end{align}
The weights are important: a mode from a branch with a small
coefficient need not itself be close to $B_b$. The following lemma
uses the weighted support bound to construct orthonormal comparison
columns inside $B_b$, even when some projected directions vanish.

\begin{lemma}[Projection of orthonormal mode columns]
\label{lem:iu-weighted-projection}
Let $V\in\mathbb C^{m\times t}$ satisfy $V^\dagger V=I_t$, and let
$P$ be an orthogonal projector of rank at least $t$.
There is a matrix $W$ with orthonormal columns in
$\operatorname{ran}P$ such that, for every nonnegative diagonal $D$,
\begin{align}
    V^\dagger W&=(V^\dagger PV)^{1/2},\qquad \|(W-V)D\|_{\mathrm{HS}}^2 \leq2\|(I-P)VD\|_{\mathrm{HS}}^2.
    \label{eq:iu-projected-mode-columns}
\end{align}
The columns of $PV$ need not be linearly independent.
\end{lemma}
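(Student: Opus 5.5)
We construct $W$ from a polar decomposition of $PV$. Set $G:=V^\dagger PV$, a positive semidefinite $t\times t$ matrix with $0\leq G\leq I_t$. Write $PV=\Omega G^{1/2}$ for a partial isometry $\Omega$ from $\operatorname{ran}G$ onto $\operatorname{ran}(PV)\subseteq\operatorname{ran}P$; this is the polar decomposition, since $(PV)^\dagger(PV)=G$. On $\ker G$, which is the set of coefficient vectors $z$ with $PVz=0$, we extend $\Omega$ isometrically into $\operatorname{ran}P\cap(\operatorname{ran}PV)^\perp$. This extension is possible because $\operatorname{rank}P\geq t$ and $\dim\operatorname{ran}(PV)=\operatorname{rank}G=t-\dim\ker G$. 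Call the resulting isometry $W$. Its columns lie in $\operatorname{ran}P$, and $WG^{1/2}=PV$ still holds.

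The identity $V^\dagger W=G^{1/2}$ needs a check on each piece. On $\operatorname{ran}G$, we have $V^\dagger W G^{1/2}=V^\dagger PV=G$, so $V^\dagger W$ agrees with $G^{1/2}$ there. On $\ker G$, the vectors $Wz$ lie in $\operatorname{ran}P$ and are orthogonal to $\operatorname{ran}(PV)$. Hence $V^\dagger Wz=V^\dagger PWz=(PV)^\dagger Wz=0$, which equals $G^{1/2}z$. This is the step needing the most care: the extension must be chosen orthogonal to $\operatorname{ran}PV$ inside $\operatorname{ran}P$, not just anywhere in $\operatorname{ran}P$.

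For the norm bound, use $V^\dagger V=W^\dagger W=I$ and expand
\[
\|(W-V)D\|_{\mathrm{HS}}^2=2\operatorname{Tr}D^2-2\operatorname{Re}\operatorname{Tr}(D^2V^\dagger W)=2\operatorname{Tr}\bigl(D^2(I-G^{1/2})\bigr).
\]
Similarly, $\|(I-P)VD\|_{\mathrm{HS}}^2=\operatorname{Tr}(D^2(I-G))$. Since $0\leq G\leq I$, we have $I-G^{1/2}\leq I-G$: the two operators commute, and $1-x^{1/2}\leq1-x$ for $x\in[0,1]$. Because $D^2\geq0$, it follows that $\operatorname{Tr}(D^2(I-G^{1/2}))\leq\operatorname{Tr}(D^2(I-G))$. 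Combining the two traces gives the factor $2$ in Eq.~\eqref{eq:iu-projected-mode-columns}, uniformly in $D$. Beyond that, the argument needs only the positivity and functional-calculus facts just used.
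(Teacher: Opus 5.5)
Your proposal is correct and follows the paper's proof essentially step by step. Both take the polar factor of $PV$ and complete it inside $\operatorname{ran}P$ orthogonally to $\operatorname{ran}(PV)$, check $V^\dagger W=(V^\dagger PV)^{1/2}$ separately on $\operatorname{ran}G$ and $\ker G$, and expand the Hilbert--Schmidt norm using $1-\sqrt{x}\leq 1-x$ on $[0,1]$; you just spell out the dimension count and the $\operatorname{ran}G$ check that the paper leaves implicit.
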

\begin{proof}
Take the polar factor of $PV$ on its nonzero singular directions
and complete it to an isometry inside $\operatorname{ran}P$.
The rank assumption permits choosing the added directions
orthogonal to $\operatorname{ran}(PV)$.
Each such direction $w$ satisfies
$V^\dagger w=(PV)^\dagger w=0$, so the completed matrix $W$
satisfies the first identity. With $A:=V^\dagger PV$, we have
$0\leq A\leq I$ and
\begin{align*}
    \|(W-V)D\|_{\mathrm{HS}}^2&=2\operatorname{Tr}[D^2(I-A^{1/2})] \leq2\operatorname{Tr}[D^2(I-A)] =2\|(I-P)VD\|_{\mathrm{HS}}^2.
\end{align*}
The inequality follows from $1-\sqrt x\leq1-x$ on $[0,1]$.
\end{proof}

Apply the lemma with $V=V_b$, $P=P_{B_b}$, and $D=D_b$.
Its rank condition holds because $p_bs'_b\leq\dim B_b$.
Together with Eq.~\eqref{eq:iu-quadratic-occupation-bound}, it gives
orthonormal columns $W_b$ in $B_b$ satisfying
$\|(W_b-V_b)D_b\|_{\mathrm{HS}}^2\leq2p_be_b^2$.

For the core, choose an orthonormal basis $W_0$ of
$S_{\mathrm{core}}$ aligned with $V_0$ by polar decomposition.
If the principal angles are $\theta_j$, the occupied-state overlap
and the core error imply
$\prod_j\cos\theta_j\geq1-\eta_0^2/2$. Hence
\begin{align}
    \|W_0-V_0\|_{\mathrm{HS}}^2&=2\sum_j(1-\cos\theta_j) \leq-2\log(1-\eta_0^2/2)\leq C\eta_0^2.
    \label{eq:iu-core-column-error}
\end{align}
Use empty matrices when $q=0$.
The columns of $W:=[W_0,W_1,\ldots,W_d]$ are mutually orthonormal,
since their exact supports are orthogonal.
We can therefore apply Lem.~\ref{lem:iu-weighted-orthogonalization}
to this comparison matrix and the computed $Q$:
\begin{align}
    \|(Q-W)D\|_{\mathrm{HS}}&\leq2\|(V-W)D\|_{\mathrm{HS}} \leq C_r\left(\sum_{b=1}^d e_b^2+\eta_0^2\right)^{1/2}.
    \label{eq:iu-common-column-error}
\end{align}
Thus the common adjustment controls the total weighted mode error.

\paragraph*{Step 3: Construct the input and the unitary.}
Complete $Q$ to an orthonormal basis of $H$ and set
$U_{\mathrm{out}}:=[Q,Q_\perp]\in\operatorname{U}(m)$.
Assign disjoint input index sets $I_{0,\mathrm{out}}$ to its $q$
core columns and $I_{b,\ell,\mathrm{out}}$ to the $p_b$ columns
of each branch, preserving the stored order.
Using the recovered coefficients, define
\begin{align}
    \Xi_{\mathrm{in}}&:=\hat c_{I_{0,\mathrm{out}}}^\dagger\prod_{b=1}^d\left(\sum_{\ell=1}^{s'_b}a_{b,\ell}\hat c_{I_{b,\ell,\mathrm{out}}}^\dagger\right)\ket{\vac},\qquad \widetilde\Psi:=\hat U_{\mathrm{out}}\Xi_{\mathrm{in}}.
    \label{eq:iu-explicit-input}
\end{align}
The modes assigned to $Q_\perp$ remain in the vacuum.
Here $\hat U_{\mathrm{out}}$ is the passive Gaussian unitary
associated with $U_{\mathrm{out}}$, with its vacuum phase fixed.
It maps each input mode to the corresponding adjusted column.
The input and output are normalized because the input mode sets
are disjoint, the coefficient lists are normalized, and
$\hat U_{\mathrm{out}}$ is unitary.

The returned input and unitary specify a preparation, not the original
input labels or the original unitary. A change of basis within a
branch contributes a determinant phase that is absorbed into its
coefficient; a change of core basis or fixed factor order contributes
only an overall phase. A one-branch factor may be absorbed into the
occupied core. The returned input may contain occupied and vacuum
modes in both target settings; it need not reproduce the input
promised in the homogeneous setting.

It remains to bound the error of this prepared state.
Let $N:=q+\sum_b p_b$, and let $\chi\in\wedge^N\mathbb C^t$
be the input on the first $t$ column labels, before adding the
vacuum modes. Then $\Gamma_\chi^{(1)}=D^2$ and
\begin{align*}
    \widetilde\Psi&=(\wedge^NQ)\chi,\qquad \Psi_W:=(\wedge^NW)\chi.
\end{align*}
We first compare $\widetilde\Psi$ with $\Psi_W$, which uses the same
coefficients on the exact orthogonal supports, and then compare
$\Psi_W$ with $\Psi_{\mathrm{tr}}$. Applying Lem.~\ref{lem:iu-fixed-particle-mode-error} directly to
$\chi$ would introduce a factor $\sqrt N$.
Its block-product structure instead gives the following bound, with a prefactor depending only on the largest block particle number $r$.
\begin{lemma}[Effect of changing modes in a block product]
\label{lem:iu-block-product-mode-error}
Let $V,W:\mathbb C^t\to\mathbb C^m$ have orthonormal columns,
and let $\chi\in\wedge^N\mathbb C^t$ be a normalized product of fixed-particle-number states on mutually orthogonal one-particle subspaces, each with at most $r$ particles, together with an arbitrary occupied core. Then
\begin{align}
    d_2\bigl((\wedge^NV)\chi,(\wedge^NW)\chi\bigr)&\leq\frac{\pi\sqrt r}{2} \|(V-W)(\Gamma_\chi^{(1)})^{1/2}\|_{\mathrm{HS}}.
    \label{eq:iu-block-product-mode-error}
\end{align}
\end{lemma}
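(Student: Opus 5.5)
The plan is to realize the change of modes as a single passive Gaussian unitary acting on the state $\psi_V:=(\wedge^NV)\chi$, and to integrate a variance bound along a one-parameter path.

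\emph{Step 1: reduce to one unitary and a path.} Since $V$ and $W$ are both isometries $\mathbb C^t\to\mathbb C^m$, I would pick $U\in\operatorname{U}(m)$ with $UV=W$ by principal-angle (canonical) alignment. On each two-dimensional plane spanned by a principal pair $(v_j,w_j)$ with angle $\theta_j\in[0,\pi/2]$, $U$ acts as a rotation, and it is the identity on the orthogonal complement of all these planes. Write $U=e^{iK}$ with $K$ Hermitian and $\|K|_{\text{plane}_j}\|=\theta_j$. Then $(\wedge^NW)\chi=\hat U\psi_V$. Consider the path $\psi(s):=e^{is\,d\Gamma(K)}\psi_V$ for $s\in[0,1]$, where $d\Gamma(K)$ is the second-quantized one-body operator. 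Because $\|\psi(s)\|=1$, the Fubini--Study angle obeys
\begin{align*}
\arccos|\langle\psi_V,\hat U\psi_V\rangle|\leq\int_0^1\bigl\|(d\Gamma(K)-\langle d\Gamma(K)\rangle_{\psi(s)})\psi(s)\bigr\|\,ds .
\end{align*}
The chord satisfies $d_2=2\sin(\phi/2)\le\phi$, so $d_2$ is bounded by this angle $\phi$. Since $K$ commutes with $e^{isK}$, the one-body data entering the variance are the same at every $s$: the $1$-RDM at time $s$ is conjugated by $e^{isK}$, so $\operatorname{Tr}(K^2\Gamma^{(1)}_{\psi(s)})$ is constant. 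The same holds for the block structure, because conjugation by a one-body unitary maps the block-product structure to another block product.

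\emph{Step 2: a variance bound for block products.} This is the main obstacle. I want
\begin{align*}
\operatorname{Var}_{\psi}(d\Gamma(K))\leq r\,\operatorname{Tr}\bigl(K^2\Gamma^{(1)}_\psi\bigr)
\end{align*}
for $\psi$ a product of fixed-particle-number blocks, each with at most $r$ particles, on orthogonal one-particle subspaces together with an occupied core.

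My first attempt is to expand $d\Gamma(K)^2=d\Gamma(K^2)+\sum K_{ij}K_{kl}\hat c_i^\dagger\hat c_k^\dagger\hat c_l\hat c_j$ and use Prop.~\ref{prop:blockwise-rdm-factorization} at order two. The $2$-RDM restricted to sectors selecting one particle from each of two different blocks is the tensor product of the block $1$-RDMs. These terms reproduce $\langle d\Gamma(K)\rangle^2$ up to exchange terms, and the exchange terms are nonpositive for fermions, of the form $-\operatorname{Tr}(K\Gamma_aK\Gamma_b)$. What remains are the within-block sectors. For a single block with $p\le r$ particles, I would use the crude bound $\|d\Gamma(K_a)\phi\|^2\le p\,\langle\phi,d\Gamma(K_a^2)\phi\rangle$, which follows from Cauchy--Schwarz over the $p$ particles. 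The core is a Slater determinant, so its variance is $\operatorname{Tr}(K\Gamma K(1-\Gamma))\le\operatorname{Tr}(K^2\Gamma)$. Cross terms between the core or a block and the vacuum modes are handled as in the Slater case. Summing these contributions should give the factor $r$.

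\emph{Step 3: relate $K$ to the weighted column difference.} With $D:=(\Gamma_\chi^{(1)})^{1/2}$ we have $\Gamma^{(1)}_{\psi_V}=VD^2V^\dagger$, so $\operatorname{Tr}(K^2VD^2V^\dagger)=\|KVD\|_{\mathrm{HS}}^2$. In each principal plane, $|K v_j|=\theta_j$ while $|(W-V)e_j|=2\sin(\theta_j/2)\geq(2\sqrt2/\pi)\theta_j$ for $\theta_j\le\pi/2$. Using bases of $\mathbb C^t$ adapted to the principal vectors, I would bound $\|KVD\|_{\mathrm{HS}}$ by a constant times $\|(V-W)D\|_{\mathrm{HS}}$ and track the constant to get $\pi/2$. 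There is a subtlety here: $D$ need not be diagonal in the principal basis. I would handle this by working with the Hilbert--Schmidt norm, since both $KV$ and $V-W$ act diagonally in the principal basis with comparable entries, so the comparison holds column by column after a change of basis on $\mathbb C^t$ that diagonalizes $V^\dagger(\cdot)V$.

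Combining the three steps gives $d_2\le\sqrt r\cdot\frac{\pi}{2}\|(V-W)D\|_{\mathrm{HS}}$, which is the claimed bound.
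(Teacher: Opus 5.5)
Your architecture matches the paper's proof. You follow $e^{is\,d\Gamma(K)}$ between the two endpoints, bound the Fubini--Study speed by the standard deviation of the one-body generator, and bound that variance by $r\operatorname{Tr}(K^2\Gamma^{(1)})$ for block products. You then compare $\|KVD\|_{\mathrm{HS}}$ with $\|(V-W)D\|_{\mathrm{HS}}$. Your Step~2 is a $2$-RDM rephrasing of Lem.~\ref{lem:iu-one-body-variance}: your direct-minus-exchange terms are the paper's transfer terms $\operatorname{Tr}[h_{ab}^\dagger(I-\Gamma_a)h_{ab}\Gamma_b]$. That step is fine.

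\textbf{The gap is in Steps~1 and~3.} Principal-angle alignment matches $\operatorname{ran}V$ with $\operatorname{ran}W$. But $UV=W$ is a column-by-column condition that already fixes $U=WV^\dagger$ on $\operatorname{ran}V$. That restriction is in general not a direct sum of real plane rotations by the principal angles (this holds only when $V^\dagger W\geq0$), and its eigenphases can reach $\pi$. For example, take $t=2$ and $W=[Ve_2,Ve_1]$. The ranges coincide and all principal angles vanish, yet $U$ must act as the swap on $\operatorname{ran}V$. Then $(Ve_1-Ve_2)/\sqrt2$ is a $-1$ eigenvector, so any $K$ with $e^{iK}=U$ has $\|KV\|_{\mathrm{HS}}\geq\pi$, while $\|V-W\|_{\mathrm{HS}}=2$. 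With $D=I_2$, your planned comparison from $2\sin(\theta/2)\geq(2\sqrt2/\pi)\theta$ on $[0,\pi/2]$ would give the constant $\pi/(2\sqrt2)$, and it fails here. This generality cannot be avoided: the lemma is applied with $V=Q$ from the weighted orthogonalization and the comparison columns $W$, and there $Q^\dagger W$ is an arbitrary contraction.

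\textbf{The fix is to use no geometric description of $U$ at all.}
\begin{itemize}
\item Take any unitary $U$ with $UV=W$, obtained by completing both column lists to orthonormal bases.
\item Let $K$ be its principal logarithm, with spectrum in $[-\pi,\pi]$.
\item Apply Jordan's inequality $x^2\leq(\pi^2/4)|e^{ix}-1|^2$ for $|x|\leq\pi$ in the joint spectral decomposition of $K$ and $U$. This gives the operator inequality $K^2\leq(\pi^2/4)(U-I)^\dagger(U-I)$.
\item Trace this against $VD^2V^\dagger\geq0$ and use $(U-I)V=W-V$. This yields $\|KVD\|_{\mathrm{HS}}\leq(\pi/2)\|(W-V)D\|_{\mathrm{HS}}$ directly.
\end{itemize}
This is exactly where the constant $\pi/2$ comes from; it is attained at eigenphase $\pi$, as in the example above. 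It also removes your concern that $D$ is not diagonal in a principal basis: no adapted basis or column-by-column comparison is needed. With Steps~1 and~3 replaced this way, your argument becomes the paper's proof.
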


To prove this bound, we control the variance of the one-body generator of a mode rotation. For a Hermitian one-particle matrix $h$, write $\widehat h:=\sum_{i,j}h_{ij}\hat c_i^\dagger\hat c_j$. The following estimate depends only on the block particle numbers.

\begin{lemma}[Variance of a one-body operator]
\label{lem:iu-one-body-variance}
Let $\psi$ be a normalized product of fixed-particle-number states
on mutually orthogonal one-particle subspaces, each with at most
$r$ particles, together with an arbitrary occupied core.
For any Hermitian one-particle matrix $h$,
\begin{align}
    \langle\psi,\widehat h^2\psi\rangle -\langle\psi,\widehat h\psi\rangle^2&\leq r\operatorname{Tr}(h^2\Gamma_\psi^{(1)}).
    \label{eq:iu-one-body-variance}
\end{align}
The individual blocks need not have a branch decomposition.
\end{lemma}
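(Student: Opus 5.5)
The plan is to reduce the variance of $\widehat h$ in the product state $\psi$ to a sum of single-block variances, and then bound each block's variance by $r$ times its local occupation-weighted trace. Write the one-particle space as $H=H_0\oplus H_c\oplus H_1\oplus\cdots\oplus H_n$, where $H_c$ is the occupied core, $H_a$ carries the $a$-th block $\psi_a$ with $p_a\leq r$ particles, and $H_0$ is empty. Then decompose $h$ into blocks $h_{ab}:=P_{H_a}hP_{H_b}$, so that $\widehat h=\sum_{a,b}\widehat{h_{ab}}$. Each term changes the particle number in $H_b$ by $-1$ and in $H_a$ by $+1$, or leaves both unchanged when $a=b$.

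First compute $\widehat h\psi$. The diagonal parts $\widehat{h_{aa}}$ act within block $a$. Since $\psi$ has fixed particle number in every $H_a$, these components all lie in the particle-number sector $(p_0,p_c,p_1,\ldots)$ of $\psi$. An off-diagonal term $\widehat{h_{ab}}\psi$ with $a\neq b$ lies in the sector that moves one particle from $b$ to $a$. Distinct ordered pairs give orthogonal sectors, so
\begin{align*}
\|\widehat h\psi\|^2=\Big\|\sum_a\widehat{h_{aa}}\psi\Big\|^2+\sum_{a\neq b}\|\widehat{h_{ab}}\psi\|^2 .
\end{align*}
Subtracting $\langle\widehat h\psi\rangle^2=(\sum_a\langle\widehat{h_{aa}}\rangle)^2$, and using that the operators $\widehat{h_{aa}}$ act on different tensor factors, we get
\begin{align*}
\mathrm{Var}(\widehat h)=\sum_a\mathrm{Var}_{\psi_a}(\widehat{h_{aa}})+\sum_{a\neq b}\|\widehat{h_{ab}}\psi\|^2,
\end{align*}
because the covariances of commuting observables on a product state vanish; the fermionic signs cancel since each $\widehat{h_{aa}}$ is even. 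The core and the vacuum contribute zero diagonal variance: on the core the occupation state is an eigenvector of $\widehat{h_{cc}}$ with eigenvalue $\mathrm{Tr}\,h_{cc}$, and $\widehat{h_{00}}$ annihilates $\psi$.

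The cross terms are handled by $\|\widehat{h_{ab}}\psi\|^2=\langle\psi,\widehat{h_{ab}}^\dagger\widehat{h_{ab}}\psi\rangle$. The operator $\widehat{h_{ab}}$ creates in $H_a$ and annihilates in $H_b$, so this expectation factorizes as $\langle \hat c\,\hat c^\dagger\rangle_a\langle\hat c^\dagger\hat c\rangle_b$. Since $\hat c\hat c^\dagger\leq I$ in operator terms, we bound it by $\mathrm{Tr}(h_{ab}^\dagger h_{ab}\Gamma^{(1)}_{\psi_b})$, with $h_{ab}^\dagger h_{ab}=h_{ba}h_{ab}$. Concretely, write $\widehat{h_{ab}}=\sum_j \hat c^\dagger[x_j]\hat c[y_j]$ in a singular value decomposition of $h_{ab}$, and then use $\|\sum_j\hat c^\dagger[x_j]\phi_j\|^2\leq\sum_j\|\phi_j\|^2$ for orthonormal $x_j$ and states $\phi_j$ without particles in those modes. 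This last inequality needs a little care but follows from the anticommutation relations.

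The diagonal terms require $\mathrm{Var}_{\psi_a}(\widehat{g})\leq\langle\psi_a,\widehat g^{\,2}\psi_a\rangle\leq p_a\,\mathrm{Tr}(g^2\Gamma^{(1)}_{\psi_a})$ for $g=h_{aa}$, on a state with $p_a$ particles. This comes from the Cauchy–Schwarz bound $\widehat g^{\,2}\leq N\,\widehat{g^2}$ on the $N$-particle sector, obtained by viewing $\widehat g$ as $\sum_{i=1}^N g^{(i)}$ on the antisymmetric tensor space and using $(\sum_i X_i)^2\leq N\sum_iX_i^2$ for commuting Hermitian $X_i$.

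Finally, combine the pieces. Summing the diagonal bounds $p_a\mathrm{Tr}(h_{aa}^2\Gamma_{\psi_a}^{(1)})$ and the cross bounds $\mathrm{Tr}(h_{ba}h_{ab}\Gamma_{\psi_b}^{(1)})$, and comparing with $\mathrm{Tr}(h^2\Gamma^{(1)}_\psi)=\sum_b\sum_a\mathrm{Tr}(h_{ba}h_{ab}\Gamma^{(1)}_{\psi_b})$, which is block-diagonal in $b$ because $\Gamma^{(1)}_\psi=\bigoplus_b\Gamma^{(1)}_{\psi_b}\oplus P_{H_c}$, gives the bound with factor $\max(p_a,1)\leq r$. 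Core terms such as $h_{ac}$ with $b=c$ are covered by the same cross bound, using $\Gamma^{(1)}=P_{H_c}$. I expect the main obstacle to be making the sector orthogonality and the sign bookkeeping rigorous, in particular verifying that cross terms with different $(a,b)$ and the diagonal part really live in orthogonal sectors and that no interference survives.
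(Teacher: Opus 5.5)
Your proposal is correct and is essentially the paper's proof. It uses the same block splitting $h_{ab}=P_ahP_b$, orthogonality of the one-particle-transfer sectors, the Cauchy--Schwarz bound $p_a\operatorname{Tr}(h_{aa}^2\Gamma_a^{(1)})$ for each block's diagonal variance, and the factorization $\|\widehat{h_{ab}}\psi\|^2=\operatorname{Tr}[h_{ab}^\dagger(I-\Gamma_a^{(1)})h_{ab}\Gamma_b^{(1)}]\leq\operatorname{Tr}[h_{ab}^\dagger h_{ab}\Gamma_b^{(1)}]$ for the transfers; treating the core as an eigenvector of $\widehat{h_{cc}}$ is equivalent to the paper's splitting of the core into one-mode factors.

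One caution: drop the ``concretely'' SVD variant. Its hypothesis fails here, since $\phi_j\propto\hat c[y_j]\psi$ still contains the particles of $\psi_a$, which may occupy the modes $x_j$. Without that hypothesis, $\|\sum_j\hat c^\dagger[x_j]\phi_j\|^2\leq\sum_j\|\phi_j\|^2$ is false in general (e.g.\ $\phi_1=x_2$, $\phi_2=-x_1$ give $4>2$). The factorization you state first already gives the bound.
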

\begin{proof}
Split the core into occupied one-mode factors and include the
unoccupied complement as another subspace.
Let $P_a$ be the resulting one-particle projectors and put
$h_{ab}:=P_ahP_b$.
The terms from $h_{aa}$ preserve all block particle numbers.
Their centered contributions have zero cross expectations because
the state is a product on orthogonal supports.
For a $p_a$-particle block, Cauchy--Schwarz applied to the sum of
its $p_a$ single-particle actions bounds the variance by
$p_a\operatorname{Tr}(h_{aa}^2\Gamma_a^{(1)})$,
where $\Gamma_a^{(1)}$ is its $1$-RDM.

In a basis adapted to these subspaces, the term
$T_{ab}:=\sum_{i\in a,j\in b}h_{ij}\hat c_i^\dagger\hat c_j$,
$a\neq b$, transfers one particle from $b$ to $a$.
Distinct ordered pairs give orthogonal particle-number sectors,
also orthogonal to the diagonal contributions.
Anticommutation and product-state factorization give
\begin{align*}
    \|T_{ab}\psi\|^2&=\operatorname{Tr}\!\left[ h_{ab}^\dagger(I-\Gamma_a^{(1)})h_{ab}\Gamma_b^{(1)} \right] \leq\operatorname{Tr}\!\left[ h_{ab}^\dagger h_{ab}\Gamma_b^{(1)}\right].
\end{align*}
Here $0\leq\Gamma_a^{(1)}\leq I$ by
Lem.~\ref{lem:df-one-rdm-lipschitz}; the unoccupied factor has zero
RDM. Sum these squared norms and the diagonal variance bounds.
Since $p_a\leq r$ and $\Gamma_\psi^{(1)}$ is block diagonal,
the result is Eq.~\eqref{eq:iu-one-body-variance}.
\end{proof}

\begin{proof}[Proof of Lem.~\ref{lem:iu-block-product-mode-error}]
Complete the two column lists to orthonormal bases and choose a
unitary $R$ with $RW=V$. Write $R=e^{ih}$ with $h$ Hermitian and
spectrum in $[-\pi,\pi]$.
Starting from $\psi:=(\wedge^NW)\chi$, consider
$e^{-i\tau\langle\psi,\widehat h\psi\rangle}
(\wedge^Ne^{i\tau h})\psi$, $0\leq\tau\leq1$.
The derivative has constant norm equal to the square root of the
variance of $\widehat h$ in $\psi$.
The path length bounds the phase-aligned distance between its
endpoints, so Lem.~\ref{lem:iu-one-body-variance} gives
$\sqrt{r\operatorname{Tr}(h^2W\Gamma_\chi^{(1)}W^\dagger)}$
as an upper bound.

For $|x|\leq\pi$, $x^2\leq(\pi^2/4)|e^{ix}-1|^2$.
Apply this inequality to the spectral decomposition of $h$ and
take the trace against $W\Gamma_\chi^{(1)}W^\dagger\geq0$.
Since $(R-I)W=V-W$, the result is
Eq.~\eqref{eq:iu-block-product-mode-error}.
No small operator-norm distance between $V$ and $W$ is required.
\end{proof}

Apply Lem.~\ref{lem:iu-block-product-mode-error} with $V=Q$, the comparison columns $W$, and
$\Gamma_\chi^{(1)}=D^2$.
Together with Eq.~\eqref{eq:iu-common-column-error},
this bounds $d_2(\widetilde\Psi,\Psi_W)$ by
$C_r(\sum_b e_b^2+\eta_0^2)^{1/2}$.

For the second comparison, let $\chi_b$ be the normalized local
input of block $b$, so that $\Gamma_{\chi_b}^{(1)}=D_b^2$ and
$(\wedge^{p_b}V_b)\chi_b=\omega'_b$.
The corresponding block in $\Psi_W$ is
$\omega_{W,b}:=(\wedge^{p_b}W_b)\chi_b$.
Apply Lem.~\ref{lem:iu-fixed-particle-mode-error} to this individual
block and add its local reconstruction error:
\begin{align*}
    d_2(\omega_{W,b},\omega_b)&\leq\sqrt{p_b}\,\|(W_b-V_b)D_b\|_{\mathrm{HS}}+e_b \leq C_re_b.
\end{align*}
The reference core agrees with $\sigma_{\mathrm{core}}$ up to
phase. Both reference and exact blocks lie in the same mutually
orthogonal spaces $B_b$, so their squared overlaps multiply. Each block has infidelity at most $C_re_b^2$, giving
\begin{align*}
    1-|\langle\Psi_W,\Psi_{\mathrm{tr}}\rangle|^2&=1-\prod_{b=1}^d|\langle\omega_{W,b},\omega_b\rangle|^2 \leq C_r\sum_{b=1}^d e_b^2.
\end{align*}
For normalized pure states, the trace distance is $\sqrt{1-|\langle x,y\rangle|^2}\leq d_2(x,y)$. Its triangle inequality therefore gives the error of the prepared state:
\begin{align}
    \sqrt{1-|\langle\widetilde\Psi,\Psi_{\mathrm{tr}}\rangle|^2}&\leq C_r\left(\sum_{b=1}^d e_b^2+\eta_0^2\right)^{1/2}.
    \label{eq:iu-common-state-error}
\end{align}

\begin{proof}[Proof of Prop.~\ref{prop:iu-postprocessing}]
We now combine the local recovery bounds with the common mode adjustment. Assign each local search in Sec.~\ref{app:iu-estimated-branches} failure budget $\beta/\max\{1,d\}$. Since $s_b\leq m$, the smallness condition in Eq.~\eqref{eq:iu-supplied-errors} ensures
$s_b^{5/2}\eta\leq c_{p_b}$ for every higher-particle block,
as well as $\eta_0\leq1/2$. The two-particle bound and Eq.~\eqref{eq:iu-local-recovery-bound}
give, simultaneously with probability at least $1-\beta$,
$s'_b\leq s_b$ and valid error bounds
$e_b=3\eta$ for $p_b=2$ and
$e_b=C_{p_b}s_b^{5/2}\eta$ for $p_b\geq3$.
All subsequent operations are deterministic. Substituting these
bounds into Eq.~\eqref{eq:iu-common-state-error} yields
\begin{align}
    \sqrt{1-|\langle\widetilde\Psi,\Psi_{\mathrm{tr}}\rangle|^2}&\leq C_r\left[\left(\sum_{b:p_b\geq3}s_b^5\eta^2+\sum_{b:p_b=2}\eta^2\right)^{1/2}+\eta_0\right]\leq C_r(m^{5/2}\eta+\eta_0).
    \label{eq:iu-postprocessing-error}
\end{align}
Indeed, orthogonality of the exact supports and $p_b\geq2$ imply
$\sum_b s_b\leq m/2$, so
$\sum_b s_b^5\leq(\sum_b s_b)^5\leq(m/2)^5$.
If all blocks have two particles, the bound is instead
$C(\sqrt d\,\eta+\eta_0)$, and the weaker condition
$\sqrt d\,\eta+\eta_0\leq c_2$ suffices.
For $d=0$, only the core and unitary completion are needed, and
the error is at most $\eta_0$.

For a degree-$k$ block with $k\geq3$, one local trial costs
$O_k(m^{k+1})$ arithmetic operations, including contraction,
the SVD, branch completion, coefficient evaluation, and weighted
orthogonalization. There are at most $m/k$ tested counts per
block, $O(\log(2m/\beta))$ trials per count, and at most $m/2$
blocks. The total preparation cost is therefore bounded by
\begin{align}
    O_r\!\left(m^{r+3}\log\frac{2m}{\beta}\right).
    \label{eq:iu-postprocessing-cost}
\end{align}
The two-particle decomposition and truncation, the common weighted
SVD, and the unitary completion also fit this bound.
\end{proof}

\subsection{Application to the learning algorithms}
\label{app:iu-learning-guarantee}

We now complete the proofs of Thms.~\ref{thm:equal-blocks} and~\ref{thm:heterogeneous-blocks} by combining the reconstruction and preparation guarantees with RDM estimation and the resource bounds.

\paragraph*{RDM accuracy.}
We next verify that the RDM accuracies used for reconstruction also
satisfy Prop.~\ref{prop:iu-postprocessing}. Keep
$s_0^2=\varepsilon_{\mathrm{fid}}/(48m)$ and the exponents
$\mathfrak a_p^{\mathrm{hom}}$ and $\mathfrak a_r^{\mathrm{het}}$
from Eqs.~\eqref{eq:equal-p-direct-rdm-exponent}
and~\eqref{eq:explicit-delta-free-rdm-exponent}. We use
\begin{align}
    \mu&\leq c_p\frac{\varepsilon_{\mathrm{fid}}} {m^{\mathfrak a_p^{\mathrm{hom}}}}, \qquad \nu\leq c_r\frac{\varepsilon_{\mathrm{fid}}} {m^{\mathfrak a_r^{\mathrm{het}}}},
    \label{eq:iu-sufficient-rdm-accuracy}
\end{align}
in the homogeneous and heterogeneous settings, respectively.
The constants may be decreased to satisfy the preparation bounds;
the exponents remain unchanged. We condition on accurate RDMs and
successful Gram-splitting calls. For this verification, the uniform
bound in Eq.~\eqref{eq:iu-postprocessing-error} suffices, including
when all blocks have two particles.

In the homogeneous setting, Prop.~\ref{prop:equal-p-input-accuracy}
verifies the reconstruction conditions and the core dimension.
Its block and core bounds in
Eq.~\eqref{eq:homogeneous-block-core-error-contract} imply
\begin{align}
    \eta&\leq\mathsf A_pm^5\frac{\mu}{s_0},\qquad \eta_0\leq\mathsf A_pm^{7/2}\frac{\mu}{s_0}.
    \label{eq:iu-homogeneous-supplied-errors}
\end{align}
Thus both the preparation conditions and its error are controlled by
\begin{align}
    m^{5/2}\eta+\eta_0&\leq\mathsf C_pm^{15/2}\frac{\mu}{s_0} \leq\mathsf C_pc_p\sqrt{\varepsilon_{\mathrm{fid}}}\, m^{8-\mathfrak a_p^{\mathrm{hom}}}.
    \label{eq:iu-homogeneous-postprocessing-error}
\end{align}
Since $\mathfrak a_p^{\mathrm{hom}}=\lceil(p+13)/2\rceil\geq8$
for every $p\geq2$, decreasing $c_p$ makes this quantity small
enough for the construction and gives
$\sqrt{1-|\langle\widetilde\Psi,\Psi_{\mathrm{tr}}\rangle|^2}
\leq\sqrt{\varepsilon_{\mathrm{fid}}}/4$
on its success event.

In the heterogeneous setting, the reconstruction conditions and
core dimension follow from
Prop.~\ref{prop:delta-free-explicit-accuracy}.
Put $a:=(7\lfloor r/2\rfloor+3)/2$.
Eqs.~\eqref{eq:df-recursive-power-bound}
and~\eqref{eq:df-core-power-bound} yield
\begin{align}
    \eta&\leq C_rm^a\frac{\nu}{s_0},\qquad \eta_0\leq C_rm^{a+3/2}\frac{\nu}{s_0}.
    \label{eq:iu-heterogeneous-supplied-errors}
\end{align}
Consequently,
\begin{align}
    m^{5/2}\eta+\eta_0&\leq C_rm^{a+5/2}\frac{\nu}{s_0} \leq C_rc_r\sqrt{\varepsilon_{\mathrm{fid}}}\, m^{a+3-\mathfrak a_r^{\mathrm{het}}}.
    \label{eq:iu-heterogeneous-postprocessing-error}
\end{align}
For $r\geq3$,
$\mathfrak a_r^{\mathrm{het}}=\lceil a+(r+3)/2\rceil\geq a+3$;
for $r=2$, $a=5$ and $\mathfrak a_2^{\mathrm{het}}=8=a+3$.
The exponent is therefore nonpositive in every case.
Decreasing $c_r$ again verifies the preparation conditions and
gives $\sqrt{1-|\langle\widetilde\Psi,\Psi_{\mathrm{tr}}\rangle|^2}
\leq\sqrt{\varepsilon_{\mathrm{fid}}}/4$
on its success event.

\paragraph*{Success probability and target fidelity.}
Allocate failure probability $\delta/3$ to simultaneous RDM
accuracy, $\delta/3$ to Gram splitting, and set $\beta=\delta/3$
for the local branch searches above.
In the homogeneous setting, each of the two RDM estimates and
each of the two splitting calls receives budget $\delta/6$.
In the heterogeneous setting, use $\delta/(3r)$ per RDM order
and $\delta/[6(r-1)]$ per splitting call.
Each local branch search thus receives
$\delta/(3\max\{1,d\})$.

Conditional on RDM accuracy and successful preceding calls,
the reconstruction and preparation inputs satisfy the required
bounds. Summing the conditional probabilities of the first failed
call, followed by the local-search failure bound, induces total
failure probability at most $\delta$.
This does not require independence between RDM estimates or
between reconstruction levels.
On the joint success event, the preparation error and the truncation bounds in
Eqs.~\eqref{eq:equal-p-df-truncation-bound}
and~\eqref{eq:ideal-truncation-bound} each give trace distance at most $\sqrt{\varepsilon_{\mathrm{fid}}}/4$ from $\Psi_{\mathrm{tr}}$. The triangle inequality then yields
\begin{align}
    \sqrt{1-|\langle\widetilde\Psi,\Psi\rangle|^2}&\leq\frac{\sqrt{\varepsilon_{\mathrm{fid}}}}{2},\qquad |\langle\widetilde\Psi,\Psi\rangle|^2\geq1-\varepsilon_{\mathrm{fid}}.
    \label{eq:iu-learning-fidelity}
\end{align}
Thus the returned pair $(\Xi_{\mathrm{in}},U_{\mathrm{out}})$ specifies the final estimate $\widetilde\Psi$ with the required fidelity.

\paragraph*{Classical post-processing time.}
With $\beta=\delta/3$, Eq.~\eqref{eq:iu-postprocessing-cost} yields $O_r(m^{r+3}\log(2m/\delta))$ arithmetic operations, with $r$ replaced by $p$ for homogeneous blocks.
Under Eq.~\eqref{eq:iu-sufficient-rdm-accuracy}, this cost is
dominated by the RDM-estimation costs in
Eqs.~\eqref{eq:equal-rdm-estimation-runtime} and~\eqref{eq:bounded-rdm-estimation-runtime}, as in Secs.~\ref{sec:homogeneous-time-complexity} and~\ref{sec:heterogeneous-time-complexity}. The RDM orders and accuracy exponents are unchanged, and the
failure allocation changes only constants inside logarithms. Hence the sample and classical time bounds of Thms.~\ref{thm:equal-blocks} and~\ref{thm:heterogeneous-blocks} remain valid. The construction uses only the recovered classical data and requires no additional target copies.

\end{document}